\newcounter{defi}
\newcounter{prop}
\newcounter{lem}
\newcounter{thm}
\newcounter{cor}
\newcounter{ex}
\newenvironment{defi}[1][]{\begin{mdframed}[backgroundcolor=pink!20,topline = false,rightline=false,bottomline=false,linewidth=2.0pt,linecolor=red]
		\refstepcounter{defi}
		\ifstrempty{#1}
		{
			\textbf{Definition \thedefi.}
		}
		{
			\textbf{Definition \thedefi. (#1)}
		}
	}
	{\end{mdframed}}
\newenvironment{prop}[1][]{\begin{mdframed}[backgroundcolor=CornflowerBlue!20,topline = false,rightline=false,bottomline=false,linewidth=2.0pt,linecolor=NavyBlue]
		\refstepcounter{prop}
		\ifstrempty{#1}
		{
			\textbf{Proposition \theprop.}
		}
		{
			\textbf{Proposition \theprop. (#1)}
		}
	}
	{\end{mdframed}}
\newenvironment{lem}[1][]{\begin{mdframed}[backgroundcolor=CornflowerBlue!20,topline = false,rightline=false,bottomline=false,linewidth=2.0pt,linecolor=NavyBlue]
		\refstepcounter{lem}
		\ifstrempty{#1}
		{
			\textbf{Lemma \thelem.}
		}
		{
			\textbf{Lemma \thelem. (#1)}
		}
	}
	{\end{mdframed}}
\tikzset{->-/.style={decoration={
			markings,
			mark=at position #1 with {\arrow{latex}}},postaction={decorate}}}
\tikzset{cross/.style={cross out, draw=black, fill=none, minimum size=2*(#1-\pgflinewidth), inner sep=0pt, outer sep=0pt}, cross/.default={2pt}}
\newcommand{\DrawLine}{%
  \begin{tikzpicture}
  \path[use as bounding box] (0,0) -- (\linewidth,0);
  \draw[color=blue!75!black,dashed,dash phase=2pt]
        (0-\kvtcb@leftlower-\kvtcb@boxsep,0)--
        (\linewidth+\kvtcb@rightlower+\kvtcb@boxsep,0);
  \end{tikzpicture}%
  }
\DeclareSymbolFont{usualmathcal}{OMS}{cmsy}{m}{n}
\DeclareSymbolFontAlphabet{\mathcal}{usualmathcal}
\begin{document}

\pagestyle{SPstyle}

\begin{center}{\Large \textbf{\color{scipostdeepblue}{
Universal Quantum Computation with the $S_3$ Quantum Double:\\ A Pedagogical Exposition\\
}}}\end{center}

\begin{center}\textbf{
Chiu Fan Bowen Lo\textsuperscript{1$\star$},
Anasuya Lyons\textsuperscript{1$\dagger$},
Ruben Verresen\textsuperscript{2}, Ashvin Vishwanath\textsuperscript{1} and Nathanan Tantivasadakarn\textsuperscript{3}
}\end{center}

\begin{center}
{\bf 1} Department of Physics, Harvard University, Cambridge, Massachusetts 02138, USA
\\
{\bf 2} Pritzker School of Molecular Engineering, University of Chicago, Chicago, IL 60637, USA
\\
{\bf 3} Walter Burke Institute for Theoretical Physics and Department of Physics,
California Institute of Technology, Pasadena, CA, 91125, USA
\\[\baselineskip]
$\star$ \href{mailto:email1}{\small chiufanbowenlo@g.harvard.edu}\,,\quad
$\dagger$ \href{mailto:email2}{\small anasuya\_lyons@g.harvard.edu}
\end{center}

\section*{\color{scipostdeepblue}{Abstract}}

Non-Abelian topological order (TO) enables topologically protected quantum computation with its anyonic quasiparticles. Recently, TO with $S_3$ gauge symmetry was identified as a sweet spot---simple enough to emerge from finite-depth adaptive circuits yet powerful enough to support a universal topological gate-set. In these notes, we review how anyon braiding and measurement in $S_3$ TO are primitives for topological quantum computation and we explicitly demonstrate universality. These topological operations are made concrete in the $S_3$ quantum double lattice model, aided by the introduction of a generalized ribbon operator. This provides a roadmap for near-term quantum platforms.

\vspace{\baselineskip}







\vspace{10pt}
\noindent\rule{\textwidth}{1pt}
\setcounter{tocdepth}{2}
\tableofcontents
\noindent\rule{\textwidth}{1pt}
\vspace{10pt}


\section{Introduction} \label{sec:intro}

Realizing a topological quantum computing platform \cite{kitaev_fault-tolerant_2003,freedman_topological_2002,nayak_non-abelian_2008} that is robust against local errors is a holy grail in the field of quantum information. Any quantum computer requires at least two components: (1) a way to encode and store information (memory) and (2) a way to manipulate that information through gates (computation). The toric code \cite{kitaev_fault-tolerant_2003} and the related surface code \cite{bravyi_quantum_1998,freedman_projective_2001,fowler_surface_2012} are paradigmatic examples of topological quantum memories \cite{dennis_topological_2002}; information can be encoded in the degenerate ground state subspace in a topologically protected way. The power of the toric/surface code is due to its pointlike quasiparticle excitations having exchange or braiding statistics which is beyond those of bosons or fermions---such particles are called anyons \cite{leinaas_theory_1977,wilczek_quantum_1982,Halperin84,wu_general_1984,Arovas84}. In particular, in the toric code, braiding one particle around another can give rise to a $-1$ phase in the wavefunction, something that cannot occur for fundamental bosons or fermions. This generalized statistics implies ground state degeneracy \cite{wen_ground-state_1990,einarsson_fractional_1990} which can be used as a memory, upon which gates are implemented by braiding anyons \cite{kitaev_fault-tolerant_2003}. However, this simple $-1$ phase of the toric code is not powerful enough to achieve universal quantum computation using these excitations.
There are proposals which introduce external ``modules'' that power up toric code for universal quantum computation, including various combinations of magic state injection \cite{bravyi_universal_2005}, lattice surgery \cite{horsman_surface_2012,chamberland_universal_2022,litinski_lattice_2018}, or insertion of extrinsic defects \cite{bombin_topological_2010,barkeshli_twist_2013,barkeshli2015physical,benhemou_non-abelian_2022,yoder_surface_2017, kesselring_boundaries_2018, burton_genons_2024}.

However, are there other quantum states whose quasiparticle excitations are powerful enough to function as a universal quantum computer without the additional modules? This would constitute a genuine topological quantum computer, in the sense that the fault tolerance is fully provided by the topological protection, without having to devise alternate mechanisms to ensure the fault tolerance of external modules.

To look for more examples beyond toric code, we first note that the topological protection it enjoys is due to the fact that it is in a phase with topological order (TO); indeed, TOs can be identified by the presence of anyonic low-energy quasiparticle excitations \cite{wen1989,wen_topological_1990,wen_book_2004,sachdev_book_2023,simon_topological_2023}. One way of constructing more general examples of TOs is by considering the deconfined phase of an emergent lattice gauge theory \cite{Wegner71,Kogut79,Fradkin79}. For example, the toric code can be viewed as a $\mathbb Z_2$ lattice gauge theory at the deconfined fixed point \cite{kitaev_fault-tolerant_2003}. From this perspective, the $-1$ braiding phase is simply an Aharonov-Bohm phase between an emergent electric charge and a magnetic gauge flux \cite{AB}. Kitaev pointed out that one can consider lattice gauge theories for any finite group $G$, and that all such theories have anyons---realized by emergent gauge charges, gauge fluxes, and dyons with both charge and flux \cite{kitaev_fault-tolerant_2003}. Explicit lattice realizations of such gauge theories are realized by the so-called ``quantum double models'' \cite{kitaev_fault-tolerant_2003,propitius_discrete_1996,cui_universal_2015}. The simplest cases are where the group $G$ is Abelian. In such cases, the resulting anyons are Abelian, characterized by the nontrivial phase they incur when they exchange or braid with each other: they can pick up a $U(1)$ phase of $e^{i\theta}$, where $\theta \in [0, 2\pi)$ \footnote{Under an exchange, bosons correspond to $\theta = 0$ and fermions $\theta=\pi$.}\cite{leinaas_theory_1977,wilczek_quantum_1982,wu_general_1984}. However, Abelian anyons are still not powerful enough to implement the full universal gate set by just braiding, fusing and measuring the anyons \cite{mochon_anyons_2003,mochon_anyon_2004}.

This is where \emph{non}-Abelian topological order comes in, which arises, e.g., for non-Abelian gauge groups $G$ \cite{Goldin81,nayak_non-abelian_2008,simon_topological_2023,frohlich_statistics_1988,Moore1988,Fredenhagen1988, bais_flux_1980,wen_topological_1990,Moore1991,alexander_bais_quantum_1992,Gabbiani1992,alexander_bais_anyons_1993}. Non-Abelian anyons have an even more general form of exchange statistics--- they have internal Hilbert spaces (akin to a spin degree of freedom) which undergo unitary transformations when they braid with each other. Information can be encoded in these internal Hilbert spaces, and braiding provides a way to perform logical operations on the encoded information. In these notes, we will focus on the quantum double models for finite $G$ (in particular, the quantum double of $S_3$\footnote{Since $S_3$ is non-Abelian, its corresponding double model realizes a non-Abelian topological order.}), using the internal state of non-abelian anyons as the logical subspace.




\begin{figure}[h!]
\centering
\includegraphics[width=0.8\textwidth]{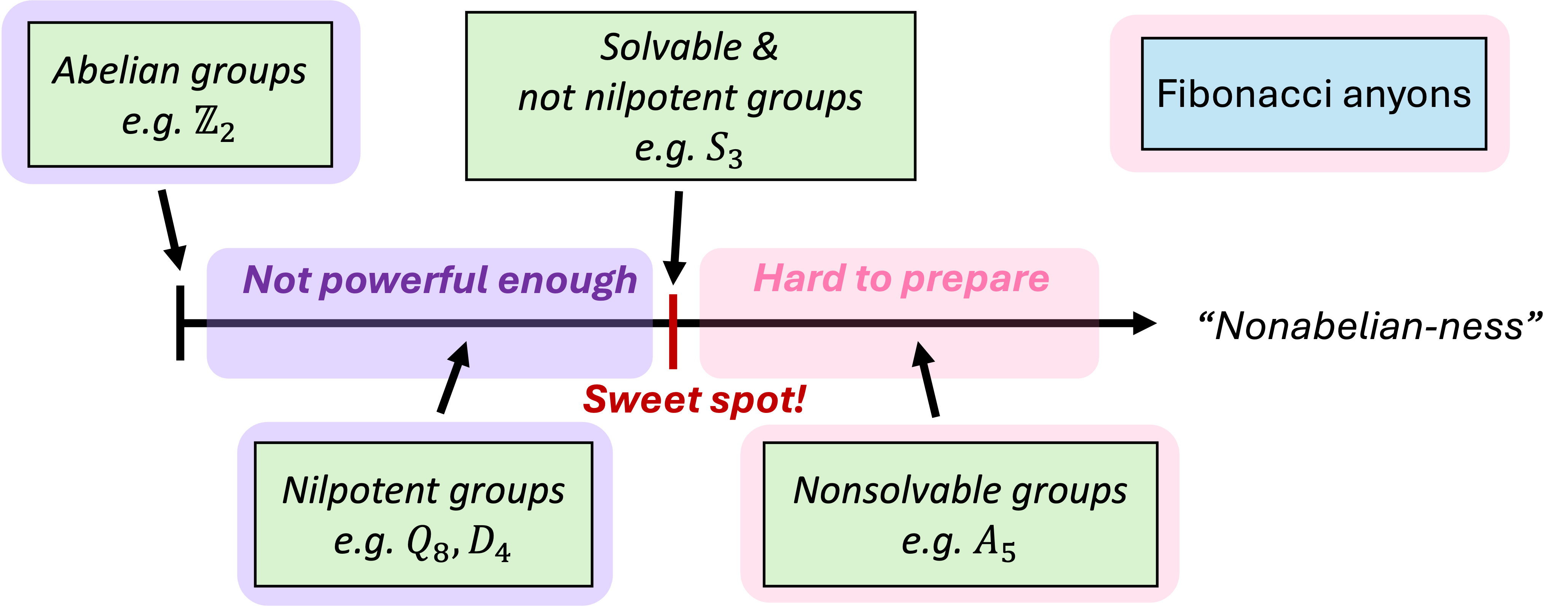}
\caption{An illustration for the ``non-Abelian-ness'' of groups, based on whether the group is Abelian, nilpotent, and solvable. Topological order with Abelian and nilpotent gauge groups are not powerful enough to host universal quantum computation; Topological order with nonsolvable gauge groups (and beyond quantum double, like Fibonacci anyons) are difficult to prepare experimentally. Topological order with solvable but non-nilpotent groups, like $S_3$, lie at the ``sweet spot''--- $S_3$ topological order hosts universal computation \textit{and} is relatively easy to prepare on a near-term quantum computer.}
\label{fig:nonabelian-spectrum}
\end{figure}

\subsection{Why S3?}\label{subsec:why-S3}

Not all non-Abelian topological orders are born equal---there are varying levels of computational power and experimental difficulty. Some orders do not support universal computation, and some are harder to realize. We want to find the sweet spot: a non-Abelian topological order that is universal for computation yet simple enough to be prepared in near-term quantum platforms. The ability of a non-Abelian quantum double model to host universal computation is related to the degree of ``non-Abelian-ness'' of the gauge group; more non-Abelian groups are more powerful \cite{mochon_anyon_2004}. What do we mean by ``non-Abelian-ness''? A rough diagram illustrating the spectrum of ``non-Abelian-ness'' can be found in Fig. \ref{fig:nonabelian-spectrum}. The least non-Abelian groups are the Abelian groups. The next level in the hierarchy of ``non-Abelian-ness'' are groups that are \textit{nilpotent}\footnote{A group $G$ is nilpotent if there exists an upper central series: $\{1\}\cong G_0 \trianglelefteq G_1\trianglelefteq \dots \trianglelefteq G_n\cong G$ such that $G_{i-1}$ is the center of $G_i$.  Iteratively quotienting out the center of the group takes us down this series from the full group $G$ to the trivial group $\{1\}$.}: in such theories, iteratively fusing an anyon with its antiparticle eventually results in an Abelian anyon. These Abelian anyons can then be fused together using a finite-depth circuit. We can implement some gates by braiding anyons arising from nilpotent gauge groups, but they will not be very powerful---the best we can do is a Pauli $X$ on the internal state of the anyon (or a generalization thereof for a higher-dimensional qudit)\cite{mochon_anyon_2004}\footnote{Here, we are considering a quantum double model where the underlying finite group is nilpotent. In the literature, general topological orders can be referred to as nilpotent if their anyons also always eventually fuse to Abelian anyons, even if these topological orders lie outside the quantum double paradigm. An example is the Ising topological order, which is based on the Ising fusion category. Such topological orders can be more powerful (Ising TO can realize all Clifford gates), although we are not aware of an example powerful enough for universal computation.}. 


If we go up even further in the ``non-Abelian-ness'' hierarchy,  considering groups that are not nilpotent (e.g. $S_3$ for solvable groups, $A_5$ for nonsolvable groups), then it is possible to realize a universal gate set by braiding and fusion of anyons \cite{mochon_anyons_2003, mochon_anyon_2004,ogburn_preskill_topological_1999}. We could even reach outside of the class of quantum double models and consider more exotic non-Abelian orders like Fibonacci topological order, for which braiding alone is sufficient for universal quantum computation \cite{freedman_modular_2002,freedman_two-eigenvalue_2002,trebst_short_2008}. 

Why not focus on non-solvable topological order, such as $A_5$ quantum double or Fibonacci then? We care about computational power, but we also care about the ease of state preparation. It turns out that ``non-Abelian-ness'' is also linked to how easy or hard it is to prepare a given topological order from a product state on a quantum simulator. Even with adaptive circuits (unitary circuits with mid-circuit measurements and feedforward), currently there are no finite-depth protocols to prepare nonsolvable topological order\footnote{The current best protocol for preparing non-solvable topological orders requires a circuit depth that scales with the logarithm of the system size \cite{lu_measurement_2022}.} (e.g. $A_5$ and Fibonacci), and it seems unlikely they exist \cite{tantivasadakarn_hierarchy_2023}. Among finite non-Abelian groups, only groups that are solvable, i.e. can be constructed from group extensions of Abelian groups, can be prepared via finite-depth adaptive circuits \cite{tantivasadakarn_hierarchy_2023}.

$S_3$, being non-nilpotent but solvable, lies right at the sweet spot: it is powerful enough to enable universal quantum computation \cite{mochon_anyon_2004}, while simple enough to be prepared using a finite-depth adaptive circuit \cite{verresen_efficiently_2022,tanti_long-range_2024,bravyi_adaptive_2022}, making the realization of $S_3$ topological order feasible on a near-term quantum computing platform. Additionally, since $S_3$ is the smallest non-Abelian group (it has six elements), its lattice model has the benefit of having a relatively small local Hilbert space, so it is more readily accessible by current small-scale quantum platforms. This combination of advantages motivates us to examine computation with the $S_3$ quantum double model in detail.

\subsection{Where do these notes stand in the literature?}\label{subsec:where}

Much of the content to follow is a review of known results. However,  we have not seen these results collated in a single set of notes. Given the wide interest in topological order across many fields (quantum information and computation, condensed matter theory, high-energy theory, mathematical physics, etc.), the relevant literature is spread out among many different communities with different priorities and interests. Our goal is to give a comprehensive explication of universal computing with $S_3$ anyons, at a conceptual and practical level. We aim to bridge the effective-field-theory-level understanding of anyon theories with explicit, lattice-level implementations to concretely demonstrate the universal gate set. In addition, this work also contains some new results, in particular an alternative demonstration of the universality of the topological gate set, as well as introducing a generalized ribbon operator that will be useful for lattice implementation of the universal gate set operations. 

The possibility of using $S_3$ quantum double to do universal quantum computation was established in a pioneering paper by Mochon \cite{mochon_anyon_2004}, where he demonstrated that $S_3$ topological order can host a universal qutrit gate set. The construction of this qutrit gate set is rather involved---we instead focus on a universal \emph{qubit} gate set constructed using the same logical encoding as that in Ref.~\cite{mochon_anyon_2004}. Intuitively, the use of the third qutrit basis state as a ``hidden state'' allows for more efficient implementations of qubit gates than is possible when trying to implement a full \emph{qutrit} gate set. This qubit gate set was first proposed by Kitaev in a problem set \cite{kitaev_anyons_nodate}, but as far as we are aware, it has not been explicated in great detail anywhere in the literature. We show that Kitaev's proposed set of anyonic braiding and fusion operations are universal by constructing explicit circuit implementations for the standard universal gate set (Cliffords generated by Hadamard, S, CZ, with CCZ acting as the non-Clifford gate). New developments in state preparation using measurement as resources \cite{raussendorf2001, Raussendorf_2005,ashkenazi_duality_2022,tanti_long-range_2024,verresen_efficiently_2022,lu_measurement_2022} lead to renewed interests in constructing quantum circuits for preparing states with nontrivial entanglement \cite{tantivasadakarn_hierarchy_2023,tanti_long-range_2024, verresen_efficiently_2022,bravyi_adaptive_2022,piroli_approximating_2024, piroli2021, zhu2023, lee2022decodingmeasurementpreparedquantumphases, li2023measuringtopologicalfieldtheories, gunn2023phasesmatrixproductstates, li2023} and manipulating the corresponding quasiparticle excitations \cite{bravyi_adaptive_2022,liu_StringNet_2022,Ren24}.
We present a new result along this thread, introducing a generalization of the ribbon operator formalism that allows for greater flexibility in state initialization. 

\subsection{Overview}\label{subsec:overview}

\begin{figure}[h!]
\centering
\includegraphics[width=0.7\textwidth]{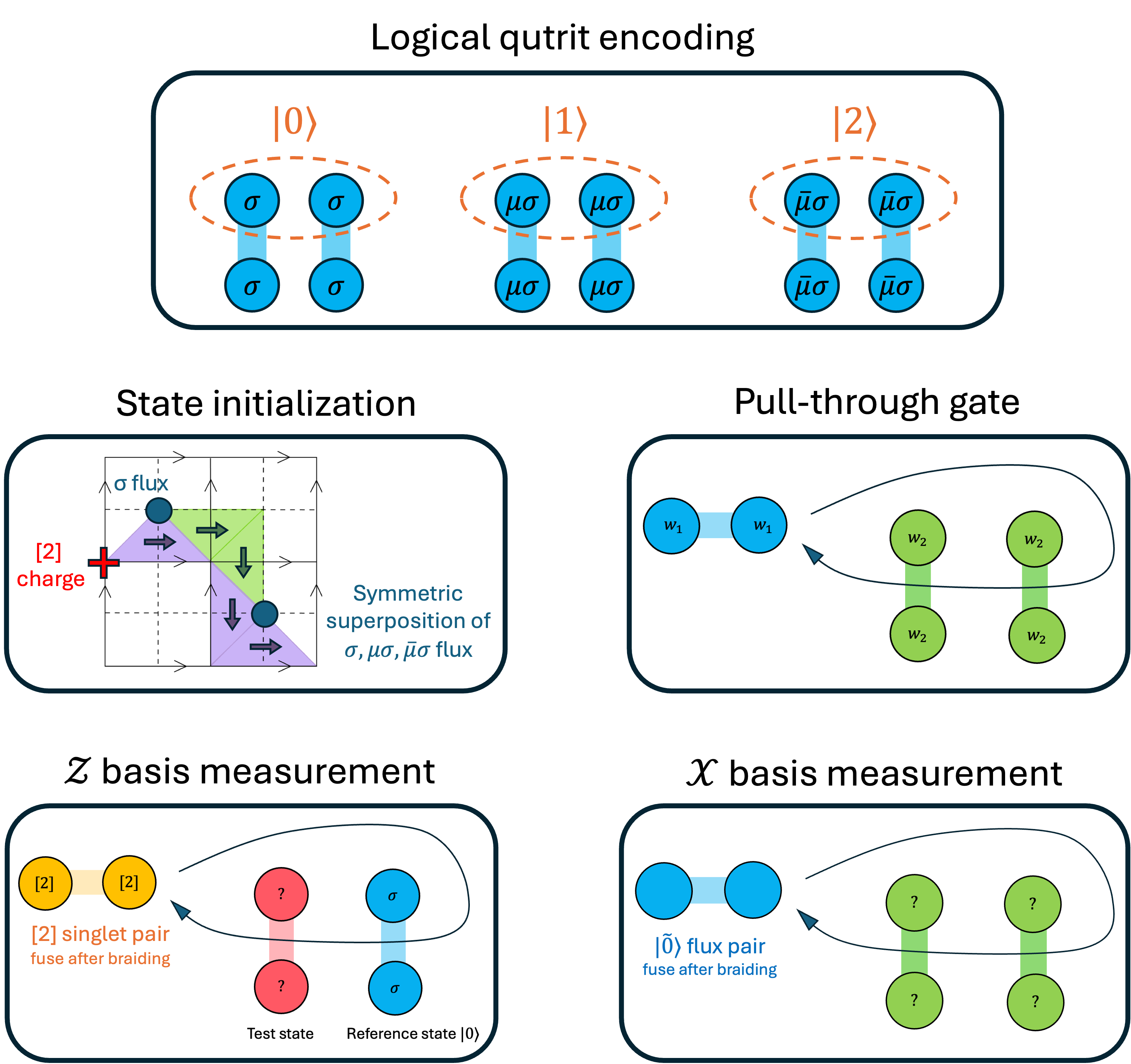}
\caption{A road map for the key ingredients of building the universal gate set of $S_3$ quantum double. The internal states of non-Abelian fluxes are used to encode logical states in the qutrit computational basis (Sec. \ref{state_initialization}). Winding and fusion of non-Abelian charges give rise to logical qutrit $\mathcal{Z}$ measurement and $\mathcal{X}$ basis measurement respectively (Sec. \ref{measurements}). Braiding of non-Abelian fluxes give rise to the pull-through gate, which is an fundamental operation used to implement other gates (Sec. \ref{flux-braiding}). The ribbon operator formalism (Sec. \ref{ribbons}) provides the lattice-level prescriptions for creating and manipulating the non-Abelian anyons.}
\label{fig:roadmap}
\end{figure}

To give a brief overview of the contents of these notes: we review the quantum double model and $S_3$ topological order in Section \ref{quantum-double} and discuss its ground state subspace in Section \ref{ground-state-subspace}. In Section \ref{excited-states}, we define the anyonic excitations present in quantum doubles and give some physical derivations for these definitions. Section \ref{universal_computing} will describe in detail the construction of the universal qubit gate set using $S_3$ anyons, where the key ingredients are shown in Fig. \ref{fig:roadmap}. As will be emphasized in Section \ref{state_initialization}, the internal Hilbert space of non-Abelian anyons will act as our logical space for computation. This is in contrast to the toric code, where the ground state subspace is used as the logical space for quantum memory. After establishing the anyon fusion (Section \ref{measurements}) and braiding operations (Section \ref{flux-braiding}), we then detail the full qubit universal gate set in Section \ref{universal-qubit-set}. Next, in Section \ref{ribbons} we discuss how to actually implement this gate set at the lattice level, using the so-called ribbon operators. Ribbons are the non-Abelian equivalent of string operators and are used to create anyons on top of the ground state---we introduce a novel generalization of these operators necessary to implement the universal gate set. Finally, we conclude with the current state of affairs for non-Abelian topological quantum computing, including experimental, error-correction, and fault-tolerance prospects. 

\section{The S3 Quantum Double Model}
\label{quantum-double}
\begin{tcolorbox}[enhanced, breakable,pad at break*=1mm]
   We start with an introduction to generic quantum double models for finite group $G$, defining the microscopic Hilbert space, the ground states, and low-lying excited states. These low-lying excited states contain anyons, whose properties are related to the gauge group $G$; we also give physical intuitions for these properties. As an example, we enumerate the anyon types in $S_3$ quantum double and their fusion rules. We highlight the possibility of single-particle excitations in non-Abelian quantum doubles on a torus.  Finally, we discuss the so-called ``neutrality conditions'' which constrain the physically-realizable states present in the model.
\end{tcolorbox}

A quantum double model\footnote{The material in this section is drawn from Refs.~\cite{kitaev_fault-tolerant_2003, preskill_lecture_2004, kitaev_anyons_nodate, cui_topological_2018,simon_topological_2023}.} $\mathcal{D}(G)$ is defined on an oriented graph\footnote{For $G \neq \mathbb{Z}_2$, $g \in G$ is not necessarily its own inverse and group multiplication does not have to be Abelian; when specifying the action of our operators, there is a difference then between left multiplication by $g$ and right multiplication by $\overline{g}$; the orientation of an edge determines the action.}: each edge hosts a Hilbert space $\mathbb{C}[G]$ of dimension $|G|$ with basis states $\ket{g}$ for each $g \in G$. In other words, there is a regular representation of $G$ living on each edge, whereby a group element $h \in G$ acts on a state $\ket{g} \mapsto \ket{hg}$. As mentioned in Sec. \ref{subsec:why-S3} in the introduction, we will view quantum doubles from the perspective of lattice gauge theory; these group-valued link variables can be interpreted as a $G$-valued gauge field. 

Associated with sets of edges are vertices $s$ and plaquettes $p$ (see Fig. \ref{fig:double-model1}a). We also define a ``site'' which consists of a pair $(s, p)$ of a neighboring vertex and plaquette. In these notes we will work with quantum double on the square lattice for simplicity, although the discussion applies to any oriented planar graph. 

\begin{figure}[h!]
\centering
\includegraphics[width=0.8\textwidth]{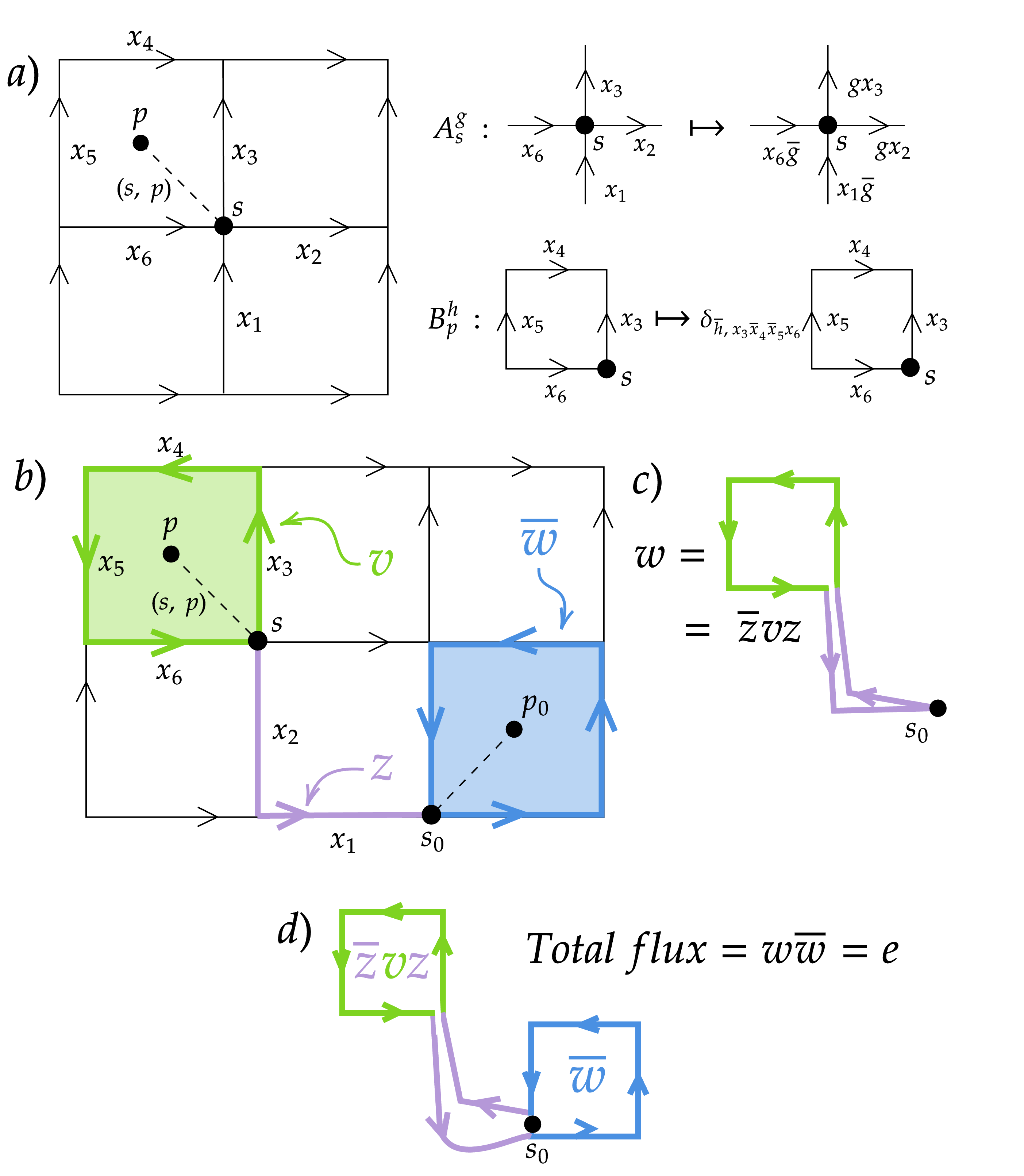}
\caption{a) Definitions of the vertex and plaquette operators. Each neighboring plaquette $p$ and vertex $s$ form a site, labeled by $(s, p)$. b) An elementary excitation pair is defined by two plaquette and/or vertex violations. We pick one excitation to live at the origin, and define the pair by properties of the excitation living away from the origin. The local flux $v$ is the product of edges counterclockwise around the plaquette $p$ (for the case in the figure, $v=x_3\bar{x}_4\bar{x}_5x_6$), starting from the vertex $s$, and $z$ is the product of group elements (inverse of a group element if the orientation is opposite to the direction of the path) along a path from the vertex $s$ to the origin. c)  The topological flux $w$ is the product of group elements along the closed loop from the origin around the excitation plaquette and back to the origin: $w = \bar z v z$. We will see in Sec. \ref{ribbons} that ribbon operators create such excitations with charge and flux contents given by $z,v$. d) An elementary anyon pair must fuse to the vacuum, which tells us that the local flux of the excitation living at the origin must be the inverse of the topological flux of the excitation away from the origin.}
\label{fig:double-model1}
\end{figure}

Next, we define the vertex operator $A^g_s$ which acts on all edges incident to the vertex $s$, and the plaquette operator $B^h_p$ which acts on the edges forming the boundary of plaquette $p$. The action of $A^g_s$ is to left-multiply each edge going \emph{out of} the vertex $s$ by $g$, and right-multiply each edge going \emph{into} the vertex $s$ by $\overline{g}$ (the inverse of the group element $g$). The operator $B^h_p$ projects the product of group elements around the plaquette $p$ onto the value $\bar{h}$; this product is taken starting at a given vertex $s$ and proceeding counterclockwise, with group elements multiplied from left to right. If the orientation of an edge is aligned with the  counterclockwise direction, it contributes its value to the product; if the orientation is is anti-aligned, that edge contributes its inverse. 

For the examples in Fig. \ref{fig:double-model1}a, $A^g_s$ and $B^h_p$ act in the following way:
\begin{equation}\label{eq:plaq_vert_ops}
\begin{aligned}
&A^g_s \ket{x_1, x_2, x_3, x_6,\dots} = \ket{x_1 \overline{g}, gx_2, gx_3, x_6 \overline{g},\dots}\\
&B^h_p \ket{x_3, x_4, x_5, x_6,\dots} = \delta_{\overline{h}, x_3 \overline{x}_4 \overline{x}_5 x_6} \ket{x_3, x_4, x_5, x_6,\dots}
\end{aligned}
\end{equation}
where ``$\dots$'' denotes the edges $x_i$ that are disjoint from $s$ and $p$. If they are associated with the same site $(s, p)$, the operators $A^g_s$ and $B^h_p$ generate the following algebra:
\begin{equation}\label{eq:drinfeld}
    A^{g_1}_s A^{g_2}_s = A^{g_1g_2}_s, \qquad B^{h_1}_pB^{h_2}_p = \delta_{h_1, h_2}B^{h_1}_p, \qquad A^g_s B^h_p = B^{gh\overline{g}}_p A^g_s
\end{equation}
The algebra generated by these commutation relations is called the \emph{Drinfeld double} of $G$, $\mathcal{D}(G)$; we can also think of it as the algebra of local operators at the site $(s, p)$ on the lattice.

Physically, we can view applying $A^g_s$ for all $g$ as imposing the Gauss law, checking for any "electric" charge living at the vertex $s$; whereas applying $B^h$ checks for any "magnetic" flux $h$ at the plaquette $p$. In the next section, we will see the specific form of the $A^g_s$ vertex operator and $B^h_p$ plaquette operators present in the Hamiltonian.

\subsection{Ground State Manifold}\label{ground-state-subspace}

The ground state subspace of this lattice model is defined by the span of states $\{|\mathrm{vac}\rangle\}$ satisfying the following stabilizer conditions:
\begin{equation}
    A_s^g|\mathrm{vac}\rangle = |\mathrm{vac}\rangle, \qquad B^h_{p}|\mathrm{vac}\rangle = \delta_{h, e}|\mathrm{vac}\rangle \qquad \forall s,p
\label{eq:stab-conditions}
\end{equation}

A Hamiltonian with such a ground state manifold (note that this Hamiltonian is not generically unique) can be written: 
\begin{equation}
    H_G = -\sum_s A_s - \sum_p B^e_p
\end{equation}
where $A_s = \frac{1}{|G|}\sum_{g \in G} A^g_s$ and $B^e_p$ are commuting projectors\footnote{Note that for non-Abelian $G$, the Hamiltonian of $\mathcal{D}(G)$ is not a stabilizer Hamiltonian, as there is no restriction that $A_s$, $B_p$ are products of Pauli operators. Stabilizer eigenvalues uniquely label \emph{all} states in the Hilbert space, not just the ground states. Commuting projector models, on the other hand, are still exactly solvable for the ground state manifold, since all the terms in the Hamiltonian commute. However, the full spectrum of excited states cannot be labeled uniquely by eigenvalues of these projectors.}.

The projectors ensure the ground states are free of excitations; a state that is annihilated by a vertex projector will have an ``electric'' charge living at the vertices, while a state annihilated by a plaquette term contains a``magnetic'' flux at that plaquette. To detect charge, the vertex operators $A_s$ check whether the state transforms nontrivially under the action of $G$. To detect flux, the plaquette operators $B^e_p$ check whether the product of group elements around a plaquette is $e$, i.e. trivial flux. In order to understand the properties of the ground state manifold, in particular its dimension, we need to enumerate the possible anyon excitations, which we will study next. 

\subsection{Excited states}
\label{excited-states}

We define an excitation as a violation of the vacuum stabilizer conditions localized to a given site (made up of a neighboring plaquette and vertex), as defined in Eq. \ref{eq:stab-conditions}. An elementary excitation involves a violation of at most two stabilizer conditions: one associated with a vertex operator and one associated with a plaquette operator. We will see later on that any local process always create a pair of excitations rather than a single excitation.

We will first consider a basis for the space of elementary excitations which we call the ``microscopic basis''. In this basis, we define an excitation by a pair of group elements, $z$ and $v$. The element $z$ gives the product of edge values along a path from the vertex associated with the excitation to an arbitrarily chosen but fixed origin\footnote{If you are familiar with the toric code, this is reminiscent of the string operators that create $e$ and $m$ particles; while it will turn out to be related, $z$ is a bookkeeping quantity here to label excited states. The true analog of the string operators will be the ribbon operators, which will be discussed in the next section.}, and $v$ gives the product of link values around the plaquette associated with the excitation (see Fig. \ref{fig:double-model1}b). We define the ``topological'' flux $w = \overline{z}vz$, which is the product of edge values around a loop from the origin around the plaquette of the excitation and back (see Fig. \ref{fig:double-model1}c). Why should we consider such a quantity? As we are dealing with non-Abelian group elements, when we take their product around a loop, where we start in that loop will impact the outcome. To properly compare the flux of different excitations, we need to measure their properties from the same starting point, which we define as the origin. 

Note that when we create an excitation at site $(s,p)$ with topological flux $w$ based at the origin $s_0$, we have also simultaneously created an excitation with local flux $\bar{w}$ at the origin $(s_0,p_0)$, such that the total flux is $w \cdot \bar{w}=e$, i.e. the total flux is trivial (Fig. \ref{fig:double-model1}d). With local operations, we always create pairs of anyon excitations in the vacuum fusion channel (trivial total flux and total charge) in order to satisfy the neutrality conditions, which is covered in more detail in Sec. \ref{neutrality}. 

We call $w$ topological (with respect to the excitation away from the origin at site $(s_0,p_0)$) because it is not affected by the actions of $A^g_s$ and $B^e_p$ operators local to the site of the excitation at $(s,p)$ \footnote{It is a useful exercise to check that this is the case.}. On the other hand, we call $v$ ``local'' because it will be affected by operators acting at the excitation site. The value of $z$, as the $z$ string stretches from the origin to the excitation site, will be affected both by local operators and operators with support far away. We call the degrees of freedom affected by local operators ``flavor'', and the nonlocal degrees of freedom ``color''. Ideally, we would like to deal only with color degrees of freedom because these will be unaffected by local processes like noise. Eventually, we will encode our logical information in these color degrees of freedom. 

We define the flavor operators (which affect local degrees of freedom):
\begin{equation}
    \begin{aligned}
        &A^g_{fl}|z, w\rangle = |gz, w\rangle\\
        &B^h_{fl}|z, w\rangle = \delta_{\overline{h}, zw\overline{z}}|z, w\rangle
    \end{aligned}
\end{equation}
The flavor operators are exactly the plaquette and vertex operators defined in Eq. \ref{eq:plaq_vert_ops} acting at the site $(s,p)$ of the excitation away from origin, written in terms of their action on $z$ and $w$.

The color operators (which affect global degrees of freedom) are defined:
\begin{equation}
    \begin{aligned}
        &A^g_{cl}|z, w\rangle = |z\overline{g}, gw\overline{g}\rangle\\
        &B^h_{cl}|z, w\rangle = \delta_{h, w}|z, w\rangle
    \end{aligned}
    \label{eq:color-operators}
\end{equation}
The color operators are exactly the plaquette and vertex operators defined in Eq. \ref{eq:plaq_vert_ops} acting at the origin $(s_0,p_0)$, where the other excitation of the pair is located. The color operators are chosen to commute with the flavor operators and to follow the same algebra given by Eq. \ref{eq:drinfeld}. We see that $z$ transforms under both the flavor and color operators; additionally, the states $\ket{z, w}$ are not eigenstates of either the flavor or color operators. Well-defined excitations should not change their type under the action of local operators--- we want a basis for excited states in terms of flavor and color degrees of freedom instead, and a way of labeling excitations which is invariant under the action of $A^g$ and $B^h$. We call this basis the ``anyon basis''. It will be exactly the basis corresponding to defined anyon types, which are irreducible representations of the quantum double.

\begin{table}
    \centering
    \begin{tabular}{|c|c|c|c|c|}
        \hline
        Conjugacy class & Centralizer & Charge & $d$ & Alternate label\\
        \hline
        \hline
         &  & $[+]$ & 1 & A\\
        $C_1$ & $Z(e) = S_3$ & $[-]$ & 1 & B\\
         &  & $[2]$ & 2 & C\\
         \hline
        $C_2$ & $Z((12)) \cong \mathbb{Z}_2$ & $[+]$ & 3 & D\\
        & & $[-]$ & 3 & E\\
        \hline
        & & $[+]$ & 2 & F\\
        $C_3$ & $Z((123)) \cong \mathbb{Z}_2$  & $[\omega]$ & 2 & G\\
        & & $[\omega^*]$ & 2 & H\\
        \hline
    \end{tabular}
    \caption{All possible anyon types in the $S_3$ quantum double model. The conjugacy class determines the flux content of the anyon type, while the irreducible representation of the centralizer determines the charge content of the anyon type. $d$ denotes the quantum dimension of the anyon; an anyon is classified as an Abelian anyon if it has a quantum dimension of $1$, and a non-Abelian anyon if its quantum dimension is greater than $1$. The final column lists alternate labels for each anyon type often used in the literature.}
    \label{tab:anyon_type}
\end{table}

Indexing irreducible representations (anyon types) of the Drinfeld double $\mathcal{D}(G)$ by $\alpha$, we can decompose Hilbert space into decoupled subspaces involving flavor and color degrees of freedom:
\begin{equation}
    \mathcal{H} = \bigoplus_{\alpha} \mathcal{L}_{\alpha} \otimes \mathcal{L}_{\overline{\alpha}}
\end{equation}
There are 8 anyon types in $D(S_3)$, as enumerated in Table \ref{tab:anyon_type}, with fusion rules listed in Table \ref{tab:fusion_rule}. Each anyon type $\alpha$ of $\mathcal{D}(G)$ is defined by the following:
\begin{enumerate}
    \item A \emph{conjugacy class} of $G$: In the case of $S_3$, there are three conjugacy classes: $C_e = \{e\}$, the 2-cycles $C_2 = \{(12), (23), (13)\}$ and the 3-cycles $C_3 = \{(123), (132)\}$. The conjugacy class defines the types of flux excitations.
    \item An \emph{irreducible representation of the centralizer} of a representative member of the conjugacy class; this defines the charge of the particle.\footnote{See appendix \ref{group-theory} for more on the representation theory of $S_3$.} 
\end{enumerate}
Here, we note that this accounting of anyon types is compatible with the physical Hilbert space available at the microscopic level. Each unit cell of the lattice has two qudits, yielding a total Hilbert space dimension of $6^2 = 36$. Furthermore, each unit cell can be associated with a unique plaquette-vertex pair $(s, p)$, which can host one anyon of any type. The total Hilbert space dimension needed to accomodate these anyon states is $1 + 1 + 2^2 + 3^2 + 3^2 + 2^2 + 2^2 + 2^2 = 36$. 

In the following sections, we will go through why these properties define anyon types, focusing on the physical intuition.\footnote{For more information on the mathematical aspects of Drinfeld doubles, see \cite{cui_topological_2018}.}

\begin{table}
    \centering
    \begin{tabular}{|c||c|c|c|c|c|c|c|c|}
        \hline
        $\otimes$ & $A$ & $B$ & $C$ & $D$ & $E$ & $F$ & $G$ & $H$\\
        \hhline{|=#=|=|=|=|=|=|=|=|} 
        $A$ & $A$ & $B$ & $C$ & $D$ & $E$ & $F$ & $G$ & $H$ \\ 
        \hline
        $B$ &  & $A$ & $C$ & $E$ & $D$ & $F$ & $G$ & $H$ \\
        \hline
        $C$ &  &  & $A\oplus B \oplus C$ & $D\oplus E$ & $D\oplus E$ & $G\oplus H$ & $F\oplus H$ & $F\oplus G$ \\
        \hline
        $D$ &  &  &  & \begin{tabular}{@{}c@{}}$A\oplus C \oplus F$ \\ $\oplus G\oplus H$\end{tabular}  & \begin{tabular}{@{}c@{}}$B\oplus C\oplus F$ \\ $\oplus G\oplus H$\end{tabular}  & $D\oplus E$ & $D\oplus E$ & $D\oplus E$ \\
        \hline
        $E$ &  &  &  &  & \begin{tabular}{@{}c@{}}$A\oplus C\oplus F$ \\ $\oplus G\oplus H$\end{tabular} & $D\oplus E$ & $D\oplus E$ & $D\oplus E$ \\
        \hline
        $F$ &  &  &  &  &  & $A\oplus B\oplus F$ & $C\oplus H$ & $C\oplus G$ \\
        \hline
        $G$ &  &  &  &  &  &  & $A\oplus B\oplus G$ & $C\oplus F$ \\
        \hline
        $H$ &  &  &  &  &  &  &  & $A\oplus B\oplus H$ \\
        \hline
    \end{tabular}
    \caption{Fusion rules of anyons in the $S_3$ quantum double model. For non-Abelian anyons, since their quantum dimension is greater than $1$, there are multiple possible fusion outcomes in general. Since the fusion is symmetric, i.e. $a \otimes b = b \otimes a$, we only enumerate the upper triangular part of the table.}
    \label{tab:fusion_rule}
\end{table}

\subsubsection{Conjugacy class and flux}

We first discuss why the flux of an anyon is described by the conjugacy class rather than an individual group element\footnote{The following argument is mostly based on John Preskill's lecture notes on topological quantum computing \cite{preskill_lecture_2004}.}. The flux of a particle can be measured by moving a charge along a closed loop around the flux--- exactly as in the Aharonov-Bohm effect, where we can infer the magnetic field of a solenoid by encircling it by an electron. Therefore, we can represent the flux of a particle by a closed loop we would use to measure that flux. Note that all loops need to start and end at the same origin due to the non-Abelian nature of our gauge group. This means the flux we are considering in this section is really the topological flux $w$ introduced in the previous section---the local flux $v$ is measured from a different place for each excitation and so does not allow for consistent comparison of different excitations.

\begin{figure}[h!]
\centering
\includegraphics[width=1\textwidth]{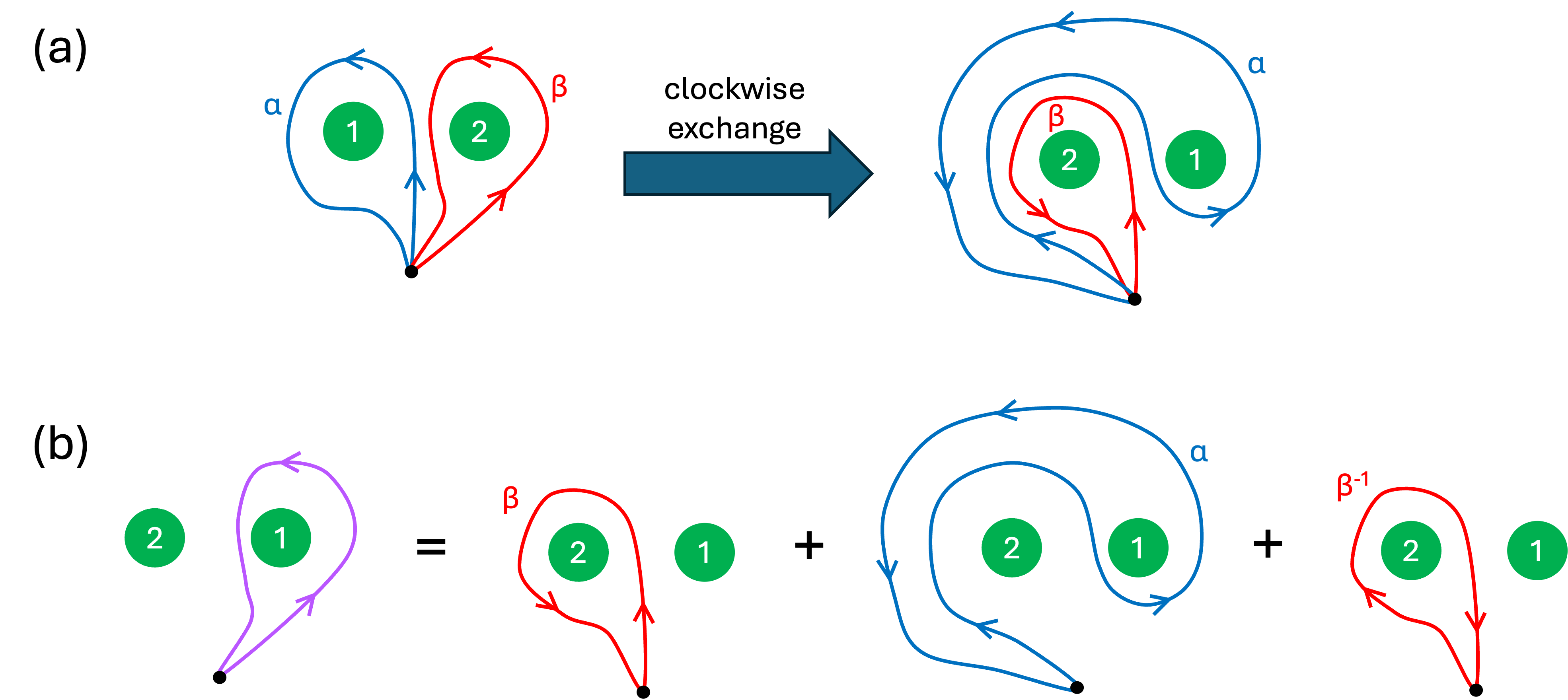}
\caption{Illustration for how braiding changes flux labeling. (a) Clockwise exchanging two flux particles. The loops corresponding to the original flux labels are continuously deformed during the braiding. (b) Decomposition of the loop around particle 1 (in its new location) in terms of previous loops $\alpha,\beta$.}
\label{fig:braided}
\end{figure}

Let ud consider two particles 1 and 2, enclosed by loops $\alpha$ and $\beta$ respectively ($\alpha,\beta\in G$), as shown in the left panel of Fig. \ref{fig:braided}(a). Now we exchange the two particles clockwise. The loops enclosing the particles are smoothly deformed, following the trajectory of the particles, as shown in the right panel of Fig. \ref{fig:braided}(a). In order to compare the fluxes before and after the exchange, we need to construct loops around them that look more like the left panel of Fig. \ref{fig:braided}(a). The loop around particle 2 does not need to be changed, and we see that particle 2 still has flux $\beta$. For particle 1, to measure its new flux, we need instead the purple flux loop in the leftmost panel of Fig. \ref{fig:braided}(b). To express it in terms of the original $\alpha$ and $\beta$ loops, we can decompose the purple loop into a concatenation of three loops, as shown on the right hand side of Fig. \ref{fig:braided}(b). This tells us that the new flux of particle 1 is $\bar\beta \alpha \beta$ (we use the convention of ordering from right to left).

Symbolically, we can express the result of the exchange as the action of some clockwise exchange operator $R$ on the state $\ket{\alpha,\beta}$:
\begin{equation}
    R\ket{\alpha,\beta} = \ket{\beta,\overline{\beta}\alpha\beta}
\end{equation}
Similarly, a counterclockwise exchange $R^{-1}$ would leave the first particle's flux unchanged, but conjugate the flux of particle 2:
\begin{equation}\label{eq:ccw-exchange-op}
     R^{-1}\ket{\alpha,\beta} = \ket{\alpha\beta\overline{\alpha},\alpha}
\end{equation}

Our choice of location of base point led to the asymmetric action of exchange $R$ on the two fluxes; however, this asymmetry is not physical---in order to measure the effect of exchanging two particles we need to ensure that the final configuration (in terms of the position of the particles) is the same as the initial configuration. Between two different flux $\alpha,\beta$, this is achieved by doing a full winding, i.e. clockwise exchanging the positions of the two particles twice, such that the particles return to their original locations after the operation. The overall effect of full winding will be to conjugate the particles by their total flux $\alpha \beta$:
\begin{equation}
    R^2\ket{\alpha,\beta} = \ket{\overline{(\alpha\beta)}\alpha(\alpha\beta),\overline{(\alpha\beta)}\beta(\alpha\beta)}
\end{equation}

Now, consider two observers Alice and Bob, who measure the flux information of the particles before and after the clockwise winding, respectively. Neither observer is aware of the prior history of the particle, and they need not measure the flux from the same base point. However, since the two particles are are in the same configuration according to Alice and Bob, they should agree on the particle types inferred from the flux measurements. Therefore, their two outcomes, $\alpha$ and $\overline{(\alpha\beta)}\alpha(\alpha\beta)$, should correspond to the same type of particle. This means we should define an equivalence relation between group elements related to each other by conjugation--- this is precisely a conjugacy class. So flux is only defined up to conjugacy class---the individual elements of a conjugacy class label the internal states of a flux, rather than the flux type itself. 

This thought experiment highlights a key property of the internal states of a non-Abelian anyon: they are not fundamental labels like the z-component of a spin, but rather a choice of convention. Different observers will generically assign different internal states to the same particle. However, all measurement results should be related by a consistent, \emph{global} change of convention; two observers should agree on whether two different particles have the same internal label or not.

\subsubsection{Irreducible representations and charge}

The charge information of an anyon is given by the irreducible representation of the centralizer (of a representative of a conjugacy class). To parse this involved definition, we need to first understand, why the centralizer? The centralizer of a group element $g \in G$ is the subgroup of all elements that commute with $g$: $Z(g) = \{h \in G| gh = hg\}$.\footnote{The centralizer is well-defined up to different choices of representatives within the same conjugacy class: given two elements $g_1,g_2$ in the same conjugacy class, their respective centralizer $Z(g_1)$ and $Z(g_2)$ are isomorphic. Therefore, by picking one representative from a given conjugacy class we can define the centralizer for the whole conjugacy class.} 

As with flux, we can gain some intuition by thinking about how we would physically measure the charge of a particle in experiment. Given an unknown charge, we can use a beam of calibrated fluxes to measure the charge: the fluxes can either pass to the left of the charge or to the right of the charge. There is a phase associated to each possible path, and the two topologically distinct paths (going to the left or right of the charge) will give rise to an interference pattern---the particulars of the interference pattern for different test fluxes give us information on the unknown charge. However, the interference pattern will be destroyed if the two topologically distinct paths are \emph{distinguishable}.

The two paths are distinguishable when the charge also carries a flux $h$ which does not commute with the test flux $g$; then, the interference pattern will be lost. This is due to the fact going left or right around the particle corresponds to a clockwise or counter-clockwise exchange. In the first case, the flux of the test particle changes from $g$ to $hg\overline{h} \neq g$. In the second, nothing happens to the flux $g$. We can tell which path the test particle has taken based on how its flux changes, so there will be no interference pattern. The interference pattern will only be preserved if $g$ commutes with $h$, so that $\bar h g h = g$. Thus, we can only define the charge of an anyon with flux $h$ in terms of the centralizer $Z(h)$. Note that, if we were to label the flux $h$ with a different element $h'$ in the same conjugacy class, the centralizer $Z(h')$ will not strictly be the same, but is related to $Z(h)$ by an isomorphism--- crucially, this means they have the equivalent representations.

The last piece of information to specify a charge is the irreducible representation. Moving a flux $g$ in a closed path around a charge $R$ will transform the state of the charge according to the unitary matrix $\Gamma^R(g)$, where $R$ is an irreducible representation of $S_3$. This unitary transformation will generically involve picking up some phase. The two paths the test fluxes can take in the interference experiment are related by such a closed loop--- going right around the charge is the same as going left and then fully around the charge counter-clockwise. This means there will be a phase shift due to the charge-flux topological interaction. By measuring the shift of the interference pattern due to this phase, we can extract the matrix elements of $\Gamma^R(g)$. Repeating with many different $g$ will allow us to determine the representation $R$. 

To summarize, if we have some particle with both charge and flux, the charge must be an irreducible representation of the centralizer $Z(h)$ (the subgroup containing all elements $g \in S_3$ that commute with the flux of the unknown particle), not of the whole group\footnote{This thought-experiment was drawn from John Preskill's lecture notes on anyons and topological quantum computing\cite{preskill_lecture_2004}. See \cite{overbosch_inequivalent_2001,bonderson_interferometry_2008} for further reading on non-Abelian anyon interferometry.}.

\subsection{Relating the anyon basis to the microscopic basis} \label{basis-change}

While the anyon basis allows understanding abstractly how the excitations in our model behave and interact, the microscopic basis is clearer for doing explicit calculation. Also, in any near-term realization of $S_3$ anyons on a quantum simulator, we will have access to the degrees of freedom on the lattice, which allows us to create and manipulate excitations in the microscopic basis directly, rather than having to rely on control over excitations in the anyon basis. Therefore, it is useful to write down the explicit map between these two bases.

As an example, let us try expressing flux excitations in the $C_3$ conjugacy class in terms of the microscopic basis. First, we enumerate all possible internal states in the anyon basis with $C_3$ flux. The internal Hilbert space $\mathcal{H}_{C_3}$ is 12 dimensional---each $C_3$ flux carries an internal Hilbert space of dimension $4$\footnote{Each anyon has a color and flavor degree of freedom, which are each Hilbert spaces of dimension $2$ in the case of $C_3$ flux.}, and there are three types of anyon with $C_3$ flux, differentiated by their charge ($[+]$, $[\omega]$, and $[\omega^*]$). This means, in terms of the anyon basis, we can decompose $\mathcal{H}_{C_3}$  into a direct sum over these different anyon types:
\begin{equation}
    \mathcal{H}_{C_3} = \left( \mathcal{L}_{[C_3,1]} \otimes  \mathcal{L}_{[C_3,1]} \right ) \oplus \left( \mathcal{L}_{[C_3, \omega^*]} \otimes  \mathcal{L}_{[C_3, \omega]} \right ) \oplus \left( \mathcal{L}_{[C_3, \omega]} \otimes  \mathcal{L}_{[C_3, \omega^*]} \right )
\end{equation}

The flavor-color basis states in each part of this decomposition will look like $|v, \overline{R}\rangle \otimes |w, R\rangle$, where $v$ is the local flux, $w$ is the topological flux, and $R$ is the representation of the charge. The first term in this decomposition corresponds to pure $C_3$ fluxes with trivial charge; i.e. the $[C_3,1]$ anyons. We will denote the four basis states for this subspace by (omitting the labeling for the $1$ representation):
\begin{equation}
\begin{aligned}
    \ket{\mu ; \mu} &\equiv \ket{\mu} \otimes \ket{\mu}\\
    \ket{\overline\mu ; \overline\mu} &\equiv \ket{\overline\mu} \otimes \ket{\overline\mu}\\
    \ket{\mu ; \overline\mu} &\equiv \ket{\mu} \otimes \ket{\overline\mu}\\
    \ket{\overline\mu ; \mu} &\equiv \ket{\overline\mu} \otimes \ket{\mu}
\label{eq:anyon-basis-states-c3}
\end{aligned}
\end{equation}
where $\mu = (123)$ and $\sigma = (23)$\footnote{We will use the cycle notation and the $\mu, \sigma$ notation interchangeably throughout the rest of the text--- see appendix \ref{group-theory} for notation and convention. The key identity to remember is that $\mu \sigma = \sigma \overline{\mu}$, which we will use a lot going forward.}. 

We want to express each of the states in equation \ref{eq:anyon-basis-states-c3} in the $|z, w\rangle$ basis. First, let us consider $|\mu ;\mu\rangle$. Because of the definition of global flux in terms of local flux, $w = \bar{z}vz$, this constrains the possible value of $z$; if $w = \mu$, $v = \mu$, then $z \in \{e, \mu, \overline{\mu}\}$. Therefore, the general form of $\ket{\mu;\mu}$ in terms of the microscopic states $\{\ket{z, w}\}$ is
\begin{equation}\label{eq:C3-anyon-basis-in-terms-of-microscopic-basis}
\begin{aligned}
|\mu ;\mu\rangle &= \sum_{z} c_z |z, w\rangle \\
&= c_{e} |e, \mu\rangle + c_{\mu} |\mu, \mu\rangle + c_{\overline{\mu}}|\overline{\mu}, \mu\rangle  
\end{aligned}
\end{equation}

We can determine the coefficients $c_z$ (up to normalization) by acting $A^{\mu}_{\mathrm{cl}}$ and $A^{\mu}_{\mathrm{fl}}$ on the state and demanding it transforms the way we expect. Considering the left-hand side of Eq. \ref{eq:C3-anyon-basis-in-terms-of-microscopic-basis} first, using the definition of color vertex operator (Eq. \ref{eq:color-operators}), and noting that the flavor-color basis is expressed by $\ket{v,w}$, we find that acting $A^{\mu}_{\mathrm{cl}}$ gives:
\begin{equation}
    A^{\mu}_{\mathrm{cl}}|\mu ;\mu\rangle = |\mu; \mu (\mu) \overline{\mu}\rangle = |\mu; \mu\rangle
\end{equation}

Why should applying $A^{\mu}_{\mathrm{cl}}$ act trivially on the local flux and conjugate the topological flux by $\mu$? As we will see in Sec. \ref{closed-ribbons}, the action of local $A^g$ operators is to wind a flux of $g$ around a given vertex. The color operator $A_{cl}^g$ winds a flux $g$ around the origin, affecting the topological degrees of freedom rather than the local ones.

Using Eq. \ref{eq:color-operators} again, acting $A^{\mu}_{\mathrm{cl}}$ on the right-hand side of Eq. \ref{eq:C3-anyon-basis-in-terms-of-microscopic-basis} gives:
\begin{equation}
    A^{\mu}_{\mathrm{cl}}\left(c_{e} |e, \mu\rangle + c_{\mu} |\mu, \mu\rangle + c_{\overline{\mu}}|\overline{\mu}, \mu\rangle  \right) = c_{e} |\overline{\mu}, \mu\rangle + c_{\mu} |e, \mu\rangle + c_{\overline{\mu}}|\mu, \mu\rangle
\end{equation}
Setting the right-hand side equal to the left-hand side, we see that $c_e = c_{\mu} = c_{\overline{\mu}}$. In order for the state to be normalized, we need $c_e = \frac{1}{\sqrt{3}}$. So we find:
\begin{equation}
    |\mu ;\mu\rangle = \frac{1}{\sqrt{3}} \left(|e, \mu\rangle + |\mu, \mu\rangle + |\overline{\mu}, \mu\rangle \right) 
\label{eq:mumu}
\end{equation}
We can repeat the same procedure for the other three states in this subspace:
\begin{equation}
\begin{aligned}
    &|\overline{\mu} ; \overline{\mu} \rangle = \frac{1}{\sqrt{3}} \left(|e, \overline{\mu}\rangle + |\mu, \overline{\mu}\rangle + |\overline{\mu}, \overline{\mu}\rangle \right) \\
    &|\overline{\mu} ; \mu \rangle = \frac{1}{\sqrt{3}} \left(|\sigma, \mu\rangle + |\overline{\mu} \sigma, \mu\rangle + |\mu \sigma, \mu\rangle \right)\\
    &|\mu ; \overline{\mu} \rangle = \frac{1}{\sqrt{3}} \left(|\sigma, \overline{\mu}\rangle + |\overline{\mu} \sigma, \overline{\mu}\rangle + |\mu \sigma, \overline{\mu}\rangle \right) 
\label{eq:otherstates}
\end{aligned}
\end{equation}

Next, let us look at states in the $\mathcal{L}_{[C_3, \omega^*]} \otimes \mathcal{L}_{[C_3, \omega]}$ subspace. These are states with color that transforms like the one-dimensional $[\omega]$ representation of the three-cycle centralizer $Z(C_3) \cong \mathbb{Z}_3$ (see Appendix \ref{rep-theory-z3} for the definition of the $[\omega],[\omega^*]$ representation). In the flavor-color basis, the four states in this subspace are:
\begin{equation}
\begin{aligned}
    &|\mu, \omega^* ; \mu, \omega \rangle\\
    &|\overline{\mu}, \omega^* ; \overline{\mu}, \omega \rangle\\
    &|\overline{\mu}, \omega^* ; \mu, \omega \rangle\\
    &|\mu, \omega^* ; \overline{\mu}, \omega \rangle
\end{aligned}
\end{equation}
Let us find the decomposition of the first state, $|\mu, \omega^* ; \mu, \omega \rangle$, in the $|z, w\rangle$ basis:
\begin{equation}
\begin{aligned}
     |\mu, \omega^* ; \mu, \omega \rangle &= \sum_z c_z |z, \mu\rangle\\
     &= c_e |e, \mu\rangle + c_{\mu}|\mu, \mu\rangle + c_{\overline{\mu}}|\overline{\mu}, \mu\rangle
\end{aligned}
\end{equation}
As before, we only sum over $z$ consistent with the condition $w = \overline{z}vz$. We can apply the operator $A^{\mu}_{\mathrm{cl}}$ to both sides and set them equal to find the coefficients $\{c_z\}$. The left-hand side becomes:
\begin{equation}
    A^{\mu}_{\mathrm{cl}} |\mu, \omega^* ; \mu, \omega \rangle = \omega  |\mu, \omega^* ; \mu, \omega \rangle
\end{equation}
Where does the overall phase $\omega$ come from? As we noted before, the $A^g$ operator is really winding a flux of $g$ around a given vertex: winding a flux around a charge transforms the charge according to its representation, in this case simply the one-dimensional representation $[\omega]$, where $\mu$ acts as a phase factor $\omega$.

Applying $A^{\mu}_{\mathrm{cl}}$ to the right-hand side and setting equal to the left-hand side yields:
\begin{equation}
\begin{aligned}
    A^{\mu}_{\mathrm{cl}} \left(c_e |e, \mu\rangle + c_{\mu}|\mu, \mu\rangle + c_{\overline{\mu}}|\overline{\mu}, \mu\rangle\right) &= c_e |\overline{\mu}, \mu\rangle + c_{\mu}|e, \mu\rangle + c_{\overline{\mu}}|\mu, \mu\rangle \\
    &= \omega \left (c_e |e, \mu\rangle + c_{\mu}|\mu, \mu\rangle + c_{\overline{\mu}}|\overline{\mu}, \mu\rangle \right)
\end{aligned}
\end{equation}
We see that $c_e = \omega c_{\overline{\mu}}$, $c_{\mu} = \omega c_e$, and $c_{\overline{\mu}} = \omega c_{\mu}$. We can take the convention that $c_e = 1$, and we find then that $c_{\mu} = \omega$ and $c_{\overline{\mu}} = \omega^*$. To normalize, we just divide everything by $\sqrt{3}$ to arrive at the final expression
\begin{equation}
    |\mu, \omega^* ; \mu, \omega \rangle = \frac{1}{\sqrt{3}} \left(|e, \mu\rangle + \omega|\mu, \mu\rangle + \omega^*|\overline{\mu}, \mu\rangle \right)
\end{equation}

We will not go through the rest of the $C_3$ states, nor the other anyon types, but the same method can be used to find their decompositions in terms of the microscopic basis. The general closed-form expression for the basis transformation can be found in appendix \ref{ap:basis-change}. One key aspect to stress here is the method we used to find these relationships. We can use the $A^g$, $B^h$ operators to check how any state transforms---given any composite state of many anyons, we can apply the same method to determine the charge and flux of the overall state.

\subsection{Single-particle excitations}

Perhaps counter-intuitively, single-particle excitations are possible in non-Abelian topological order when put on a non-trivial topology \cite{bombin_family_2008,iqbal_non-abelian_2024,Ritz24}. For example, for the $S_3$ quantum double, there are 8 possible ground states and $28$ possible single-particle excited states on a torus. They do not violate any neutrality conditions (as will be discussed in the next section), because generating a single-particle excitation requires an operator with extensive support over the entire system, whereas the neutrality conditions are local statements.

Here we detail how to understand the single-particle excitations in $D(S_3)$ specifically. Consider the $S_3$ quantum double on a torus with the smallest possible geometry---a $1$ by $1$ square lattice with periodic boundary conditions. On this small torus, there is only one vertex, one plaquette, and two independent edges. The vertex operator $A^g$ simply conjugates both edges by $g$, and the plaquette operator $B^h$ now enforces the group commutator of the two edges $g_1,g_2$ be $g_1g_2g_1^{-1}g_2^{-1} = h$.

Since there are $2$ edges, the full Hilbert space is of dimension $6^2=36$. Pair excitations are not possible since there is only one site; so the only possible states are ground states and single-particle excited states. We have already established that there are eight possible ground states in $D(S_3)$. We can now understand them as combinations of $g_1,g_2$ such that the group commutator $g_1g_2g_1^{-1}g_2^{-1}=e$. We can construct them by starting from a representative configuration that satisfies the plaquette operator (such that it is in the zero-flux sector). The eight representative configurations for each of the eight ground states are enumerated in Fig. \ref{fig:S3-ground-state-config}. To obtain the ground state, we apply the vertex projector $\sum_{g \in G} A_g$ to create a symmetric superposition of configurations, projecting to the zero-charge sector.

\begin{figure}[h!]
\centering
\includegraphics[width=0.6\textwidth]{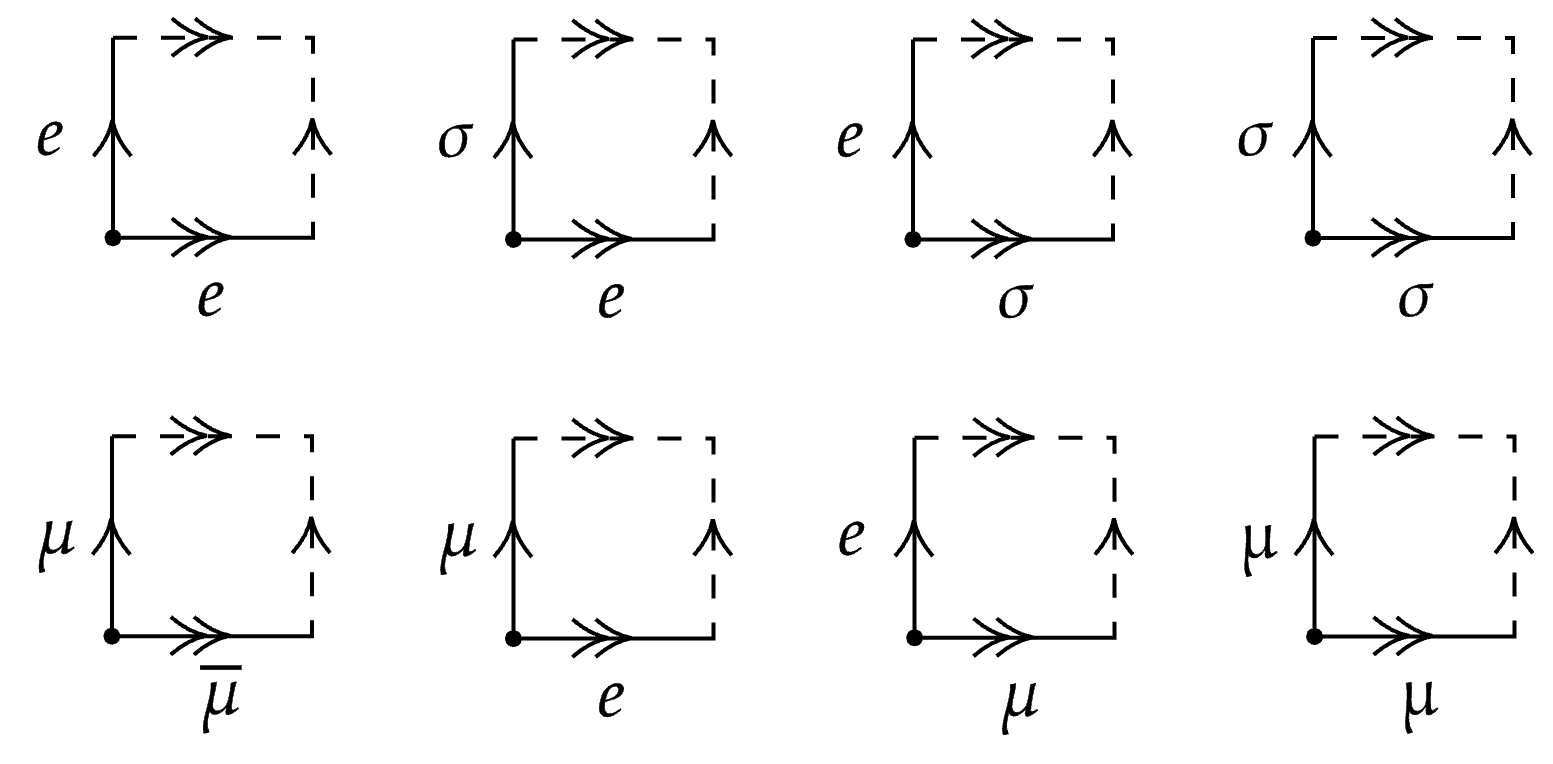}
\caption{Representative configurations for each possible ground state in the $S_3$ quantum double. The configurations satisfy the plaquette operator $B^e$, so they are in the zero-flux sector. Applying the vertex operator to project to the zero-charge sector will generate resonating terms to give an overall symmetric superposition, yielding a ground state. We remark that the number of distinct ground states on the torus is the exact same counting of the anyon labels, since the vertical group element is the flux and the horizontal group element is a conjugacy class of the centralizer, which we can (non-canonically) associate to the irreps of that centralizer.}
\label{fig:S3-ground-state-config}
\end{figure}

Since the vertex operator involves conjugating both the horizontal and vertical edges, it is impossible to start with one of the above representatives and obtain another representative as part of the resonating terms. Hence the above defines eight distinct ground states.

The other $36-8=28$ states are single-particle excited states. The single-flux excitation corresponds to a nontrivial group commutator $g_1g_2\bar g_1\bar g_2\neq e$; the single-charge excitation corresponds to non-symmetric superpositions of the resonating terms. From this, we can infer a selection rule for single-flux excited states: it is not possible to have single-flux excitation where the flux is not in the commutator subgroup of the gauge group. For example, in $S_3$ quantum double, the commutator subgroup of $S_3$ is $\mathbb{Z}_3=\{e,\mu,\bar\mu\}$. So it is impossible to have $C_2$ single-particle excitations, as the $C_2$ internal states label $\sigma,\mu\sigma,\bar\mu\sigma$ are not in the commutator subgroup. We are not aware of an explicit mention in the literature of such selection rule for single-particle excitations in quantum doubles, so we think it warrants a mention here.

\subsection{Neutrality conditions}\label{neutrality}

For the quantum double theory to make sense physically, we should expect some restrictions on how fractionalized excitations like anyons can be created. Such restrictions are called ``neutrality conditions''. Given any set of localized operators (their support is restricted to a contractible region) that create anyons, we expect the fusion channel of these anyons should be the vacuum. In other words, any other anyon braiding fully around the region in question should not detect anything but the vacuum. 

Therefore, starting with a ground state, the neutrality conditions ensure that all excitations that can be created locally can eventually be fused back to vacuum. A pair of excitations can be created out of the vacuum only if they satisfy the neutrality conditions.

At the lattice level, the condition of global neutrality amounts to ensuring two conditions: flux neutrality and charge neutrality \textit{in a local region}.\footnote{The reason why we restrict the neutrality conditions to local contractible region is to rule out the case of single-particle states as discussed in the previous sub-section.}
\begin{itemize}
    \item \textbf{Flux neutrality}: Given a base point, all global fluxes (determined by multiplying group elements along a loop starting and ending at the same base point) must multiply to the identity group element. This ensures that the action of any charge braiding around the entire system is trivial. 
    \begin{equation}
        A^g_{\mathrm{global}}|\psi\rangle = |\psi\rangle ~\forall~ g \in G
    \end{equation}
    where $A^g_{\mathrm{global}}$ acts as $A^g_{cl}$ on every excitation in the state. Its action on the microscopic basis is
    \begin{equation}
        A^g_{\mathrm{global}}|z_1, w_1; \cdots; z_n, w_n\rangle = |z_1\overline{g}, gw_1\overline{g}; \cdots; z_n\overline{g}, gw_n\overline{g}\rangle
    \end{equation}
    \item \textbf{Charge neutrality}: The charge neutrality condition in a local region asserts that winding any pure flux around the local region will act trivially on the state. 
    \begin{equation}
    B^h_{\mathrm{global}}|\psi\rangle = \delta_{h, e}|\psi\rangle
    \end{equation}
    where $B^h_{\mathrm{global}}$ projects the product of all the global fluxes in the state to the group element $h$. Its action on the microscopic basis is
    \begin{equation}
        B^h_{\mathrm{global}}|z_1, w_1; \cdots; z_n, w_n\rangle = \delta_{h, w_1\cdots w_n}|z_1, w_1; \cdots; z_n, w_n\rangle
    \end{equation}
\end{itemize}

We will often call a neutral pairs of excitations ``singlets'', or ``vacuum pairs'' going forward. As an example, consider the singlet state of charge $[2]$ anyons:
\begin{equation}
    |\psi\rangle = \frac{1}{\sqrt{2}}(|2_+\rangle|2_-\rangle + |2_-\rangle|2_+\rangle)
\end{equation}
Acting $A^{\sigma}_{\mathrm{global}}$ on this state yields:
\begin{equation}
    \begin{aligned}
        A^{\sigma}_{\mathrm{global}}|\psi\rangle &= 
        A^{\sigma}_{\mathrm{global}}\frac{1}{\sqrt{2}}(|2_+\rangle|2_-\rangle + |2_-\rangle|2_+\rangle)\\
        &= \frac{1}{\sqrt{2}} (|2_-\rangle|2_+\rangle + |2_+\rangle|2_-\rangle)\\
        &= |\psi\rangle
    \end{aligned}
\end{equation}
where we used the fact $\sigma$ acts like the Pauli operator $X$ in the two-dimensional irreducible representation of $S_3$. Acting $A^{\mu}_{\mathrm{global}}$ yields:
\begin{equation}
    \begin{aligned}
        A^{\mu}_{\mathrm{global}}|\psi\rangle &= 
        A^{\mu}_{\mathrm{global}}\frac{1}{\sqrt{2}}(|2_+\rangle|2_-\rangle + |2_-\rangle|2_+\rangle)\\
        &= \frac{1}{\sqrt{2}} (\omega \omega^* |2_+\rangle|2_-\rangle + \omega^* \omega |2_+\rangle|2_-\rangle)\\
        &= |\psi\rangle
    \end{aligned}
\end{equation}

Applying $A^g_{\mathrm{global}}$ for the other elements in the group will give similar results, as they are simply different combinations of $\sigma$ and $\mu$. As we are only dealing with charges in this example, we can immediately see $\ket{\psi}$ will be invariant under $B^e_{\mathrm{global}}$, as it enforces the flux neutrality. We will see in the next section that the globally neutral subspace will be used as the logical state space for universal quantum computation.

\section{Universal Computation with S3} \label{universal_computing}
\begin{tcolorbox}[enhanced, breakable,pad at break*=1mm]
    In this section, we discuss the resources and protocols necessary to construct a universal gate set with $S_3$ anyons. We summarize the construction of a universal \emph{qubit} gate set built from qutrit operations. This gate set was developed by Kitaev in unpublished work; there is also a problem set from one of his courses at Caltech where he goes through the construction of the gate set \cite{kitaev_anyons_nodate}. We have collated the results in this section from these resources--- we have also worked out in detail the construction of the usual universal gate set of Clifford gates (Hadamard and S gate) together with a non-Clifford gate (CCZ) to explicitly demonstrate the universality of the gate set, gate fidelity considerations, and state initialization procedures.
\end{tcolorbox}

\subsection{From lattice description to effective field theory perspective}

We defer the technical details of the lattice-level implementation in terms of the ribbon operator in the next section \ref{ribbons}. For this section, to simplify the discussion, we take an effective-field theory perspective and directly work with anyons to implement the gates. 

\subsection{Initializing states} \label{state_initialization}

\subsubsection{Computational basis}

We use $C_2$ flux singlets to encode logical qutrits\footnote{Later on, when constructing the qubit gates, we will exclude one of the internal states from the qubit computational basis, although it will still be very useful for implementing qubit gates.}. We want to construct our fundamental qutrits from globally neutral fluxes; otherwise, we would introduce unwanted entanglement simply by moving our qutrits through the system. Hence, we define the computational basis as follows:  
\begin{equation}\label{eq:computational-basis-def}
    \begin{aligned}
        \ket{0}&\equiv \ket{\sigma, \sigma}\\
        \ket{1}&\equiv \ket{\mu \sigma, \mu \sigma}\\
        \ket{2}&\equiv \ket{\bar \mu \sigma, \bar \mu \sigma}
    \end{aligned}
\end{equation}

Each qutrit is made up of a pair of fluxes with total neutral flux. These computational basis states are eigenstates of the qutrit $\mathcal{Z}$ gate, also known as a ``clock'' operator; the qudit $\mathcal{X}$ gate is called a ``shift'' operator\footnote{For notational clarity, we will denote qutrit Pauli operators with calligraphic font, whereas qubit operators will be denoted in regular font} (see \cite{gottesman_fault-tolerant_1999} for an overview of universal quantum computing with qudits):
\begin{equation}
\begin{aligned}
    &\mathcal{Z}\ket{i} = \omega^i \ket{i}\\
    &\mathcal{X}\ket{i} = \ket{i+1}
\end{aligned}
\end{equation}
where $\omega = e^{i\frac{2\pi}{3}}$ is a third root of unity, and addition is assumed to be modulo 3. Note that the generalized Paulis $\mathcal{Z}$ and $\mathcal{X}$ obey the following commutation relation:
\begin{equation}
    \mathcal{ZX} = \omega \mathcal{XZ}
\end{equation}

\subsubsection{Dual basis}

In addition to the computational basis, we define the \textit{dual} basis states:
\begin{equation}\label{eq:dual-basis-def}
\begin{aligned}
    \ket{\tilde 0}&\equiv \frac{1}{\sqrt{3}}\left(\ket{0}+\ket{1}+\ket{2}\right)\\
    \ket{\tilde 1}&\equiv \frac{1}{\sqrt{3}}\left(\ket{0}+\omega\ket{1}+\omega^*\ket{2}\right)\\
    \ket{\tilde 2}&\equiv \frac{1}{\sqrt{3}}\left(\ket{0}+\omega^*\ket{1}+\omega\ket{2}\right)
\end{aligned}
\end{equation}
These are the eigenstates of the $\mathcal{X}$ gate:  
\begin{equation}
\begin{aligned}
    &\mathcal{X}\ket{\Tilde{i}} = \omega^i\ket{\Tilde{i}}\\
    &\mathcal{Z}\ket{\Tilde{i}} = \ket{\widetilde{i+1}}
\end{aligned}
\label{qutrit-gates-dual-basis}
\end{equation}

Note that the dual basis states are still flux-neutral, as they are superpositions of flux-neutral states. Using the vertex and plaquette operators, one can show that they hold definite charge. The $\ket{\tilde 0}$ state is the trivial charge superposition for the $C_2$ conjugacy class, while $\ket{\tilde 1}$ and $\ket{\tilde 2}$ both transform like $[2]$ charges (they correspond to the $\ket{2_+}$ and $\ket{2_-}$ states, respectively).

\subsection{Measurements} \label{measurements}
A key part of our universal gate sets (both qubit and qutrit) will be the ability to measure in either the computational or dual bases. This will require being able to measure the flux or the charge of a given set of anyons, respectively.

\subsubsection{Measuring Flux} \label{measure-flux}

We can measure the flux by creating a $[2]$ charge singlet, braiding one of the resulting charges with the flux to be measured, and then trying to fuse the two charges again. If the flux acted trivially on the charge, the two charges will always fuse back to the vacuum. However, if the flux had some non-trivial action, there will be some probability that the two charges leave some remnant particle behind. Repeating this procedure many times, we can build up the probabilities of different outcomes, which tells us how the charge transforms under the action of the flux. We can then match the flux to its representation.

As a concrete example, consider a set of fluxes with total flux in the $C_3$ conjugacy class. This is used to perform measurement in the computational basis, as we will see in Sec. \ref{subsubsec:direct-measure}. For simplicity, suppose the topological flux is $w = \mu$. Now, we create a charge singlet 
\begin{equation}
|\psi_0\rangle = \frac{1}{\sqrt{2}}\left(|2_+, 2_-\rangle + |2_-, 2_+\rangle\right)
\end{equation}
and take the second charge in the pair around the flux. The effect of the flux is the group action of $w = \mu$ on the $[2]$ charge basis states (see section \ref{rep-theory-s3}):
\begin{equation}
    |\psi\rangle = \frac{1}{\sqrt{2}}\left(\omega^*|2_+, 2_-\rangle + \omega|2_-, 2_+\rangle\right)
\end{equation}
When we bring the two charges back together, they can only annihilate if their total charge is trivial; the probability this happens will be given by the overlap with the original $[2]$ charge singlet $|\psi_0\rangle$:
\begin{equation}
    |\langle \psi_0 | \psi \rangle|^2 = \frac{1}{4} |\omega+\omega^*|^2 = \frac{1}{4}
\end{equation}
The probability that the two charges combine to give a $[-]$ charge is given by the overlap with the total $[-]$ state $|\psi_-\rangle = \frac{1}{\sqrt{2}}\left(|2_+, 2_-\rangle - |2_-, 2_+\rangle\right)$:
\begin{equation}
    |\langle \psi_- | \psi \rangle|^2 = \frac{1}{4} |\omega-\omega^*|^2 = \frac{3}{4}
\end{equation}
As an aside, the fact we have some probability of getting a nontrivial remnant charge when we combine our two charges at the end of the protocol indicates that charge has been transferred to the flux; we know total charge must be conserved globally, and so the flux must pick up charge to account for the loss of the charge singlet. In general, we can find the probability of charge transfer by looking directly at the character of the representation of the flux, which is discussed in more detail in appendix \ref{ap:charge_transfer}. 

Since charge transfer is not deterministic, there is a $1/4$ probability of false negatives.  We can decrease our error rate by repeating the measurement procedure a few times; the chance of never getting a remnant particle even if we have a $C_3$ flux will be $(1/4)^N$ after $N$ repetitions. We only need to go to about $N=4$ repetitions to get our error rate below $0.5\%$.

This procedure distinguishes anyon types, but not internal states. If we were to use a $C_3$ flux with a different internal state, say $w = \overline{\mu}$, these probabilities would be the same:
\begin{equation}
\begin{aligned}
    &|\psi_1\rangle = \frac{1}{\sqrt{2}}\left(\omega|2_+, 2_-\rangle + \omega^*|2_-, 2_+\rangle\right)\\
    &|\langle \psi_0 | \psi_1 \rangle|^2 = \frac{1}{4} |\omega+\omega^*|^2 = \frac{1}{4}\\
    &|\langle \psi_- | \psi_1 \rangle|^2 = \frac{1}{4} |\omega-\omega^*|^2 = \frac{3}{4}
\end{aligned}
\end{equation}

Let us consider the same procedure, but with a $C_2$ flux instead; this will be used for dual basis measurement detailed in Sec. \ref{subsubsec:dual-measure}. The two charges will never annihilate when we bring them back together; unlike with the $C_3$ fluxes, there is always charge transfer. The action of $w = \sigma$ on the $[2]$ states is like the Pauli $X$:
\begin{equation}
    \begin{aligned}
        &\sigma |2_+\rangle = |2_-\rangle\\
        &\sigma |2_-\rangle = |2_+\rangle
    \end{aligned}
\end{equation}
So we see that the action of the flux on our initial state is:
\begin{equation}
    |\psi_1\rangle = \frac{1}{\sqrt{2}}\left(|2_+, 2_+\rangle + |2_-, 2_-\rangle\right)
\end{equation}
Taking the overlap of this state with the charge singlet or $[-]$ state will give $0$; however, we have a $50\%$ chance of ending up in the $|2_+\rangle$ or $|2_-\rangle$ state. By repeating the charge winding process many times and seeing the statistics of various remnant particles, we can determine what type of flux we have (trivial, $C_2$, or $C_3$).

\subsubsection{Computational Basis Measurements}\label{subsubsec:direct-measure}

We saw from the previous section that there is a good probability of charge transfer if we braid one half of a neutral $[2]$ charge pair around a $C_3$ flux. When we try to fuse the two $[2]$ charges after this procedure, they will have a $\frac{1}{4}$ chance of returning to the vacuum, but a $\frac{3}{4}$ chance of leaving a remnant $[-]$ charge. On the other hand, if we have a trivial flux particle, the same braiding procedure will have no charge transfer and the charge pair will always be able to fuse. This allows us to distinguish between $C_3$ flux and trivial flux, which in turn allows us to measure in the computational basis.

The procedure is as follows: consider two computational basis qutrits, $\ket{a}$, $\ket{b}$. If we group together one flux from each of the qutrits, the resulting anyon will have trivial flux only if $a = b$. Otherwise, it will have a $C_3$-valued flux. We then perform the flux measurement procedure; from the previous subsection, we see that the two qutrits are guaranteed to be different if we measure a remnant $[-]$ charge. The input state is projected onto the orthogonal subspace to that basis state. This occurs at a probability of $(3/4)^n$, where $n$ is the number of repetitions. To do a projective measurement onto the computational basis state, we initialize 3 reference states: $\ket{0},\ket{1},\ket{2}$. We repeat the flux measurement procedure for all three computational basis states $\{\ket{0},\ket{1},\ket{2}\}$ (which we will refer to as reference states from here on) until we obtain remnant $[-]$ charge from two of the reference states. Then the state is guaranteed to be in the third reference state, i.e. we have projectively measured the logical state to be in the third computational basis state.

\subsubsection{Measuring Charge (Dual Basis Measurement)}\label{subsubsec:dual-measure}

In the last section, we outlined a procedure for determining the flux of a given particle by measuring its effect on calibrated charges, which allows us to measure in the computational basis. In this section, we will describe dual basis measurements, which are equivalent to measuring the \emph{charge} of a given particle.

Consider a $C_2$ flux singlet state:
\begin{equation}
    |\tilde0\rangle = \frac{1}{\sqrt{3}}(|\sigma, \sigma\rangle + |\bar \mu \sigma, \bar \mu \sigma\rangle + |\mu \sigma, \mu \sigma\rangle)
\end{equation}
Similarly to the flux measurement procedure, we take one half of this singlet around the charge we want to measure, then we fuse the fluxes. The probability of getting various remnant particles helps characterize the charge.

Let's take the charge to be in the $|2_+\rangle$ state. Then the global state of our system is:
\begin{equation}
    |\psi\rangle = |\tilde0\rangle |2_+\rangle = \frac{1}{\sqrt{3}}(|\sigma, \sigma\rangle + |\bar \mu \sigma, \bar \mu \sigma\rangle + |\mu \sigma, \mu \sigma\rangle) |2_+\rangle
\end{equation}
Winding one of the $C_2$ fluxes around the charge gives:
\begin{equation}
    \begin{aligned}
        &\sigma |2_+\rangle = |2_-\rangle\\
        &\bar \mu \sigma |2_+\rangle = \omega|2_-\rangle\\
        &\mu \sigma |2_+\rangle = \omega^*|2_-\rangle
    \end{aligned}
\end{equation}
The resulting total state is:
\begin{equation}
    |\psi_f\rangle = \frac{1}{\sqrt{3}}(|\sigma, \sigma\rangle + \omega|\bar \mu \sigma, \bar \mu \sigma\rangle + \omega^*|\mu \sigma, \mu \sigma\rangle) |2_-\rangle = \ket{\tilde 1}\ket{2_-}
\end{equation}
The end state for the pair of fluxes now has nontrivial charge; it transforms as $|2_-\rangle$. This is straightforward to demonstrate. Winding a $\mu$ flux around $\ket{\tilde 1}$ maps the state to:
\begin{equation}
    \mu|\tilde 1\rangle = \frac{1}{\sqrt{3}}(|\bar \mu \sigma, \bar \mu \sigma\rangle + \omega|\mu \sigma, \mu \sigma\rangle + \omega^*|\sigma, \sigma\rangle) = \omega^* \ket{\tilde 1}
\end{equation}
which is exactly as we expect for a $|2_-\rangle$ charge.\footnote{We have used the relation $\mu\sigma = \sigma \bar \mu$, and $\mu^2 = \overline{\mu}$ in the above manipulations.} So if the anyon we are measuring carries $[2]$ charge, some of that charge will be transferred to the test flux pair, and the pair will be unable to fully fuse to the vacuum. If the particle to be measured has trivial charge, no charge transfer will occur, and the charges will fuse back to the vacuum.

This charge measurement procedure is almost the same as the flux measurement process, but with the roles of flux and charge swapped. We can use this measurement technique to perform dual basis measurements: we know that the $\ket{\Tilde{0}}$ state carries trivial total charge, while the $\ket{\Tilde{1}}$, $\ket{\Tilde{2}}$ states carry $[2]$ charge. Using a $C_2$ flux singlet, we can distinguish between these two cases. Note that here, there is no chance of a false negative--- no need to repeat the charge measurement process!

In the language of the following sections, we will call this measurement procedure ``comparing'' a $C_2$ flux pair $\ket{a}$ with $\ket{\Tilde{0}}$. If we get a yes (probe fluxes fuse to the vacuum), then $\ket{a}$ has been projected onto $\ket{\Tilde{0}}$. If we get a no (probe fluxes left a remnant charge), then $\ket{a}$ has been projected coherently onto the $\ket{\Tilde{1}}$ and $\ket{\Tilde{2}}$ subspace. We can compare with the other dual basis states using the nearly the same procedure. Suppose we would like to compare with the $\ket{\Tilde{1}}$ state; we can first apply the qutrit $\mathcal{Z}$ gate (or clock operator) to our input state twice. This will turn $\ket{\Tilde{1}}$ into $\ket{\Tilde{0}}$--- we then proceed with our $\ket{\Tilde{0}}$ comparison and apply a single $\mathcal{Z}$ at the end to reverse the first $\mathcal{Z}^2$. This works for $\ket{\Tilde{2}}$ as well if we reverse the roles of $\mathcal{Z}$ and $\mathcal{Z}^2$. We will explain the implementation of the $\mathcal{Z}$ gate in the next section, which only uses the braiding of fluxes.

\subsection{Flux Braiding}\label{flux-braiding}

We have described the procedures for measuring the flux or charge of an anyon. Using our particular qutrit encoding, these two protocols correspond to measurement in the computational and dual bases, respectively. The other fundamental tool in our toolkit is simply braiding two qutrits. 

The most basic gate realized with braiding we will call the ``pull-through'' gate. It will form the basis for many other entangling and control gates in our full gate set (both qutrit and qubit). 

\begin{defi}[Pull-through gate $U$]
\label{def:pull-through}
\begin{equation}
    U\ket{a,b} = \ket{a,-a-b}
\end{equation}

where $a,b$ are in the qutrit basis, so there is an implicit mod $3$ (the same goes for any following gates). 

\end{defi}

\begin{figure}[h!]
\centering
\includegraphics[width=1\textwidth]{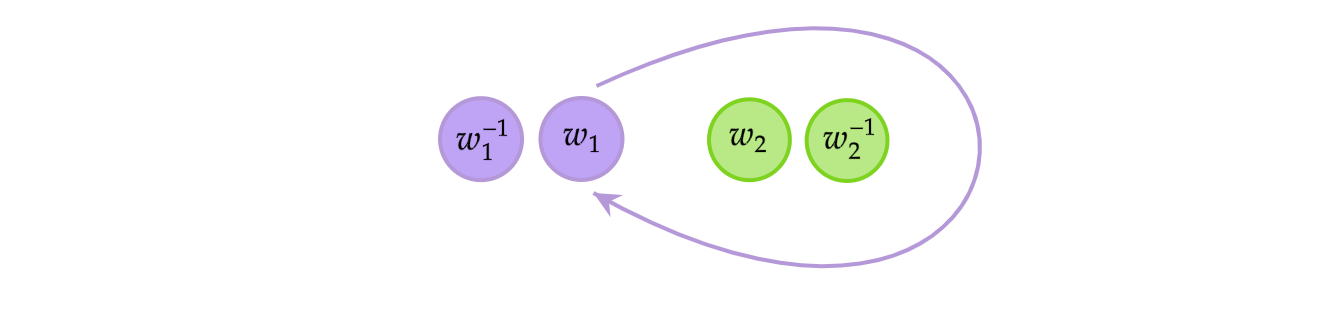}
\caption{Graphical depiction of the pull-through gate described in definition \ref{def:pull-through}. The gate is implemented by braiding one member of the pair on the left completely around both halves of the pair on the right.}
\label{fig:pull-through}
\end{figure}

To implement the pull-through gate, we use two vortex pairs, $\ket{w_1,\overline{w}_1,w_2,\overline{w}_2}$, and wind $w_1$ around the $w_2,\overline{w_2}$ vortex pair. This conjugates all three particles by their total flux $w=w_1w_2\overline{w}_2=w_1$ (the $\overline{w}_1$ particle is not changed, since it doesn't participate in the winding):
\begin{align*}
U\ket{w_1,\overline{w}_1,w_2,\overline{w}_2}
&=\ket{w_1w_1\overline{w}_1,\overline{w}_1, w_1w_2\overline{w}_1,w_1\overline{w}_2\overline{w}_1}\\
&=\ket{w_1,\overline{w}_1,w_1w_2\overline{w}_1,w_1\overline{w}_2\overline{w}_1}
\end{align*}

Plugging in our definitions for the computational basis, we find that:
\begin{equation}
    U\ket{a, b} = \ket{a, -a-b}
\end{equation}

We can construct variations of the pull-through gate by using ancillas; the first, the $U_+$ gate, is the qutrit generalization of the CNOT gate:
\begin{equation}
    U_+\ket{a,b}=\ket{a,b+a}
\end{equation}

\begin{tcolorbox}[enhanced, breakable,pad at break*=1mm, attach boxed title to top center={yshift=-3mm,yshifttext=-1mm}, colback=blue!5!white, colframe=blue!75!black, colbacktitle=blue!75!black, title=$U_+$ gate, fonttitle=\bfseries, boxed title style={size=small,colframe=blue!65!black}]

\begin{center}
    \begin{tikzpicture}
    \node[scale=1]{ 
        \begin{quantikz}[thin lines] 
            \lstick{$\ket{a}$} &  \ctrl{1} & \qw & \rstick{$\ket{a}$}\qw \\
            \lstick{$\ket{b}$} & \gate{U} & \gate{U} & \rstick{$\ket{a+b}$}\qw \\
            \lstick{$\ket{0}$} & \qw & \ctrl{-1} & \rstick{$\ket{0}$}\qw
        \end{quantikz}
    };
    \end{tikzpicture}
\end{center}

$U_+$ can be implemented by using the $U$ gate with an ancilla qubit:

\begin{enumerate}
    \item Start with the state $\ket{a,b,0}$.
    \item Apply $U$ gate to qubit 1\&2: $U_{12}\ket{a,b,0}=\ket{a,-a-b,0}$.
    \item Apply $U$ gate to qubit 3\&2: $U_{32}\ket{a,-a-b,0}=\ket{a,a+b,0}$.
    \item Discard the 3rd qubit to get the state $\ket{a,a+b}$.
\end{enumerate}

where $U_{ij}$ means that $i$ is the control qubit, and $j$ is the target qubit. Note that if we fix the second input to $b=1$, $U_+$ becomes the $\mathcal{X}$ gate (shift operator).

\end{tcolorbox}

\vspace{5mm}

We can also construct the inverse of $U_+$, the $U_-$ gate:
\begin{equation}
    U_-\ket{a,b}=\ket{a,b-a}
\end{equation}

\begin{tcolorbox}[enhanced, breakable,pad at break*=1mm, attach boxed title to top center={yshift=-3mm, yshifttext=-1mm}, colback=blue!5!white, colframe=blue!75!black, colbacktitle=blue!75!black, title=$U_-$ gate, fonttitle=\bfseries, boxed title style= {size=small,colframe=blue!65!black}]

\begin{center}
    \begin{tikzpicture}
    \node[scale=1]{ 
        \begin{quantikz}[thin lines] 
            \lstick{$\ket{a}$} & \ctrl{2} & \qw & \ctrl{2} &\rstick{$\ket{a}$}\qw \\
            \lstick{$\ket{b}$} & \qw & \gate{U_+} & \qw &\rstick{$\ket{b - a}$}\qw \\
            \lstick{$\ket{0}$} & \gate{U} & \ctrl{-1} & \gate{U} &\rstick{$\ket{0}$}\qw
        \end{quantikz}
    };
    \end{tikzpicture}
\end{center}

$U_-$ can be implemented by using the $U$ and $U_+$ gates with an ancilla qubit:
\begin{enumerate}
    \item Start with the state $\ket{a,b,0}$.
    \item Apply $U$ gate to qubit 1\&3: 
    $U_{13}\ket{a,b,0}=\ket{a,b,-a}$.
    \item Apply $U_+$ gate to qubit 2\&3: 
    $U_{+,32}\ket{a,b,-a}=\ket{a,b-a,-a}$.
    \item Disentangle the output state and the ancilla with another $U$ gate between qubits 1\&3:
    $U_{13}\ket{a,b-a,-a}=\ket{a,b-a,0}$.
\end{enumerate}

where $U_{ij}$ means that $i$ is the control qubit, and $j$ is the target qubit.

\end{tcolorbox}

We can construct a $\mathcal{Z}$ gate (the qutrit version of a qubit Pauli $Z$) using $U_-$:

\begin{equation}
    \mathcal{Z}\ket{a}=\omega^a\ket{a}
\end{equation}

\begin{tcolorbox}[enhanced, breakable, pad at break*=1mm, attach boxed title to top center={yshift=-3mm, yshifttext=-1mm}, colback=blue!5!white, colframe=blue!75!black, colbacktitle=blue!75!black, title= $\mathcal{Z}$ (qutrit) gate, fonttitle=\bfseries, boxed title style= {size=small,colframe=blue!65!black}]

\begin{center}
    \begin{tikzpicture}
    \node[scale=1]{ 
        \begin{quantikz}
            \lstick{$\ket{\psi}$} & \ctrl{1} & \rstick{$\mathcal{Z}\ket{\psi}$}\qw\\
            \lstick{$\ket{\Tilde{1}}$} & \gate{U_-} & \rstick{$\ket{\Tilde{1}}$}\qw
        \end{quantikz}
    };
    \end{tikzpicture}
\end{center}

$\mathcal{Z}$ can be implemented using the $U_-$ gate with a special ancilla qubit $\ket{\tilde 1}$:

\begin{enumerate}
    \item Start with the state $\ket{a,\tilde 1}=\frac{1}{\sqrt{3}}\left(\ket{a,0}+\omega\ket{a,1}+\bar\omega\ket{a,2}\right)$.
    \item Apply $U_-$ gate: $U_-\ket{a,\tilde 1} =\frac{1}{\sqrt{3}}\left(\ket{a,-a}+\omega\ket{a,1-a}+\bar\omega\ket{a,2-a}\right)$. \\
    
    Note that
    \begin{equation*}
        U_-\ket{a,\tilde 1} = \omega^a |a\rangle \frac{1}{\sqrt{3}}\big(\omega^{-a}|-a\rangle + \omega^{1-a}|1-a\rangle + \omega^{2-a}|2-a\rangle\big)
    \end{equation*}
    Consider the result given different values of $a$. If $a=0$, we see that the $U_-$ gate leaves the state invariant:
    \begin{equation*}
        U_- |0, \Tilde{1}\rangle = |0, \Tilde{1}\rangle= \omega^0|0, \Tilde{1}\rangle
    \end{equation*}
    If $a=1$, we find:
    \begin{equation*}
        U_-|1, \Tilde{1}\rangle = \omega |1\rangle \frac{1}{\sqrt{3}}\big(\omega^*|2\rangle + |0\rangle + \omega|1\rangle\big) = \omega|1, \Tilde{1}\rangle
    \end{equation*}
    and if $a=2$ we have:
    \begin{equation*}
        U_-|2, \Tilde{1}\rangle = \omega^* |2\rangle \frac{1}{\sqrt{3}}\big(\omega|1\rangle + \omega^*|2\rangle + |0\rangle\big) = \omega^*|2, \Tilde{1}\rangle = \omega^2|2, \Tilde{1}\rangle
    \end{equation*}

    \item From the above results, we see we obtain the right results simply by discarding the ancilla $|\Tilde{1}\rangle$ states.
\end{enumerate}

\end{tcolorbox}

\subsection{Universal Qubit Operations} \label{universal-qubit-set}

We have developed a basic set of qutrit state initialization, gates, and measurement protocols. These are enough to build up a universal gate set of qutrit gates\cite{mochon_anyon_2004}. However, it turns out to be simpler to demonstrate a set of universal \emph{qubit} gates built from our basic qutrit operations instead. We will build a universal gate set out of  Cliffords + one non-Clifford gate. We use the fact that $\{H,S,CZ\}$ generates the Clifford gate set (where $H$ is the Hadamard gate, $S=\text{diag}(1,i)$ is the phase gate, and $CZ$ is the control-Z gate\footnote{The more common gate set for demonstrating universality involves $CX$ as a generator of the Clifford gates; but replacing $CX$ with $CZ$ is another Clifford gate set since $H(CZ)H=CX$}.). The CCZ gate, which is a non-Clifford gate, completes the universal gate set. The dependency of the gates is shown in figure \ref{fig:enter-label}.

\begin{figure}[h!]
    \centering
    \includegraphics[width=0.6\textwidth]{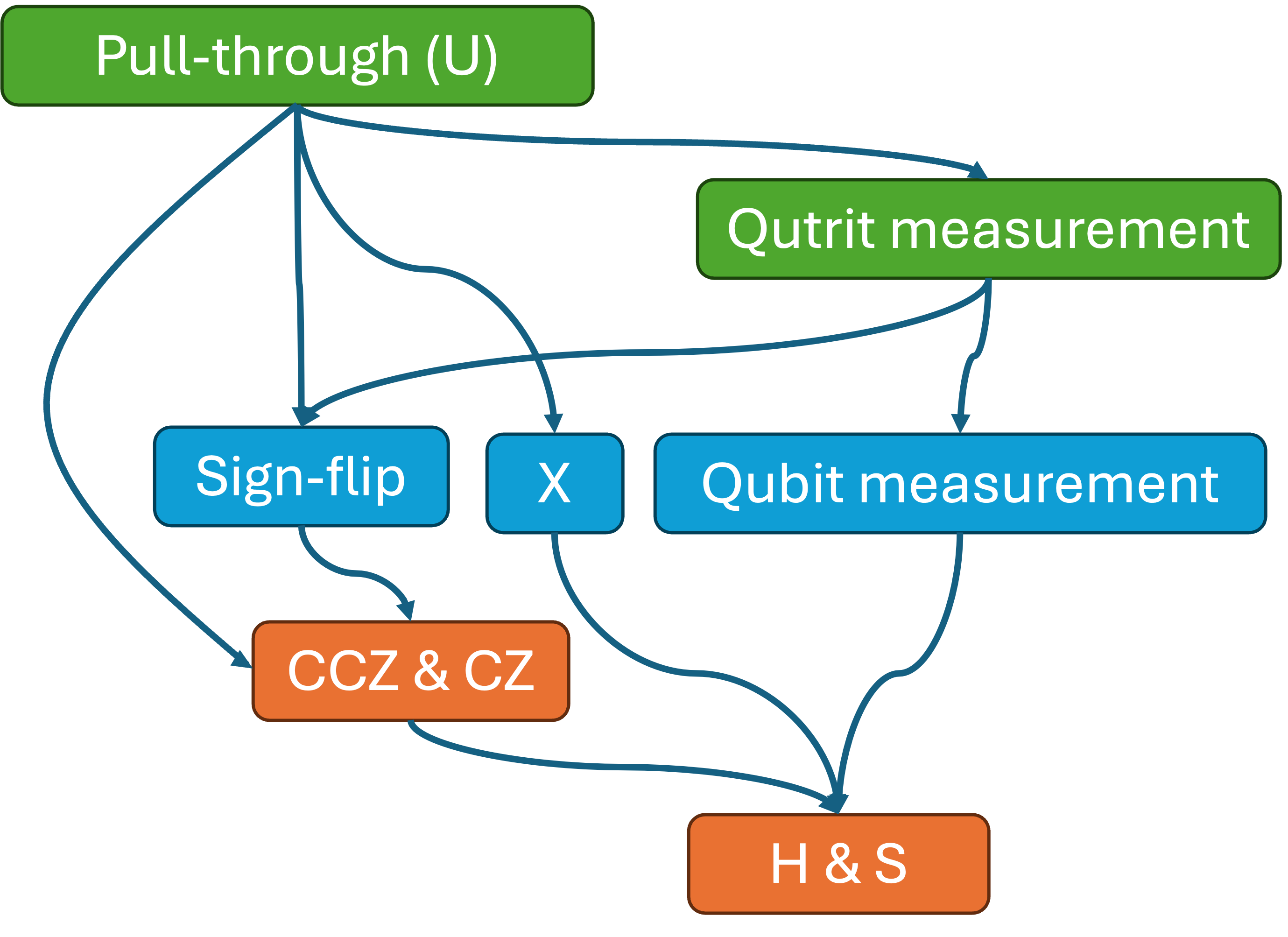}
    \caption{Dependency of the universal qubit gate set implementation. The foundational operations are the pull-through gate (U) and the qutrit measurement in the $\mathcal Z$ and $\mathcal X$ basis. Arrows indicate gates implemented based on the foundational operations.}
    \label{fig:enter-label}
\end{figure}

\subsubsection{Qubit Measurements}

 One necessary part of our qubit gate set is the ability to measure in either the qubit computational basis $\qty{\ket 0,\ket 1}$, or the qubit dual basis $\qty{\ket +, \ket -}$. The ability to measure in the qubit computational basis follows directly from being able to measure in the \emph{qutrit} computational basis, because they are really the same basis (for qubits we just ignore $\ket{2}$). 
 
 Measurement in the qubit dual basis is a little more complicated. The procedure is as follows:

 \begin{enumerate}
     \item Given a general state $\ket\psi=\alpha \ket+ + \beta\ket-$, we first compare with $\ket{\Tilde{0}}$. 
        \begin{itemize}
            \item If ``yes'', then we have measured the $\ket +$, since $\ket-$ state has zero overlap with $\ket{\Tilde{0}}$. 
            \item If ``no'', then we have effectively projected $\ket \psi$ to the orthogonal subspace of $\ket{\Tilde{0}}$: 
            $$
            \ket{\psi'} = (1- \ketbra{\Tilde{0}})\ket\psi = \alpha\ket{+'}+ \beta\ket{-'}
            $$
            where $\ket{+'}=\frac{1}{\sqrt 6}\qty(\ket0+\ket1-2\ket2)$, $\ket{-'}=\ket{-}$.
        \end{itemize}
    \item Then we compare with $\ket2$. 
        \begin{itemize}
            \item If ``yes'', then the state has to be $\ket{+'}$, since $\ket{-}$ has zero overlap with $\ket2$. 
            \item If ``no'', then the state is projected to $$\alpha\ket{+''}+\beta\ket{-''}$$
            where
            $\ket{+''} =(1- \ketbra{2}) \ket{+'} = \frac{1}{\sqrt 2}(\ket0+\ket1)=\ket+$ and $\ket{-''}=\ket-$.
        \end{itemize}
 \end{enumerate}

In general, the comparison result will be of the form ``no'',``no'',... and possibly end with a ``yes''. Whenever we obtain ``yes'' (at either comparison with $\ket{\tilde 0}$ or with $\ket2$), we can conclude we have measured the $\ket+$ state; whereas if we obtain ``no'', the state could be either the $\ket+$ or $\ket-$ state. So we need to repeat the measurement until we reach the desirable level of accuracy. At each step (which includes both comparisons), the probability of measuring at least one ``yes'' is $\frac{2}{3}+\frac{1}{3}\times\frac{2}{3}=\frac{8}{9}$. So up to the $n$th repetition, we will obtain the $\ket+$ state with probability $(1-\frac{1}{9}^n)|\alpha|^2$, and the $\ket-$ state with probability  $|\beta|^2 +\frac{1}{9}^n|\alpha|^2$. To achieve an better than 99\% accuracy in the measurement result, we just need to repeat for $n=3$.

\subsubsection{Qubit state initialization}

We will need the $\ket+$ state as an ancilla state in the implementation of some the qubit gates below, so we discuss here how to prepare it. To prepare a $\ket+$ state, we can \textit{repeat until success}, starting with a $\ket{\Tilde{0}}$ state and projecting the $\ket{2}$ component out using a comparison. The procedure is as follows: 
        
         \begin{enumerate}
             \item Start with a $\ket{\Tilde{0}}$ state.
             \item Compare with a $\ket{2}$ ancilla; if the result is ``no'' (with probability $\frac{2}{3}$) we have successfully prepared a $\ket{+}$ state, since $\ket{\Tilde{0}} = \frac{1}{\sqrt{3}} (\ket{0} + \ket{1} + \ket{2})$ without the $\ket{2}$ component (and fixing normalization) is $\ket{+}$. If the result is ``yes'' then we have to start over with a new $\ket{\Tilde{0}}$ state.
         \end{enumerate}

The probability of successfully preparing a $\ket{+}$ state is $1-\frac{1}{3}^n$, where $n$ is the number of rounds. To achieve a success rate of 99\%, we just need to repeat for $n=5$.

\subsubsection{Qubit Gates}

We want to construct a universal set of \emph{qubit} gates, using our underlying qutrit degrees of freedom. We will start by constructing a generalized version of the qubit Pauli $Z$ gate, which we call the ``sign-flip gate''. Its action on the computational basis states is given by
\begin{align}
    \sigma^z_j \ket{i} = (-1)^{\delta_{i,j}}\ket{i}
\end{align}

Given an input qutrit state $\ket{\psi}=c_0\ket{0}+c_1\ket{1}+c_2\ket{2}$, the gate $\sigma^z_j$ flips the sign of the coefficient $c_i$. Note that $\sigma^z_{1}$ is the same as the usual Pauli $Z$ operator in the qubit subspace spanned by $\{\ket0,\ket1\}$. The sign-flip gate will be implemented with the help of a special ancilla:
\begin{equation}
    |\xi\rangle = \frac{1}{\sqrt{3}}\big(|0\rangle - |1\rangle + |2\rangle\big)
\end{equation}
This is a qutrit ``magic-state''---it cannot be prepared solely via qutrit Clifford operations. In particular, we need both the qubit $\ket{+}$ state and measurement in the dual basis to prepare $\ket{\xi}$. 
\vspace{2.5mm}

\begin{tcolorbox}[enhanced, breakable, pad at break*=1mm, attach boxed title to top center={yshift=-3mm, yshifttext=-1mm}, colback=blue!5!white, colframe=blue!75!black, colbacktitle=blue!75!black, title= Sign-flip gate, fonttitle=\bfseries, boxed title style= {size=small,colframe=blue!65!black}]

\begin{center}
    \begin{tikzpicture}
    \node[scale=1]{ 
        \begin{quantikz}
            \lstick{$\ket{\psi}$} & \ctrl{1} & \qw & \rstick{$\sigma_{(i+2)}^z \ket{\psi}$} \qw \\
            \lstick{$\ket{\xi}$} & \gate{U_+} & \meter{$i=0,1,2$} \arrow[r] & \rstick{$\ket{i}$}
        \end{quantikz}
    };
    \end{tikzpicture}
\end{center}
\tcbsubtitle[before skip=\baselineskip]{Magic state preparation}
\begin{center}
    \begin{tikzpicture}
    \node[scale=1]{ 
        \begin{quantikz}
            \lstick{$\ket{+}$} & \gate{\mathcal Z} & \ctrl{1} & \meter{$\ket{\Tilde{0}}$} \arrow[r] & \rstick{$\mathrm{yes}$}\\
            \lstick{$\ket{+}$} & \gate{\mathcal Z^2} & \gate{U_+} & \qw & \rstick{$\ket{\xi}$}\qw
        \end{quantikz}
    };
    \end{tikzpicture}
\end{center}
We can construct $\ket{\xi}$ using the above circuit, starting with two copies of $|+\rangle$:
\begin{enumerate}
    \item Apply $\mathcal Z$ and $\mathcal Z^2$ to the two $|+\rangle$ copies:
    \begin{equation}
        \mathcal Z|+\rangle \otimes \mathcal Z^2|+\rangle = \frac{1}{\sqrt{2}}(|0\rangle + \omega|1\rangle) \otimes \frac{1}{\sqrt{2}}(|0\rangle + \omega^*|1\rangle) \equiv |\eta\rangle
    \end{equation}
where
    \begin{equation}
        \ket{\eta}=\frac{1}{2}(|00\rangle + \omega|10\rangle + \omega^*|01\rangle + |11\rangle)
    \end{equation}

    \item Apply $U_+$ to $|\eta\rangle$ with the first ancilla as the control:
    \begin{equation}
        \begin{aligned}
        U_+|\eta\rangle &= U_+\frac{1}{2}(|00\rangle + \omega|10\rangle + \omega^*|01\rangle + |11\rangle)\\
        &= \frac{1}{2}(|00\rangle + \omega|11\rangle + \omega^*|01\rangle + |12\rangle)
        \end{aligned}
    \end{equation}
    
    \item Compare the first qutrit with $|\Tilde{0}\rangle$ and post-select on a ``yes'' so the state of the second qutrit is now:
    \begin{equation}
        \begin{aligned}
        (\langle\Tilde{0}|\otimes\mathbb{I}) U_+|\eta\rangle &= (\langle\Tilde{0}|\otimes\mathbb{I})\frac{1}{2}(|00\rangle + \omega|11\rangle + \omega^*|01\rangle + |12\rangle)\\
        &= \frac{1}{2} \braket{\tilde 0}{0} (\ket 0 + \bar\omega\ket 1) + \frac{1}{2} \braket{\tilde 0}{1} (\omega \ket 1 + \ket 2)\\
        &=\frac{1}{2}\frac{1}{\sqrt{3}}(\ket 0 -\ket 1 + \ket 2) \\
        &= \frac{1}{2}\ket{\xi}
        \end{aligned}
    \end{equation}

The proportionality constant is the probability amplitude, so the probability of successfully obtaining the state $\ket \xi$ is $(1/2)^2 = 1/4$.

\end{enumerate}

\tcbsubtitle[before skip=\baselineskip]{Gate implementation}

Now that we have the needed ancilla, we can construct the sign flip gates. 

\begin{enumerate}
\item We start with a superposition of computational basis states: 
\begin{equation}
    |\psi\rangle = c_0|0\rangle + c_1|1\rangle + c_2|2\rangle
\end{equation}

\item Using the ancilla $|\xi\rangle$, we apply $U_+$:
\begin{equation}
    \begin{aligned}
    U_+ \ket\psi \otimes \ket\xi
    &= \frac{1}{\sqrt{3}}(c_0\ket{00} - c_0\ket{01} + c_0\ket{02} + c_1\ket{11} - c_1\ket{12} + c_1\ket{10} \\
    &\qquad \qquad \qquad \qquad + c_2\ket{22}-c_2\ket{20}+c_2\ket{21})\\
    &= \frac{1}{\sqrt{3}} ( \qty(c_0\ket0 + c_1\ket1-c_2\ket2)\ket0 +\qty(-c_0\ket0 + c_1\ket1+c_2\ket2)\ket{1}\\
    & \qquad \qquad \qquad \qquad  + (c_0\ket0 - c_1\ket{1}+c_2\ket2)\ket{2} )
\end{aligned}
\end{equation}

\item Now measure the second qutrit in the computational basis. 
\begin{itemize}
    \item If we compare with $|0\rangle$ and get a positive result then we have flipped $c_2$.
    \item If we compare with $|1\rangle$ and get a positive result then we have flipped $c_0$.
    \item If we compare with $|2\rangle$ and get a positive result then we have flipped $c_1$.
\end{itemize}
\end{enumerate}

Note that this sign flip gate acts as the normal Pauli $Z$ operator when we restrict to the space of states spanned by $\ket{0}, \ket{1}$ and get a yes result when comparing with $\ket{2}$. 

\end{tcolorbox}

\vspace{2.5mm}

We are not guaranteed to successfully implement the sign-flip gate that we want--- the correct output is contingent on getting a ``yes'' when we compare with a certain computational basis state. This is slightly concerning, as we do not want to have to post-select on a long string of certain measurement outcomes if we have a circuit that involves many uses of the sign-flip gate. Luckily, we do not have to resort to post-selection; instead, we can ``repeat until success'' as with the state preparation protocols. This involves repeating the gate as many times as necessary until we achieve a sequence of measurement results that cancel out in some sense to give the correct overall output. It turns out, to have a $99\%$ success rate, we need to repeat the gate around 35 times--- see appendix \ref{ap:repeat} for the details. 

The first purely qubit gate we will construct is Pauli $X$. We can construct $X$ without resorting to measurements, although we will take advantage of the qutrit $\ket{2}$ state to do some useful counting for us. As usual, the qubit $X$ gate is defined:
\begin{equation}
    X(a\ket{0} + b\ket{1}) = a\ket{1} + b\ket{0}
\end{equation}

\vspace{2.5mm}

\begin{tcolorbox}[enhanced, breakable, pad at break*=4mm, attach boxed title to top center={yshift=-3mm, yshifttext=-1mm}, colback=blue!5!white, colframe=blue!75!black, colbacktitle=blue!75!black, title= Qubit $X$ gate, fonttitle=\bfseries, boxed title style= {size=small,colframe=blue!65!black}]

\begin{center}
    \begin{tikzpicture}
    \node[scale=1]{ 
        \begin{quantikz}
            \lstick{$\ket{\psi}$} & \ctrl{1} & \qw & \gate{U_+} & \ctrl{1} & \qw \\
            \lstick{$\ket{0}$} & \gate{U_+} & \gate{U_+} & \ctrl{-1} & \gate{U_+} & \qw \\
            \lstick{$\ket{1}$} & \qw & \ctrl{-1} & \qw & \qw & \qw
        \end{quantikz}
    };
    \end{tikzpicture}
\end{center}

We can construct a qubit X gate using the $U_+$ gates and two ancillas. The ancillas will help us deal with the leakage into the $\ket{2}$ state that results from using our qutrits to do qubit operations. To show that the above circuit gives the right outputs, we calculate the results explicitly for a general input:

\begin{enumerate}
    \item $U_+$ between the state and $\ket{0}$ ancilla, the state as control:
        \begin{equation}
            \begin{aligned}
             U_+\ket{\psi}\ket{0} &= U_+(a\ket{0}+b\ket{1})\ket{0}\\
             &=a\ket{00} + b\ket{11}
            \end{aligned}
        \end{equation}
    \item $U_+$ between the two ancillas, the second ancilla as the control:
        \begin{equation}
             U_+(a\ket{00} + b\ket{11})\ket{1} = a\ket{011} + b\ket{121}
        \end{equation}
    \item $U_+$ between the state and first ancilla again, but with the ancilla as the control:
        \begin{equation}
            U_+(a\ket{011} + b\ket{121}) = a\ket{111} + b\ket{021}
        \end{equation}
    \item $U_+$ between the state and first ancilla a third time, but with the state as the control again. This will remove the entanglement between the state and the ancillas.
        \begin{equation}
            \begin{aligned}
                U_+(a\ket{111} + b\ket{021}) &= a\ket{121}+b\ket{021}\\
                &= (a\ket{1}+b\ket{0})\ket{21}\\
                &= (X\ket{\psi})\ket{21}
            \end{aligned}
        \end{equation}
\end{enumerate}

\end{tcolorbox}

\vspace{2.5mm}

Two more gates are necessary to complete the gate set. The first is the $CZ$ gate. It acts on qubit states in the following way:
\begin{equation}
CZ \ket{x,y} = (-1)^{xy}\ket{x,y}
\end{equation}
Notice that only when $x=y=1$, there is an overall sign flip.

\vspace{2.5mm}

\begin{tcolorbox}[enhanced, breakable, pad at break*=1mm, attach boxed title to top center={yshift=-3mm, yshifttext=-1mm}, colback=blue!5!white, colframe=blue!75!black, colbacktitle=blue!75!black, title= $CZ$ gate, fonttitle=\bfseries, boxed title style= {size=small,colframe=blue!65!black}]

\begin{center}
    \begin{tikzpicture}
    \node[scale=1]{ 
        \begin{quantikz}
            \lstick[2]{$\ket{\psi}$} & \ctrl{2} & \qw & \qw & \qw & \ctrl{2} & \qw \\
            & \qw & \ctrl{1} & \qw & \ctrl{1} & \qw & \qw\\
            \lstick{$\ket{0}$} & \gate{U_+} & \gate{U_+} & \gate{\sigma_{(2)}^z} & \gate{U_-} & \gate{U_-} & \qw
        \end{quantikz}
    };
    \end{tikzpicture}
\end{center}

The first two $U_+$ gates compute the sum $x+y$ in the ancilla qubit; notice that only the $\ket{11}$ input state will lead to a $\ket{2}$ ancilla. A general input state $\ket{\psi} = (a_{00}\ket{00} + a_{01}\ket{01} + a_{10}\ket{10} + a_{11}\ket{11})\ket{0}$ will be mapped to:
\begin{equation}
    \ket{\psi} = a_{00}\ket{000} + a_{01}\ket{011} + a_{10}\ket{101} + a_{11}\ket{112}
\end{equation}
where the third qubit is the ancilla. The sign-flip gate then flips the sign of the last coefficient only:
\begin{equation}
    \ket{\psi} = a_{00}\ket{000} + a_{01}\ket{011} + a_{10}\ket{101} - a_{11}\ket{112}
\end{equation}
The second set of control gates will undo the entanglement between the ancilla and the input state, as they subtract $x+y$ from the ancilla state:
\begin{equation}
    \ket{\psi} = (a_{00}\ket{00} + a_{01}\ket{01} + a_{10}\ket{10} - a_{11}\ket{11})\ket{0}
\end{equation}
So we have achieved the right coefficients; we only have a sign flip when both input qubits are in the $\ket{1}$ state.

\end{tcolorbox}

\vspace{2.5mm}

The second is the $CCZ$ gate, which acts in the following way:
\begin{equation}
CCZ \ket{x,y,z} = (-1)^{xyz}\ket{x,y,z}
\end{equation}
Notice that only when $x=y=z=1$, there is an overall sign flip.

\vspace{2.5mm}

\begin{tcolorbox}[enhanced, breakable, pad at break*=1mm, attach boxed title to top center={yshift=-3mm, yshifttext=-1mm}, colback=blue!5!white, colframe=blue!75!black, colbacktitle=blue!75!black, title= $CCZ$ gate, fonttitle=\bfseries, boxed title style= {size=small,colframe=blue!65!black}]

\begin{center}

    \begin{tikzpicture}
    \node[scale=1]{ 
        \begin{quantikz}
            \lstick[3]{$\ket{\psi}$} & \gate{\sigma_{(1)}^z} & \ctrl{3} & \qw & \qw & \qw & \qw & \qw & \ctrl{3} & \qw \\
            & \gate{\sigma_{(1)}^z} & \qw & \ctrl{2} & \qw & \qw & \qw & \ctrl{2} & \qw & \qw\\
            & \gate{\sigma_{(1)}^z} & \qw & \qw & \ctrl{1} & \qw & \ctrl{1} & \qw & \qw & \qw \\
            \lstick{$\ket{0}$} & \qw & \gate{U_+} & \gate{U_+} & \gate{U_+} & \gate{\sigma_{(1)}^z} & \gate{U_-} & \gate{U_-} & \gate{U_-} & \qw
        \end{quantikz}
    };
    \end{tikzpicture}
    
\end{center}

This circuit is conceptually similar to the $CZ$ circuit. We first apply the Pauli $Z$ operator to each input qubit individually--- this gives an overall minus sign only if we have an odd number of nonzero inputs. We have almost achieved the gate we want; we do get an overall minus sign when all three inputs $x=y=z=1$. However, we also get a minus sign when only one input is $1$. \\

To cancel this unwanted sign flip, we need to compute the sum of the inputs in the ancilla, using the three $U_+$ gates. The outcome will be $0$ if $x=y=z=1$, but will be $1$ if only one of $x,y,z=1$. So we flip the sign of the ancilla if its value is $1$, which cancels the wrong sign flip from before. Finally, we undo the entanglement between the ancilla and the inputs with the $U_-$ gates. We are left with a gate that only gives a global minus sign when $x=y=z=1$, as desired.\footnote{Once we have $CCZ$, we also automatically get $CZ$ if we fix one of the three inputs to $1$. So the previous $CZ$ implementation was not strictly necessary to demonstrate universality; however, it is a less resource intensive method than using $CCZ$, and so would be more practical in a physical realization.}

\end{tcolorbox}

\vspace{2.5mm}

The $CCZ$ gate, because it is non-Clifford\footnote{see appendix \ref{ap:lambda3} for a demonstration of why $CCZ$ is non-Clifford.}, is key to completing our universal gate set.

We can now build the qubit Hadamard using $CZ$. The Hadamard gate acts on a generic state by:
\begin{equation}
H(\alpha\ket0 + \beta\ket1) = \frac{1}{\sqrt{2}}(\alpha+\beta)\ket0 + \frac{1}{\sqrt{2}}(\alpha-\beta)\ket1    
\end{equation}

\vspace{2.5mm}

\begin{tcolorbox}[enhanced, breakable, pad at break*=1mm, attach boxed title to top center={yshift=-3mm, yshifttext=-1mm}, colback=blue!5!white, colframe=blue!75!black, colbacktitle=blue!75!black, title= $H$ gate, fonttitle=\bfseries, boxed title style= {size=small,colframe=blue!65!black}]

\begin{center}
    \begin{tikzpicture}
    \node[scale=1]{ 
        \begin{quantikz}
            \lstick{$\ket{\psi}$} & \gate[2]{CZ} & \qw & \meter{$\ket{\pm}$} \arrow[r] & \rstick{$\ket{+}$} \\
            \lstick{$\ket{+}$} & & \qw & \qw & \rstick{$H\ket{\psi} \text{ or }XH\ket{\psi}$}\qw\\
        \end{quantikz}
    };
    \end{tikzpicture}
\end{center}

We can implement $H$ by using $\ket+$ as an ancilla state. Given a state $\ket\psi=\alpha\ket0+\beta\ket1$:

\begin{enumerate}
    \item Start with state $\ket\psi \ket+$
    \item Apply $CZ$
    $$CZ(\alpha\ket0 + \beta\ket1)\ket+ = \ket+\qty[\frac{\alpha+\beta}{2}\ket0+\frac{\alpha-\beta}{2}\ket1] + \ket-\qty[\frac{\alpha-\beta}{2}\ket0+\frac{\alpha+\beta}{2}\ket1]$$
    \item Measure the 1st qubit in the $X$ basis: if the output is $\ket+$, we have implemented $H$; if the output is $\ket-$, we have implemented $XH$ and need to apply another qubit $X$ to correct the output.
\end{enumerate}

\end{tcolorbox}

As the implementation of $H$ contains a $CZ$ gate, which in turn contains a sign-flip gate, it will also be necessary in general to ``repeat until success''--- to go through many repetitions of the $CZ$ gate until we have obtained a sequence of measurement outcomes that result in the bare gate. 

\vspace{2.5mm}

We have now demonstrated how to build the Pauli $X, Z$ gates, $CZ$, and the Hadamard gate $H$. We also have a non-Clifford gate: $CCZ$. To fully generate the Cliffords, we need the phase gate $S$, which acts by:
\begin{equation}
    S(\alpha\ket0+\beta\ket1) = \alpha\ket0+i\beta\ket1
\end{equation}
The implementation of the phase gate is very similar to the Hadamard gate, as we again need a special ancilla---in this case, the eigenstate state of $Y$. We discuss the a protocol for the preparation of an $Y$ eigenstate in Appendix \ref{ap:y-preparation}.

\begin{tcolorbox}[enhanced, breakable, pad at break*=1mm, attach boxed title to top center={yshift=-3mm, yshifttext=-1mm}, colback=blue!5!white, colframe=blue!75!black, colbacktitle=blue!75!black, title= $S$ gate, fonttitle=\bfseries, boxed title style= {size=small,colframe=blue!65!black}]

\begin{center}
    \begin{tikzpicture}
    \node[scale=1]{ 
        \begin{quantikz}
            \lstick[1]{$\ket{\psi}$} & \gate[2]{CZ} & \meter{$\ket{\pm}$} \arrow[r] & \rstick{$\ket{+}$} \\
            \lstick[1]{$\ket{-_Y}$} & \qw  & \qw &\rstick{$S\ket{\psi}\text{ or } XS\ket{\psi}$} \qw
        \end{quantikz}
    };
    \end{tikzpicture}
\end{center}

We use the $-1$ eigenstate of $Y$, $\ket{-_Y}$, as an ancilla qubit. For convenience, we fix the global phase of the $-1$ eigenstate of $Y$ such that the state $\ket{-_Y}=\frac{1+i}{2}\ket0+\frac{1-i}{2}\ket1$. Given any state $\ket{\psi}= \alpha\ket0 + \beta\ket1$, 

\begin{itemize}
    \item Start with the state $\ket\psi\ket{-_Y}$
    \item apply $CZ$ to the physical and ancilla qubit
        \begin{equation*}
            CZ(\alpha\ket0 + \beta\ket1) (\frac{1+i}{2}\ket0+\frac{1-i}{2}\ket1)
            = \frac{1}{\sqrt{2}}(\alpha\ket0 + i\beta\ket1)\ket+ + \frac{1}{\sqrt{2}}(i\alpha\ket0 + \beta\ket1)\ket-
        \end{equation*}
    \item Measure the 1st qubit in the $X$ basis: if the output is $\ket+$, we have implemented $S$; if the output is $\ket-$, we have implemented $XS$ and need to apply another qubit $X$ to correct the output (up to a global phase).
\end{itemize}

\end{tcolorbox}

\vspace{2.5mm}
Thus, we have implemented all gates necessary for a universal gate set: $H, S, CZ,$ and $CCZ$. 

\section{Ribbon operator formalism}\label{ribbons}
\begin{tcolorbox}[enhanced, breakable,pad at break*=1mm]
    In this section we introduce the ribbon operators, which are lattice-level operators that create anyon excitations on top of the quantum double ground states. We discuss the iterative procedure for constructing ribbon operators, and explicitly write down the correspondence between ribbon operators and excited states in the microscopic and anyon bases. We show that the projectors in the quantum double Hamiltonian are nothing but small closed loop ribbon operators. We discuss how topological protection and logical gates arise from the properties of the ribbon operators. Finally, we introduce a generalization of the usual ribbon operator formalism that allows for consistent initialization of desired logical states. 
\end{tcolorbox}

In the previous section, we discussed, at the level of the effective field theory, how anyons in the $S_3$ quantum double can be used to implement a universal gate set. Now we ask the question: how can we actually create and manipulate these anyons at the lattice level? After all, to do any computation on a quantum device, we need to be able to initialize our anyons in particular internal states and braid them to implement unitary gates. 

In a non-Abelian quantum double model, the operators that create excitations are called ``ribbon operators''. They are called ribbons because they act on both the direct and dual lattices and so have a ``thickness'' (see Fig. \ref{fig:ribbon-operator-example} for an example of a ribbon). We will see that ribbons also have a recursive structure; each step of the ribbon depends on the previous ones. This is necessary to ensure our ribbons create isolated excitations at their ends, in analogy to the string operators in an Abelian model; the recursive structure compensates for the non-Abelian nature of the gauge group. It also ensures that excitations created by ribbon operators within a local region automatically satisfy the neutrality conditions as discussed in Sec \ref{neutrality}. On the other hand, the issue with moving non-Abelian anyons coherently, i.e. preserving the internal state of the anyon being moved, is trickier, as one needs to ensure that there is no remnant particle left at the original site of the anyon being moved. We leave a detailed discussion of how to coherently move non-Abelian anyons to an upcoming work. 

In this section, we discuss how to construct ribbon operators for creating anyons, and introduce various key ribbon properties. In particular, we highlight the importance of ``local orientation'' in defining ribbon operators consistently, which was only recognized recently \cite{yan_ribbon_2022}. For more detailed perspective on the original ribbon operator formalism, see \cite{kitaev_fault-tolerant_2003,bombin_family_2008,cui_topological_2018}. For a new gauging perspective in understanding the structure of ribbon operator, see \cite{lyons_protocols_2024}. 

\begin{figure}[h!]
\centering
\includegraphics[width=0.4\textwidth]{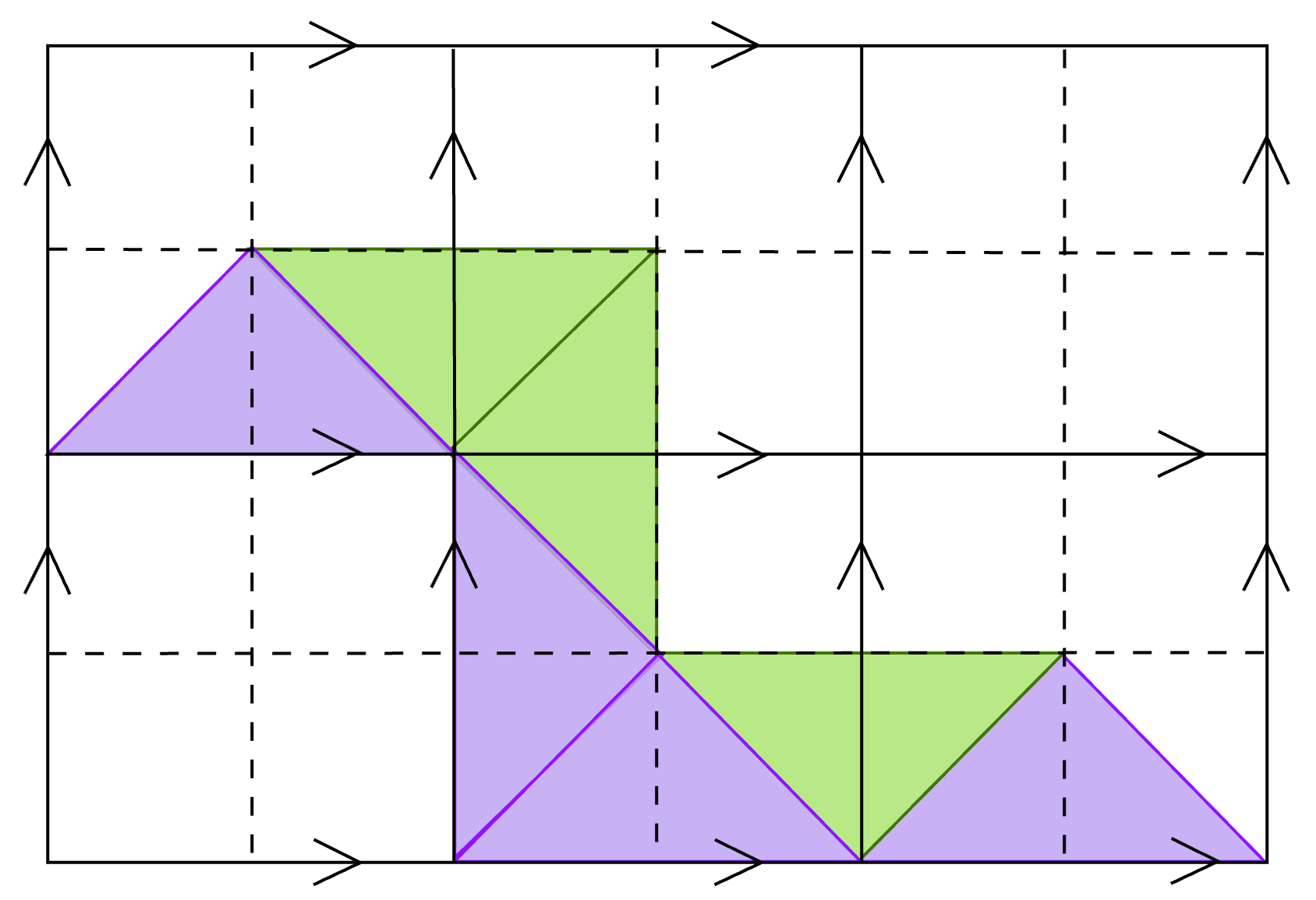}
\caption{An example of a ribbon. Solid lines indicate the direct lattice, whereas the dashed lines indicate the dual lattice. The purple and green triangles form building blocks of the ribbon---purple \textit{direct} triangles have edges on the direct lattice, and green \textit{dual} triangles have edges on the dual lattice.}
\label{fig:ribbon-operator-example}
\end{figure}

We will build up the ribbon operators by first defining the triangle operators, the smallest building blocks of the ribbon. We divide these triangles into two types: ``direct'' triangles with the longest edge along the direct lattice, and ``dual'' triangles with the longest edge along the dual lattice (see Fig. \ref{fig:triangles}). Each direct and dual triangle operator acts on the direct or dual lattice edge respectively--- we define their action in Fig. \ref{fig:triangle_operator}. The action is dependent on two properties:

\begin{enumerate}
    \item \textbf{Alignment:} This is defined as the orientation of the lattice edge relative to the direction of the triangle operator. The lattice edge orientations are set by convention. Dual lattice edges inherit their orientation from the direct lattice edge they intersect; they are assigned the orientation such that they cross the direct lattice edge from right to left if the lattice is rotated so the direct lattice orientation points upwards. The direction of the triangle operator need to follow the direction of the ribbon; the direction of the ribbon is a convention that designate the start as the flavor degree of freedom and the end as the color degree of freedom.
    \item \textbf{Local Orientation:} The ``local'' orientation of a triangle can be clockwise or counterclockwise. We can determine the local orientation by considering the plaquette at its start as a hinge; if swinging the leg of the triangle at the starting site clockwise sweeps out the inside of the triangle, it's a clockwise triangle, and vice versa for a counterclockwise triangle. Our convention of local orientation for direct and dual triangles ensures that the local orientation is the same for all triangles along a ribbon. Most previous literature on the quantum double do not refer to this aspects of the triangle operator (and only pointed out recently in \cite{yan_ribbon_2022}), but this is necessary to ensure consistency of the commutation relation between ribbon operators and the vertex and plaquette projectors, as will be discussed in Section \ref{ribbon-commutation}.
\end{enumerate}

\begin{figure}[h!]
\centering
\includegraphics[width=0.3\textwidth]{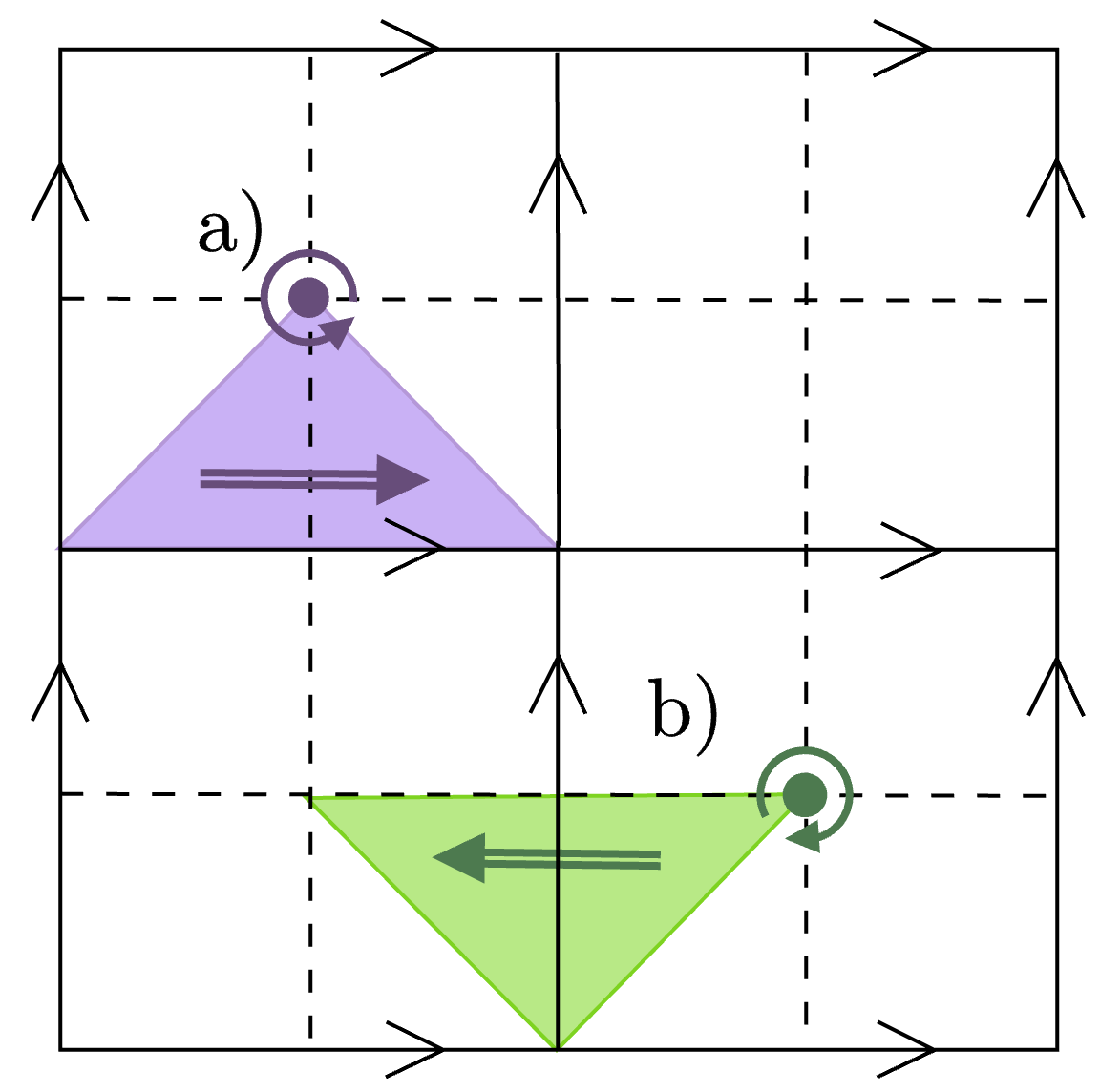}
\caption{Dual (green) and direct (purple) triangles illustrated on a square lattice. Each triangle is oriented with an arrow specifying its direction from one short edge (the starting site) to another. (the ending site). a) Direct triangles have their longest edge on the direct lattice (solid lines); the example in the figure is an ``aligned'' and ``counterclockwise'' direct triangle. b) Dual triangles have their longest edge on the dual lattice (dotted lines); the example in the figure is an ``aligned'' and ``clockwise'' dual triangle. For both direct and dual triangles, their vertices for determining local orientation are at the center of a plaquette.}
\label{fig:triangles}
\end{figure}

\begin{figure}[h!]
\centering
\includegraphics[width=0.9\textwidth]{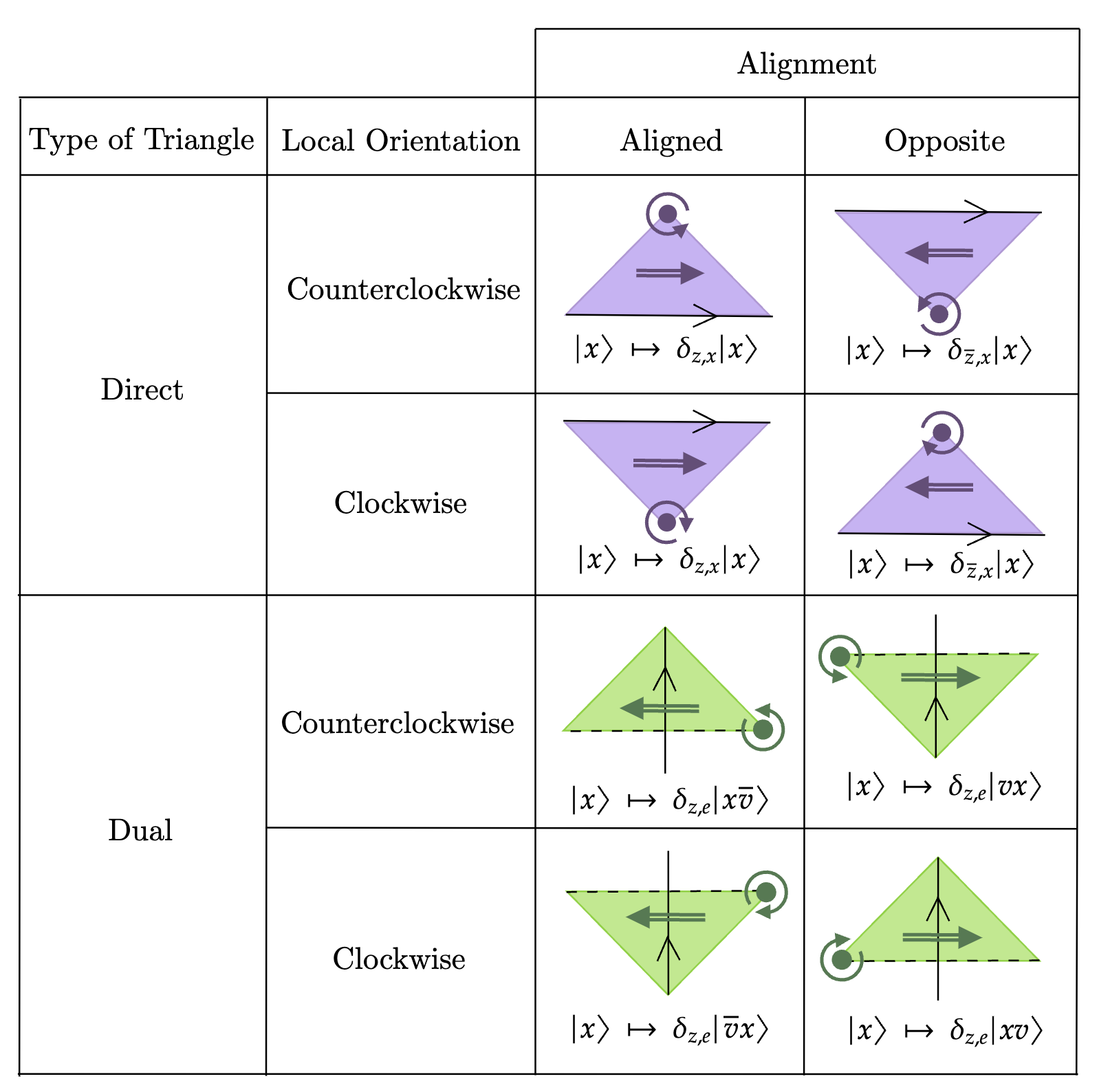}
\caption{Definition of triangle operators, based on the type (direct or dual), alignment (aligned or opposite), and local orientation (clockwise or counterclockwise) of the triangle operator. Direct and dual edge is denoted by solid and dotted line respectively. The orientation of the edge is indicated by the single arrow; the direction of the ribbon is indicated by the double arrow. The local orientation of the triangle operator is determined by first fixing the hinge (dark circle) at the plaquette center touching the triangle operator, and then turn along the ribbon direction.}
\label{fig:triangle_operator}
\end{figure}

Now that we have defined the building blocks, we can define a longer ribbon operator $F$. Each $F$ operator has two associated group elements, $z, v \in G$. These are the same $z$ and $v$ we have seen before; as we will see, $F^{(z, v)}$ will create the state $|z, w=\overline{z} v z\rangle$ where $z$ starts at the beginning vertex of the ribbon:
\begin{equation}
    F^{(z, v)} \ket{\text{vac}} = F^{(z, zw\overline{z})}\ket{\text{vac}} = \ket{z, w}
\end{equation}

\begin{defi}[Recursive definition of ribbon operator]

We build up $F^{(z, v)}$ out of our triangle operators recursively.

\begin{equation}\label{eq: recursive ribbon}
    F^{(z, v)}(\rho) = \sum_{k\in G} F^{(k, v)}(\rho_1) F^{(\overline{k}z, \overline{k}vk)}(\rho_2)
\end{equation}

where $\rho$ is a ribbon composed of multiple triangles, such that $\rho = \rho_1 \cup \rho_2$, where $\rho_1, \rho_2$ are the segments of the ribbon $\rho$.

\end{defi}

Superficially, it appears that we have chosen a particular decomposition of the overall ribbon. However, we can show that the construction of the ribbon is well-defined regardless the choice of decomposition. Consider a ribbon in three parts, rather than two: $\rho = \rho_1 \cup \rho_2 \cup \rho_3$. 
\begin{equation}
    \begin{aligned}
        F^{(z, v)}(\rho) &= \sum_{k\in G} F^{(k, v)}(\rho_1) F^{(\overline{k}z, \overline{k}vk)}(\rho_2 \cup \rho_3) \\
        &= \sum_{k, l\in G} F^{(k, v)}(\rho_1) F^{(l, \overline{k}vk)}(\rho_2) F^{(\overline{(kl)}z, \overline{(kl)}v(kl))}(\rho_3)\\
        &= \sum_{k, kl = m \in G} F^{(k, v)}(\rho_1) F^{(\overline{k}m, \overline{k}vk)}(\rho_2) F^{(\overline{m}z, \overline{m}vm)}(\rho_3)\\
        &= \sum_{m \in G} F^{(m, v)}(\rho_1 \cup \rho_2) F^{(\overline{m}z, \overline{m}vm)}(\rho_3)
    \end{aligned}
\end{equation}
We have shown that the decomposition of $\rho$ in terms of $\rho_1$ and $\rho_2 \cup \rho_3$ is equal to that in terms of $\rho_1 \cup \rho_2$ and $\rho_3$.

An intuition for the recursive formula for constructing ribbon operator is that we sum over all possible intermediate z string that is consistent with the end point, such that there is no excitation created along the ribbon except at its two end points. Another perspective on this is that the z string provides a consistency relation between the local and global flux, and when extending the ribbon, the values of the z string need to be updated in order to move the excitation coherently to only occur at the end points of a ribbon operator. It is reminiscent of the path integral formalism where all path consistent with the initial conditions are summed over; here, all z string consistent with the end point fluxes are summed over. 

To make concrete the recursive definition of ribbon operator, it is instructive to consider an example as shown in Fig. \ref{fig:ribbon-example}. We begin by decomposing the full ribbon into the first triangle and the rest of the ribbon, and continue until we have decomposed the whole thing into the constituent triangles:

\begin{figure}[h!]
\centering
\includegraphics[width=0.8\textwidth]{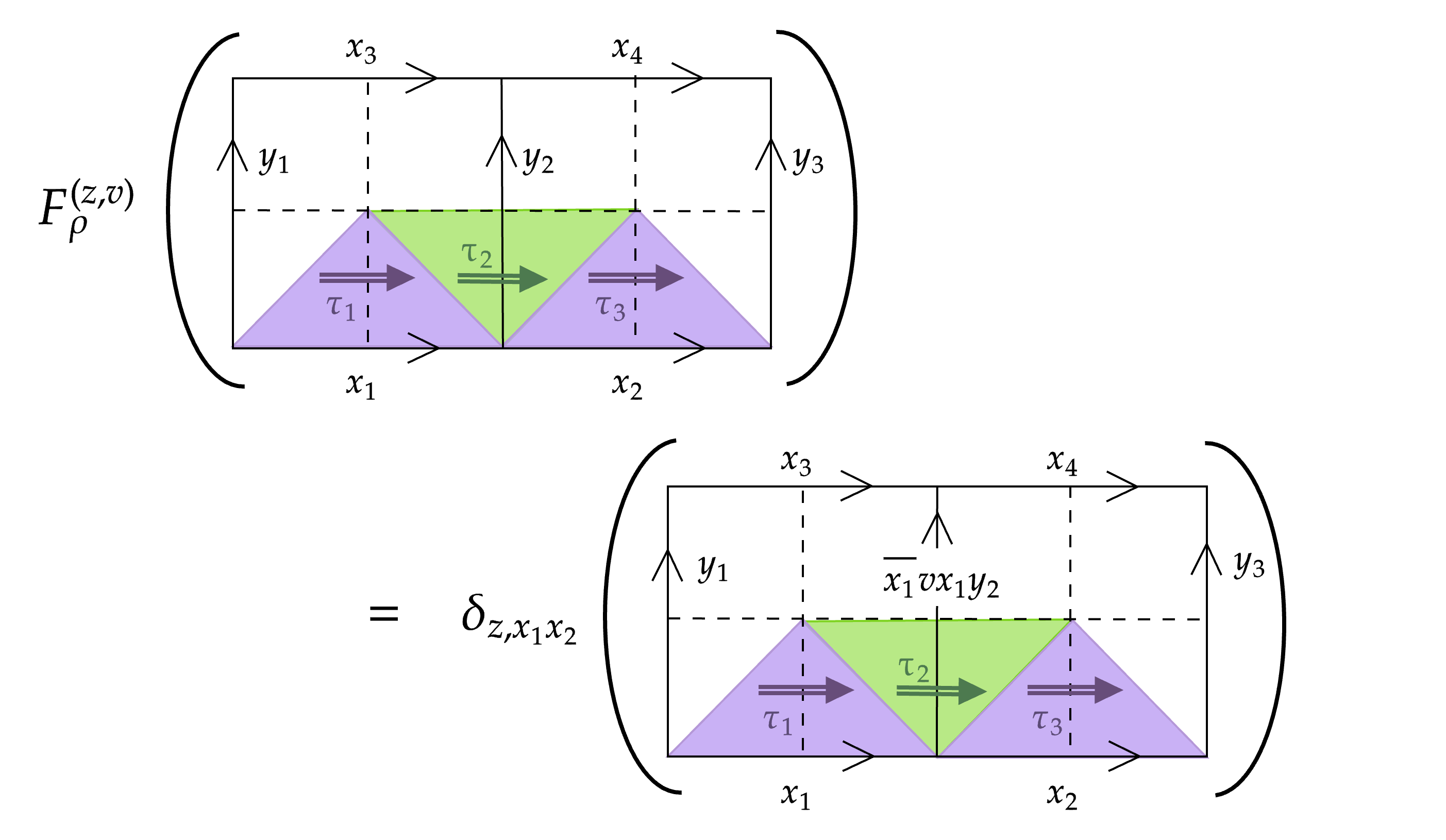}
\caption{The action of a ribbon operator on a given state (labeled by a group element on each edge). $\tau_i$ label triangle operators. The ribbon operator acts by projecting the product of all direct edges along the ribbon to $z$ and multiplying each dual edge by some group element dependent on $v$ and all the previous direct edges.}
\label{fig:ribbon-example}
\end{figure}

Using our triangle operator definitions (see Fig. \ref{fig:triangle_operator}) we find:
\begin{equation}
    \begin{aligned}
        F^{(z, v)}(\tau_1 \cup \tau_2 \cup \tau_3) &= \sum_{k_1, k_2 \in G} F^{(k_1, v)}(\tau_1) F^{(k_2, \overline{k}_1vk_1)}(\tau_2)F^{(\overline{(k_1k_2)}z, \overline{(k_1k_2)}v(k_1k_2))}(\tau_3)\\
        &= \sum_{k_1, k_2 \in G} \delta_{k_1, x_1} \delta_{k_2, e} L^{\overline{k}_1 v k_1}_{y_2} \delta_{\overline{(k_1k_2)}z, x_2}\\
        &= \delta_{z, x_1x_2} L^{\overline{x}_1 v x_1}_{y_2}
    \end{aligned}
\label{eq:ribbon-example}
\end{equation}
where $L^v_x: |x\rangle \rightarrow |vx\rangle$ is an action by left multiplication, where $|x\rangle$ is a state on the edge $x$ of the direct lattice. Since we can always express the state as a linear superposition of states in the regular representation, we abuse our notation here by labeling the group element as $x_i$ on the edge at $x_i$. 

\subsection{Ribbons in the Microscopic and Anyon Bases}\label{ribbon-def}

Does the ribbon operator in Eq. \ref{eq:ribbon-example} create the excitations we expect? First, treating the start of the ribbon as the location of our excitation and the end of the ribbon as the origin, we see that the $z$ in the ribbon operator is exactly the $z$ from the microscopic basis discussed in Sec. \ref{excited-states}. Additionally, suppose the initial state is a ground state (such that all plaquettes have trivial flux: $e=x_1y_2\bar x_3 \bar y_1$), after the action of the ribbon operator, the local flux at the starting plaquette (measured from the vertex where the ribbon starts) is $x_1(\bar x_1 v x_1 y_2) \bar x_3 \bar y_1=v$. Therefore, at the starting plaquette, the ribbon $F^{(z, v)}$ creates the state $\ket{z, w= \bar z v z}$ as expected.

We note that at the end plaquette, the local flux (which is also the topological flux, as this plaquette is located at the origin) is also non-trivial; it has flux $\bar z \bar v z = \bar w$ which is the inverse of the topological flux of the excitation at the starting plaquette, such that the ribbon $F^{(z, v)}$ (acting on the ground state) creates a pair of excitations with total trivial flux (measured at the same base point). The flux $w$ is located at the beginning of the ribbon, while the flux $\bar w$ is located at the end. We set the end of the ribbon to be the origin when we measure any topological flux. 

What if we want to create an excitation in the anyon basis, as will be necessary later to initialize logical states for computation? We simply need to apply the right linear combination of ribbons, just as our anyon basis states were linear combinations of $\ket{z, w}$ states. The basis transformation is as follows:
\begin{equation}\label{eq:anyon-basis-from-microscopic-basis}
    F^{(R, C); \mathbf{u}, \mathbf{u'}} = \frac{|R|}{|\mathbf Z(r)|} \sum_{n \in \mathbf{Z}(r)} \Gamma^R_{j j'}(n) F^{(q_c n \overline{q}_{c'}, c)}
\end{equation}
Here, $R$ is an irreducible representations of $Z(r)$, the centralizer of an representative $r$ in a conjugacy class $C$. We define $\textbf{u} = (c, j)$ and $\textbf{u'} = (c', j')$ where $c, c' \in C$ and $1 \leq j, j' \leq |R|$; $\Gamma^R(n)$ is the representation matrix of the group element $n$ in $R$. The group elements $q_c$ and $q_{c'}$ are defined in the following way with respect to $c, c',r$:
\begin{equation}
    c = q_c r \overline{q}_c, \quad\quad\quad  c' = q_{c'} r \overline{q}_{c'}
\end{equation}
Note that the choice of $q_c,q_{c'}$ is not unique, but the set  $\{q_cn\bar{q}_c' | n \in \mathbf{Z}(r)\}$, which is the set of all possible value of $z$ consistent with the flux configuration, is well-defined. So the sum on the right hand side of Eq. \ref{eq:anyon-basis-from-microscopic-basis} is well-defined.

Applying the ribbon $F^{(R, C); \mathbf{u}, \mathbf{u'}}$ to the vacuum will create an excitation (of a particular anyon type) with charge $R$, flux $C$, at the beginning of the ribbon; at the end of the ribbon, there is another excitation in a internal state such that the anyons at the two ends are in the vacuum fusion channel. Here the local flux (flavor degree of freedom) at the start is $c$, and the local flux at the end is $c'$.

\subsection{Ribbon Commutation Relations}\label{ribbon-commutation}

Here we summarize some key properties of ribbon operators.

\begin{enumerate}
    \item \textbf{Operators acting on the same ribbon:}
For two ribbon operators acting on the same ribbon $t$,
\begin{equation}
        F^{(z_1, v_1)}(t)F^{(z_2, v_2)}(t)=
        \begin{cases}
            \delta_{z_1,z_2}F^{(z_1, v_2v_1)}(t) \quad \quad \quad & \text{for clockwise local orientation}\\
            \delta_{z_1,z_2}F^{(z_1, v_1v_2)}(t) \quad \quad \quad & \text{for counterclockwise local orientation}\\
        \end{cases}
    \end{equation}
See appendix \ref{ap:same-ribbon} for a proof.

\item \textbf{Commutation with projectors in the middle of ribbon:} Ribbon operators commute with all projectors in the middle of the ribbon--- this is key to ensure excitations are only created at the ends of the ribbon. See appendix \ref{ap:middle-ribbon} for an example. 

\item \textbf{Commutation with projectors at end points:}

Ribbon operators don't commute with projectors at the end point, i.e., ribbon operators create excitations at the endpoints).

For plaquette operators, the general commutation relation with ribbon operators are
\begin{align}
    \text{Start: }\quad B_{\mathrm{fl}}^{h}F^{(z, v)}(t)&=
    \begin{cases}
        F^{(z, v)}(t)B_{\mathrm{fl}}^{v h} & \quad \quad \quad  \text{for clockwise}\\
        F^{(z, v)}(t)B_{\mathrm{fl}}^{h v} & \quad \quad \quad  \text{for counterclockwise}
    \end{cases}\\
    \text{End: }\quad B_{\mathrm{col}}^{h}F^{(z,v)}(t)&=
    \begin{cases}
        F^{(z, v)}(t)B_{\mathrm{col}}^{\bar z v zh} & \quad \quad \quad \text{for clockwise}\\
        F^{(z, v)}(t)B_{\mathrm{col}}^{\bar{z} \bar{v} z \bar h} & \quad \quad \quad \text{for counterclockwise}
    \end{cases}
\end{align}

The commutation relations between vertex operators and ribbon operators are:
\begin{align}
    \text{Start: }\quad A_{\mathrm{fl}}^{g}F^{(z,v)}(t)&=F^{(gz, gv\bar g)}(t)A_{\mathrm{fl}}^{g}\\
    \text{End: }\quad A_{\mathrm{col}}^{g}F^{(z,v)}(t)&=F^{(z\bar g, v)}(t)A_{\mathrm{col}}^{g},
\end{align}

regardless of the local orientation of the ribbon. See appendix \ref{ap:end-ribbon} for proof of above commutation relations.

\end{enumerate}

\subsection{Plaquette and Vertex Operators as Closed Ribbons} \label{closed-ribbons}

It turns out that $B^z_p$ and $A^g_s$ are nothing more than special cases of ribbon operators, where the ribbon forms a closed loop.

\begin{figure}[h!]
\centering
\includegraphics[width=0.9\textwidth]{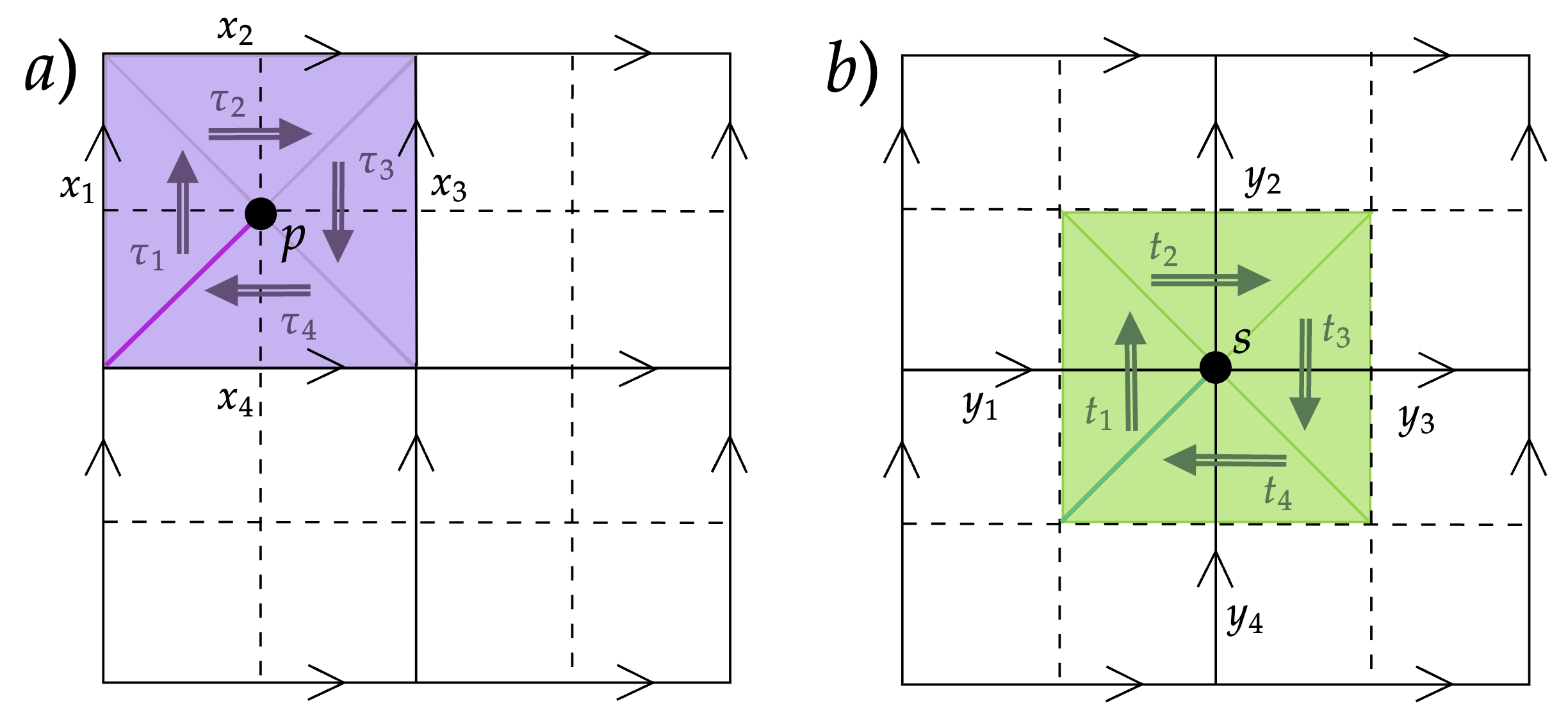}
\caption{a) Closed ribbon made up of direct triangles; the start and end of the closed ribbon is denoted by the bright violet diagonal line. b) Closed ribbon made of dual triangles; the start and end of the closed ribbon is denoted by the bright green diagonal line. }
\label{fig:closed-ribbons}
\end{figure}

Let's consider the simplest closed direct ribbon, consisting of a set of four direct triangles (labeled by $\tau_i$) filling out a plaquette on the square lattice (see Fig. \ref{fig:closed-ribbons}a). First, note that $\tau_1$, $\tau_2$ are aligned, whereas $\tau_3$ and $\tau_4$ are opposite; all 4 direct triangles have clockwise local orientation. Let's write down the corresponding ribbon operator for a pair of group elements $(z, v)$. By using Eq. \ref{eq: recursive ribbon} and the direct triangle definitions based on alignment and local orientation, it is straightforward to show that 
\begin{equation}
        F^{(z, v)}(\tau_1 \cup \tau_2 \cup \tau_3 \cup \tau_4) = \delta_{z, x_1 x_2 \overline{x}_3\overline{x}_4} = \delta_{\overline{z}, x_4x_3\overline{x}_2\overline{x}_1} = B^z_p
\end{equation}

As we have seen before, the ribbon operator projects the product of all the direct edges to the group element $z$. Here, however, these direct edges lie around a plaquette $p$, meaning the action of the ribbon is exactly the action of $B^z_p$, the plaquette operator. Conceptually, this makes sense. Direct triangles transport charges; the closed loop of direct triangles transports a charge around that closed loop. If there is a flux living on the plaquette, the charge will braid non-trivially with the flux and hence detect it. If there is no flux, as we expect in the ground state, then the action of the operator will be trivial, which is exactly the stabilizer condition imposed by the Hamiltonian. Note that the resulting plaquette operator $B^z_p$ is independent of the choice of $v$ for the ribbon, since it is solely composed of direct triangles. 

What about a closed ribbon $F^{g,v}$ consisting of just dual triangles (labeled by $t_i$ in Fig. \ref{fig:closed-ribbons}b)? As you might guess, this will be equivalent to our vertex operators, $A^g$. In terms of the alignments of the dual triangles, $t_1$ and $t_2$ are aligned, while $t_3$ and $t_4$ are opposite-type; all 4 dual triangles have counterclockwise local orientation. By using Eq. \ref{eq: recursive ribbon} and the dual triangle definitions based on alignment and local orientation, we see that the action of the closed ribbon is
\begin{equation}
         F^{(g, v)}(t_1 \cup t_2 \cup t_3 \cup t_4)  = R^{\overline{v}}_{y_1} L^{v}_{y_2} L^{v}_{y_3} R^{\overline v}_{y_4} = A^g_s
\end{equation}
where $L^g_{y_i}$ $(R^{\overline{g}}_{y_i})$ denotes left (right) multiplying the edge $y_i$ by $g$ $(\overline{g})$. As the edges of our dual triangles are nothing but the edges going into or out of a vertex of our lattice, we see that the closed ribbon operator with only dual triangles is exactly the vertex operator $A^g_s$--- incoming edges are right multiplied by $\overline{g}$, while outgoing edges are left-multiplied by $g$. As with the plaquette operator $B$, we can better understand what the vertex operator is doing by considering its ribbon description. Dual triangles transport fluxes between plaquettes; the closed dual ribbon operator creates a flux pair, transports one half around the vertex, and then fuses them. In the vacuum state, there should be no charge at the vertex to braid with the flux, so the action of this operator should be trivial. Note that the resulting vertex operator $A^g_s$ is independent of the choice of $z$ for the ribbon, since it is solely composed of dual triangles. 

\subsection{Particle Exchange and Braiding} \label{particle-braiding-ribbons}

As ribbon operators move anyons around, we expect that they could be used to describe braiding or particle exchange processs in our model as well. In this section we show that the ribbon operators, as they have been constructed, have exactly the necessary commutation relations to match up with anyonic statistics. Most of this discussion is based heavily on Shawn Cui's lecture notes \cite{cui_topological_2018}.

\begin{lem}
    Consider two ribbons $t_1, t_2$ which end at the same site, only sharing one edge. Then the following commutation relations hold:
    \begin{equation}
        F^{(z_2, v_2)}(t_2)F^{(z_1, v_1)}(t_1)=
        \begin{cases}
         F^{(z_1 \overline{z}_2 v_2 z_2, v_1)}(t_1) F^{(z_2, v_2)}(t_2) \quad\quad\quad\text{for clockwise} \\
         F^{(z_1 \overline{z}_2 \overline{v}_2 z_2, v_1)}(t_1) F^{(z_2, v_2)}(t_2)\quad\quad\quad\text{for counterclockwise}
        \end{cases}
    \label{eq:ribbon-exchange}
    \end{equation}
\label{lem:exchange}
\end{lem}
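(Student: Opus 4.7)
The plan is to reduce the statement to a local calculation on the single shared edge, using the recursive definition of ribbons (Eq.~\ref{eq: recursive ribbon}) to propagate the effect through the remaining (disjoint) portions of $t_1$ and $t_2$. The physical picture guiding the proof is that $F^{(z_2,v_2)}(t_2)$ transports a flux $w_2 = \bar z_2 v_2 z_2$ past the endpoint of $t_1$, and by the non-Abelian Aharonov--Bohm effect (cf.\ Sec.~\ref{excited-states}) this should conjugate the flux label of the excitation created by $t_1$; this is exactly the combination appearing in Eq.~\ref{eq:ribbon-exchange}.

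First, I would decompose each ribbon as $t_i = t_i^{\mathrm{bulk}} \cup \tau_i$, where $\tau_i$ is the triangle of $t_i$ at the shared endpoint site and $t_i^{\mathrm{bulk}}$ denotes the remainder. Because $t_1$ and $t_2$ share \emph{only} the one edge (which must sit inside $\tau_1 \cup \tau_2$), the bulk pieces $t_1^{\mathrm{bulk}}$ and $t_2^{\mathrm{bulk}}$ act on disjoint sets of edges, and so they commute trivially with each other and with the opposite triangle. Applying the recursion of Eq.~\ref{eq: recursive ribbon} twice yields
\begin{equation}
F^{(z_j,v_j)}(t_j) \;=\; \sum_{k_j \in G} F^{(k_j, v_j)}(t_j^{\mathrm{bulk}})\, F^{(\bar k_j z_j, \bar k_j v_j k_j)}(\tau_j), \qquad j=1,2.
\end{equation}
Thus the commutation of $F(t_2)$ past $F(t_1)$ collapses to a sum of commutators of single triangle operators acting on the shared edge at the common endpoint.

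Next, I would carry out the base case: commuting two triangles $\tau_1$ and $\tau_2$ that share one edge at a common site. Using the explicit definitions in Fig.~\ref{fig:triangle_operator} (i.e.\ $L^g$, $R^{\bar g}$, and delta-projectors, depending on alignment and local orientation), one checks that moving the operator associated to $\tau_2$ past that associated to $\tau_1$ replaces the projector on the shared direct edge by one whose target group element has been left-multiplied by $\bar z_2 v_2 z_2$ in the clockwise case, and by $\bar z_2 \bar v_2 z_2$ in the counterclockwise case. This is where the sign/inversion distinguishing the two local orientations arises: a clockwise loop of the $\tau_2$ flux around the endpoint of $t_1$ produces the group element $\bar z_2 v_2 z_2$, while the counterclockwise loop produces its inverse. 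I would treat each of the four sub-cases (direct vs.\ dual shared edge, aligned vs.\ opposite) separately, leveraging the identities already derived for the commutators with $A_{\mathrm{fl}}^g$ and $B_{\mathrm{fl}}^h$ in Sec.~\ref{ribbon-commutation}, since closed-loop portions of $\tau_2$ near the endpoint are precisely what those projectors encode.

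Finally, I would reassemble the pieces: the $k_2$-sum inside $F^{(z_2,v_2)}(t_2)$ becomes, after the commutation, a sum that reconstructs a ribbon $F^{(z_1 \bar z_2 v_2 z_2, v_1)}(t_1)$ by reindexing (shifting $k_1 \mapsto k_1 \bar z_2 v_2 z_2$ and using that the centralizer-like structure of the recursion is preserved under right-multiplication of $z_1$). The main obstacle I anticipate is the careful bookkeeping in this reassembly step: one must verify that the modified target $z_1 \bar z_2 v_2 z_2$ still satisfies the consistency relation $w_1 = \bar z_1 v_1 z_1$ that every term in the recursive sum enforces, and that no residual excitations appear in the middle of either ribbon. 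Once the base-case identity for single triangles is correctly stated with all the orientation conventions, the recursive step should go through by an induction on the total number of triangles in $t_1$ and $t_2$.
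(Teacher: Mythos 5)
Your plan is structurally identical to the paper's proof: decompose each ribbon into its endpoint triangle plus a bulk remainder via the recursion in Eq.~\ref{eq: recursive ribbon}, verify the commutation relation directly for the two triangle operators on the shared edge, and then re-assemble the full ribbons. A few of the anticipated details are off, though they would not derail a careful execution. First, the reassembly step does not require any reindexing of the $k_1$-sum: after pushing the base-case identity through the double sum, the $\tau_1$ argument becomes $\overline{k}_1 z_1 \overline{z}_2 \overline{v}_2 z_2 = \overline{k}_1(z_1\overline{z}_2\overline{v}_2 z_2)$, which is already of the form $\overline{k}_1 z_1'$ demanded by the recursion, so the sum collapses to $F^{(z_1\overline{z}_2\overline{v}_2 z_2,\, v_1)}(t_1)\,F^{(z_2,v_2)}(t_2)$ with no change of variables. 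Second, the base case is not an application of the $A_{\mathrm{fl}}^g$ / $B_{\mathrm{fl}}^h$ commutation relations (those concern a ribbon versus a projector, not two ribbons), but a short direct computation on the shared edge using the triangle operator definitions; moreover, since $\tau_1$ and $\tau_2$ share an edge but are distinct, one must be direct and the other dual, so there is essentially one geometric configuration to check rather than four independent sub-cases. Finally, in the base case the new argument $z_1\overline{z}_2 v_2 z_2$ (clockwise) arises as a \emph{right}-multiplication of $z_1$ by $\overline{z}_2 v_2 z_2$, not a left-multiplication, which matters for getting the orientation cases to match. With those corrections your proof reduces to the paper's argument.
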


\begin{proof}

We will prove for the case of counterclockwise local orientation (the clockwise case can be shown in the same manner). 

Write each ribbon as the composition of their disjoint parts, $t_1'$ and $t_2'$, and the triangles corresponding to their shared edge, $\tau_1$ and $\tau_2$. First, we show that the lemma holds for the operators acting on $\tau_1$ and $\tau_2$. The left-hand side of equation \ref{eq:ribbon-exchange} acts on their shared edge $|x\rangle$ in the following way:
\begin{equation}
    F^{(z_2, v_2)}(\tau_2)F^{(z_1, v_1)}(\tau_1)|x\rangle = \delta_{e, z_2} \delta_{x, z_1} |x\overline{v}_2\rangle
\end{equation}
where we have assumed that $\tau_1$ is direct and $\tau_2$ is dual (they must be different kinds to share an edge but be distinct). The right-hand side gives:
\begin{equation}
\begin{aligned}
    F^{(z_1 \overline{z}_2 \overline{v}_2 z_2, v_1)}(\tau_1) F^{(z_2, v_2)}(\tau_2)|x\rangle &= \delta_{x\overline{v}_2, z_1 \overline{z}_2 \overline{v}_2 z_2} \delta_{e, z_2} |x\overline{v}_2\rangle\\
    &= \delta_{x, z_1} \delta_{e, z_2} |x\overline{v}_2\rangle
\end{aligned}
\end{equation}
So the right-hand side and the left-hand side do give the same result. We can generalize this to the full ribbons by using the recursive definition for $F^{(v, z)}$:
\begin{equation}
    \begin{aligned}
        F^{(z_2, v_2)}(\tau_2)F^{(z_1, v_1)}(\tau_1) &= \sum_{k_1, k_2} F^{(k_2, v_2)}(t_2')F^{(\overline{k}_2 z_2, \overline{k}_2 v_2 k_2)}(\tau_2) F^{(k_1, v_1)}(t_1')F^{(\overline{k}_1 z_1, \overline{k}_1 v_1 k_1)}(\tau_1)\\
        &= \sum_{k_1, k_2} F^{(k_1, v_1)}(t_1') F^{(k_2, v_2)}(t_2') F^{(\overline{k}_2 z_2), \overline{k}_2 v_2 k_2}(\tau_2) F^{(\overline{k}_1 z_1, \overline{k}_1 v_1 k_1)}(\tau_1)\\
        &=\sum_{k_1, k_2} F^{(k_1, v_1)}(t_1') F^{(k_2, v_2)}(t_2') F^{(\overline{k}_1 z_1 \overline{z}_2\overline{v}_2 z_2, \overline{k}_1 v_1 k_1)}(\tau_1) F^{(\overline{k}_2 z_2, \overline{k}_2 v_2 k_2)}(\tau_2) \\
        &= \sum_{k_1, k_2} F^{(k_1, v_1)}(t_1') F^{(\overline{k}_1 z_1 \overline{z}_2\overline{v}_2 z_2, \overline{k}_1 v_1 k_1)}(\tau_1) F^{(k_2, v_2)}(t_2')  F^{(\overline{k}_2 z_2, \overline{k}_2 v_2 k_2)}(\tau_2) \\
        &= F^{(z_1 \overline{z}_2\overline{v}_2 z_2, v_1)}(t_1) F^{(z_2, v_2)}(t_2)
    \end{aligned}
\end{equation}
where the second and fourth steps followed from the fact $t_1', t_2', \tau_1, \tau_2$ are all disjoint and so their ribbon operators commute.
\end{proof}

\begin{figure}[h!]
\centering
\includegraphics[width=0.75\textwidth]{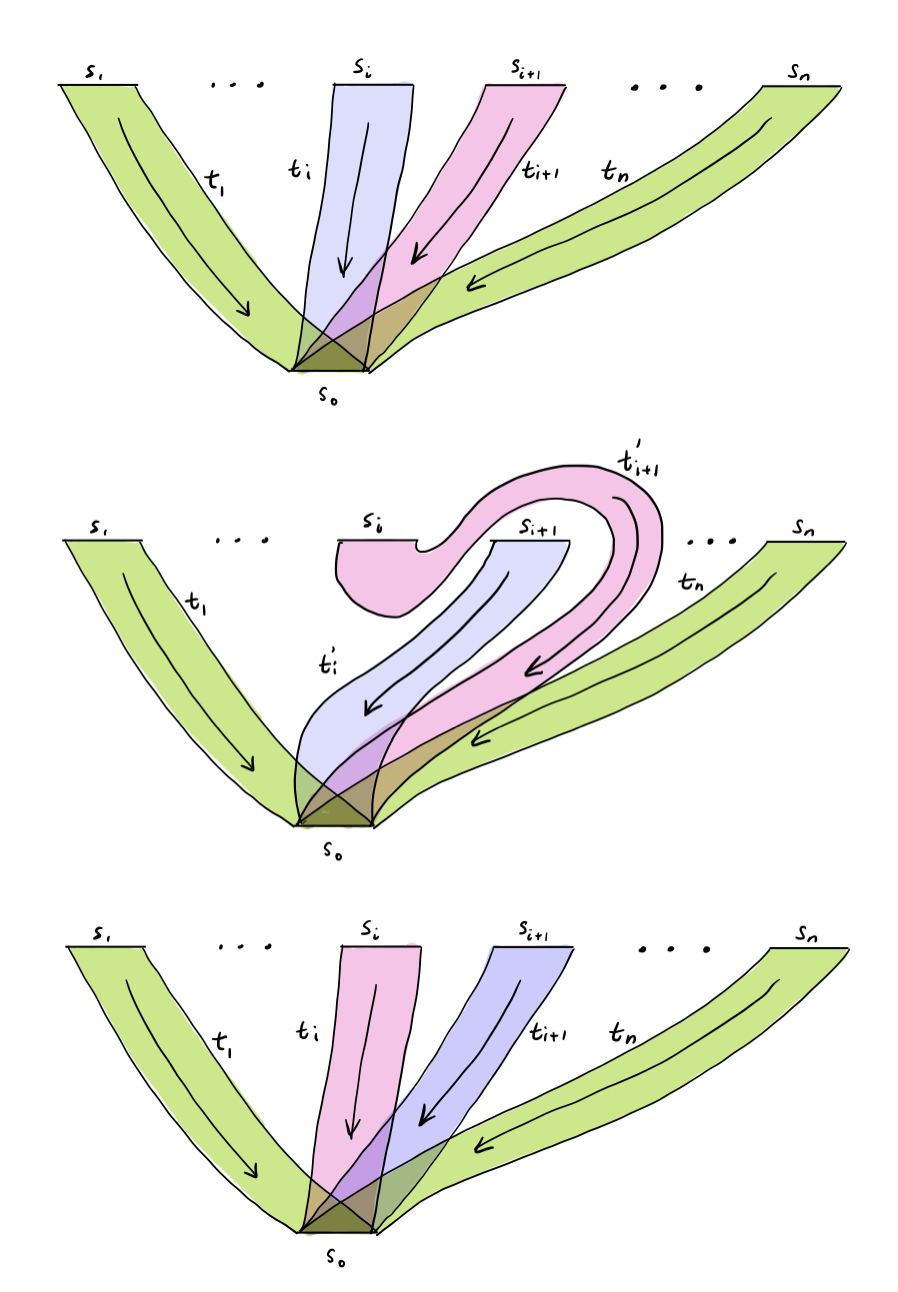}
\caption{A schematic of anyon braiding in terms of ribbon operators. Figure adapted from \cite{cui_topological_2018}.}
\label{fig:ribbon-braiding}
\end{figure}

We can use this intermediate result to demonstrate the ribbon operators reproduce the right braiding properties. Consider a set of particles at sites $s_1, \dots, s_i, s_{i+1}, \dots, s_n$. We can initialize this state from the vacuum by applying ribbon operators from $s_1, \dots, s_i, s_{i+1}, \dots, s_n$ to the origin. See the first diagram in Fig. \ref{fig:ribbon-braiding} for a pictorial representation of this state.
\begin{equation}
    |\Psi\rangle = F^{(z_n, v_n)}(\tau_n)\cdots F^{(z_{i+1}, v_{i+1})}(\tau_{i+1})F^{(z_{i}, v_{i})}(\tau_i) \cdots F^{(z_1, v_1)}(\tau_1)|\Omega\rangle
\end{equation}
We want to swap the positions of the anyons at sites $s_i$ and $s_{i+1}$. We can do this by first deforming $\tau_i$ and $\tau_{i+1}$ to $\tau_{i}'$ and $\tau_{i+1}'$, respectively (see the second diagram in Fig. \ref{fig:ribbon-braiding}). We are allowed to do this because the ribbon operators are topological, so deformations of this kind do not impact the resulting state--- as long as we do not cross any excitations when we deform the ribbons. After doing this, we can keep deforming $\tau_i'$ until it becomes $\tau_{i+1}$. However, we cannot do the same with $\tau_{i+1}'$ because that would require crossing the excitation we have already placed at $s_{i+1}$. Instead, we need to swap the order we apply these operators, which requires the application of the lemma we proved above. Then, finally, we can deform $\tau_{i+1}'$ to $\tau_i$, and we have completed the swap operation.
\begin{equation}
    \begin{aligned}
         |\Psi\rangle &= F^{(z_n, v_n)}(\tau_n)\cdots F^{(z_{i+1}, v_{i+1})}(\tau_{i+1}')F^{(z_{i}, v_{i})}(\tau_i') \cdots F^{(z_1, v_1)}(\tau_1)|\Omega\rangle\\
         &= F^{(z_n, v_n)}(\tau_n)\cdots F^{(z_{i+1}, v_{i+1})}(\tau_{i+1}')F^{(z_{i}, v_{i})}(\tau_{i+1}) \cdots F^{(z_1, v_1)}(\tau_1)|\Omega\rangle\\
         &= F^{(z_n, v_n)}(\tau_n)\cdots F^{(z_{i}\overline{z_{i+1}}\overline{v_{i+1}}z_{i+1}, v_{i})}(\tau_{i+1}) F^{(z_{i+1}, v_{i+1})}(\tau_{i+1}') \cdots F^{(z_1, v_1)}(\tau_1)|\Omega\rangle\\
         &= F^{(z_n, v_n)}(\tau_n)\cdots F^{(z_{i}\overline{z_{i+1}}\overline{v_{i+1}}z_{i+1}, v_{i})}(\tau_{i+1}) F^{(z_{i+1}, v_{i+1})}(\tau_i) \cdots F^{(z_1, v_1)}(\tau_1)|\Omega\rangle
    \end{aligned}
\end{equation}
We can see that the end result of this swapping is to conjugate the topological flux of the $i$th anyon by the topological flux of the $(i+1)$th: we have implemented the counterclockwise exchange operator $R^{-1}$ (Eq. \ref{eq:ccw-exchange-op}) we discussed abstractly in terms of ribbon operators.
\begin{equation}
\begin{aligned}
     w_{i, \mathrm{final}} &= (\overline{z_{i+1}} v_{i+1} z_{i+1} \overline{z_i}) v_i (z_{i}\overline{z_{i+1}}\overline{v_{i+1}}z_{i+1})\\
     &= w_{i+1} \overline{z_i} v_i z_{i} \overline{w_{i+1}}\\
     & = w_{i+1} w_{i, \mathrm{initial}} \overline{w_{i+1}}
\end{aligned}
\end{equation}
Repeating this swap would conjugate both $i$th and $({i+1})$th anyon by their total initial topological flux, again as we expect. So we see the structure of the ribbon operators is consistent with our prior expectations for how anyons should interact. Ribbon operators provide a concrete implementation for us to manipulate anyons directly. 

\subsection{Computation with Ribbons}

The derivation above also helps us see how information can be encoded, protected, and manipulated using anyons. Note that, in the above example of particle exchange, we are manipulating the ends of the ribbons away from the origin --- the part of our excitation that is changeable locally. Yet, the local flux remains unchanged, and only the topological flux is affected. We have performed a logical operation on the topological degrees of freedom, without having to operate on them at the origin! 

Now we can really appreciate what ``topological protection'' means. Consider the scenario where we have applied $n$ ribbons to the vacuum, all ending at the same site $s_0$ (as in the top illustration of Fig. \ref{fig:ribbon-braiding}). Suppose the ribbons are chosen such that the excitations at $s_0$ are overall neutral. Since the color degrees of freedom of the excitations at the free ends of the ribbons are the inverse of the excitations at the origin, this means we have an overall color neutral state. Additionally, to any local operator living at or around $s_0$, it looks like there is no particle there! No local operator will be able to learn any information about the color, since there is no charge or flux to interact with. However, we can still encode information in the Hilbert space living at $s_0$ (which is equivalent to the combined color space of the excitations at $s_1, \cdots, s_n$  by our choice of ribbons), and we can manipulate that information by braiding and fusing the excitations living at the other end of the ribbons. 

\subsection{Generalization: Extended z ribbon operators}

At the start of the section, we provided the recursive definition of ribbon operator that allows one to decompose a long ribbon operator into its triangle operator components; here, we show that there is an alternative ground-up algorithm to construct ribbon operator, and the algorithm is amenable to generalization.

\begin{defi}[Constructive algorithm for ribbon operators]
To construct a ribbon operator $F^{z,v}$,
    \begin{enumerate}
        \item Start with $x=e$. $x$ keep tracks of the product of direct edges along the length of the ribbon.
        \item If current triangle is dual: left multiply by $x^{-1}vx$ or right multiply by $x^{-1}v^{-1}x$ depending on alignment and local orientation of the edge, where $x$ is the product of all previous direct edges.
        \item If current triangle is direct with direct edge $x_i$: update $x \to x \cdot x_i$ if the direct triangle is aligned; otherwise (for opposite alignment), update $x\to x \cdot \overline{x}_i$.
        \item Repeat step 2-3 along the length of the ribbon.
        \item At the end, project the product of all direct edges involved in the ribbon to $z$.
    \end{enumerate}
\end{defi}

Now, we introduce the generalized ribbon operator. We relax the condition that a ribbon operator must be composed of direct and dual triangles that fills a ribbon; in fact, the start and end of a ribbon operator can consist of just arbitrarily long direct edges (highlighted purple line) without necessarily filled in with triangles; crucially, no additional violations are created. As an example, consider the generalized ribbon operator in Fig. \ref{fig:generalized-ribbon}. Based on the constructive algorithm, after the first two direct edges ($y_4$ and $x_1$), the value of $x$ is updated to $x=y_4 x_1$ after step $3$. Then for the first dual triangle with dual edge at $y_1$, its edge value is left multiplied with $x^{-1}vx=\overline{x}_1\overline{y}_4 v y_4 x_1$. Similarly step gives the action for the second dual edge. At the end of the constructive algorithm, we apply a Kronecker delta which projects the multiple of all direct edges (which includes the purple lines prepending and appending the ribbon) $x=y_4 x_1 x_2 x_3 \overline{y}_7$ to the value of $z$. 

\begin{figure}[h!]
\centering
\includegraphics[width=1\textwidth]{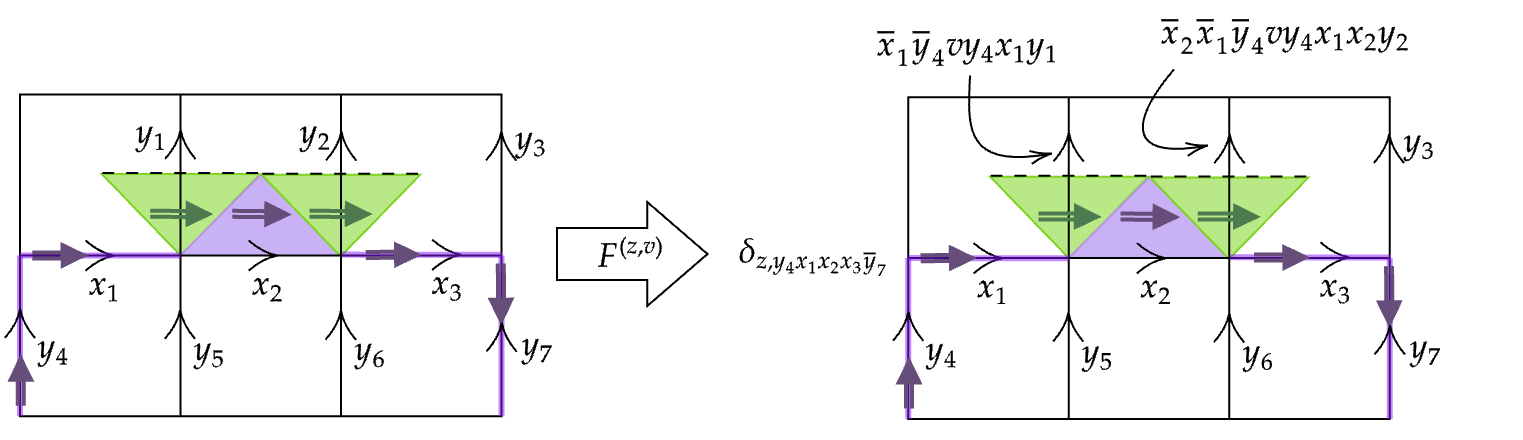}
\caption{An example of a generalized ribbon operator and its action. The generalized ribbon operator allows for the $z$ string to extend beyond the direct edges of the original ribbon, as shown by the thickened purple line. This changes the action of the Kronecker delta, which project the product of all direct edges along the extended $z$ string to the value $z$.}
\label{fig:generalized-ribbon}
\end{figure}

The action of the generalized ribbon operator directly follows from the constructive algorithm. In terms of flux excitation, the generalized ribbon operator create a flux with global flux (i.e. the color degree of freedom) $v$ at the top left plaquette and another flux with global flux $\bar{v}$ at the top right plaquette. There could be charge excitations at the two ends of the purple direct strings; nonetheless, one can check that there is no additional flux and charge excitation created by the generalized ribbon operator. The net effect of prepending or appending a ribbon operator with direct triangles is to move any charge excitation originally decorating the end vertex of the origin ribbon to the new end vertex of the generalized ribbon. 

To define the global flux consistently, we actually do not need to require that all ribbon begin at the same site as shown in Fig. \ref{fig:ribbon-braiding}; instead, the necessary condition is that the extended $z$ string begins at the same vertex, i.e. all global flux are measured with respect to the same origin. This ensures that the global flux of different ribbon pairs can be consistently compared with each other, such that the logical state encoding discussed in Sec. \ref{state_initialization} can be initialized consistently for the lattice-level implementation of universal quantum computation.

\subsection{Lattice-level state initialization}

\subsubsection{Computational basis on the lattice}

We will now discuss how to initialize the computational basis states discussed in Sec. \ref{state_initialization} using the ribbon operator in the $C_{2}$ flux anyons basis (the anyon basis ribbon is discussed in Appendix \ref{ap:basis-change}). The anyon basis ribbon allow one to specify the internal state at both ends of the ribbon. There are nine possible $C_{2}$ anyon basis ribbons, which we will refer to as the flux basis ribbons. 
\begin{equation}
\begin{aligned}
F^{[C_{2}];\sigma,\sigma},F^{[C_{2}];\sigma,\mu\sigma},F^{[C_{2}];\sigma,\bar{\mu}\sigma}\\
F^{[C_{2}];\mu\sigma,\sigma},F^{[C_{2}];\mu\sigma,\mu\sigma},F^{[C_{2}]; \mu\sigma,\bar{\mu}\sigma}\\
F^{[C_{2}];\bar{\mu}\sigma,\sigma},F^{[C_{2}];\bar{\mu}\sigma,\mu\sigma},F^{[C_{2}]; \bar{\mu}\sigma,\bar{\mu}\sigma}
\end{aligned}
\end{equation}

At first sight, it might appear that a given flux basis ribbon does not create a vacuum pair, since the flux at both ends can be different; this seem to be in tension with the global neutrality condition. This tension is resolved by realizing that the two flux labels for a given anyon ribbon is that of a local flux (around just the plaquette of each end of the ribbon), and by a judicious choice of $z$ string the global flux is make sure to be neutral.
In addition, with just one flux basis ribbon like $F^{[C_{2}];\sigma,\sigma}$, it seems that it might have a net charge as toggling both the start and end internal state would not keep the state invariant; again this is in tension with the global neutrality condition. This paradox is resolved by realizing that there are actually $[2]$ charges decorated at the two ends of the $z$ string, as shown in the figure below:

\begin{figure}[h!]
\centering
\includegraphics[width=0.5\textwidth]{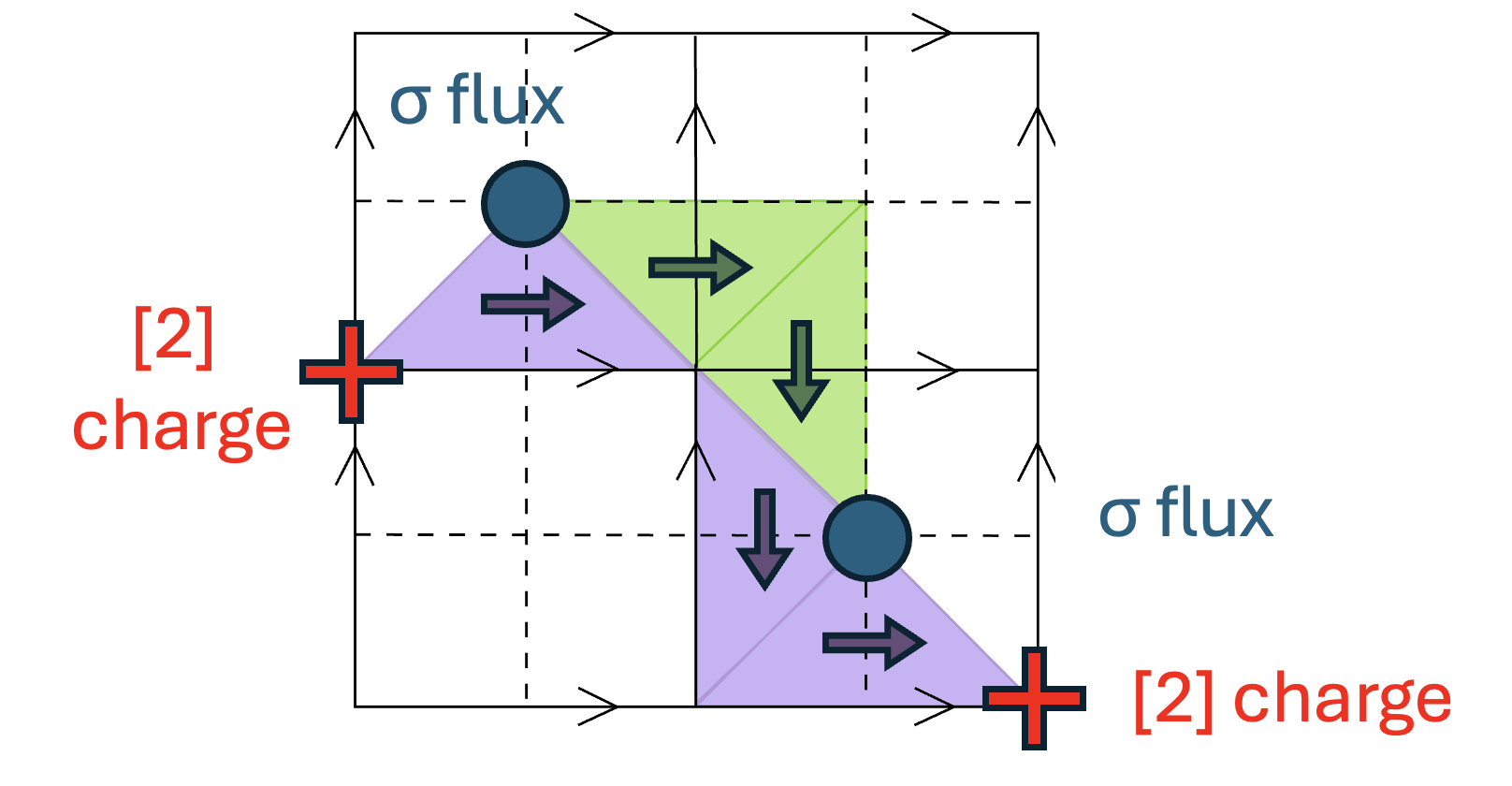}
\caption{Excitations created by the ribbon operator $F^{[C_2];\sigma,\sigma}$. In additional to the two $\sigma$ flux violations at the two plaquettes at the end of the ribbon, there are $[2]$ charge violations at the two ending vertices.}
\label{fig:excitation-of-ribbon}
\end{figure}

The remnant particle of fusing the two $\sigma$ fluxes is in the vacuum fusion channel with the two $[2]$ charge decorating at the ending vertices of the ribbon; hence, as a whole, the ribbon is a neutral pair. By similar argument, all 9 possible $C_2$ anyon basis ribbons create neutral pair; since the computational basis states are symmetric superposition of neutral pairs, they too can be created from vacuum in accordance with neutrality condition.

Let
\begin{equation}
\begin{aligned}
F^{[C_{2}];\sigma} & \equiv F^{[C_{2}];\sigma,\sigma}+F^{[C_{2}];\sigma,\mu\sigma}+F^{[C_{2}];\sigma,\bar{\mu}\sigma}\\
F^{[C_{2}];\mu\sigma} & \equiv F^{[C_{2}];\mu\sigma,\sigma}+F^{[C_{2}];\mu\sigma,\mu\sigma}+F^{[C_{2}];\mu\sigma,\bar{\mu}\sigma}\\
F^{[C_{2}];\bar\mu\sigma} & \equiv F^{[C_{2}];\bar\mu\sigma,\sigma}+F^{[C_{2}];\bar\mu\sigma,\mu\sigma}+F^{[C_{2}];\bar\mu\sigma,\bar{\mu}\sigma}
\end{aligned}
\end{equation}
above linear superposition of ribbon operators have the special property that there is no charge at the end vertex, since it is a symmetric superposition over all possible end internal states. We will use this state to encode the logical $\ket{0}$ state. 
So the concrete implementation of the $\mathcal{Z}$ basis encoding is:
\begin{equation}
\begin{aligned}
\ket{0} &= F^{[C_{2}];\sigma}(t_1)F^{[C_{2}];\sigma}(t_2) \ket{GS} \\
\ket{1} &= F^{[C_{2}];\mu\sigma}(t_1)F^{[C_{2}];\mu\sigma}(t_2) \ket{GS} \\
\ket{2} &= F^{[C_{2}];\bar{\mu}\sigma}(t_1)F^{[C_{2}];\bar{\mu}\sigma}(t_2) \ket{GS} 
\end{aligned}
\end{equation}
where $t_1$ and $t_2$ are two ribbons with one end overlapped at the origin, and $\ket{GS}$ denotes a ground state of the $S_3$ quantum double Hamiltonian. This provides the lattice-level prescription for creating the logical dual basis states defined in Eq. \ref{eq:computational-basis-def}.

\subsubsection{Dual basis on the lattice}

The dual basis states can be initialized by appropriate superposition of the flux basis ribbon operators:
\begin{equation}
\begin{aligned}
\ket{\tilde0} &= \frac{1}{\sqrt3}(F^{[C_{2}];\sigma}(t_1)F^{[C_{2}];\sigma}(t_2) + F^{[C_{2}];\mu\sigma}(t_1)F^{[C_{2}];\mu\sigma}(t_2)+F^{[C_{2}];\bar\mu\sigma}(t_1)F^{[C_{2}];\bar\mu\sigma}(t_2) \ket{GS} \\
\ket{\tilde1} &= \frac{1}{\sqrt3}(F^{[C_{2}];\sigma}(t_1)F^{[C_{2}];\sigma}(t_2) +\omega F^{[C_{2}];\mu\sigma}(t_1)F^{[C_{2}];\mu\sigma}(t_2)+ \omega^* F^{[C_{2}];\bar\mu\sigma}(t_1)F^{[C_{2}];\bar\mu\sigma}(t_2) \ket{GS} \\
\ket{\tilde2} &= \frac{1}{\sqrt3}(F^{[C_{2}];\sigma}(t_1)F^{[C_{2}];\sigma}(t_2) + \omega^* F^{[C_{2}];\mu\sigma}(t_1)F^{[C_{2}];\mu\sigma}(t_2)+ \omega F^{[C_{2}];\bar\mu\sigma}(t_1)F^{[C_{2}];\bar\mu\sigma}(t_2) \ket{GS} 
\end{aligned}
\end{equation}
where $t_1$ and $t_2$ are two ribbons with one end overlapped at the origin, and $\ket{GS}$ denotes a ground state of the $S_3$ quantum double Hamiltonian. This provides the lattice-level prescription for creating the logical dual basis states defined in Eq. \ref{eq:dual-basis-def}.

\section{Conclusion}

    We have reviewed the basics of $S_3$ topological order, the logical encoding, and the universal set of qubit gates that can be explicitly implemented using braiding and fusion of $S_3$ anyons. We hope these notes, by consolidating several previous works, will help clarify and concretize how quantum double models work and how computation is done with non-Abelian anyons. As explained in the introduction, there are compelling reasons to focus on $S_3$ anyons in particular---they are powerful enough to host universal computation, while being simple enough to realize in current quantum devices. We have shown how to manipulate anyons with ribbon operators to implement the three operations that constitute the universal gate set: pull-through gate, qutrit $\mathcal{Z}$ basis measurement, and qutrit $\mathcal{X}$ basis measurement. We demonstrate the universality of the above three operations by implementing the \textit{qubit} gate set of Clifford gates and a non-Clifford gate (CCZ gate); we are not aware of such demonstration in the literature. Another new result that we present is an extended ribbon operator that is useful for state initialization on the lattice. Recently, it has been shown by us and by other authors how these ribbon operators can be made finite depth for solvable groups using adaptive circuits \cite{bravyi_adaptive_2022, Ren24,lyons_protocols_2024}, allowing for more scalable computation. There has also been enormous progress on the experimental side. Various non-Abelian phases have been produced on near-term quantum devices \cite{iqbal_non-abelian_2024, Xu_2024, minev2024realizingstringnetcondensationfibonacci}; non-Abelian defects on top of an Abelian topological order have also been created \cite{andersen_non-abelian_2023, Xu22,iqbal2024qutrittoriccodeparafermions}. By combining theory proposals for how to prepare $S_3$ topological order in near-term quantum devices \cite{verresen_efficiently_2022} with the recipe we have outlined in these notes, we demonstrate that a scalable implementation of the universal gate set is within reach.

    We note that there are currently two ways to realize topological order in experiment: (1) engineer or stabilize a Hamiltonian whose ground state is topologically-ordered, as in quantum Hall systems, or (2) prepare the ground state directly without a Hamiltonian in a digital quantum device. Our notes are written with the second scenario in mind, as quantum double models are currently more readily accessible in these setups. In such scenarios, instead of passively cooling the system into its ground state space, one would actively `cool' by means of error correction, whereby only locally measures the system and applies feedback to lower the effective energy \cite{dennis_topological_2002,terhal_quantum_2015}.
    We leave the stabilization protocols of these states to future work---we note that error correction of non-Abelian states is an active and rapidly-developing field \cite{Wootton_2014,brell_thermalization_2014,wootton_active_2016,knapp_nature_2016,zhu_universal_2020,zhu_instantaneous_2020,schotte_2022,schotte_fault-tolerant_2022,sala_decoherence_2024,sala_stability_2024,hsin_non-abelian_2024}.

The question remains: given near-term, noisy devices, how can we make such a gate set robust? While the non-Abelian properties of our logical anyons enable universal quantum computation, they also make error correction difficult; unlike an Abelian code, we cannot pair up anyons and be guaranteed they will fuse to the vacuum. Some work has been done exploring error-correction protocols for non-Abelian topological orders \cite{Wootton_2014, Hutter_2015, Dauphinais_2017, schotte_2022}, and more recent work has shown how to pair up non-Abelian anyons in solvable quantum doubles in constant time using adaptive quantum circuits \cite{bravyi_adaptive_2022,lyons_protocols_2024,Ren24}, but we still understand relatively little about the theory of non-Abelian error correction. Recently, it has been shown that, counter-intuitively, some non-Abelian orders may be significantly more robust under certain types of local noise than their Abelian counterparts \cite{sala2024decoherencewavefunctiondeformationd4, sala2024stabilityloopmodelsdecohering,Chirame24}, which is an encouraging sign for the existence of error-correction protocols for such codes. Another open question regards the fault-tolerance of such protocols in the presence of measurement errors. We leave these questions for future study.

\subsection{Further reading}

These notes are meant to be a relatively self-contained overview of universal topological quantum computation using the $S_3$ quantum double as a platform; here we list some further resources that discuss in more depth various aspects of the field we did not focus on.

\begin{itemize}
    \item Useful resources on topological quantum computing and topological order in general:
        \begin{itemize}
            \item John Preskill's lecture notes on topological quantum computation \cite{preskill_lecture_2004}, which can be found on the \href{http://theory.caltech.edu/~preskill/ph229/}{course website} for Physics 219 at Caltech. These very clear and concise notes cover the fundamentals of topological quantum computation. 
            
            \item Steve Simon's \emph{Topological Quantum} \cite{simon_topological_2023}, which is a great resource for a broad set of topics related to topological quantum phenomena, and also includes a chapter on quantum double models.
            
            \item Alexei Kitaev's original paper, \emph{Fault-tolerant quantum computation by anyons}, on topological quantum computing \cite{kitaev_fault-tolerant_2003}. This paper lays out all the essential ideas of topologically-protected computing with anyons. In particular, Kitaev makes very clear the set-up needed to ensure the internal states of anyons are protected from local noise. 

            \item \emph{Monoidal Categories and Topological Field Theory} by Vladimir Turaev and Alexis Virelizier\cite{turaevbook}. This is an exposition of topological field theories and the mathematics underpinning them; it is written by and for mathematicians, but is very clear and contains useful exercises.
        \end{itemize}
    \item More detailed treatments of the original ribbon operators:
        \begin{itemize}
            \item Bombin and Martin-Delgado, {\emph{Family of non-Abelian Kitaev models on a lattice: Topological condensation and confinement}} \cite{bombin_family_2008}. See section II and appendix B for a detailed treatment of both quantum doubles and ribbon operators.
            
            \item Shawn Cui's lecture notes from his Topological Quantum Computation course at Stanford \cite{cui_topological_2018}, which can be found \href{https://www.math.purdue.edu/~cui177/Lecture_Combined.pdf}{here}.
            These notes take a more mathematical perspective, and give a great overview of quantum doubles, ribbon operators, and how to do quantum computation in general with non-Abelian anyons. Additionally, there are some sections on Unitary Modular Tensor Categories (UMTCs), which are the mathematical objects that describe general anyon theories. 

            \item Chen, Cui, and Yan, \emph{Ribbon operators in the generalized Kitaev quantum double model based on Hopf algebras} \cite{yan_ribbon_2022}. This paper introduced the local orientation of a ribbon operator, distinguishing clockwise and counterclockwise ribbon operators. They treat ribbon operators from a very mathematical point of view, and so this paper is a good resource for any reader wanting to learn about the quantum double model in the general setting of Hopf algebras.
        \end{itemize}
    
    \item Other implementations of universal gate sets with $S_3$ anyons:
        \begin{itemize}
            \item Mochon, \emph{Anyon computers with smaller groups} \cite{mochon_anyon_2004}. This paper outlines a universal \emph{qutrit} gate set instead of a qubit one; see also Mochon's paper on a generic gate set for nonsolvable finite groups \cite{mochon_anyons_2003}.
            
            \item Cui, Hong, and Wang, \emph{Universal quantum computation with weakly integral anyons} \cite{cui_universal_2015}. This paper provides a variety of possible encodings for $S_3$, distinct from the encoding discussed here.
        \end{itemize}
\end{itemize}

\section*{Acknowledgments}

\textbf{Note added}: While completing this manuscript, we became aware of an independent work by L. Chen, Y. Ren, R. Fan, and A. Jaffe demonstrating universal gate set in $S_3$ quantum double using a different logical encoding proposed in \cite{chen_s3_universal_2024}; we thank the authors for informing us of their work.

\paragraph{Author contributions}
C.F.B.L. and A.L. contributed equally to this project. 

\paragraph{Funding information}

C.F.B.L. and A.L. acknowledge support from the National Science Foundation Graduate Research Fellowship Program (NSF GRFP). This work is in part supported by the DARPA MeasQuIT program.  N.T. is supported by the Walter Burke Institute for Theoretical Physics at Caltech. A.V. is supported by NSF-DMR 2220703 and A.V. and R.V. are supported by the Simons Collaboration on Ultra-Quantum Matter, which is a grant from the Simons Foundation (618615, A.V.).

\pagebreak
\begin{appendix}
\numberwithin{equation}{section}

\section{Useful S3 Group Theory} \label{group-theory}

\subsection{Group Structure}

$S_3$ is the symmetric group on three elements. Geometrically, we can also view it as the group of symmetry of an equilateral triangle. As a reference for the reader, we describe a few basic facts, computational tools, and conventions below.

As a set,
\begin{equation}
    S_3=\{e, (12), (13), (23), (123), (132)\}, 
\end{equation}
where the elements are cycles describing how elements are interchanged; for example, $(12)$ means switching element 1 and 2, and $(123)$ means sending element 1 to 2, 2 to 3, and 3 to 1. We will sometimes denote $\mu = (123)$ and $\sigma=(23)$, which generates the group. So the group presentation for $S_3$ is:
\begin{equation}
    S_3 = \left<\mu,\sigma|\mu\sigma = \sigma\bar\mu\right>.
\end{equation}
We will use the following convention for the order of operation: when converting from multiple 2-cycles into permutations, the order of operation is from right to left. E.g. for $(12)(23)$, $(23)$ is performed first, then $(12)$.

\subsubsection{S3 is solvable}

\begin{defi}[Solvable group]
    A group $G$ is \textbf{solvable} if $\exists$ normal series 
    $$\{e\}=G_0\triangleleft G_1 \dots G_N=G$$ 
    such that $G_{i+1}/G_i$ is Abelian.
\end{defi}

\begin{prop}
    $S_3$ is solvable.
\end{prop}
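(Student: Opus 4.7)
The plan is to exhibit an explicit normal series with Abelian quotients, using the alternating subgroup $A_3$ as the intermediate term. Concretely, I would take the chain
\begin{equation}
    \{e\} \;\triangleleft\; A_3 \;\triangleleft\; S_3,
\end{equation}
where $A_3 = \langle \mu \rangle = \{e, (123), (132)\}$ is the cyclic subgroup generated by the three-cycle $\mu$. This is the unique candidate of the right size, so there is essentially no search involved; the work is in verifying the two required properties.

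First I would check that each inclusion is normal. The inclusion $\{e\} \triangleleft A_3$ is automatic since the trivial subgroup is normal in any group. For $A_3 \triangleleft S_3$, the cleanest argument is that $A_3$ has index $2$ in $S_3$ (since $|S_3| = 6$ and $|A_3| = 3$), and any index-$2$ subgroup is automatically normal: its left and right cosets coincide because each is the complement of the subgroup in the group. Alternatively, one can verify directly using the defining relation $\mu\sigma = \sigma\bar\mu$ that $\sigma \mu \sigma^{-1} = \bar\mu \in A_3$, which (together with closure of $A_3$ under conjugation by its own elements) shows normality.

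Next I would identify the successive quotients and check they are Abelian. The bottom quotient is $A_3/\{e\} \cong A_3 \cong \mathbb{Z}_3$, which is cyclic and hence Abelian. The top quotient $S_3/A_3$ has order $|S_3|/|A_3| = 2$, so it is isomorphic to $\mathbb{Z}_2$, which is again Abelian. Both quotients being Abelian completes the verification.

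There is no substantive obstacle here: the statement reduces to two standard observations (index-$2$ subgroups are normal, and all groups of order $\leq 3$ are Abelian). The only mild subtlety is to be consistent with the multiplication convention fixed in the preceding subsection when performing the conjugation check $\sigma\mu\sigma^{-1} = \bar\mu$, but this is a one-line computation from the relation $\mu\sigma = \sigma\bar\mu$.
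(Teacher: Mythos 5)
Your proof is correct and follows essentially the same route as the paper: exhibit the normal series $\{e\} \triangleleft \mathbb{Z}_3 \triangleleft S_3$ (your $A_3$ is exactly the copy of $\mathbb{Z}_3$ inside $S_3$) and observe that both quotients, $\mathbb{Z}_3$ and $\mathbb{Z}_2$, are Abelian. You simply spell out the normality of the intermediate subgroup (via the index-$2$ argument, which the paper leaves implicit), so the two proofs differ only in level of detail, not in substance.
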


\begin{proof}
Consider the following series: 
$$
\qty{e}\triangleleft \mathbb Z_3 \triangleleft S_3.
$$
We check that $G_{i+1}/G_i$ is Abelian throughout the series. 
\begin{align*}
    G_2/G_1 = S_3/\mathbb Z_3 = \mathbb Z_2\\    
    G_1/G_0 =\mathbb Z_3/\qty{e} = \mathbb Z_3
\end{align*}

Therefore, above is a normal series, so $S_3$ is indeed solvable.    
\end{proof}

\subsubsection{S3 is not nilpotent}

\begin{defi}[Nilpotent group]
    A group $G$ is \textbf{nilpotent} if it has a upper central series of finite length; i.e., if there exists a set of normal subgroups $G_0, G_1, \dots G_N$ such that:
$$\{e\} = G_0 \triangleleft G_1 \triangleleft \cdots \triangleleft G_N = G$$
where $G_{i+1}/G_i = Z(G/G_i)$.
\end{defi}

\begin{prop}
$S_3$ is not nilpotent.
\end{prop}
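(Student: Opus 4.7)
The plan is to exploit the definition of nilpotency via the upper central series and show that this series for $S_3$ stalls immediately at the trivial group. Concretely, I will show that the center $Z(S_3)$ is trivial, which forces $G_1 = \{e\} = G_0$, preventing the series from ever reaching $S_3$.

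First, I would compute $Z(S_3)$. This is a short, direct check: a central element must commute with both generators $\mu = (123)$ and $\sigma = (23)$. Using the defining relation $\mu\sigma = \sigma\bar\mu$, one sees that $\sigma\mu \neq \mu\sigma$ (since $\mu \neq \bar\mu$), so $\mu \notin Z(S_3)$ and $\sigma \notin Z(S_3)$. Checking the remaining four non-identity elements (or, more slickly, noting that any element not equal to $e$ lies in a conjugacy class of size $>1$, whereas central elements must form singleton conjugacy classes) yields $Z(S_3) = \{e\}$.

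Next, I would plug this into the definition of the upper central series supplied just above the proposition. By definition, $G_1/G_0 = Z(S_3/G_0) = Z(S_3) = \{e\}$, so $G_1 = \{e\} = G_0$. An immediate induction shows $G_i = \{e\}$ for all $i$: assuming $G_i = \{e\}$, we get $G_{i+1}/G_i = Z(S_3/\{e\}) = Z(S_3) = \{e\}$, hence $G_{i+1} = \{e\}$. Therefore no $G_N$ in the upper central series ever equals $S_3$, which by definition means $S_3$ is not nilpotent.

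There is no real obstacle here; the only step with any content is verifying $Z(S_3) = \{e\}$, and that follows in one line from the relation $\mu\sigma = \sigma\bar\mu$ already recorded in the group presentation. The rest is a direct unwrapping of the definition. If desired, I could also remark (as a sanity check and to connect to the introduction's discussion of ``non-Abelian-ness'') that this matches the statement that nontrivial groups with trivial center are never nilpotent, placing $S_3$ strictly above nilpotent groups in the hierarchy of Fig.~\ref{fig:nonabelian-spectrum} while remaining solvable by the previous proposition.
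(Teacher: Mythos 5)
Your proof is correct and takes a genuinely cleaner route than the paper's. The paper enumerates all normal subgroups of $S_3$ (namely $\{e\}$, $\mathbb{Z}_3$, and $S_3$), lists the two candidate normal series built from them, and then checks that neither satisfies the central-quotient condition at the first step. You instead exploit the fact that the upper central series is \emph{uniquely determined}: $G_{i+1}$ is forced to be the preimage of $Z(G/G_i)$, so once you know $Z(S_3)=\{e\}$ the entire series is $\{e\}\triangleleft\{e\}\triangleleft\cdots$ and can never reach $S_3$. This avoids the detour through enumerating normal subgroups and makes the logical structure of the definition more transparent; it also generalizes immediately to the standard fact that any nontrivial group with trivial center fails to be nilpotent. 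Both arguments hinge on the same computation $Z(S_3)=\{e\}$, which you justify correctly via the relation $\mu\sigma=\sigma\bar\mu$, so the mathematical content is equivalent --- yours is simply the more direct unwinding of the definition.
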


\begin{proof}

We will show that no such upper central series exist.
The normal subgroups of $S_3$ are $\qty{e},\qty{e,(123),(132)}\cong \mathbb Z_3, S_3$. Therefore, the only possible normal series are:
\begin{align*}
    \qty{e}\triangleleft S_3\\
    \qty{e}\triangleleft \mathbb Z_3\triangleleft S_3
\end{align*}
For the first normal series, as $G_1/G_0=S_3\neq Z(S_3/\qty{e})=\qty{e}$, so it is not a central series. For the second series, $G_1/G_0=\mathbb Z_3/\qty{e} = \mathbb Z_3 \neq \qty{e}=Z(S_3/\qty{e}) \implies G_1/G_0 \neq Z(G/G_0)$, so it is also not a central series. Therefore, a upper central series doesn't exist; so $S_3$ is not nilpotent.
\end{proof}

\subsection{Representation theory of $\mathbb{Z}_3$} \label{rep-theory-z3}

Since $\mathbb{Z}_3$ is an Abelian group, by Schur's lemma, all of its irreducible representations are 1D. There are three irreducible representations of $\mathbb{Z}_3$:

\begin{itemize}
    \item \textbf{$[1]$ representation:} This is the trivial representation maps all group elements to $1\in \mathbb C$.
    \item \textbf{$[\omega]$ representation:} This representation is given by the map $a \mapsto \omega^a$. 
    \item \textbf{$[\omega^*]$ representation:} This representation is given by the map $a \mapsto (\omega^{*})^{a}$. 
\end{itemize}

One way to derive the three irreducible representation is to start with the regular representation of $\mathbb{Z}_3$, where the representation matrices are 
\begin{equation*}
        \rho_d(0)=
        \begin{pmatrix}
            1 & 0 & 0\\
            0 & 1 & 0\\
            0 & 0 & 1
        \end{pmatrix}, \quad\quad
        \rho_d(1)=
        \begin{pmatrix}
            0 & 0 & 1\\
            1 & 0 & 0\\
            0 & 1 & 0
        \end{pmatrix}, \quad\quad
        \rho_d(2)=
        \begin{pmatrix}
            0 & 1 & 0\\
            0 & 0 & 1\\
            1 & 0 & 0
        \end{pmatrix}
    \end{equation*}

To find the irreducible representations of an Abelian group, we must simultaneously diagonalize these representation matrices. The eigenvectors of $\rho_d(1)$ are:
\begin{align*}
(1,1,1)^T\\
(1,\omega, \omega^*)^T\\
(1,\omega^*, \omega)^T
\end{align*}
where $\omega=e^{i2\pi/3}$. We see that the three irreducible representations given above just corresponding to the action of $\mathbb{Z}_3$ group elements within each eigensubspace.

\subsection{Representation theory of S3} \label{rep-theory-s3}

$S_3$ has three irreducible representations: the trivial, the alternating (also called the sign representation), and the standard representation. We can get the trivial and standard representation from the defining representation; in other words, the defining representation is a direct sum of the trivial and the standard representation.

\begin{itemize}
    \item \textbf{Trivial representation (irrep):} The trivial representation maps all group elements to $1\in \mathbb C$, so it is a 1D representation.
    
    \item \textbf{Alternating representation (irrep):} The alternating representation maps element $g\in S_3$ to the sign of the permutation $\text{sgn}(g)\in \{-1, 1\}$, so it is also a 1D representation. The sign of a group element is determined by whether it decomposed into even or odd number of 2-cycles; for elements in $S_3$, the 2-cycles are the odd permutations, whereas the 3-cycles are the even permutations.
    
    \item \textbf{Defining representation (not irrep):} To motivate the basis choice in the standard representation (which will be introduced next), we consider the defining representation of $S_3$. We will see later that the defining representation is reducible to two irreducible representation: the trivial and the standard representation. 

    Representation matrices in the defining representation of $S_3$ act by permuting 3 coordinates in $\mathbb R^3$. The representation matrices are
    \begin{align*}
        \rho_d(e)=
        \begin{pmatrix}
            1 & 0 & 0\\
            0 & 1 & 0\\
            0 & 0 & 1
        \end{pmatrix}, \quad\quad
        \rho_d(123)&=
        \begin{pmatrix}
            0 & 0 & 1\\
            1 & 0 & 0\\
            0 & 1 & 0
        \end{pmatrix}, \quad\quad
        \rho_d(132)=
        \begin{pmatrix}
            0 & 1 & 0\\
            0 & 0 & 1\\
            1 & 0 & 0
        \end{pmatrix} \quad\quad \\
        \rho_d(12)=
        \begin{pmatrix}
            0 & 1 & 0\\
            1 & 0 & 0\\
            0 & 0 & 1
        \end{pmatrix}, \quad\quad
        \rho_d(13)&=
        \begin{pmatrix}
            0 & 0 & 1\\
            0 & 1 & 0\\
            1 & 0 & 0
        \end{pmatrix}, \quad\quad
        \rho_d(23)=
        \begin{pmatrix}
            1 & 0 & 0\\
            0 & 0 & 1\\
            0 & 1 & 0
        \end{pmatrix}, \quad\quad
    \end{align*}
    These matrices act on the three standard basis vectors:
    \begin{equation*}
        \ket{1}=(1,0,0)^T, ~\ket{2}=(0,1,0)^T, ~\ket{3}= (0,0,1)^T
    \end{equation*}
    
    To decompose the defining representation into irreducible representations, we can make use of the Abelian subgroup $\mathbb Z_3\subset S_3$. The eigenvectors of a generator of the Abelian subgroup (e.g. the 3-cycle $(123)$) give a nice orthogonal basis for the full vector space of the defining representation (since the representation of an Abelian group has to be 1D). The group elements outside of the Abelian subgroup couple these 1D representations to give a higher-dimensional irreducible representation.
    
    The eigenvectors of $\rho_d(123)$ are:
    \begin{align*}
    (1,1,1)^T\\
    (1,\omega, \omega^*)^T\\
    (1,\omega^*, \omega)^T
    \end{align*}
    where $\omega=e^{i2\pi/3}$. We note that the first eigenvector $(1,1,1)^T$ spans the subspace for the trivial representation, hence leaving the other two vectors to be the basis for an irreducible 2D representation, namely the standard representation.

    \item \textbf{Standard representation (irrep): }The standard representation is a 2D representation. Define the basis vectors (as motivated above, they are eigenvectors of the $\rho_d(123)$ matrix) to be
    \begin{align*}
        \ket{2+}=\frac{1}{\sqrt{3}}(1,\omega^*, \omega)^T\\
        \ket{2-}=\frac{1}{\sqrt{3}}(1,\omega, \omega^*)^T
    \end{align*}
    
    The representation matrices can be calculated by looking at how the defining representation matrices act on $\ket{2+}$ and $\ket{2-}$ in the defining basis. The results are: 
   \begin{align*}
        \rho_s(e)=\begin{pmatrix}
            1 & 0\\
            0 & 1\\
        \end{pmatrix}\quad\quad 
        \rho_s(123)&=\begin{pmatrix}
            \omega & 0\\
            0 & \omega^*\\
        \end{pmatrix}\quad\quad
        \rho_s(132)=\begin{pmatrix}
            \omega^* & 0\\
            0 & \omega\\
        \end{pmatrix}\quad\quad \\
        \rho_s(12)=\begin{pmatrix}
            0 & \omega\\
            \omega^* & 0\\
        \end{pmatrix}\quad\quad
        \rho_s(13)&=\begin{pmatrix}
            0 & \omega^*\\
            \omega & 0\\
        \end{pmatrix}\quad\quad
        \rho_s(23)=\begin{pmatrix}
            0 & 1\\
            1 & 0\\
        \end{pmatrix}\quad\quad
    \end{align*}
    where the indexing order is $\{\ket{2+},\ket{2-}\}$.
    
\end{itemize}

\section{Charge Transfer} \label{ap:charge_transfer}

Using representation theory, we can determine the probability of any pair of anyons to fuse to the vacuum. The probability of charge transfer during a protocol like the one discussed in this section is connected to the character of the representation of the charge (the trace of the matrices assigned to the group elements in that particular representation).

For a given representation, the state corresponding to a totally neutral pair of charges (no flux, no charge, the pair will always fuse into the vacuum) is given by the following superposition:
\begin{equation}
    \ket{+}_R = \frac{1}{\sqrt{\abs{R}}} \sum_{i \in R} \ket{i}_R \otimes \ket{i}_{R^*}
    \label{eq:neutral-state-R-rep}
\end{equation}
where $R^*$ is the conjugate representation to $R$\footnote{In non-rigorous terms, conjugate representations are ones that include the identity in their fusion outcomes:
\begin{equation*}
    R \otimes R^* = \mathbb{I} \oplus \cdots 
\end{equation*}
If an anyon is described by representation $R$, its anti-particle will be described by $R^*$, since we expect at least some of the time they will fuse to the vacuum when brought together. In the case of $S_3$, $R$ is isomorphic to $R^*$ for all anyon types.}. We can see that this state transforms trivially under the action of any flux; whatever happens to $\ket{i}_R$, the opposite will happen to $\ket{i}_{R^*}$ and will cancel out. The key part of this, however, is whatever flux we are braiding with should go around \emph{both} charges in the pair. If we instead only wind the flux around half of the pair, charge can be transferred from one particle to the winding flux. Consider the state of the charges after one of them is braided with flux $a$. The flux will act on the state of the charge it is winding with according to the representation $R$ (or $R^*$, but for this example assume we are winding with the $R$ charge-- often $R$ is isomorphic to $R^*$, as is the case for all the irreducible representations of $S_3$).

When only the $R$ charge is braided with flux $a$, the wavefunction of the charge pair becomes
\begin{equation}
    \ket{+}_R \rightarrow \ket{a}_R = \frac{1}{\sqrt{\abs{R}}} \sum_{i, j \in R} D_{ij}^R(a) \ket{j}_R \otimes \ket{i}_{R^*}
\end{equation}
where $D^R(a)$ is the matrix corresponding to $a$ in representation $R$. We want to find the probability that this new state will fuse to the vacuum, so we take the inner product with equation \ref{eq:neutral-state-R-rep}:
\begin{equation}
    \begin{aligned}
        _R\bra{+} \ket{a}_R &= \frac{1}{\abs{R}} \sum_{i, j, k \in R} D_{ij}^R(a) \bra{k}\ket{j} \bra{k}\ket{i} \\
        &= \frac{1}{\abs{R}} \sum_{i, j, k \in R} D_{ij}^R(a) \delta_{kj} \delta_{ki}\\
        &= \frac{1}{\abs{R}} \sum_{i \in R} D_{ii}^R(a)\\
        &= \frac{\mathrm{Tr}(D^R(a))}{\abs{R}}
    \end{aligned}
\end{equation}

The trace of the matrix $D^R(a)$ is the same for all elements in a given conjugacy class\footnote{This can be proved in one or two lines via the cyclic properties of the trace.}. It is also known as the \emph{character} of the representation $R$ evaluated at the group element $a$, denoted by $\chi^R(a)$. So the probability that the new state will fuse to the vacuum is given by:
\begin{equation}
    \mathrm{Prob}(0) = \abs{\frac{\chi^R(a)}{\abs{R}}}^2
\end{equation}

We can check that $C_3$ fluxes do \emph{not} have character zero in the $[2]$ representation ($\chi^{[2]}(a \in C_3) = -1$). This means that even after braiding, there is always a $\frac{1}{4}$ chance of the $[2]$ singlet fusing back to the vacuum. Our computational basis measurements are not completely projective--- we need to repeat them a few times to reduce our chance of getting a false negative below an acceptable threshold.

\section{Ribbon Operator Identities}
\subsection{Microscopic to Anyon Basis Transformation} \label{ap:basis-change}

The basis transformation from the microscopic ($\ket{z, w}$) to the anyon basis can be written out explicitly in the following form: 
\begin{equation}
    \ket{C, R; u, u'} = \frac{|R|}{|\mathbf Z(r)|} \sum_{n \in \mathbf{Z}(r)} \Gamma^R_{j j'}(n) \ket{q_c n \overline{q}_{c'}, c'}
\end{equation}
where $\textbf{u} = (c, j)$ and $\textbf{u'} = (c', j')$ where $c, c' \in C$ and $1 \leq j, j' \leq |R|$. The pair $u$ determines the flavor of the state, while $u'$ determines the color. The matrix $\Gamma^R(n)$ is the representation of group element $n$ in $R$. The group elements $q_c$ and $q_{c'}$ are defined in the following way with respect to $c, c'$ and some chosen representative $r$ of the conjugacy class $C$:
\begin{equation}
\begin{aligned}
    c &= q_c r \overline{q}_c\\
    c' &= q_{c'} r \overline{q}_{c'}
\end{aligned}
\end{equation}

An analogous transformation must hold for the ribbon operators that create microscopic basis states and the ribbons that create anyons:
\begin{equation}
    F^{(R, C); \mathbf{u}, \mathbf{u'}} = \frac{|R|}{|\mathbf Z(r)|} \sum_{n \in \mathbf{Z}(r)} \Gamma^R_{j j'}(n) F^{(q_c n \overline{q}_{c'}, c)}
\end{equation}
Note that our ribbon operators are defined in terms of $z$ and $v$ rather than $z$ and $w$, hence $F$ depends on $c$ rather than $c'$. 

\subsection{Commutation Relations} \label{ap:commutations}

\subsubsection{Operators acting on the same ribbon} \label{ap:same-ribbon}

\begin{prop}
    For two ribbon operators acting on the same ribbon $t$,
    \begin{equation}
        F^{(z_1, v_1)}(t)F^{(z_2, v_2)}(t)=
        \begin{cases}
            \delta_{z_1,z_2}F^{(z_1, v_2v_1)}(t) \quad \quad \quad & \text{for clockwise local orientation}\\
            \delta_{z_1,z_2}F^{(z_1, v_1v_2)}(t) \quad \quad \quad & \text{for counterclockwise local orientation}\\
        \end{cases}
    \end{equation}
\end{prop}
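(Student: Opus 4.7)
The plan is to prove the identity by induction on the number of triangles making up the ribbon $t$, using the recursive definition of Eq.~\ref{eq: recursive ribbon} together with the explicit single-triangle actions of Fig.~\ref{fig:triangle_operator}.

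For the base case, I would check the claim on a single triangle. For a direct triangle, the ribbon operator reduces to a Kronecker delta $\delta_{z,\,x}$ (or $\delta_{z,\,\bar x}$, depending on alignment) on the direct edge value $x$, with no $v$-dependence at all. So $F^{(z_1,v_1)}(\tau_{\mathrm{dir}})F^{(z_2,v_2)}(\tau_{\mathrm{dir}})=\delta_{z_1,z_2}\,\delta_{z_1,x}$, matching both cases of the statement trivially since the $v$-argument on the right is immaterial. For a single dual triangle the operator acts by left multiplication $L^{v}$ or right multiplication $R^{\bar v}$ on the dual edge, along with an implicit $\delta_{z,e}$ since the direct-edge product is empty. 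Using $L^{v_1}L^{v_2}=L^{v_1v_2}$ and $R^{\bar v_1}R^{\bar v_2}=R^{\overline{v_2v_1}}$, the composition yields either $v_1v_2$ or $v_2v_1$ in the combined $v$-argument depending on the multiplication side; the conventions of Fig.~\ref{fig:triangle_operator} are precisely set so that counterclockwise local orientation corresponds to the left-acting case (giving $v_1v_2$) and clockwise to the right-acting case (giving $v_2v_1$), so the base case matches both branches.

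For the inductive step, split $t=\rho_1\cup\rho_2$ and expand both ribbon operators via the recursion:
\begin{equation*}
F^{(z_1,v_1)}(t)F^{(z_2,v_2)}(t)=\sum_{k_1,k_2\in G} F^{(k_1,v_1)}(\rho_1)\,F^{(\bar k_1 z_1,\,\bar k_1 v_1 k_1)}(\rho_2)\,F^{(k_2,v_2)}(\rho_1)\,F^{(\bar k_2 z_2,\,\bar k_2 v_2 k_2)}(\rho_2).
\end{equation*}
Since the triangles composing $\rho_1$ and $\rho_2$ act on disjoint edges, the two middle factors commute, and I can regroup the $\rho_1$-factors and the $\rho_2$-factors. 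Applying the inductive hypothesis (in the counterclockwise case) to the $\rho_1$-pair gives $\delta_{k_1,k_2}F^{(k_1,v_1v_2)}(\rho_1)$, collapsing the double sum to a single variable $k\equiv k_1=k_2$. Applying the IH to the $\rho_2$-pair then produces a factor $\delta_{\bar k z_1,\bar k z_2}=\delta_{z_1,z_2}$ together with the combined $v$-argument $(\bar k v_1 k)(\bar k v_2 k)=\bar k\,(v_1 v_2)\,k$. Reassembling via Eq.~\ref{eq: recursive ribbon} yields $\delta_{z_1,z_2}F^{(z_1,v_1v_2)}(t)$, as claimed. The clockwise case is identical after exchanging $v_1v_2\leftrightarrow v_2v_1$ in the IH.

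The main obstacle is the non-Abelian bookkeeping in the inductive step: one must verify (i) that operators on $\rho_1$ and $\rho_2$ genuinely commute, which relies on the disjointness of triangle edges along a ribbon, and (ii) that the conjugated arguments combine cleanly. The crucial algebraic observation is $(\bar k v_1 k)(\bar k v_2 k)=\bar k\,(v_1 v_2)\,k$; this is what makes the recursion self-consistent and also what preserves the orientation-dependent choice of product order ($v_1v_2$ versus $v_2v_1$) as one glues shorter ribbons into longer ones. A secondary subtlety worth a careful check is the dual-triangle base case itself, since this is where the clockwise/counterclockwise distinction first enters and determines which branch of the statement the inductive chain propagates.
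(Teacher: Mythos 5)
Your overall strategy — induction on the number of triangles using the recursive definition — is a genuinely different route from the paper's, and the inductive step is correct. The paper instead argues directly: since the first ribbon operator only projects the direct-edge content (without changing it), the conjugating prefix $\bar z' (\cdot) z'$ at each dual triangle is identical for both operators, so the flux actions compose triangle-by-triangle with no induction needed. Your recursive argument is a nice alternative; the computation of the inductive step (commuting the disjoint-support factors, applying the IH twice, using $\delta_{\bar k z_1,\bar k z_2}=\delta_{z_1,z_2}$ and $(\bar k v_1 k)(\bar k v_2 k)=\bar k(v_1v_2)k$, and reassembling via the recursion) is all sound.

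The base case, however, contains two errors that happen to cancel. First, a single dual triangle has \emph{four} possible actions, indexed by both local orientation and alignment: according to the paper's conventions the forms are $L^{v}$ (CCW, opposite), $R^{\bar v}$ (CCW, aligned), $L^{\bar v}$ (CW, aligned), $R^{v}$ (CW, opposite). You list only $L^{v}$ and $R^{\bar v}$ and assign CCW to the left-acting case and CW to the right-acting case; but $R^{\bar v}$ is actually the CCW-aligned case, not CW, and along any ribbon (of either orientation) both left- and right-multiplying dual triangles appear, depending on alignment. Second, the composition identity $R^{\bar v_1}R^{\bar v_2}=R^{\overline{v_2 v_1}}$ is wrong: since $R^a$ means right multiplication by $a$, one has $R^a R^b = R^{ba}$ on the same edge, hence $R^{\bar v_1}R^{\bar v_2}=R^{\bar v_2\bar v_1}=R^{\overline{v_1 v_2}}$. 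With the correct identity and your (incorrect) assignment of CW to $R^{\bar v}$, both branches would give combined argument $v_1 v_2$, contradicting the proposition. The two mistakes cancel to give the stated result, but by accident. The correct distinguishing feature between CW and CCW is \emph{not} left versus right multiplication but whether $v$ appears inverted in the triangle operator: for CCW the action has the uninverted form ($L^v$ or $R^{\bar v}$), giving combined $v_1 v_2$; for CW the action has the inverted form ($L^{\bar v}$ or $R^{v}$), giving combined $v_2 v_1$. Once you rewrite the base case along these lines (treating both alignments within each orientation), the inductive argument goes through.
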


\begin{proof}
    Based on the bottom-up approach of constructing ribbon, the effect of applying a full ribbon operator on the charge degree of freedom is to project the $z$ string to the particular value of $z_i$ specified in the ribbon operator. Therefore, we have the action $\delta_{z,z_1}\delta_{z,z_2}=\delta_{z_1,z_2}\delta_{z,z_1}$. We can group the Kronecker delta $\delta_{z,z_1}$ into the construction of a new ribbon operator $F^{(z_1,\cdot)}(t)$, where the flux information is yet to be specified. 
    
    For the flux degree of freedom, the key is that the action (determined by alignment and local orientation) is the same across $F^{(z_1,v_1)}(t)$ and $F^{(z_2,v_2)}(t)$ since they are acting on the same ribbon. However, we need to consider the clockwise and counterclockwise local orientation cases separately.

    \textbf{Case 1: clockwise local orientation.}
    The first action by $F^{(z_2,v_2)}(t)$ is to either left multiply by $\bar z' \bar v_2 z'$ for an aligned triangle or to right multiply by $\bar z' v_2 z'$ for an opposite one. Then, the action by $F^{(z_1,v_1)}(t)$ is to left multiply by $\bar z' \bar v_1 z'$ for aligned or right multiply $\bar z' v_1 z'$ for opposite triangles. Therefore, the composed action is either left multiplication by $(\bar z' \bar v_1 z')(\bar z' \bar v_2 z')=\bar z' \overline{v_2v_1} z'$ for aligned triangles or right multiplication by $(\bar z' v_2 z')(\bar z' v_1 z')=\bar z' v_2v_1 z'$ for opposite triangles. Therefore, the composed action is equivalent to the action of $\delta_{z_1,z_2} F^{(z_1,v_2v_1)}(t)$.

    \textbf{Case 2: counterclockwise local orientation.}
    The first action by $F^{(z_2,v_2)}(t)$ is either to left multiply by $\bar z' v_2 z'$ for opposite or right multiply by $\bar z' \bar v_2 z'$ for aligned triangles. Then, the action by $F^{(z_1,v_1)}(t)$ is to left multiply by $\bar z' v_1 z'$ for opposite or right multiply by $\bar z' \bar v_1 z'$ for aligned triangle. Therefore, the composed action is either left multiplication by $(\bar z' v_1 z')(\bar z' v_2 z')=\bar z' v_1v_2 z'$ for opposite or right multiplication by $(\bar z' \bar v_2 z')(\bar z' \bar v_1 z')=\bar z' \overline{v_1 v_2} z'$. Therefore, the composed action is equivalent to the action of $\delta_{z_1,z_2} F^{(z_1,v_1v_2)}(t)$.
\end{proof}

\subsubsection{Commutation with projectors in the middle of ribbon} \label{ap:middle-ribbon}

We want to verify the intuition from the toric code that projectors in the middle of a ribbon should commute with the ribbon operator. It's instructive to look at an generic example of ribbon, as pictured in Fig. \ref{fig:bent-ribbon}.

\begin{figure}[h!]
\centering
\includegraphics[width=0.8\textwidth]{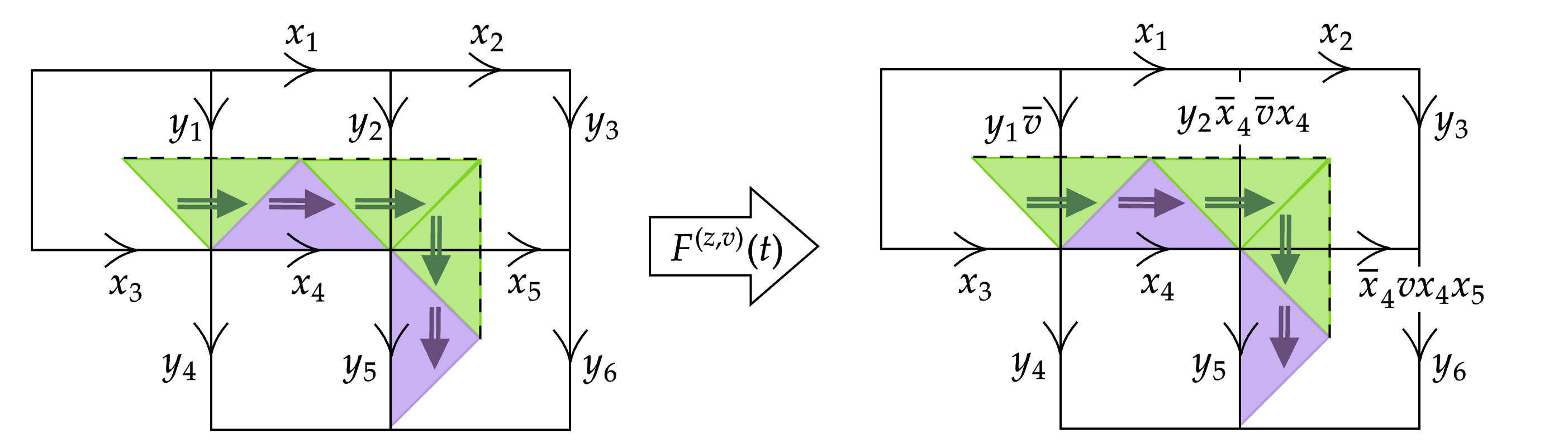}
\caption{A bent ribbon operator with two consecutive dual triangle operators near the end.}
\label{fig:bent-ribbon}
\end{figure}

We can check the commutation with plaquette operators for a few cases: 1) a straight ribbon segment within a plaquette, 2) a bent ribbon segment within a plaquette, 3) with a vertex operator.

\textbf{Case 1: a straight ribbon within a plaquette.} Consider the second plaquette where edges are elements $x_1,y_2,x_4,y_1$ (clockwise from the top). First we show plaquette operators $[B^e,F^{(z, v)}]=0$:
\begin{equation*}
\begin{aligned}
B^e F^{(z, v)}\left(
\begin{gathered}
\begin{tikzpicture}[x=0.75pt,y=0.75pt,yscale=-1,xscale=1]
\draw   (110.2,100.2) -- (160.2,100.2) -- (160.2,150.2) -- (110.2,150.2) -- cycle ;
\draw   (130.72,95.36) .. controls (133.73,97.98) and (136.75,99.56) .. (139.76,100.08) .. controls (136.75,100.6) and (133.73,102.18) .. (130.72,104.8) ;
\draw   (130.72,145.36) .. controls (133.73,147.98) and (136.75,149.56) .. (139.76,150.08) .. controls (136.75,150.6) and (133.73,152.18) .. (130.72,154.8) ;
\draw   (164.76,120.36) .. controls (162.14,123.37) and (160.56,126.39) .. (160.04,129.4) .. controls (159.52,126.39) and (157.94,123.37) .. (155.32,120.36) ;
\draw   (114.76,120.76) .. controls (112.14,123.77) and (110.56,126.79) .. (110.04,129.8) .. controls (109.52,126.79) and (107.94,123.77) .. (105.32,120.76) ;
\draw  [color={rgb, 255:red, 0; green, 0; blue, 0 }  ,draw opacity=1 ][fill={rgb, 255:red, 0; green, 0; blue, 0 }  ,fill opacity=1 ] (108.57,150.2) .. controls (108.57,149.3) and (109.3,148.57) .. (110.2,148.57) .. controls (111.1,148.57) and (111.83,149.3) .. (111.83,150.2) .. controls (111.83,151.1) and (111.1,151.83) .. (110.2,151.83) .. controls (109.3,151.83) and (108.57,151.1) .. (108.57,150.2) -- cycle ;

\draw (128,83.6) node [anchor=north west][inner sep=0.75pt]    {$x_{1}$};
\draw (168.6,120) node [anchor=north west][inner sep=0.75pt]    {$y_{2}$};
\draw (87.4,120) node [anchor=north west][inner sep=0.75pt]    {$y_{1}$};
\draw (128,157) node [anchor=north west][inner sep=0.75pt]    {$x_{4}$};
\end{tikzpicture}
\end{gathered}\right) &= \delta_{z,x_4x_5}B^e\left(
\begin{gathered}
\begin{tikzpicture}[x=0.75pt,y=0.75pt,yscale=-1,xscale=1]
\draw   (110.2,100.2) -- (160.2,100.2) -- (160.2,150.2) -- (110.2,150.2) -- cycle ;
\draw   (130.72,95.36) .. controls (133.73,97.98) and (136.75,99.56) .. (139.76,100.08) .. controls (136.75,100.6) and (133.73,102.18) .. (130.72,104.8) ;
\draw   (130.72,145.36) .. controls (133.73,147.98) and (136.75,149.56) .. (139.76,150.08) .. controls (136.75,150.6) and (133.73,152.18) .. (130.72,154.8) ;
\draw   (164.76,120.36) .. controls (162.14,123.37) and (160.56,126.39) .. (160.04,129.4) .. controls (159.52,126.39) and (157.94,123.37) .. (155.32,120.36) ;
\draw   (114.76,120.76) .. controls (112.14,123.77) and (110.56,126.79) .. (110.04,129.8) .. controls (109.52,126.79) and (107.94,123.77) .. (105.32,120.76) ;
\draw  [color={rgb, 255:red, 0; green, 0; blue, 0 }  ,draw opacity=1 ][fill={rgb, 255:red, 0; green, 0; blue, 0 }  ,fill opacity=1 ] (108.57,150.2) .. controls (108.57,149.3) and (109.3,148.57) .. (110.2,148.57) .. controls (111.1,148.57) and (111.83,149.3) .. (111.83,150.2) .. controls (111.83,151.1) and (111.1,151.83) .. (110.2,151.83) .. controls (109.3,151.83) and (108.57,151.1) .. (108.57,150.2) -- cycle ;

\draw (128,83.6) node [anchor=north west][inner sep=0.75pt]    {$x_{1}$};
\draw (168.6,120) node [anchor=north west][inner sep=0.75pt]    {$y_2 \bar x_4 \bar v x_4$};
\draw (80,120) node [anchor=north west][inner sep=0.75pt]    {$y_{1}\bar v$};
\draw (128,157) node [anchor=north west][inner sep=0.75pt]    {$x_{4}$};
\end{tikzpicture}
\end{gathered}
\right) \\
&= \delta_{z,x_4x_5}\delta_{e, x_4 \bar y_2 \bar x_1 y_1}\left(
\begin{gathered}
\begin{tikzpicture}[x=0.75pt,y=0.75pt,yscale=-1,xscale=1]
\draw   (110.2,100.2) -- (160.2,100.2) -- (160.2,150.2) -- (110.2,150.2) -- cycle ;
\draw   (130.72,95.36) .. controls (133.73,97.98) and (136.75,99.56) .. (139.76,100.08) .. controls (136.75,100.6) and (133.73,102.18) .. (130.72,104.8) ;
\draw   (130.72,145.36) .. controls (133.73,147.98) and (136.75,149.56) .. (139.76,150.08) .. controls (136.75,150.6) and (133.73,152.18) .. (130.72,154.8) ;
\draw   (164.76,120.36) .. controls (162.14,123.37) and (160.56,126.39) .. (160.04,129.4) .. controls (159.52,126.39) and (157.94,123.37) .. (155.32,120.36) ;
\draw   (114.76,120.76) .. controls (112.14,123.77) and (110.56,126.79) .. (110.04,129.8) .. controls (109.52,126.79) and (107.94,123.77) .. (105.32,120.76) ;
\draw  [color={rgb, 255:red, 0; green, 0; blue, 0 }  ,draw opacity=1 ][fill={rgb, 255:red, 0; green, 0; blue, 0 }  ,fill opacity=1 ] (108.57,150.2) .. controls (108.57,149.3) and (109.3,148.57) .. (110.2,148.57) .. controls (111.1,148.57) and (111.83,149.3) .. (111.83,150.2) .. controls (111.83,151.1) and (111.1,151.83) .. (110.2,151.83) .. controls (109.3,151.83) and (108.57,151.1) .. (108.57,150.2) -- cycle ;

\draw (128,83.6) node [anchor=north west][inner sep=0.75pt]    {$x_{1}$};
\draw (168.6,120) node [anchor=north west][inner sep=0.75pt]    {$y_2 \bar x_4 \bar v x_4$};
\draw (80,120) node [anchor=north west][inner sep=0.75pt]    {$y_{1}\bar v$};
\draw (128,157) node [anchor=north west][inner sep=0.75pt]    {$x_{4}$};
\end{tikzpicture}
\end{gathered}
\right)
\end{aligned}
\end{equation*}

In the other ordering,
\begin{equation*}
\begin{aligned}
F^{(z, v)} B^e\left(
\begin{gathered}
\begin{tikzpicture}[x=0.75pt,y=0.75pt,yscale=-1,xscale=1]
\draw   (110.2,100.2) -- (160.2,100.2) -- (160.2,150.2) -- (110.2,150.2) -- cycle ;
\draw   (130.72,95.36) .. controls (133.73,97.98) and (136.75,99.56) .. (139.76,100.08) .. controls (136.75,100.6) and (133.73,102.18) .. (130.72,104.8) ;
\draw   (130.72,145.36) .. controls (133.73,147.98) and (136.75,149.56) .. (139.76,150.08) .. controls (136.75,150.6) and (133.73,152.18) .. (130.72,154.8) ;
\draw   (164.76,120.36) .. controls (162.14,123.37) and (160.56,126.39) .. (160.04,129.4) .. controls (159.52,126.39) and (157.94,123.37) .. (155.32,120.36) ;
\draw   (114.76,120.76) .. controls (112.14,123.77) and (110.56,126.79) .. (110.04,129.8) .. controls (109.52,126.79) and (107.94,123.77) .. (105.32,120.76) ;
\draw  [color={rgb, 255:red, 0; green, 0; blue, 0 }  ,draw opacity=1 ][fill={rgb, 255:red, 0; green, 0; blue, 0 }  ,fill opacity=1 ] (108.57,150.2) .. controls (108.57,149.3) and (109.3,148.57) .. (110.2,148.57) .. controls (111.1,148.57) and (111.83,149.3) .. (111.83,150.2) .. controls (111.83,151.1) and (111.1,151.83) .. (110.2,151.83) .. controls (109.3,151.83) and (108.57,151.1) .. (108.57,150.2) -- cycle ;

\draw (128,83.6) node [anchor=north west][inner sep=0.75pt]    {$x_{1}$};
\draw (168.6,120) node [anchor=north west][inner sep=0.75pt]    {$y_{2}$};
\draw (87.4,120) node [anchor=north west][inner sep=0.75pt]    {$y_{1}$};
\draw (128,157) node [anchor=north west][inner sep=0.75pt]    {$x_{4}$};
\end{tikzpicture}
\end{gathered}\right) &= \delta_{e,x_4\bar y_2 \bar x_1 y_1}F^{(z, v)}\left(
\begin{gathered}
\begin{tikzpicture}[x=0.75pt,y=0.75pt,yscale=-1,xscale=1]
\draw   (110.2,100.2) -- (160.2,100.2) -- (160.2,150.2) -- (110.2,150.2) -- cycle ;
\draw   (130.72,95.36) .. controls (133.73,97.98) and (136.75,99.56) .. (139.76,100.08) .. controls (136.75,100.6) and (133.73,102.18) .. (130.72,104.8) ;
\draw   (130.72,145.36) .. controls (133.73,147.98) and (136.75,149.56) .. (139.76,150.08) .. controls (136.75,150.6) and (133.73,152.18) .. (130.72,154.8) ;
\draw   (164.76,120.36) .. controls (162.14,123.37) and (160.56,126.39) .. (160.04,129.4) .. controls (159.52,126.39) and (157.94,123.37) .. (155.32,120.36) ;
\draw   (114.76,120.76) .. controls (112.14,123.77) and (110.56,126.79) .. (110.04,129.8) .. controls (109.52,126.79) and (107.94,123.77) .. (105.32,120.76) ;
\draw  [color={rgb, 255:red, 0; green, 0; blue, 0 }  ,draw opacity=1 ][fill={rgb, 255:red, 0; green, 0; blue, 0 }  ,fill opacity=1 ] (108.57,150.2) .. controls (108.57,149.3) and (109.3,148.57) .. (110.2,148.57) .. controls (111.1,148.57) and (111.83,149.3) .. (111.83,150.2) .. controls (111.83,151.1) and (111.1,151.83) .. (110.2,151.83) .. controls (109.3,151.83) and (108.57,151.1) .. (108.57,150.2) -- cycle ;

\draw (128,83.6) node [anchor=north west][inner sep=0.75pt]    {$x_{1}$};
\draw (168.6,120) node [anchor=north west][inner sep=0.75pt]    {$y_{2}$};
\draw (87.4,120) node [anchor=north west][inner sep=0.75pt]    {$y_{1}$};
\draw (128,157) node [anchor=north west][inner sep=0.75pt]    {$x_{4}$};
\end{tikzpicture}
\end{gathered}\right)\\
&= \delta_{z,x_4x_5}\delta_{e,x_4\bar y_2\bar x_1 y_1}\left(
\begin{gathered}
\begin{tikzpicture}[x=0.75pt,y=0.75pt,yscale=-1,xscale=1]
\draw   (110.2,100.2) -- (160.2,100.2) -- (160.2,150.2) -- (110.2,150.2) -- cycle ;
\draw   (130.72,95.36) .. controls (133.73,97.98) and (136.75,99.56) .. (139.76,100.08) .. controls (136.75,100.6) and (133.73,102.18) .. (130.72,104.8) ;
\draw   (130.72,145.36) .. controls (133.73,147.98) and (136.75,149.56) .. (139.76,150.08) .. controls (136.75,150.6) and (133.73,152.18) .. (130.72,154.8) ;
\draw   (164.76,120.36) .. controls (162.14,123.37) and (160.56,126.39) .. (160.04,129.4) .. controls (159.52,126.39) and (157.94,123.37) .. (155.32,120.36) ;
\draw   (114.76,120.76) .. controls (112.14,123.77) and (110.56,126.79) .. (110.04,129.8) .. controls (109.52,126.79) and (107.94,123.77) .. (105.32,120.76) ;
\draw  [color={rgb, 255:red, 0; green, 0; blue, 0 }  ,draw opacity=1 ][fill={rgb, 255:red, 0; green, 0; blue, 0 }  ,fill opacity=1 ] (108.57,150.2) .. controls (108.57,149.3) and (109.3,148.57) .. (110.2,148.57) .. controls (111.1,148.57) and (111.83,149.3) .. (111.83,150.2) .. controls (111.83,151.1) and (111.1,151.83) .. (110.2,151.83) .. controls (109.3,151.83) and (108.57,151.1) .. (108.57,150.2) -- cycle ;

\draw (128,83.6) node [anchor=north west][inner sep=0.75pt]    {$x_{1}$};
\draw (168.6,120) node [anchor=north west][inner sep=0.75pt]    {$y_2 \bar x_4 \bar v x_4$};
\draw (80,120) node [anchor=north west][inner sep=0.75pt]    {$y_{1}\bar v$};
\draw (128,157) node [anchor=north west][inner sep=0.75pt]    {$x_{4}$};
\end{tikzpicture}
\end{gathered}
\right)
\end{aligned}
\end{equation*}

So we have shown that $B^h$ commute with $F^{(z,v)}$.

\textbf{Case 2: a bent ribbon within a plaquette.} Consider the third plaquette where edges are elements $x_2,y_3,x_5,y_2$ (clockwise from the top). 

\begin{equation*}
\begin{aligned}
    B^e F^{(z, v)} \left(
    \begin{gathered}
        \begin{tikzpicture}[x=0.75pt,y=0.75pt,yscale=-1,xscale=1]
        
        \draw   (110.2,100.2) -- (160.2,100.2) -- (160.2,150.2) -- (110.2,150.2) -- cycle ;
        \draw   (130.72,95.36) .. controls (133.73,97.98) and (136.75,99.56) .. (139.76,100.08) .. controls (136.75,100.6) and (133.73,102.18) .. (130.72,104.8) ;
        \draw   (130.72,145.36) .. controls (133.73,147.98) and (136.75,149.56) .. (139.76,150.08) .. controls (136.75,150.6) and (133.73,152.18) .. (130.72,154.8) ;
        \draw   (164.76,120.36) .. controls (162.14,123.37) and (160.56,126.39) .. (160.04,129.4) .. controls (159.52,126.39) and (157.94,123.37) .. (155.32,120.36) ;
        \draw   (114.76,120.76) .. controls (112.14,123.77) and (110.56,126.79) .. (110.04,129.8) .. controls (109.52,126.79) and (107.94,123.77) .. (105.32,120.76) ;
        \draw  [color={rgb, 255:red, 0; green, 0; blue, 0 }  ,draw opacity=1 ][fill={rgb, 255:red, 0; green, 0; blue, 0 }  ,fill opacity=1 ] (108.57,150.2) .. controls (108.57,149.3) and (109.3,148.57) .. (110.2,148.57) .. controls (111.1,148.57) and (111.83,149.3) .. (111.83,150.2) .. controls (111.83,151.1) and (111.1,151.83) .. (110.2,151.83) .. controls (109.3,151.83) and (108.57,151.1) .. (108.57,150.2) -- cycle ;
        
        \draw (128,83.6) node [anchor=north west][inner sep=0.75pt]    {$x_{2}$};
        \draw (168.6,120) node [anchor=north west][inner sep=0.75pt]    {$y_{3}$};
        \draw (87.4,120) node [anchor=north west][inner sep=0.75pt]    {$y_{2}$};
        \draw (128,157) node [anchor=north west][inner sep=0.75pt]    {$x_{5}$};
        \end{tikzpicture}
    \end{gathered}
    \right) &= \delta_{x_4x_5}B^e\left(\begin{gathered}
        \begin{tikzpicture}[x=0.75pt,y=0.75pt,yscale=-1,xscale=1]
        \draw   (110.2,100.2) -- (160.2,100.2) -- (160.2,150.2) -- (110.2,150.2) -- cycle ;
        \draw   (130.72,95.36) .. controls (133.73,97.98) and (136.75,99.56) .. (139.76,100.08) .. controls (136.75,100.6) and (133.73,102.18) .. (130.72,104.8) ;
        \draw   (130.72,145.36) .. controls (133.73,147.98) and (136.75,149.56) .. (139.76,150.08) .. controls (136.75,150.6) and (133.73,152.18) .. (130.72,154.8) ;
        \draw   (164.76,120.36) .. controls (162.14,123.37) and (160.56,126.39) .. (160.04,129.4) .. controls (159.52,126.39) and (157.94,123.37) .. (155.32,120.36) ;
        \draw   (114.76,120.76) .. controls (112.14,123.77) and (110.56,126.79) .. (110.04,129.8) .. controls (109.52,126.79) and (107.94,123.77) .. (105.32,120.76) ;
        \draw  [color={rgb, 255:red, 0; green, 0; blue, 0 }  ,draw opacity=1 ][fill={rgb, 255:red, 0; green, 0; blue, 0 }  ,fill opacity=1 ] (108.57,150.2) .. controls (108.57,149.3) and (109.3,148.57) .. (110.2,148.57) .. controls (111.1,148.57) and (111.83,149.3) .. (111.83,150.2) .. controls (111.83,151.1) and (111.1,151.83) .. (110.2,151.83) .. controls (109.3,151.83) and (108.57,151.1) .. (108.57,150.2) -- cycle ;
        
        \draw (128,83.6) node [anchor=north west][inner sep=0.75pt]    {$x_{2}$};
        \draw (168.6,120) node [anchor=north west][inner sep=0.75pt]    {$y_{3}$};
        \draw (55,120) node [anchor=north west][inner sep=0.75pt]    {$y_{2}\bar x_4 \bar v x_4$};
        \draw (111,157) node [anchor=north west][inner sep=0.75pt]    {$\bar x_4 v x_4 x_5$};
        \end{tikzpicture}
    \end{gathered}\right)\\
    &= \delta_{z,x_4x_5}\delta_{e,x_5\bar x_3\bar x_2 y_2} \left(\begin{gathered}
        \begin{tikzpicture}[x=0.75pt,y=0.75pt,yscale=-1,xscale=1]
        \draw   (110.2,100.2) -- (160.2,100.2) -- (160.2,150.2) -- (110.2,150.2) -- cycle ;
        \draw   (130.72,95.36) .. controls (133.73,97.98) and (136.75,99.56) .. (139.76,100.08) .. controls (136.75,100.6) and (133.73,102.18) .. (130.72,104.8) ;
        \draw   (130.72,145.36) .. controls (133.73,147.98) and (136.75,149.56) .. (139.76,150.08) .. controls (136.75,150.6) and (133.73,152.18) .. (130.72,154.8) ;
        \draw   (164.76,120.36) .. controls (162.14,123.37) and (160.56,126.39) .. (160.04,129.4) .. controls (159.52,126.39) and (157.94,123.37) .. (155.32,120.36) ;
        \draw   (114.76,120.76) .. controls (112.14,123.77) and (110.56,126.79) .. (110.04,129.8) .. controls (109.52,126.79) and (107.94,123.77) .. (105.32,120.76) ;
        \draw  [color={rgb, 255:red, 0; green, 0; blue, 0 }  ,draw opacity=1 ][fill={rgb, 255:red, 0; green, 0; blue, 0 }  ,fill opacity=1 ] (108.57,150.2) .. controls (108.57,149.3) and (109.3,148.57) .. (110.2,148.57) .. controls (111.1,148.57) and (111.83,149.3) .. (111.83,150.2) .. controls (111.83,151.1) and (111.1,151.83) .. (110.2,151.83) .. controls (109.3,151.83) and (108.57,151.1) .. (108.57,150.2) -- cycle ;
        
        \draw (128,83.6) node [anchor=north west][inner sep=0.75pt]    {$x_{2}$};
        \draw (168.6,120) node [anchor=north west][inner sep=0.75pt]    {$y_{3}$};
        \draw (55,120) node [anchor=north west][inner sep=0.75pt]    {$y_{2}\bar x_4 \bar v x_4$};
        \draw (111,157) node [anchor=north west][inner sep=0.75pt]    {$\bar x_4 v x_4 x_5$};
        \end{tikzpicture}
    \end{gathered}\right)
\end{aligned}
\end{equation*}

\begin{equation*}
\begin{aligned}
    F^{(z, v)} B^e \left(
    \begin{gathered}
        \begin{tikzpicture}[x=0.75pt,y=0.75pt,yscale=-1,xscale=1]
        
        \draw   (110.2,100.2) -- (160.2,100.2) -- (160.2,150.2) -- (110.2,150.2) -- cycle ;
        \draw   (130.72,95.36) .. controls (133.73,97.98) and (136.75,99.56) .. (139.76,100.08) .. controls (136.75,100.6) and (133.73,102.18) .. (130.72,104.8) ;
        \draw   (130.72,145.36) .. controls (133.73,147.98) and (136.75,149.56) .. (139.76,150.08) .. controls (136.75,150.6) and (133.73,152.18) .. (130.72,154.8) ;
        \draw   (164.76,120.36) .. controls (162.14,123.37) and (160.56,126.39) .. (160.04,129.4) .. controls (159.52,126.39) and (157.94,123.37) .. (155.32,120.36) ;
        \draw   (114.76,120.76) .. controls (112.14,123.77) and (110.56,126.79) .. (110.04,129.8) .. controls (109.52,126.79) and (107.94,123.77) .. (105.32,120.76) ;
        \draw  [color={rgb, 255:red, 0; green, 0; blue, 0 }  ,draw opacity=1 ][fill={rgb, 255:red, 0; green, 0; blue, 0 }  ,fill opacity=1 ] (108.57,150.2) .. controls (108.57,149.3) and (109.3,148.57) .. (110.2,148.57) .. controls (111.1,148.57) and (111.83,149.3) .. (111.83,150.2) .. controls (111.83,151.1) and (111.1,151.83) .. (110.2,151.83) .. controls (109.3,151.83) and (108.57,151.1) .. (108.57,150.2) -- cycle ;
        
        \draw (128,83.6) node [anchor=north west][inner sep=0.75pt]    {$x_{2}$};
        \draw (168.6,120) node [anchor=north west][inner sep=0.75pt]    {$y_{3}$};
        \draw (87.4,120) node [anchor=north west][inner sep=0.75pt]    {$y_{2}$};
        \draw (128,157) node [anchor=north west][inner sep=0.75pt]    {$x_{5}$};
        \end{tikzpicture}
    \end{gathered}
    \right) &= \delta_{e,x_5\bar x_3\bar x_2 y_2} F^{(z, v)}\left(\begin{gathered}
        \begin{tikzpicture}[x=0.75pt,y=0.75pt,yscale=-1,xscale=1]
        \draw   (110.2,100.2) -- (160.2,100.2) -- (160.2,150.2) -- (110.2,150.2) -- cycle ;
        \draw   (130.72,95.36) .. controls (133.73,97.98) and (136.75,99.56) .. (139.76,100.08) .. controls (136.75,100.6) and (133.73,102.18) .. (130.72,104.8) ;
        \draw   (130.72,145.36) .. controls (133.73,147.98) and (136.75,149.56) .. (139.76,150.08) .. controls (136.75,150.6) and (133.73,152.18) .. (130.72,154.8) ;
        \draw   (164.76,120.36) .. controls (162.14,123.37) and (160.56,126.39) .. (160.04,129.4) .. controls (159.52,126.39) and (157.94,123.37) .. (155.32,120.36) ;
        \draw   (114.76,120.76) .. controls (112.14,123.77) and (110.56,126.79) .. (110.04,129.8) .. controls (109.52,126.79) and (107.94,123.77) .. (105.32,120.76) ;
        \draw  [color={rgb, 255:red, 0; green, 0; blue, 0 }  ,draw opacity=1 ][fill={rgb, 255:red, 0; green, 0; blue, 0 }  ,fill opacity=1 ] (108.57,150.2) .. controls (108.57,149.3) and (109.3,148.57) .. (110.2,148.57) .. controls (111.1,148.57) and (111.83,149.3) .. (111.83,150.2) .. controls (111.83,151.1) and (111.1,151.83) .. (110.2,151.83) .. controls (109.3,151.83) and (108.57,151.1) .. (108.57,150.2) -- cycle ;
        
        \draw (128,83.6) node [anchor=north west][inner sep=0.75pt]    {$x_{2}$};
        \draw (168.6,120) node [anchor=north west][inner sep=0.75pt]    {$y_{3}$};
        \draw (87.4,120) node [anchor=north west][inner sep=0.75pt]    {$y_{2}$};
        \draw (128,157) node [anchor=north west][inner sep=0.75pt]    {$x_{5}$};
        \end{tikzpicture}
    \end{gathered}\right)\\
    &=\delta_{z,x_4x_5}\delta_{e,x_5\bar x_3\bar x_2 y_2} \left(\begin{gathered}
        \begin{tikzpicture}[x=0.75pt,y=0.75pt,yscale=-1,xscale=1]
        \draw   (110.2,100.2) -- (160.2,100.2) -- (160.2,150.2) -- (110.2,150.2) -- cycle ;
        \draw   (130.72,95.36) .. controls (133.73,97.98) and (136.75,99.56) .. (139.76,100.08) .. controls (136.75,100.6) and (133.73,102.18) .. (130.72,104.8) ;
        \draw   (130.72,145.36) .. controls (133.73,147.98) and (136.75,149.56) .. (139.76,150.08) .. controls (136.75,150.6) and (133.73,152.18) .. (130.72,154.8) ;
        \draw   (164.76,120.36) .. controls (162.14,123.37) and (160.56,126.39) .. (160.04,129.4) .. controls (159.52,126.39) and (157.94,123.37) .. (155.32,120.36) ;
        \draw   (114.76,120.76) .. controls (112.14,123.77) and (110.56,126.79) .. (110.04,129.8) .. controls (109.52,126.79) and (107.94,123.77) .. (105.32,120.76) ;
        \draw  [color={rgb, 255:red, 0; green, 0; blue, 0 }  ,draw opacity=1 ][fill={rgb, 255:red, 0; green, 0; blue, 0 }  ,fill opacity=1 ] (108.57,150.2) .. controls (108.57,149.3) and (109.3,148.57) .. (110.2,148.57) .. controls (111.1,148.57) and (111.83,149.3) .. (111.83,150.2) .. controls (111.83,151.1) and (111.1,151.83) .. (110.2,151.83) .. controls (109.3,151.83) and (108.57,151.1) .. (108.57,150.2) -- cycle ;
        
        \draw (128,83.6) node [anchor=north west][inner sep=0.75pt]    {$x_{2}$};
        \draw (168.6,120) node [anchor=north west][inner sep=0.75pt]    {$y_{3}$};
        \draw (55,120) node [anchor=north west][inner sep=0.75pt]    {$y_2 \bar x_4 \bar v x_4$};
        \draw (111,157) node [anchor=north west][inner sep=0.75pt]    {$\bar x_4 v x_4 x_5$};
        \end{tikzpicture}
    \end{gathered}\right)
\end{aligned}
\end{equation*}

So we have again shown that $B^e$ commute with $F^{(z,v)}$.

\textbf{Case 3: with a vertex.} Consider the vertex with edges $y_2,x_5,y_5,x_4$ (clockwise from the top). We want to show that the vertex operator $A^g$ commute with $F^{(z,v)}$, as $A^g$ stabilizes ground states.

\begin{equation*}
    A^g F^{(z,v)} \left(
    \begin{gathered}
        \begin{tikzpicture}[x=0.75pt,y=0.75pt,yscale=-1,xscale=1]
        
        \draw   (55.9,88.86) .. controls (58.91,91.48) and (61.93,93.06) .. (64.94,93.58) .. controls (61.93,94.1) and (58.91,95.68) .. (55.9,98.3) ;
        \draw   (91.09,62.36) .. controls (88.47,65.38) and (86.89,68.39) .. (86.37,71.4) .. controls (85.85,68.39) and (84.27,65.38) .. (81.65,62.36) ;
        \draw   (36.45,93.56) -- (136.4,93.56)(86.42,43.67) -- (86.42,143.45) ;
        \draw   (108.22,88.61) .. controls (111.23,91.23) and (114.25,92.81) .. (117.26,93.33) .. controls (114.25,93.85) and (111.23,95.43) .. (108.22,98.05) ;
        \draw   (91.09,122.86) .. controls (88.47,125.88) and (86.89,128.89) .. (86.37,131.9) .. controls (85.85,128.89) and (84.27,125.88) .. (81.65,122.86) ;
        
        \draw (53.03,99.85) node [anchor=north west][inner sep=0.75pt]    {$x_{4}$};
        \draw (63.97,51.69) node [anchor=north west][inner sep=0.75pt]    {$y_{2}$};
        \draw (105.35,99.6) node [anchor=north west][inner sep=0.75pt]    {$x_{5}$};
        \draw (63.69,122.54) node [anchor=north west][inner sep=0.75pt]    {$y_{5}$};
        \end{tikzpicture}
    \end{gathered}\right)=\delta_{z,x_4x_5}A^g\left(
    \begin{gathered}
        \begin{tikzpicture}[x=0.75pt,y=0.75pt,yscale=-1,xscale=1]
        
        \draw   (55.9,88.86) .. controls (58.91,91.48) and (61.93,93.06) .. (64.94,93.58) .. controls (61.93,94.1) and (58.91,95.68) .. (55.9,98.3) ;
        \draw   (91.09,62.36) .. controls (88.47,65.38) and (86.89,68.39) .. (86.37,71.4) .. controls (85.85,68.39) and (84.27,65.38) .. (81.65,62.36) ;
        \draw   (36.45,93.56) -- (136.4,93.56)(86.42,43.67) -- (86.42,143.45) ;
        \draw   (108.22,88.61) .. controls (111.23,91.23) and (114.25,92.81) .. (117.26,93.33) .. controls (114.25,93.85) and (111.23,95.43) .. (108.22,98.05) ;
        \draw   (91.09,122.86) .. controls (88.47,125.88) and (86.89,128.89) .. (86.37,131.9) .. controls (85.85,128.89) and (84.27,125.88) .. (81.65,122.86) ;
        
        \draw (53.03,99.85) node [anchor=north west][inner sep=0.75pt]    {$x_{4}$};
        \draw (63.97,51.69) node [anchor=north west][inner sep=0.75pt]    {$y_{2}\bar x_4 \bar v x_4$};
        \draw (105.35,99.6) node [anchor=north west][inner sep=0.75pt]    {$\bar x_4 v x_4 x_{5}$};
        \draw (63.69,122.54) node [anchor=north west][inner sep=0.75pt]    {$y_{5}$};
        \end{tikzpicture}
    \end{gathered}\right) = \delta_{z,x_4x_5}\left(
    \begin{gathered}
        \begin{tikzpicture}[x=0.75pt,y=0.75pt,yscale=-1,xscale=1]
        
        \draw   (55.9,88.86) .. controls (58.91,91.48) and (61.93,93.06) .. (64.94,93.58) .. controls (61.93,94.1) and (58.91,95.68) .. (55.9,98.3) ;
        \draw   (91.09,62.36) .. controls (88.47,65.38) and (86.89,68.39) .. (86.37,71.4) .. controls (85.85,68.39) and (84.27,65.38) .. (81.65,62.36) ;
        \draw   (36.45,93.56) -- (136.4,93.56)(86.42,43.67) -- (86.42,143.45) ;
        \draw   (108.22,88.61) .. controls (111.23,91.23) and (114.25,92.81) .. (117.26,93.33) .. controls (114.25,93.85) and (111.23,95.43) .. (108.22,98.05) ;
        \draw   (91.09,122.86) .. controls (88.47,125.88) and (86.89,128.89) .. (86.37,131.9) .. controls (85.85,128.89) and (84.27,125.88) .. (81.65,122.86) ;
        
        \draw (53.03,99.85) node [anchor=north west][inner sep=0.75pt]    {$x_{4}\bar g$};
        \draw (63.97,51.69) node [anchor=north west][inner sep=0.75pt]    {$y_{2}\bar x_4 \bar v x_4\bar g$};
        \draw (105.35,99.6) node [anchor=north west][inner sep=0.75pt]    {$g\bar x_4 v x_4 x_{5}$};
        \draw (63.69,122.54) node [anchor=north west][inner sep=0.75pt]    {$gy_{5}$};
        \end{tikzpicture}
    \end{gathered}\right)
\end{equation*}
On the right hand side,
\begin{equation*}
    F^{(z,v)}A^g\left(
    \begin{gathered}
        \begin{tikzpicture}[x=0.75pt,y=0.75pt,yscale=-1,xscale=1]
        
        \draw   (55.9,88.86) .. controls (58.91,91.48) and (61.93,93.06) .. (64.94,93.58) .. controls (61.93,94.1) and (58.91,95.68) .. (55.9,98.3) ;
        \draw   (91.09,62.36) .. controls (88.47,65.38) and (86.89,68.39) .. (86.37,71.4) .. controls (85.85,68.39) and (84.27,65.38) .. (81.65,62.36) ;
        \draw   (36.45,93.56) -- (136.4,93.56)(86.42,43.67) -- (86.42,143.45) ;
        \draw   (108.22,88.61) .. controls (111.23,91.23) and (114.25,92.81) .. (117.26,93.33) .. controls (114.25,93.85) and (111.23,95.43) .. (108.22,98.05) ;
        \draw   (91.09,122.86) .. controls (88.47,125.88) and (86.89,128.89) .. (86.37,131.9) .. controls (85.85,128.89) and (84.27,125.88) .. (81.65,122.86) ;
        
        \draw (53.03,99.85) node [anchor=north west][inner sep=0.75pt]    {$x_{4}$};
        \draw (63.97,51.69) node [anchor=north west][inner sep=0.75pt]    {$y_{2}$};
        \draw (105.35,99.6) node [anchor=north west][inner sep=0.75pt]    {$x_{5}$};
        \draw (63.69,122.54) node [anchor=north west][inner sep=0.75pt]    {$y_{5}$};
        \end{tikzpicture}
    \end{gathered}\right) = F^{(z,v)}\left(
    \begin{gathered}
        \begin{tikzpicture}[x=0.75pt,y=0.75pt,yscale=-1,xscale=1]
        
        \draw   (55.9,88.86) .. controls (58.91,91.48) and (61.93,93.06) .. (64.94,93.58) .. controls (61.93,94.1) and (58.91,95.68) .. (55.9,98.3) ;
        \draw   (91.09,62.36) .. controls (88.47,65.38) and (86.89,68.39) .. (86.37,71.4) .. controls (85.85,68.39) and (84.27,65.38) .. (81.65,62.36) ;
        \draw   (36.45,93.56) -- (136.4,93.56)(86.42,43.67) -- (86.42,143.45) ;
        \draw   (108.22,88.61) .. controls (111.23,91.23) and (114.25,92.81) .. (117.26,93.33) .. controls (114.25,93.85) and (111.23,95.43) .. (108.22,98.05) ;
        \draw   (91.09,122.86) .. controls (88.47,125.88) and (86.89,128.89) .. (86.37,131.9) .. controls (85.85,128.89) and (84.27,125.88) .. (81.65,122.86) ;
        
        \draw (53.03,99.85) node [anchor=north west][inner sep=0.75pt]    {$x_{4}\bar g$};
        \draw (63.97,51.69) node [anchor=north west][inner sep=0.75pt]    {$y_{2}\bar g$};
        \draw (105.35,99.6) node [anchor=north west][inner sep=0.75pt]    {$gx_{5}$};
        \draw (63.69,122.54) node [anchor=north west][inner sep=0.75pt]    {$gy_{5}$};
        \end{tikzpicture}
    \end{gathered}\right) = \delta_{z,x_4x_5}\left(
    \begin{gathered}
        \begin{tikzpicture}[x=0.75pt,y=0.75pt,yscale=-1,xscale=1]
        
        \draw   (55.9,88.86) .. controls (58.91,91.48) and (61.93,93.06) .. (64.94,93.58) .. controls (61.93,94.1) and (58.91,95.68) .. (55.9,98.3) ;
        \draw   (91.09,62.36) .. controls (88.47,65.38) and (86.89,68.39) .. (86.37,71.4) .. controls (85.85,68.39) and (84.27,65.38) .. (81.65,62.36) ;
        \draw   (36.45,93.56) -- (136.4,93.56)(86.42,43.67) -- (86.42,143.45) ;
        \draw   (108.22,88.61) .. controls (111.23,91.23) and (114.25,92.81) .. (117.26,93.33) .. controls (114.25,93.85) and (111.23,95.43) .. (108.22,98.05) ;
        \draw   (91.09,122.86) .. controls (88.47,125.88) and (86.89,128.89) .. (86.37,131.9) .. controls (85.85,128.89) and (84.27,125.88) .. (81.65,122.86) ;
        
        \draw (53.03,99.85) node [anchor=north west][inner sep=0.75pt]    {$x_{4}\bar g$};
        \draw (63.97,51.69) node [anchor=north west][inner sep=0.75pt]    {$y_{2}\bar x_4 \bar v x_4\bar g$};
        \draw (105.35,99.6) node [anchor=north west][inner sep=0.75pt]    {$g\bar x_4 v x_4 x_{5}$};
        \draw (63.69,122.54) node [anchor=north west][inner sep=0.75pt]    {$gy_{5}$};
        \end{tikzpicture}
    \end{gathered}\right)
\end{equation*}

So we have shown that $A^g$ commute with $F^{(z,v)}$.

\subsubsection{Commutation with projectors at end points} \label{ap:end-ribbon}

In this section, we prove the commutation relations stated in the main text. We reproduce them below for convenience. One common argument is that elongating the ribbon doesn't affect the commutation relation of the ribbon with a plaquette operator at the opposite end of the ribbon, so it suffices to consider the first two ribbons to determine the commutation at the start of a ribbon; at the end of the ribbon, it suffices to consider the action of the last dual triangle.\\


\textbf{1. Clockwise local orientation}

Consider the following ribbon with clockwise local orientation as shown in Fig. \ref{fig:clockwise ribbon for plaquette}. Note that the base point from which to define the flux measured by the plaquette operator $B^h$ need to coincide with the starting or ending site of the ribbon. 

\begin{figure}[h!]
    \centering
    \includegraphics[width=0.7\textwidth]{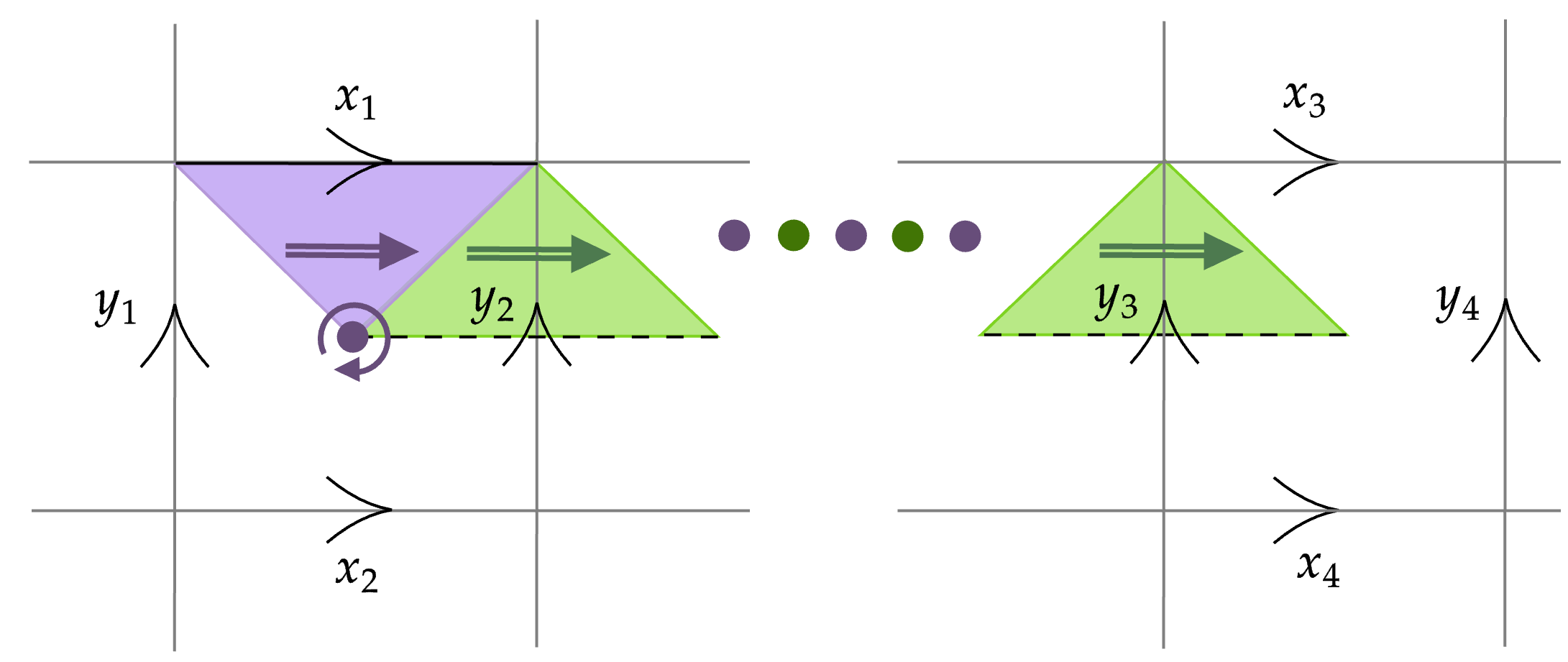}
    \caption{A ribbon operator with clockwise local orientation.}
    \label{fig:clockwise ribbon for plaquette}
\end{figure}

\begin{prop}[Plaquette operator at the start of a clockwise ribbon]
The commutation relation is
    \begin{equation}
        \quad B_{\mathrm{fl}}^{h}F^{(z, v)}(t)=F^{(z, v)}(t)B_{\mathrm{fl}}^{v h}.
    \end{equation}
\end{prop}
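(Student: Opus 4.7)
The plan is to reduce the commutation identity to a concrete claim about the action of $F^{(z,v)}(t)$ on the local flux at the starting plaquette. Specifically, I claim that for a clockwise-oriented ribbon $t$ starting at the site $(s,p)$, the operator $F^{(z,v)}(t)$ maps the local flux $u$ at the plaquette $p$ (measured counterclockwise from $s$) to $uv$. Granting this, the proposition follows immediately: applying $B^{h}_{\mathrm{fl}}$ after $F^{(z,v)}(t)$ projects the final flux onto $\bar h$, i.e.\ it enforces $uv = \bar h$ on the initial flux, which is equivalent to $u = \overline{vh}$; this is precisely the condition enforced by $B^{vh}_{\mathrm{fl}}$ applied before the ribbon. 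Thus $B^h_{\mathrm{fl}}$ composed with the shift $u \mapsto uv$ equals that shift composed with $B^{vh}_{\mathrm{fl}}$, and the two operators agree on every basis state.

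To establish the claim, I would combine the constructive bottom-up algorithm for ribbon operators with the recursive decomposition $F^{(z,v)}(t) = \sum_{k \in G} F^{(k,v)}(\rho_1) F^{(\bar k z, \bar k v k)}(\rho')$, where $\rho_1$ is the first triangle of $t$ and $\rho'$ is the remainder. The triangles of $\rho'$ that lie away from $p$ do not modify any edge of the starting plaquette, and by the middle-of-ribbon commutation established in Appendix~\ref{ap:middle-ribbon}, $B^{h}_{\mathrm{fl}}$ commutes with those portions of the ribbon. The problem thus reduces to a short local calculation involving only the triangles adjacent to $p$: apply each of their actions (left- or right-multiplication on one edge, as dictated by alignment and the clockwise local orientation), then compute the new product of the four plaquette edges taken counterclockwise from $s$. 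A small case analysis on whether $\rho_1$ is direct or dual, aligned or opposite, shows that the running conjugations by $\bar x \cdots x$ in the bottom-up algorithm telescope to leave precisely one factor of $v$ on the right of the original flux $u$.

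The main obstacle is the orientation bookkeeping. For each triangle intersecting $p$, one must track whether the relevant plaquette edge is traversed in its intrinsic orientation or inversely when reading the plaquette flux counterclockwise from $s$, and then combine this with whether the triangle operator multiplies its edge on the left or on the right. The clockwise local orientation of the ribbon is what ultimately places the extra factor $v$ on the right of $u$; swapping to counterclockwise orientation exchanges left- and right-multiplication at each triangle and puts an extra factor on the left instead, producing the companion identity $B^h_{\mathrm{fl}} F(t) = F(t) B^{hv}_{\mathrm{fl}}$ recorded in Section~\ref{ribbon-commutation}. Once the shift rule $u \mapsto uv$ is verified by the single-triangle case analysis, assembling it into the full ribbon identity via the recursion and the middle-of-ribbon commutation is routine.
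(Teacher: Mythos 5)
Your proof is correct, and at its core it is the same local computation the paper performs in Appendix~\ref{ap:end-ribbon}: restrict attention to the triangles touching the starting plaquette and track how the plaquette flux changes. The repackaging, however, is genuinely cleaner. The paper writes out both $B^h F^{(z,v)}$ and $F^{(\zeta,\omega)}B^\eta$ with unknowns and matches delta functions to read off $\zeta=z$, $\omega=v$, $\eta=vh$; you instead isolate the physical statement that a clockwise ribbon right-multiplies the local flux at $p$ by $v$ (on the support of the $z$-string projection), after which the identity drops out in one line from $\{uv=\bar h\}\Leftrightarrow\{u=\overline{vh}\}$. This makes transparent that the ribbon deposits a flux $v$ at its start, and it makes the companion counterclockwise identity ($u\mapsto vu$, hence $B^h_{\mathrm{fl}}F=F B^{hv}_{\mathrm{fl}}$) equally immediate. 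One small correction: to commute $B^h_{\mathrm{fl}}$ past the triangles lying away from $p$, the justification is simply that those operators have support disjoint from $p$; the middle-of-ribbon lemma of Appendix~\ref{ap:middle-ribbon} concerns $B^e$ at interior plaquettes of the ribbon, which is a different (and logically stronger) statement than the one you need here.
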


\begin{proof}
    WLOG, consider a ribbon (with clockwise local orientation) composed of two triangles, as shown in Fig. \ref{fig:clockwise ribbon for plaquette}. 

    \begin{equation*}
    \begin{aligned}
        B^hF^{(z,v)}(t)
        \left(\begin{gathered}
            \begin{tikzpicture}[x=0.75pt,y=0.75pt,yscale=-1,xscale=1]
            
            \draw   (130.2,120.2) -- (180.2,120.2) -- (180.2,170.2) -- (130.2,170.2) -- cycle ;
            \draw   (150.72,115.36) .. controls (153.73,117.98) and (156.75,119.56) .. (159.76,120.08) .. controls (156.75,120.6) and (153.73,122.18) .. (150.72,124.8) ;
            \draw   (150.72,165.36) .. controls (153.73,167.98) and (156.75,169.56) .. (159.76,170.08) .. controls (156.75,170.6) and (153.73,172.18) .. (150.72,174.8) ;
            \draw   (175.32,149.4) .. controls (177.94,146.39) and (179.52,143.37) .. (180.04,140.36) .. controls (180.56,143.37) and (182.14,146.39) .. (184.76,149.4) ;
            \draw   (125.32,149.8) .. controls (127.94,146.79) and (129.52,143.77) .. (130.04,140.76) .. controls (130.56,143.77) and (132.14,146.79) .. (134.76,149.8) ;
            \draw  [color={rgb, 255:red, 0; green, 0; blue, 0 }  ,draw opacity=1 ][fill={rgb, 255:red, 0; green, 0; blue, 0 }  ,fill opacity=1 ] (128.57,120.2) .. controls (128.57,119.3) and (129.3,118.57) .. (130.2,118.57) .. controls (131.1,118.57) and (131.83,119.3) .. (131.83,120.2) .. controls (131.83,121.1) and (131.1,121.83) .. (130.2,121.83) .. controls (129.3,121.83) and (128.57,121.1) .. (128.57,120.2) -- cycle ;
            
            \draw (148.6,103) node [anchor=north west][inner sep=0.75pt]    {$x_{1}$};
            \draw (188.6,136) node [anchor=north west][inner sep=0.75pt]    {$y_{2}$};
            \draw (107.4,135.4) node [anchor=north west][inner sep=0.75pt]    {$y_{1}$};
            \draw (150.6,175.6) node [anchor=north west][inner sep=0.75pt]    {$x_{2}$};
            \end{tikzpicture}
            \end{gathered}
        \right)
        &=\delta_{z,x_1} B^h
        \left(\begin{gathered}
            \begin{tikzpicture}[x=0.75pt,y=0.75pt,yscale=-1,xscale=1]
            
            \draw   (130.2,120.2) -- (180.2,120.2) -- (180.2,170.2) -- (130.2,170.2) -- cycle ;
            \draw   (150.72,115.36) .. controls (153.73,117.98) and (156.75,119.56) .. (159.76,120.08) .. controls (156.75,120.6) and (153.73,122.18) .. (150.72,124.8) ;
            \draw   (150.72,165.36) .. controls (153.73,167.98) and (156.75,169.56) .. (159.76,170.08) .. controls (156.75,170.6) and (153.73,172.18) .. (150.72,174.8) ;
            \draw   (175.32,149.4) .. controls (177.94,146.39) and (179.52,143.37) .. (180.04,140.36) .. controls (180.56,143.37) and (182.14,146.39) .. (184.76,149.4) ;
            \draw   (125.32,149.8) .. controls (127.94,146.79) and (129.52,143.77) .. (130.04,140.76) .. controls (130.56,143.77) and (132.14,146.79) .. (134.76,149.8) ;
            \draw  [color={rgb, 255:red, 0; green, 0; blue, 0 }  ,draw opacity=1 ][fill={rgb, 255:red, 0; green, 0; blue, 0 }  ,fill opacity=1 ] (128.57,120.2) .. controls (128.57,119.3) and (129.3,118.57) .. (130.2,118.57) .. controls (131.1,118.57) and (131.83,119.3) .. (131.83,120.2) .. controls (131.83,121.1) and (131.1,121.83) .. (130.2,121.83) .. controls (129.3,121.83) and (128.57,121.1) .. (128.57,120.2) -- cycle ;
            
            \draw (148.6,103) node [anchor=north west][inner sep=0.75pt]    {$x_{1}$};
            \draw (188.6,136) node [anchor=north west][inner sep=0.75pt]    {$y_{2}\bar x_1 v x_1$};
            \draw (107.4,135.4) node [anchor=north west][inner sep=0.75pt]    {$y_{1}$};
            \draw (150.6,175.6) node [anchor=north west][inner sep=0.75pt]    {$x_{2}$};
            \end{tikzpicture}
            \end{gathered}
        \right)\\
        &=\delta_{z,x_1}\delta_{\bar h, \bar y_1x_2y_2\bar x_1 v}
        \left(\begin{gathered}
            \begin{tikzpicture}[x=0.75pt,y=0.75pt,yscale=-1,xscale=1]
            
            \draw   (130.2,120.2) -- (180.2,120.2) -- (180.2,170.2) -- (130.2,170.2) -- cycle ;
            \draw   (150.72,115.36) .. controls (153.73,117.98) and (156.75,119.56) .. (159.76,120.08) .. controls (156.75,120.6) and (153.73,122.18) .. (150.72,124.8) ;
            \draw   (150.72,165.36) .. controls (153.73,167.98) and (156.75,169.56) .. (159.76,170.08) .. controls (156.75,170.6) and (153.73,172.18) .. (150.72,174.8) ;
            \draw   (175.32,149.4) .. controls (177.94,146.39) and (179.52,143.37) .. (180.04,140.36) .. controls (180.56,143.37) and (182.14,146.39) .. (184.76,149.4) ;
            \draw   (125.32,149.8) .. controls (127.94,146.79) and (129.52,143.77) .. (130.04,140.76) .. controls (130.56,143.77) and (132.14,146.79) .. (134.76,149.8) ;
            \draw  [color={rgb, 255:red, 0; green, 0; blue, 0 }  ,draw opacity=1 ][fill={rgb, 255:red, 0; green, 0; blue, 0 }  ,fill opacity=1 ] (128.57,120.2) .. controls (128.57,119.3) and (129.3,118.57) .. (130.2,118.57) .. controls (131.1,118.57) and (131.83,119.3) .. (131.83,120.2) .. controls (131.83,121.1) and (131.1,121.83) .. (130.2,121.83) .. controls (129.3,121.83) and (128.57,121.1) .. (128.57,120.2) -- cycle ;
            
            \draw (148.6,103) node [anchor=north west][inner sep=0.75pt]    {$x_{1}$};
            \draw (188.6,136) node [anchor=north west][inner sep=0.75pt]    {$y_{2}\bar x_1 v x_1$};
            \draw (107.4,135.4) node [anchor=north west][inner sep=0.75pt]    {$y_{1}$};
            \draw (150.6,175.6) node [anchor=north west][inner sep=0.75pt]    {$x_{2}$};
            \end{tikzpicture}
            \end{gathered}
        \right)
    \end{aligned}
    \end{equation*}

For the right hand side, we assume a different set of variables ($\zeta$ for charge, $\omega$ for local flux, and $\eta$ for the flux variable of the plaquette operator) and determine their values.
    \begin{equation*}
    \begin{aligned}
        F^{(\zeta,\omega)}(t) B^{\eta}
        \left(\begin{gathered}
            \begin{tikzpicture}[x=0.75pt,y=0.75pt,yscale=-1,xscale=1]
            
            \draw   (130.2,120.2) -- (180.2,120.2) -- (180.2,170.2) -- (130.2,170.2) -- cycle ;
            \draw   (150.72,115.36) .. controls (153.73,117.98) and (156.75,119.56) .. (159.76,120.08) .. controls (156.75,120.6) and (153.73,122.18) .. (150.72,124.8) ;
            \draw   (150.72,165.36) .. controls (153.73,167.98) and (156.75,169.56) .. (159.76,170.08) .. controls (156.75,170.6) and (153.73,172.18) .. (150.72,174.8) ;
            \draw   (175.32,149.4) .. controls (177.94,146.39) and (179.52,143.37) .. (180.04,140.36) .. controls (180.56,143.37) and (182.14,146.39) .. (184.76,149.4) ;
            \draw   (125.32,149.8) .. controls (127.94,146.79) and (129.52,143.77) .. (130.04,140.76) .. controls (130.56,143.77) and (132.14,146.79) .. (134.76,149.8) ;
            \draw  [color={rgb, 255:red, 0; green, 0; blue, 0 }  ,draw opacity=1 ][fill={rgb, 255:red, 0; green, 0; blue, 0 }  ,fill opacity=1 ] (128.57,120.2) .. controls (128.57,119.3) and (129.3,118.57) .. (130.2,118.57) .. controls (131.1,118.57) and (131.83,119.3) .. (131.83,120.2) .. controls (131.83,121.1) and (131.1,121.83) .. (130.2,121.83) .. controls (129.3,121.83) and (128.57,121.1) .. (128.57,120.2) -- cycle ;
            
            \draw (148.6,103) node [anchor=north west][inner sep=0.75pt]    {$x_{1}$};
            \draw (188.6,136) node [anchor=north west][inner sep=0.75pt]    {$y_{2}$};
            \draw (107.4,135.4) node [anchor=north west][inner sep=0.75pt]    {$y_{1}$};
            \draw (150.6,175.6) node [anchor=north west][inner sep=0.75pt]    {$x_{2}$};
            \end{tikzpicture}
            \end{gathered}
        \right)
        &= \delta_{\bar \eta,\bar y_1x_2y_2\bar x_1} F^{(\zeta,\omega)}(t)
        \left(\begin{gathered}
            \begin{tikzpicture}[x=0.75pt,y=0.75pt,yscale=-1,xscale=1]
            
            \draw   (130.2,120.2) -- (180.2,120.2) -- (180.2,170.2) -- (130.2,170.2) -- cycle ;
            \draw   (150.72,115.36) .. controls (153.73,117.98) and (156.75,119.56) .. (159.76,120.08) .. controls (156.75,120.6) and (153.73,122.18) .. (150.72,124.8) ;
            \draw   (150.72,165.36) .. controls (153.73,167.98) and (156.75,169.56) .. (159.76,170.08) .. controls (156.75,170.6) and (153.73,172.18) .. (150.72,174.8) ;
            \draw   (175.32,149.4) .. controls (177.94,146.39) and (179.52,143.37) .. (180.04,140.36) .. controls (180.56,143.37) and (182.14,146.39) .. (184.76,149.4) ;
            \draw   (125.32,149.8) .. controls (127.94,146.79) and (129.52,143.77) .. (130.04,140.76) .. controls (130.56,143.77) and (132.14,146.79) .. (134.76,149.8) ;
            \draw  [color={rgb, 255:red, 0; green, 0; blue, 0 }  ,draw opacity=1 ][fill={rgb, 255:red, 0; green, 0; blue, 0 }  ,fill opacity=1 ] (128.57,120.2) .. controls (128.57,119.3) and (129.3,118.57) .. (130.2,118.57) .. controls (131.1,118.57) and (131.83,119.3) .. (131.83,120.2) .. controls (131.83,121.1) and (131.1,121.83) .. (130.2,121.83) .. controls (129.3,121.83) and (128.57,121.1) .. (128.57,120.2) -- cycle ;
            
            \draw (148.6,103) node [anchor=north west][inner sep=0.75pt]    {$x_{1}$};
            \draw (188.6,136) node [anchor=north west][inner sep=0.75pt]    {$y_{2}$};
            \draw (107.4,135.4) node [anchor=north west][inner sep=0.75pt]    {$y_{1}$};
            \draw (150.6,175.6) node [anchor=north west][inner sep=0.75pt]    {$x_{2}$};
            \end{tikzpicture}
            \end{gathered}
        \right)\\
        &=\delta_{\zeta,x_1}\delta_{\bar \eta, \bar y_1x_2y_2\bar x_1}
        \left(\begin{gathered}
            \begin{tikzpicture}[x=0.75pt,y=0.75pt,yscale=-1,xscale=1]
            
            \draw   (130.2,120.2) -- (180.2,120.2) -- (180.2,170.2) -- (130.2,170.2) -- cycle ;
            \draw   (150.72,115.36) .. controls (153.73,117.98) and (156.75,119.56) .. (159.76,120.08) .. controls (156.75,120.6) and (153.73,122.18) .. (150.72,124.8) ;
            \draw   (150.72,165.36) .. controls (153.73,167.98) and (156.75,169.56) .. (159.76,170.08) .. controls (156.75,170.6) and (153.73,172.18) .. (150.72,174.8) ;
            \draw   (175.32,149.4) .. controls (177.94,146.39) and (179.52,143.37) .. (180.04,140.36) .. controls (180.56,143.37) and (182.14,146.39) .. (184.76,149.4) ;
            \draw   (125.32,149.8) .. controls (127.94,146.79) and (129.52,143.77) .. (130.04,140.76) .. controls (130.56,143.77) and (132.14,146.79) .. (134.76,149.8) ;
            \draw  [color={rgb, 255:red, 0; green, 0; blue, 0 }  ,draw opacity=1 ][fill={rgb, 255:red, 0; green, 0; blue, 0 }  ,fill opacity=1 ] (128.57,120.2) .. controls (128.57,119.3) and (129.3,118.57) .. (130.2,118.57) .. controls (131.1,118.57) and (131.83,119.3) .. (131.83,120.2) .. controls (131.83,121.1) and (131.1,121.83) .. (130.2,121.83) .. controls (129.3,121.83) and (128.57,121.1) .. (128.57,120.2) -- cycle ;
            
            \draw (148.6,103) node [anchor=north west][inner sep=0.75pt]    {$x_{1}$};
            \draw (188.6,136) node [anchor=north west][inner sep=0.75pt]    {$y_{2}\bar x_1 \omega x_1$};
            \draw (107.4,135.4) node [anchor=north west][inner sep=0.75pt]    {$y_{1}$};
            \draw (150.6,175.6) node [anchor=north west][inner sep=0.75pt]    {$x_{2}$};
            \end{tikzpicture}
            \end{gathered}
        \right)
    \end{aligned}
    \end{equation*}
    By comparing like terms, we see that $\zeta=z,\omega=v, \eta=v h$. 
\end{proof}

\begin{prop}[Plaquette operator at the end of a clockwise ribbon]

The commutation relation is 
\begin{equation}
    \quad B_{\mathrm{col}}^{h}F^{(z,v)}(t)=
        F^{(z, v)}(t)B_{\mathrm{col}}^{\bar z v zh}
\end{equation}
    
\end{prop}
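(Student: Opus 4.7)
The plan is to proceed by direct computation on a minimal local configuration, mirroring the strategy used in the preceding proposition for the start-plaquette case. Using Appendix~\ref{ap:middle-ribbon} (commutation of ribbons with projectors in the middle), I can reduce to the case of a very short ribbon consisting of just the terminating dual triangle (plus the direct strand recording the $\delta_{z,\prod x}$ constraint imposed at the end of the constructive algorithm). All other triangles of the ribbon live away from the end plaquette and commute with $B^h_{\mathrm{col}}$ by the middle-of-ribbon argument, so they pass through harmlessly on both sides of the equation.

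Once reduced, I would draw a small picture of the end plaquette with its four edges in the original basis, with the ribbon's last dual triangle touching one edge and the origin vertex at the base of the plaquette (so that $B^h_{\mathrm{col}}$ measures the flux starting from the correct base point). For the left-hand side $B^h_{\mathrm{col}} F^{(z,v)}(t)$, the ribbon first modifies exactly one edge of the end plaquette by left or right multiplication by $\bar{x} v x$, where $x$ is the running product of direct-edge values; the subsequent $B^h_{\mathrm{col}}$ then imposes that the counterclockwise product of the modified edges equals $\bar h$. Invoking the bookkeeping delta $\delta_{z,\prod x}$, I can replace $\bar x v x$ by $\bar z v z$. For the right-hand side $F^{(z,v)}(t) B^{\bar z v z h}_{\mathrm{col}}$, the projector acts first on the unmodified edges, forcing their counterclockwise product to equal $\overline{\bar z v z\,h} = \bar h\,\bar z\,\bar v\,z$, after which the ribbon produces the same modified edge configuration. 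A single group-theoretic rearrangement then shows the two resulting Kronecker deltas multiplying the same final state are identical.

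The main obstacle is orientation bookkeeping: whether the terminating dual triangle is aligned or opposite relative to the ribbon direction, and the clockwise local orientation, together determine whether the edge modification is $\bar z v z \cdot (\text{edge})$ or $(\text{edge}) \cdot \bar z \bar v z$, and this in turn fixes whether the shift on the plaquette projector reads $\bar z v z\, h$ or a near-miss like $\bar z \bar v z\, h$. One must carefully read off the correct left/right action from Fig.~\ref{fig:triangle_operator} for clockwise triangles and reconcile it with the base-point convention that places the origin vertex of $B^h_{\mathrm{col}}$ at the corner of the end plaquette touched by the ribbon; once these conventions are pinned down consistently with those used in the start-plaquette proof, the identity follows in one line. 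Importantly, the independence of the result from the particular choice of $z$-consistent direct edges is built in via the same $\delta_{z,\prod x}$ factor, so no additional case analysis on the path of the ribbon is required.
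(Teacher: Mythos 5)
Your proposal follows essentially the same route as the paper's proof: reduce to the last dual triangle at the end plaquette (keeping the $\delta_{z,\prod x}$ bookkeeping), compute both $B^h_{\mathrm{col}}F^{(z,v)}$ and $F^{(z,v)}B^{\bar z v z h}_{\mathrm{col}}$ directly on the four edges of that plaquette with the correct base point, and match the resulting Kronecker deltas; the paper does the same calculation, solving for the unknown $\eta$ rather than verifying a claimed value, which is an equivalent presentation. The obstacles you flag — alignment, clockwise local orientation, and the base-point placement fixing whether the shift is $\bar z v z h$ or a near-miss — are exactly the conventions the paper pins down via the triangle-operator table and the highlighted base-point vertex in its figure, so there is no gap.
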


\begin{proof}
    Consider a ribbon of arbitrary length which ends with the two triangle operators as shown in the figure below. Say the product of direct edges for the ribbon is $z'$. 

    On the left hand side,
    \begin{equation*}
    \begin{aligned}
        B^h F^{(z,v)} (t) \left(
        \begin{gathered}  
            \begin{tikzpicture}[x=0.75pt,y=0.75pt,yscale=-1,xscale=1]
            
            \draw   (130.2,120.2) -- (180.2,120.2) -- (180.2,170.2) -- (130.2,170.2) -- cycle ;
            \draw   (150.72,115.36) .. controls (153.73,117.98) and (156.75,119.56) .. (159.76,120.08) .. controls (156.75,120.6) and (153.73,122.18) .. (150.72,124.8) ;
            \draw   (150.72,165.36) .. controls (153.73,167.98) and (156.75,169.56) .. (159.76,170.08) .. controls (156.75,170.6) and (153.73,172.18) .. (150.72,174.8) ;
            \draw   (175.32,149.4) .. controls (177.94,146.39) and (179.52,143.37) .. (180.04,140.36) .. controls (180.56,143.37) and (182.14,146.39) .. (184.76,149.4) ;
            \draw   (125.32,149.8) .. controls (127.94,146.79) and (129.52,143.77) .. (130.04,140.76) .. controls (130.56,143.77) and (132.14,146.79) .. (134.76,149.8) ;
            \draw  [color={rgb, 255:red, 0; green, 0; blue, 0 }  ,draw opacity=1 ][fill={rgb, 255:red, 0; green, 0; blue, 0 }  ,fill opacity=1 ] (128.57,120.2) .. controls (128.57,119.3) and (129.3,118.57) .. (130.2,118.57) .. controls (131.1,118.57) and (131.83,119.3) .. (131.83,120.2) .. controls (131.83,121.1) and (131.1,121.83) .. (130.2,121.83) .. controls (129.3,121.83) and (128.57,121.1) .. (128.57,120.2) -- cycle ;
            
            \draw (148.6,97.6) node [anchor=north west][inner sep=0.75pt]    {$x_{3}$};
            \draw (188.6,136) node [anchor=north west][inner sep=0.75pt]    {$y_{4}$};
            \draw (107.4,135.4) node [anchor=north west][inner sep=0.75pt]    {$y_{3}$};
            \draw (150.6,175.6) node [anchor=north west][inner sep=0.75pt]    {$x_{4}$};
            
            \end{tikzpicture}
        \end{gathered}\right)
        &= \delta_{z,z'} B^h \left(
        \begin{gathered}  
            \begin{tikzpicture}[x=0.75pt,y=0.75pt,yscale=-1,xscale=1]
            
            \draw   (130.2,120.2) -- (180.2,120.2) -- (180.2,170.2) -- (130.2,170.2) -- cycle ;
            \draw   (150.72,115.36) .. controls (153.73,117.98) and (156.75,119.56) .. (159.76,120.08) .. controls (156.75,120.6) and (153.73,122.18) .. (150.72,124.8) ;
            \draw   (150.72,165.36) .. controls (153.73,167.98) and (156.75,169.56) .. (159.76,170.08) .. controls (156.75,170.6) and (153.73,172.18) .. (150.72,174.8) ;
            \draw   (175.32,149.4) .. controls (177.94,146.39) and (179.52,143.37) .. (180.04,140.36) .. controls (180.56,143.37) and (182.14,146.39) .. (184.76,149.4) ;
            \draw   (125.32,149.8) .. controls (127.94,146.79) and (129.52,143.77) .. (130.04,140.76) .. controls (130.56,143.77) and (132.14,146.79) .. (134.76,149.8) ;
            \draw  [color={rgb, 255:red, 0; green, 0; blue, 0 }  ,draw opacity=1 ][fill={rgb, 255:red, 0; green, 0; blue, 0 }  ,fill opacity=1 ] (128.57,120.2) .. controls (128.57,119.3) and (129.3,118.57) .. (130.2,118.57) .. controls (131.1,118.57) and (131.83,119.3) .. (131.83,120.2) .. controls (131.83,121.1) and (131.1,121.83) .. (130.2,121.83) .. controls (129.3,121.83) and (128.57,121.1) .. (128.57,120.2) -- cycle ;
            
            \draw (148.6,97.6) node [anchor=north west][inner sep=0.75pt]    {$x_{3}$};
            \draw (188.6,136) node [anchor=north west][inner sep=0.75pt]    {$y_{4}$};
            \draw (73,135.4) node [anchor=north west][inner sep=0.75pt]    {$y_{3}\overline x_1 v x_1$};
            \draw (150.6,175.6) node [anchor=north west][inner sep=0.75pt]    {$x_{4}$};
            
            \end{tikzpicture}
        \end{gathered}\right)\\
        &= \delta_{z,z'}\delta_{\overline h, \overline{z'} \overline v z' \overline y_3 x_4 \overline y_4 \overline x_3} \left(
        \begin{gathered}  
            \begin{tikzpicture}[x=0.75pt,y=0.75pt,yscale=-1,xscale=1]
            
            \draw   (130.2,120.2) -- (180.2,120.2) -- (180.2,170.2) -- (130.2,170.2) -- cycle ;
            \draw   (150.72,115.36) .. controls (153.73,117.98) and (156.75,119.56) .. (159.76,120.08) .. controls (156.75,120.6) and (153.73,122.18) .. (150.72,124.8) ;
            \draw   (150.72,165.36) .. controls (153.73,167.98) and (156.75,169.56) .. (159.76,170.08) .. controls (156.75,170.6) and (153.73,172.18) .. (150.72,174.8) ;
            \draw   (175.32,149.4) .. controls (177.94,146.39) and (179.52,143.37) .. (180.04,140.36) .. controls (180.56,143.37) and (182.14,146.39) .. (184.76,149.4) ;
            \draw   (125.32,149.8) .. controls (127.94,146.79) and (129.52,143.77) .. (130.04,140.76) .. controls (130.56,143.77) and (132.14,146.79) .. (134.76,149.8) ;
            \draw  [color={rgb, 255:red, 0; green, 0; blue, 0 }  ,draw opacity=1 ][fill={rgb, 255:red, 0; green, 0; blue, 0 }  ,fill opacity=1 ] (128.57,120.2) .. controls (128.57,119.3) and (129.3,118.57) .. (130.2,118.57) .. controls (131.1,118.57) and (131.83,119.3) .. (131.83,120.2) .. controls (131.83,121.1) and (131.1,121.83) .. (130.2,121.83) .. controls (129.3,121.83) and (128.57,121.1) .. (128.57,120.2) -- cycle ;
            
            \draw (148.6,97.6) node [anchor=north west][inner sep=0.75pt]    {$x_{3}$};
            \draw (188.6,136) node [anchor=north west][inner sep=0.75pt]    {$y_{4}$};
            \draw (73,135.4) node [anchor=north west][inner sep=0.75pt]    {$y_{3}\overline x_1 v x_1$};
            \draw (150.6,175.6) node [anchor=north west][inner sep=0.75pt]    {$x_{4}$};
            
            \end{tikzpicture}
        \end{gathered}\right)
    \end{aligned}
    \end{equation*}

    On the right hand side, 
    \begin{equation*}
    \begin{aligned}
        F^{(\zeta,\omega)} (t) B^\eta \left(
        \begin{gathered}  
            \begin{tikzpicture}[x=0.75pt,y=0.75pt,yscale=-1,xscale=1]
            
            \draw   (130.2,120.2) -- (180.2,120.2) -- (180.2,170.2) -- (130.2,170.2) -- cycle ;
            \draw   (150.72,115.36) .. controls (153.73,117.98) and (156.75,119.56) .. (159.76,120.08) .. controls (156.75,120.6) and (153.73,122.18) .. (150.72,124.8) ;
            \draw   (150.72,165.36) .. controls (153.73,167.98) and (156.75,169.56) .. (159.76,170.08) .. controls (156.75,170.6) and (153.73,172.18) .. (150.72,174.8) ;
            \draw   (175.32,149.4) .. controls (177.94,146.39) and (179.52,143.37) .. (180.04,140.36) .. controls (180.56,143.37) and (182.14,146.39) .. (184.76,149.4) ;
            \draw   (125.32,149.8) .. controls (127.94,146.79) and (129.52,143.77) .. (130.04,140.76) .. controls (130.56,143.77) and (132.14,146.79) .. (134.76,149.8) ;
            \draw  [color={rgb, 255:red, 0; green, 0; blue, 0 }  ,draw opacity=1 ][fill={rgb, 255:red, 0; green, 0; blue, 0 }  ,fill opacity=1 ] (128.57,120.2) .. controls (128.57,119.3) and (129.3,118.57) .. (130.2,118.57) .. controls (131.1,118.57) and (131.83,119.3) .. (131.83,120.2) .. controls (131.83,121.1) and (131.1,121.83) .. (130.2,121.83) .. controls (129.3,121.83) and (128.57,121.1) .. (128.57,120.2) -- cycle ;
            
            \draw (148.6,97.6) node [anchor=north west][inner sep=0.75pt]    {$x_{3}$};
            \draw (188.6,136) node [anchor=north west][inner sep=0.75pt]    {$y_{4}$};
            \draw (107.4,135.4) node [anchor=north west][inner sep=0.75pt]    {$y_{3}$};
            \draw (150.6,175.6) node [anchor=north west][inner sep=0.75pt]    {$x_{4}$};
            
            \end{tikzpicture}
        \end{gathered}\right)
        &= \delta_{\overline \eta, y_3x_4\overline y_4\overline x_3} F^{(\zeta,\omega)}(t) \left(
        \begin{gathered}  
            \begin{tikzpicture}[x=0.75pt,y=0.75pt,yscale=-1,xscale=1]
            
            \draw   (130.2,120.2) -- (180.2,120.2) -- (180.2,170.2) -- (130.2,170.2) -- cycle ;
            \draw   (150.72,115.36) .. controls (153.73,117.98) and (156.75,119.56) .. (159.76,120.08) .. controls (156.75,120.6) and (153.73,122.18) .. (150.72,124.8) ;
            \draw   (150.72,165.36) .. controls (153.73,167.98) and (156.75,169.56) .. (159.76,170.08) .. controls (156.75,170.6) and (153.73,172.18) .. (150.72,174.8) ;
            \draw   (175.32,149.4) .. controls (177.94,146.39) and (179.52,143.37) .. (180.04,140.36) .. controls (180.56,143.37) and (182.14,146.39) .. (184.76,149.4) ;
            \draw   (125.32,149.8) .. controls (127.94,146.79) and (129.52,143.77) .. (130.04,140.76) .. controls (130.56,143.77) and (132.14,146.79) .. (134.76,149.8) ;
            \draw  [color={rgb, 255:red, 0; green, 0; blue, 0 }  ,draw opacity=1 ][fill={rgb, 255:red, 0; green, 0; blue, 0 }  ,fill opacity=1 ] (128.57,120.2) .. controls (128.57,119.3) and (129.3,118.57) .. (130.2,118.57) .. controls (131.1,118.57) and (131.83,119.3) .. (131.83,120.2) .. controls (131.83,121.1) and (131.1,121.83) .. (130.2,121.83) .. controls (129.3,121.83) and (128.57,121.1) .. (128.57,120.2) -- cycle ;
            
            \draw (148.6,97.6) node [anchor=north west][inner sep=0.75pt]    {$x_{3}$};
            \draw (188.6,136) node [anchor=north west][inner sep=0.75pt]    {$y_{4}$};
            \draw (107.4,135.4) node [anchor=north west][inner sep=0.75pt]    {$y_{3}$};
            \draw (150.6,175.6) node [anchor=north west][inner sep=0.75pt]    {$x_{4}$};
            
            \end{tikzpicture}
        \end{gathered}\right)\\
        &= \delta_{\zeta,z'}\delta_{\overline \eta, y_3x_4\overline y_4\overline x_3} \left(
        \begin{gathered}  
            \begin{tikzpicture}[x=0.75pt,y=0.75pt,yscale=-1,xscale=1]
            
            \draw   (130.2,120.2) -- (180.2,120.2) -- (180.2,170.2) -- (130.2,170.2) -- cycle ;
            \draw   (150.72,115.36) .. controls (153.73,117.98) and (156.75,119.56) .. (159.76,120.08) .. controls (156.75,120.6) and (153.73,122.18) .. (150.72,124.8) ;
            \draw   (150.72,165.36) .. controls (153.73,167.98) and (156.75,169.56) .. (159.76,170.08) .. controls (156.75,170.6) and (153.73,172.18) .. (150.72,174.8) ;
            \draw   (175.32,149.4) .. controls (177.94,146.39) and (179.52,143.37) .. (180.04,140.36) .. controls (180.56,143.37) and (182.14,146.39) .. (184.76,149.4) ;
            \draw   (125.32,149.8) .. controls (127.94,146.79) and (129.52,143.77) .. (130.04,140.76) .. controls (130.56,143.77) and (132.14,146.79) .. (134.76,149.8) ;
            \draw  [color={rgb, 255:red, 0; green, 0; blue, 0 }  ,draw opacity=1 ][fill={rgb, 255:red, 0; green, 0; blue, 0 }  ,fill opacity=1 ] (128.57,120.2) .. controls (128.57,119.3) and (129.3,118.57) .. (130.2,118.57) .. controls (131.1,118.57) and (131.83,119.3) .. (131.83,120.2) .. controls (131.83,121.1) and (131.1,121.83) .. (130.2,121.83) .. controls (129.3,121.83) and (128.57,121.1) .. (128.57,120.2) -- cycle ;
            
            \draw (148.6,97.6) node [anchor=north west][inner sep=0.75pt]    {$x_{3}$};
            \draw (188.6,136) node [anchor=north west][inner sep=0.75pt]    {$y_{4}$};
            \draw (73,135.4) node [anchor=north west][inner sep=0.75pt]    {$y_{3}\overline x_1 \omega x_1$};
            \draw (150.6,175.6) node [anchor=north west][inner sep=0.75pt]    {$x_{4}$};
            
            \end{tikzpicture}
        \end{gathered}\right)
    \end{aligned}
    \end{equation*}

    By comparing like terms, we see that $\zeta=z,\omega=v, \eta=\overline z v z h$. 

\end{proof}

\textbf{2. Counterclockwise local orientation}

Consider a ribbon with counterclockwise local orientation as shown in Fig. \ref{fig:counterclockwise ribbon for plaquette}. 

\begin{figure}[h!]
    \centering
    \includegraphics[width=0.7\textwidth]{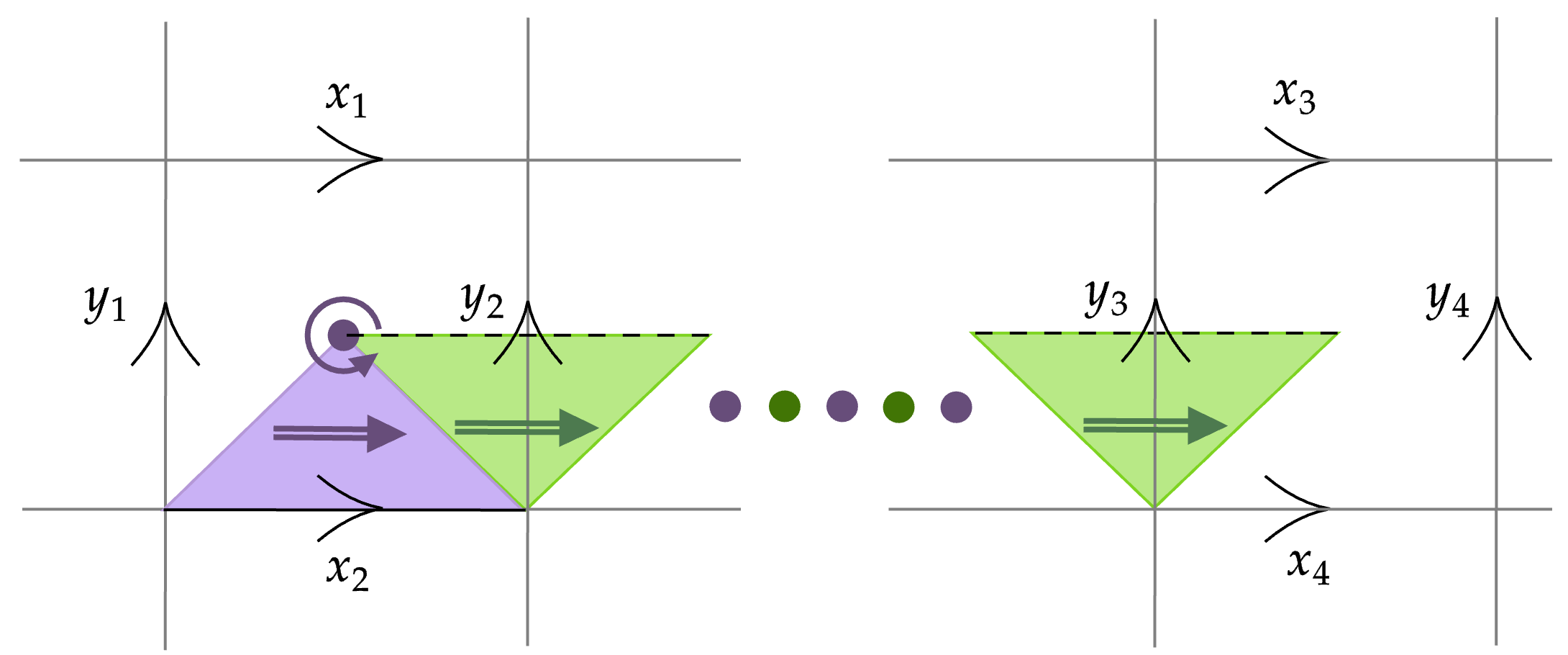}
    \caption{A ribbon operator with counterclockwise local orientation.}
    \label{fig:counterclockwise ribbon for plaquette}
\end{figure}

\begin{prop}[Plaquette operator at the start of a counterclockwise ribbon]

The commutation relation is
    \begin{equation}
        \quad B_{\mathrm{fl}}^{h}F^{(z, v)}(t)=F^{(z, v)}(t)B_{\mathrm{fl}}^{h v}.
    \end{equation}
\end{prop}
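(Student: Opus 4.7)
The plan is to mirror the clockwise computation just completed, applying both sides of the identity to an arbitrary basis state supported on a minimal ribbon and comparing the resulting Kronecker deltas. Without loss of generality I can take $t$ to consist of two triangles touching the starting plaquette, since the recursive definition in Eq.~\ref{eq: recursive ribbon} together with the disjointness of the remaining triangles from the starting plaquette reduces the general case to this local one.

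First I would pick the basis state with edges $(x_1,y_1,x_2,y_2)$ around the starting plaquette as in Fig.~\ref{fig:counterclockwise ribbon for plaquette}, and read off from Fig.~\ref{fig:triangle_operator} the precise action of each triangle operator for counterclockwise local orientation. The crucial point, and the only substantive difference from the clockwise proof in the preceding proposition, is that the counterclockwise dual triangle multiplies the dual edge from the \emph{opposite} side relative to the clockwise case. Consequently, conjugation of the starting-plaquette flux by $v$ is implemented on the opposite side of the $h$-label that $B^{h}_{\mathrm{fl}}$ enforces.

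Concretely, applying $F^{(z,v)}(t)$ first produces a Kronecker delta $\delta_{z,x_1}$ for the charge-projection piece, and replaces the appropriate dual edge by its counterclockwise-modified version. Applying $B^{h}_{\mathrm{fl}}$ afterwards yields a second delta constraining the starting-plaquette product, with $h$ now appearing multiplied by $v$ on the side dictated by the counterclockwise rule. Conversely, applying $F^{(z,v)}(t) B^{\eta}_{\mathrm{fl}}$ produces $\delta_{\zeta,x_1}$ together with a delta enforcing the \emph{unmodified} plaquette product equals $\bar\eta$. Comparing like terms forces $\zeta = z$ and pins $\eta$ uniquely; in the clockwise proposition the algebra forced $\eta = vh$, whereas the side-reversal intrinsic to counterclockwise orientation instead yields $\eta = hv$, which is exactly the claim.

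The main obstacle is bookkeeping rather than any conceptual difficulty: one must be meticulous about the left/right multiplication conventions in Fig.~\ref{fig:triangle_operator}, about the cyclic order in which $B^{h}_{\mathrm{fl}}$ multiplies edges around the starting plaquette (which itself depends on which vertex is taken as the basepoint), and about the alignment of each of the two triangles in the chosen minimal ribbon. Once these conventions are fixed, the equality of Kronecker-delta arguments follows by a single one-line manipulation $\bar h \, v^{-\text{(side)}} = \bar\eta$ solved for $\eta$, exactly analogous to the step $\bar h = \bar y_1 x_2 y_2 \bar x_1 v \Rightarrow \eta = vh$ in the clockwise proof but with multiplication on the other side giving $\eta = hv$.
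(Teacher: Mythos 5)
Your proposal takes essentially the same route as the paper: apply both sides of the claimed identity to a two-triangle minimal ribbon (justified by the recursive decomposition and disjointness), record the resulting Kronecker deltas, and solve for the transformed plaquette label by matching. The key observation—that the counterclockwise dual triangle multiplies the dual edge from the opposite side, so the $v$-insertion lands on the other side of the plaquette holonomy and the one-line algebra returns $\eta = hv$ rather than $\eta = vh$—is exactly the mechanism that drives the paper's computation: there $\bar h = v\,\bar\eta$ (counterclockwise) versus $\bar h = \bar\eta\, v$ (clockwise), giving $\eta = hv$ and $\eta = vh$ respectively.

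One small slip in your description: for the counterclockwise ribbon of Fig.~\ref{fig:counterclockwise ribbon for plaquette}, the ribbon's base point sits at a different corner of the plaquette than in the clockwise figure, so the direct edge traversed first is $x_2$, not $x_1$; the charge projection is therefore $\delta_{z,x_2}$. This does not affect the structure of your argument—it is exactly the kind of alignment/basepoint bookkeeping you flag as the real work—but it is worth noting that \emph{both} the side of the $v$-multiplication and the identity of the first direct edge change when the local orientation flips, so you cannot simply transplant the delta from the clockwise calculation.
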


\begin{proof}
        
For the left hand side,
    \begin{equation*}
    \begin{aligned}
        B^hF^{(z,v)}(t)
        \left(\begin{gathered}
            \begin{tikzpicture}[x=0.75pt,y=0.75pt,yscale=-1,xscale=1]
            
            \draw   (130.2,120.2) -- (180.2,120.2) -- (180.2,170.2) -- (130.2,170.2) -- cycle ;
            \draw   (150.72,115.36) .. controls (153.73,117.98) and (156.75,119.56) .. (159.76,120.08) .. controls (156.75,120.6) and (153.73,122.18) .. (150.72,124.8) ;
            \draw   (150.72,165.36) .. controls (153.73,167.98) and (156.75,169.56) .. (159.76,170.08) .. controls (156.75,170.6) and (153.73,172.18) .. (150.72,174.8) ;
            \draw   (175.32,149.4) .. controls (177.94,146.39) and (179.52,143.37) .. (180.04,140.36) .. controls (180.56,143.37) and (182.14,146.39) .. (184.76,149.4) ;
            \draw   (125.32,149.8) .. controls (127.94,146.79) and (129.52,143.77) .. (130.04,140.76) .. controls (130.56,143.77) and (132.14,146.79) .. (134.76,149.8) ;
            \draw  [color={rgb, 255:red, 0; green, 0; blue, 0 }  ,draw opacity=1 ][fill={rgb, 255:red, 0; green, 0; blue, 0 }  ,fill opacity=1 ] (128.57,170.2) .. controls (128.57,169.3) and (129.3,168.57) .. (130.2,168.57) .. controls (131.1,168.57) and (131.83,169.3) .. (131.83,170.2) .. controls (131.83,171.1) and (131.1,171.83) .. (130.2,171.83) .. controls (129.3,171.83) and (128.57,171.1) .. (128.57,170.2) -- cycle ;
            
            \draw (148.6,103) node [anchor=north west][inner sep=0.75pt]    {$x_{1}$};
            \draw (188.6,136) node [anchor=north west][inner sep=0.75pt]    {$y_{2}$};
            \draw (107.4,135.4) node [anchor=north west][inner sep=0.75pt]    {$y_{1}$};
            \draw (150.6,175.6) node [anchor=north west][inner sep=0.75pt]    {$x_{2}$};
            \end{tikzpicture}
            \end{gathered}
        \right)
        &=\delta_{z,x_2} B^h
        \left(\begin{gathered}
            \begin{tikzpicture}[x=0.75pt,y=0.75pt,yscale=-1,xscale=1]
            
            \draw   (130.2,120.2) -- (180.2,120.2) -- (180.2,170.2) -- (130.2,170.2) -- cycle ;
            \draw   (150.72,115.36) .. controls (153.73,117.98) and (156.75,119.56) .. (159.76,120.08) .. controls (156.75,120.6) and (153.73,122.18) .. (150.72,124.8) ;
            \draw   (150.72,165.36) .. controls (153.73,167.98) and (156.75,169.56) .. (159.76,170.08) .. controls (156.75,170.6) and (153.73,172.18) .. (150.72,174.8) ;
            \draw   (175.32,149.4) .. controls (177.94,146.39) and (179.52,143.37) .. (180.04,140.36) .. controls (180.56,143.37) and (182.14,146.39) .. (184.76,149.4) ;
            \draw   (125.32,149.8) .. controls (127.94,146.79) and (129.52,143.77) .. (130.04,140.76) .. controls (130.56,143.77) and (132.14,146.79) .. (134.76,149.8) ;
            \draw  [color={rgb, 255:red, 0; green, 0; blue, 0 }  ,draw opacity=1 ][fill={rgb, 255:red, 0; green, 0; blue, 0 }  ,fill opacity=1 ] (128.57,170.2) .. controls (128.57,169.3) and (129.3,168.57) .. (130.2,168.57) .. controls (131.1,168.57) and (131.83,169.3) .. (131.83,170.2) .. controls (131.83,171.1) and (131.1,171.83) .. (130.2,171.83) .. controls (129.3,171.83) and (128.57,171.1) .. (128.57,170.2) -- cycle ;
            
            \draw (148.6,103) node [anchor=north west][inner sep=0.75pt]    {$x_{1}$};
            \draw (188.6,136) node [anchor=north west][inner sep=0.75pt]    {$\bar x_2 v x_2 y_2$};
            \draw (107.4,135.4) node [anchor=north west][inner sep=0.75pt]    {$y_{1}$};
            \draw (150.6,175.6) node [anchor=north west][inner sep=0.75pt]    {$x_{2}$};
            \end{tikzpicture}
            \end{gathered}
        \right)\\
        &=\delta_{z,x_2}\delta_{\bar h, vx_2y_2\bar x_1\bar y_1}
        \left(\begin{gathered}
            \begin{tikzpicture}[x=0.75pt,y=0.75pt,yscale=-1,xscale=1]
            
            \draw   (130.2,120.2) -- (180.2,120.2) -- (180.2,170.2) -- (130.2,170.2) -- cycle ;
            \draw   (150.72,115.36) .. controls (153.73,117.98) and (156.75,119.56) .. (159.76,120.08) .. controls (156.75,120.6) and (153.73,122.18) .. (150.72,124.8) ;
            \draw   (150.72,165.36) .. controls (153.73,167.98) and (156.75,169.56) .. (159.76,170.08) .. controls (156.75,170.6) and (153.73,172.18) .. (150.72,174.8) ;
            \draw   (175.32,149.4) .. controls (177.94,146.39) and (179.52,143.37) .. (180.04,140.36) .. controls (180.56,143.37) and (182.14,146.39) .. (184.76,149.4) ;
            \draw   (125.32,149.8) .. controls (127.94,146.79) and (129.52,143.77) .. (130.04,140.76) .. controls (130.56,143.77) and (132.14,146.79) .. (134.76,149.8) ;
            \draw  [color={rgb, 255:red, 0; green, 0; blue, 0 }  ,draw opacity=1 ][fill={rgb, 255:red, 0; green, 0; blue, 0 }  ,fill opacity=1 ] (128.57,170.2) .. controls (128.57,169.3) and (129.3,168.57) .. (130.2,168.57) .. controls (131.1,168.57) and (131.83,169.3) .. (131.83,170.2) .. controls (131.83,171.1) and (131.1,171.83) .. (130.2,171.83) .. controls (129.3,171.83) and (128.57,171.1) .. (128.57,170.2) -- cycle ;
            
            \draw (148.6,103) node [anchor=north west][inner sep=0.75pt]    {$x_{1}$};
            \draw (188.6,136) node [anchor=north west][inner sep=0.75pt]    {$\bar x_2 v x_2 y_2$};
            \draw (107.4,135.4) node [anchor=north west][inner sep=0.75pt]    {$y_{1}$};
            \draw (150.6,175.6) node [anchor=north west][inner sep=0.75pt]    {$x_{2}$};
            \end{tikzpicture}
            \end{gathered}
        \right)
    \end{aligned}
    \end{equation*}

For the right hand side, we assume a different set of variables ($\zeta$ for charge, $\omega$ for local flux, and $\eta$ for the flux variable of the plaquette operator) and determine their values.
    \begin{equation*}
    \begin{aligned}
        F^{(\zeta,\omega)}(t) B^{\eta}
        \left(\begin{gathered}
            \begin{tikzpicture}[x=0.75pt,y=0.75pt,yscale=-1,xscale=1]
            
            \draw   (130.2,120.2) -- (180.2,120.2) -- (180.2,170.2) -- (130.2,170.2) -- cycle ;
            \draw   (150.72,115.36) .. controls (153.73,117.98) and (156.75,119.56) .. (159.76,120.08) .. controls (156.75,120.6) and (153.73,122.18) .. (150.72,124.8) ;
            \draw   (150.72,165.36) .. controls (153.73,167.98) and (156.75,169.56) .. (159.76,170.08) .. controls (156.75,170.6) and (153.73,172.18) .. (150.72,174.8) ;
            \draw   (175.32,149.4) .. controls (177.94,146.39) and (179.52,143.37) .. (180.04,140.36) .. controls (180.56,143.37) and (182.14,146.39) .. (184.76,149.4) ;
            \draw   (125.32,149.8) .. controls (127.94,146.79) and (129.52,143.77) .. (130.04,140.76) .. controls (130.56,143.77) and (132.14,146.79) .. (134.76,149.8) ;
            \draw  [color={rgb, 255:red, 0; green, 0; blue, 0 }  ,draw opacity=1 ][fill={rgb, 255:red, 0; green, 0; blue, 0 }  ,fill opacity=1 ] (128.57,170.2) .. controls (128.57,169.3) and (129.3,168.57) .. (130.2,168.57) .. controls (131.1,168.57) and (131.83,169.3) .. (131.83,170.2) .. controls (131.83,171.1) and (131.1,171.83) .. (130.2,171.83) .. controls (129.3,171.83) and (128.57,171.1) .. (128.57,170.2) -- cycle ;
            
            \draw (148.6,103) node [anchor=north west][inner sep=0.75pt]    {$x_{1}$};
            \draw (188.6,136) node [anchor=north west][inner sep=0.75pt]    {$y_{2}$};
            \draw (107.4,135.4) node [anchor=north west][inner sep=0.75pt]    {$y_{1}$};
            \draw (150.6,175.6) node [anchor=north west][inner sep=0.75pt]    {$x_{2}$};
            \end{tikzpicture}
            \end{gathered}
        \right)
        &=\delta_{\bar \eta,x_2 y_2 \bar x_1 \bar y_1} F^{(\zeta,\omega)}(t)
        \left(\begin{gathered}
            \begin{tikzpicture}[x=0.75pt,y=0.75pt,yscale=-1,xscale=1]
            
            \draw   (130.2,120.2) -- (180.2,120.2) -- (180.2,170.2) -- (130.2,170.2) -- cycle ;
            \draw   (150.72,115.36) .. controls (153.73,117.98) and (156.75,119.56) .. (159.76,120.08) .. controls (156.75,120.6) and (153.73,122.18) .. (150.72,124.8) ;
            \draw   (150.72,165.36) .. controls (153.73,167.98) and (156.75,169.56) .. (159.76,170.08) .. controls (156.75,170.6) and (153.73,172.18) .. (150.72,174.8) ;
            \draw   (175.32,149.4) .. controls (177.94,146.39) and (179.52,143.37) .. (180.04,140.36) .. controls (180.56,143.37) and (182.14,146.39) .. (184.76,149.4) ;
            \draw   (125.32,149.8) .. controls (127.94,146.79) and (129.52,143.77) .. (130.04,140.76) .. controls (130.56,143.77) and (132.14,146.79) .. (134.76,149.8) ;
            \draw  [color={rgb, 255:red, 0; green, 0; blue, 0 }  ,draw opacity=1 ][fill={rgb, 255:red, 0; green, 0; blue, 0 }  ,fill opacity=1 ] (128.57,170.2) .. controls (128.57,169.3) and (129.3,168.57) .. (130.2,168.57) .. controls (131.1,168.57) and (131.83,169.3) .. (131.83,170.2) .. controls (131.83,171.1) and (131.1,171.83) .. (130.2,171.83) .. controls (129.3,171.83) and (128.57,171.1) .. (128.57,170.2) -- cycle ;
            
            \draw (148.6,103) node [anchor=north west][inner sep=0.75pt]    {$x_{1}$};
            \draw (188.6,136) node [anchor=north west][inner sep=0.75pt]    {$y_{2}$};
            \draw (107.4,135.4) node [anchor=north west][inner sep=0.75pt]    {$y_{1}$};
            \draw (150.6,175.6) node [anchor=north west][inner sep=0.75pt]    {$x_{2}$};
            \end{tikzpicture}
            \end{gathered}
        \right)\\
        &=\delta_{\zeta,x_2}\delta_{\bar \eta, x_2 y_2\bar x_1\bar y_1}
        \left(\begin{gathered}
            \begin{tikzpicture}[x=0.75pt,y=0.75pt,yscale=-1,xscale=1]
            
            \draw   (130.2,120.2) -- (180.2,120.2) -- (180.2,170.2) -- (130.2,170.2) -- cycle ;
            \draw   (150.72,115.36) .. controls (153.73,117.98) and (156.75,119.56) .. (159.76,120.08) .. controls (156.75,120.6) and (153.73,122.18) .. (150.72,124.8) ;
            \draw   (150.72,165.36) .. controls (153.73,167.98) and (156.75,169.56) .. (159.76,170.08) .. controls (156.75,170.6) and (153.73,172.18) .. (150.72,174.8) ;
            \draw   (175.32,149.4) .. controls (177.94,146.39) and (179.52,143.37) .. (180.04,140.36) .. controls (180.56,143.37) and (182.14,146.39) .. (184.76,149.4) ;
            \draw   (125.32,149.8) .. controls (127.94,146.79) and (129.52,143.77) .. (130.04,140.76) .. controls (130.56,143.77) and (132.14,146.79) .. (134.76,149.8) ;
            \draw  [color={rgb, 255:red, 0; green, 0; blue, 0 }  ,draw opacity=1 ][fill={rgb, 255:red, 0; green, 0; blue, 0 }  ,fill opacity=1 ] (128.57,170.2) .. controls (128.57,169.3) and (129.3,168.57) .. (130.2,168.57) .. controls (131.1,168.57) and (131.83,169.3) .. (131.83,170.2) .. controls (131.83,171.1) and (131.1,171.83) .. (130.2,171.83) .. controls (129.3,171.83) and (128.57,171.1) .. (128.57,170.2) -- cycle ;
            
            \draw (148.6,103) node [anchor=north west][inner sep=0.75pt]    {$x_{1}$};
            \draw (188.6,136) node [anchor=north west][inner sep=0.75pt]    {$\bar x_2 \omega x_2 y_2$};
            \draw (107.4,135.4) node [anchor=north west][inner sep=0.75pt]    {$y_{1}$};
            \draw (150.6,175.6) node [anchor=north west][inner sep=0.75pt]    {$x_{2}$};
            \end{tikzpicture}
            \end{gathered}
        \right)
    \end{aligned}
    \end{equation*}

    By comparing like terms, we see that $\zeta=z,\omega=v, \eta=hv$. 
\end{proof}

\begin{prop}[Plaquette operator at the end of a counterclockwise ribbon]

The commutation relation is 

\begin{equation}
    \quad B_{\mathrm{col}}^{h}F^{(z,v)}(t)=
        F^{(z, v)}(t)B_{\mathrm{col}}^{\bar{z} \bar{v} z \bar h}
\end{equation}
    
\end{prop}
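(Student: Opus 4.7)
The plan is to mirror the structure of the three preceding propositions in this appendix: set up a specific local configuration at the end of a counterclockwise ribbon, compute $B^h F^{(z,v)}(t)$ and $F^{(\zeta,\omega)}(t) B^\eta$ acting on the same edge configuration, and then match Kronecker deltas to pin down $\zeta,\omega,\eta$.

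First I would fix a generic ribbon $t$ of arbitrary length with counterclockwise local orientation whose last two triangles coincide with those in Fig. \ref{fig:counterclockwise ribbon for plaquette}, but with the dot (origin, color end) placed at the end vertex of the last dual triangle rather than the start. I will denote the product of direct edges along $t$ by $z'$, so that the charge Kronecker delta coming from the full ribbon is $\delta_{z,z'}$. The essential simplification, as emphasized in the clockwise case, is that elongating the ribbon does not affect its commutation with a plaquette operator at the opposite end, so I only need to keep track of the action of the last dual triangle (which for counterclockwise local orientation is aligned with orientation opposite to the ribbon direction, producing a right-multiplication by $\bar x\, v\, x$ on the relevant dual edge, where $x$ is the running product of direct edges).

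Next, on the left-hand side I would first apply $F^{(z,v)}(t)$, which projects the direct-edge product along $t$ onto $z$ and dresses the final dual edge according to the counterclockwise rule. Then I apply $B^h_{\mathrm{col}}$ at the end plaquette; crucially, the loop defining this plaquette operator must start at the end vertex of the ribbon (which is the origin), and be read going around the plaquette in the appropriate direction, so it measures the topological flux $\bar z v z$ created by the ribbon plus any additional flux $h$. On the right-hand side I would apply $B^\eta_{\mathrm{col}}$ first, then the ribbon $F^{(\zeta,\omega)}(t)$. Collecting Kronecker deltas on both sides gives two constraints: the charge constraint forces $\zeta = z$ (hence $\omega = v$), and the flux constraint reads, after moving the ribbon's dressing of the dual edge through the plaquette product, $\eta = \bar z \bar v z \bar h$, which is exactly the claimed commutation relation.

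The main obstacle I anticipate is purely bookkeeping: getting the order of group multiplications around the closed loop correct at the origin. Because $G$ is non-Abelian and the plaquette operator reads edges around a loop starting from a specific vertex, the conjugation by $\bar z$ (which accounts for transporting the ribbon's dressing of the dual edge back to the base point around the plaquette in the opposite direction from the clockwise case) is what flips $\bar z v z h$ in the clockwise proof to $\bar z \bar v z \bar h$ here. I will check this by carrying out the explicit edge-label calculation on the same $2{\times}2$ tile diagram used in the previous proposition, being careful that the dot marking the base point is moved to the counterclockwise end and that the plaquette's loop traversal direction is reversed accordingly. Once that ordering is established, the rest is formal manipulation of Kronecker deltas as in the earlier three cases.
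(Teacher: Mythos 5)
Your proposal takes the same approach as the paper's own proof: set up the generic edge configuration at the end of a counterclockwise ribbon (Fig.~\ref{fig:counterclockwise ribbon for plaquette}), write $z'$ for the product of direct edges, apply $B^h F^{(z,v)}$ and $F^{(\zeta,\omega)}B^\eta$ to the same configuration, and equate the Kronecker deltas to read off $\zeta=z$, $\omega=v$, $\eta=\bar z\bar v z\bar h$. One small correction to your sketch: in the paper's constructive algorithm the right-multiplication performed by a dual triangle is by $\bar x\,\bar v\,x$ (inverse of $v$), not $\bar x\,v\,x$; this is precisely the sign flip (together with the reversed plaquette reading order) that turns the clockwise result $\bar z v z h$ into $\bar z\bar v z\bar h$, so you want to carry it correctly through the bookkeeping you flag as the main obstacle.
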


\begin{proof}
    Consider a ribbon of arbitrary length which ends with the dual triangle operators as shown in Fig. \ref{fig:counterclockwise ribbon for plaquette}. Let $z'$ be the product of direct edges along the ribbon.

    On the left hand side,
    \begin{equation*}
    \begin{aligned}
        B^h F^{(z,v)} (t) \left(
        \begin{gathered}  
            \begin{tikzpicture}[x=0.75pt,y=0.75pt,yscale=-1,xscale=1]
            
            \draw   (130.2,120.2) -- (180.2,120.2) -- (180.2,170.2) -- (130.2,170.2) -- cycle ;
            \draw   (150.72,115.36) .. controls (153.73,117.98) and (156.75,119.56) .. (159.76,120.08) .. controls (156.75,120.6) and (153.73,122.18) .. (150.72,124.8) ;
            \draw   (150.72,165.36) .. controls (153.73,167.98) and (156.75,169.56) .. (159.76,170.08) .. controls (156.75,170.6) and (153.73,172.18) .. (150.72,174.8) ;
            \draw   (175.32,149.4) .. controls (177.94,146.39) and (179.52,143.37) .. (180.04,140.36) .. controls (180.56,143.37) and (182.14,146.39) .. (184.76,149.4) ;
            \draw   (125.32,149.8) .. controls (127.94,146.79) and (129.52,143.77) .. (130.04,140.76) .. controls (130.56,143.77) and (132.14,146.79) .. (134.76,149.8) ;
            \draw  [color={rgb, 255:red, 0; green, 0; blue, 0 }  ,draw opacity=1 ][fill={rgb, 255:red, 0; green, 0; blue, 0 }  ,fill opacity=1 ] (128.57,170.2) .. controls (128.57,169.3) and (129.3,168.57) .. (130.2,168.57) .. controls (131.1,168.57) and (131.83,169.3) .. (131.83,170.2) .. controls (131.83,171.1) and (131.1,171.83) .. (130.2,171.83) .. controls (129.3,171.83) and (128.57,171.1) .. (128.57,170.2) -- cycle ;
            
            \draw (148.6,97.6) node [anchor=north west][inner sep=0.75pt]    {$x_{3}$};
            \draw (188.6,136) node [anchor=north west][inner sep=0.75pt]    {$y_{3}$};
            \draw (107.4,135.4) node [anchor=north west][inner sep=0.75pt]    {$y_{2}$};
            \draw (150.6,175.6) node [anchor=north west][inner sep=0.75pt]    {$x_{4}$};
            
            \end{tikzpicture}
        \end{gathered}\right)
        &= \delta_{z,z'} B^h \left(
        \begin{gathered}  
            \begin{tikzpicture}[x=0.75pt,y=0.75pt,yscale=-1,xscale=1]
            
            \draw   (130.2,120.2) -- (180.2,120.2) -- (180.2,170.2) -- (130.2,170.2) -- cycle ;
            \draw   (150.72,115.36) .. controls (153.73,117.98) and (156.75,119.56) .. (159.76,120.08) .. controls (156.75,120.6) and (153.73,122.18) .. (150.72,124.8) ;
            \draw   (150.72,165.36) .. controls (153.73,167.98) and (156.75,169.56) .. (159.76,170.08) .. controls (156.75,170.6) and (153.73,172.18) .. (150.72,174.8) ;
            \draw   (175.32,149.4) .. controls (177.94,146.39) and (179.52,143.37) .. (180.04,140.36) .. controls (180.56,143.37) and (182.14,146.39) .. (184.76,149.4) ;
            \draw   (125.32,149.8) .. controls (127.94,146.79) and (129.52,143.77) .. (130.04,140.76) .. controls (130.56,143.77) and (132.14,146.79) .. (134.76,149.8) ;
            \draw  [color={rgb, 255:red, 0; green, 0; blue, 0 }  ,draw opacity=1 ][fill={rgb, 255:red, 0; green, 0; blue, 0 }  ,fill opacity=1 ] (128.57,170.2) .. controls (128.57,169.3) and (129.3,168.57) .. (130.2,168.57) .. controls (131.1,168.57) and (131.83,169.3) .. (131.83,170.2) .. controls (131.83,171.1) and (131.1,171.83) .. (130.2,171.83) .. controls (129.3,171.83) and (128.57,171.1) .. (128.57,170.2) -- cycle ;
            
            \draw (148.6,97.6) node [anchor=north west][inner sep=0.75pt]    {$x_{3}$};
            \draw (188.6,136) node [anchor=north west][inner sep=0.75pt]    {$y_{3}$};
            \draw (73,135.4) node [anchor=north west][inner sep=0.75pt]    {$\overline{z}' v z' y_{2}$};
            \draw (150.6,175.6) node [anchor=north west][inner sep=0.75pt]    {$x_{4}$};
            
            \end{tikzpicture}
        \end{gathered}\right)\\
        &= \delta_{z,z'}\delta_{\overline h, \overline{z'} \overline v z' \overline y_2 x_4 \overline y_3 \overline x_3} \left(
        \begin{gathered}  
            \begin{tikzpicture}[x=0.75pt,y=0.75pt,yscale=-1,xscale=1]
            
            \draw   (130.2,120.2) -- (180.2,120.2) -- (180.2,170.2) -- (130.2,170.2) -- cycle ;
            \draw   (150.72,115.36) .. controls (153.73,117.98) and (156.75,119.56) .. (159.76,120.08) .. controls (156.75,120.6) and (153.73,122.18) .. (150.72,124.8) ;
            \draw   (150.72,165.36) .. controls (153.73,167.98) and (156.75,169.56) .. (159.76,170.08) .. controls (156.75,170.6) and (153.73,172.18) .. (150.72,174.8) ;
            \draw   (175.32,149.4) .. controls (177.94,146.39) and (179.52,143.37) .. (180.04,140.36) .. controls (180.56,143.37) and (182.14,146.39) .. (184.76,149.4) ;
            \draw   (125.32,149.8) .. controls (127.94,146.79) and (129.52,143.77) .. (130.04,140.76) .. controls (130.56,143.77) and (132.14,146.79) .. (134.76,149.8) ;
            \draw  [color={rgb, 255:red, 0; green, 0; blue, 0 }  ,draw opacity=1 ][fill={rgb, 255:red, 0; green, 0; blue, 0 }  ,fill opacity=1 ] (128.57,170.2) .. controls (128.57,169.3) and (129.3,168.57) .. (130.2,168.57) .. controls (131.1,168.57) and (131.83,169.3) .. (131.83,170.2) .. controls (131.83,171.1) and (131.1,171.83) .. (130.2,171.83) .. controls (129.3,171.83) and (128.57,171.1) .. (128.57,170.2) -- cycle ;
            
            \draw (148.6,97.6) node [anchor=north west][inner sep=0.75pt]    {$x_{3}$};
            \draw (188.6,136) node [anchor=north west][inner sep=0.75pt]    {$y_{3}$};
            \draw (73,135.4) node [anchor=north west][inner sep=0.75pt]    {$y_{2}\overline x_1 v x_1$};
            \draw (150.6,175.6) node [anchor=north west][inner sep=0.75pt]    {$x_{4}$};
            
            \end{tikzpicture}
        \end{gathered}\right)
    \end{aligned}
    \end{equation*}

    On the right hand side, 
    \begin{equation*}
    \begin{aligned}
        F^{(\zeta,\omega)} (t) B^\eta \left(
        \begin{gathered}  
            \begin{tikzpicture}[x=0.75pt,y=0.75pt,yscale=-1,xscale=1]
            
            \draw   (130.2,120.2) -- (180.2,120.2) -- (180.2,170.2) -- (130.2,170.2) -- cycle ;
            \draw   (150.72,115.36) .. controls (153.73,117.98) and (156.75,119.56) .. (159.76,120.08) .. controls (156.75,120.6) and (153.73,122.18) .. (150.72,124.8) ;
            \draw   (150.72,165.36) .. controls (153.73,167.98) and (156.75,169.56) .. (159.76,170.08) .. controls (156.75,170.6) and (153.73,172.18) .. (150.72,174.8) ;
            \draw   (175.32,149.4) .. controls (177.94,146.39) and (179.52,143.37) .. (180.04,140.36) .. controls (180.56,143.37) and (182.14,146.39) .. (184.76,149.4) ;
            \draw   (125.32,149.8) .. controls (127.94,146.79) and (129.52,143.77) .. (130.04,140.76) .. controls (130.56,143.77) and (132.14,146.79) .. (134.76,149.8) ;
            \draw  [color={rgb, 255:red, 0; green, 0; blue, 0 }  ,draw opacity=1 ][fill={rgb, 255:red, 0; green, 0; blue, 0 }  ,fill opacity=1 ] (128.57,170.2) .. controls (128.57,169.3) and (129.3,168.57) .. (130.2,168.57) .. controls (131.1,168.57) and (131.83,169.3) .. (131.83,170.2) .. controls (131.83,171.1) and (131.1,171.83) .. (130.2,171.83) .. controls (129.3,171.83) and (128.57,171.1) .. (128.57,170.2) -- cycle ;
            
            \draw (148.6,97.6) node [anchor=north west][inner sep=0.75pt]    {$x_{3}$};
            \draw (188.6,136) node [anchor=north west][inner sep=0.75pt]    {$y_{3}$};
            \draw (107.4,135.4) node [anchor=north west][inner sep=0.75pt]    {$y_{2}$};
            \draw (150.6,175.6) node [anchor=north west][inner sep=0.75pt]    {$x_{4}$};
            
            \end{tikzpicture}
        \end{gathered}\right)
        &= \delta_{\overline \eta, y_2x_4\overline y_3\overline x_3} F^{(\zeta,\omega)}(t) \left(
        \begin{gathered}  
            \begin{tikzpicture}[x=0.75pt,y=0.75pt,yscale=-1,xscale=1]
            
            \draw   (130.2,120.2) -- (180.2,120.2) -- (180.2,170.2) -- (130.2,170.2) -- cycle ;
            \draw   (150.72,115.36) .. controls (153.73,117.98) and (156.75,119.56) .. (159.76,120.08) .. controls (156.75,120.6) and (153.73,122.18) .. (150.72,124.8) ;
            \draw   (150.72,165.36) .. controls (153.73,167.98) and (156.75,169.56) .. (159.76,170.08) .. controls (156.75,170.6) and (153.73,172.18) .. (150.72,174.8) ;
            \draw   (175.32,149.4) .. controls (177.94,146.39) and (179.52,143.37) .. (180.04,140.36) .. controls (180.56,143.37) and (182.14,146.39) .. (184.76,149.4) ;
            \draw   (125.32,149.8) .. controls (127.94,146.79) and (129.52,143.77) .. (130.04,140.76) .. controls (130.56,143.77) and (132.14,146.79) .. (134.76,149.8) ;
            \draw  [color={rgb, 255:red, 0; green, 0; blue, 0 }  ,draw opacity=1 ][fill={rgb, 255:red, 0; green, 0; blue, 0 }  ,fill opacity=1 ] (128.57,170.2) .. controls (128.57,169.3) and (129.3,168.57) .. (130.2,168.57) .. controls (131.1,168.57) and (131.83,169.3) .. (131.83,170.2) .. controls (131.83,171.1) and (131.1,171.83) .. (130.2,171.83) .. controls (129.3,171.83) and (128.57,171.1) .. (128.57,170.2) -- cycle ;
            
            \draw (148.6,97.6) node [anchor=north west][inner sep=0.75pt]    {$x_{3}$};
            \draw (188.6,136) node [anchor=north west][inner sep=0.75pt]    {$y_{3}$};
            \draw (107.4,135.4) node [anchor=north west][inner sep=0.75pt]    {$y_{2}$};
            \draw (150.6,175.6) node [anchor=north west][inner sep=0.75pt]    {$x_{4}$};
            
            \end{tikzpicture}
        \end{gathered}\right)\\
        &= \delta_{\zeta,z'}\delta_{\overline \eta, y_2x_4\overline y_3\overline x_3} \left(
        \begin{gathered}  
            \begin{tikzpicture}[x=0.75pt,y=0.75pt,yscale=-1,xscale=1]
            
            \draw   (130.2,120.2) -- (180.2,120.2) -- (180.2,170.2) -- (130.2,170.2) -- cycle ;
            \draw   (150.72,115.36) .. controls (153.73,117.98) and (156.75,119.56) .. (159.76,120.08) .. controls (156.75,120.6) and (153.73,122.18) .. (150.72,124.8) ;
            \draw   (150.72,165.36) .. controls (153.73,167.98) and (156.75,169.56) .. (159.76,170.08) .. controls (156.75,170.6) and (153.73,172.18) .. (150.72,174.8) ;
            \draw   (175.32,149.4) .. controls (177.94,146.39) and (179.52,143.37) .. (180.04,140.36) .. controls (180.56,143.37) and (182.14,146.39) .. (184.76,149.4) ;
            \draw   (125.32,149.8) .. controls (127.94,146.79) and (129.52,143.77) .. (130.04,140.76) .. controls (130.56,143.77) and (132.14,146.79) .. (134.76,149.8) ;
            \draw  [color={rgb, 255:red, 0; green, 0; blue, 0 }  ,draw opacity=1 ][fill={rgb, 255:red, 0; green, 0; blue, 0 }  ,fill opacity=1 ] (128.57,170.2) .. controls (128.57,169.3) and (129.3,168.57) .. (130.2,168.57) .. controls (131.1,168.57) and (131.83,169.3) .. (131.83,170.2) .. controls (131.83,171.1) and (131.1,171.83) .. (130.2,171.83) .. controls (129.3,171.83) and (128.57,171.1) .. (128.57,170.2) -- cycle ;
            
            \draw (148.6,97.6) node [anchor=north west][inner sep=0.75pt]    {$x_{3}$};
            \draw (188.6,136) node [anchor=north west][inner sep=0.75pt]    {$y_{3}$};
            \draw (73,135.4) node [anchor=north west][inner sep=0.75pt]    {$y_{2}\overline x_1 \omega x_1$};
            \draw (150.6,175.6) node [anchor=north west][inner sep=0.75pt]    {$x_{4}$};
            
            \end{tikzpicture}
        \end{gathered}\right)
    \end{aligned}
    \end{equation*}

    By comparing like terms, we see that $\zeta=z,\omega=v, \eta=\overline z v z h$. 

\end{proof}

For vertex operators, the commutation relations with the ribbon operators are
\begin{align}
    \text{Start: }\quad A_{\mathrm{fl}}^{g}F^{(z,v)}(t)&=F^{(gz, gv\bar g)}(t)A_{\mathrm{fl}}^{g}\\
    \text{End: }\quad A_{\mathrm{col}}^{g}F^{(z,v)}(t)&=F^{(z\bar g, v)}(t)A_{\mathrm{col}}^{g}.
\end{align}

\textbf{Vetex operator with clockwise local orientation}

\begin{figure}[h!]
    \centering
    \includegraphics[width=0.7\textwidth]{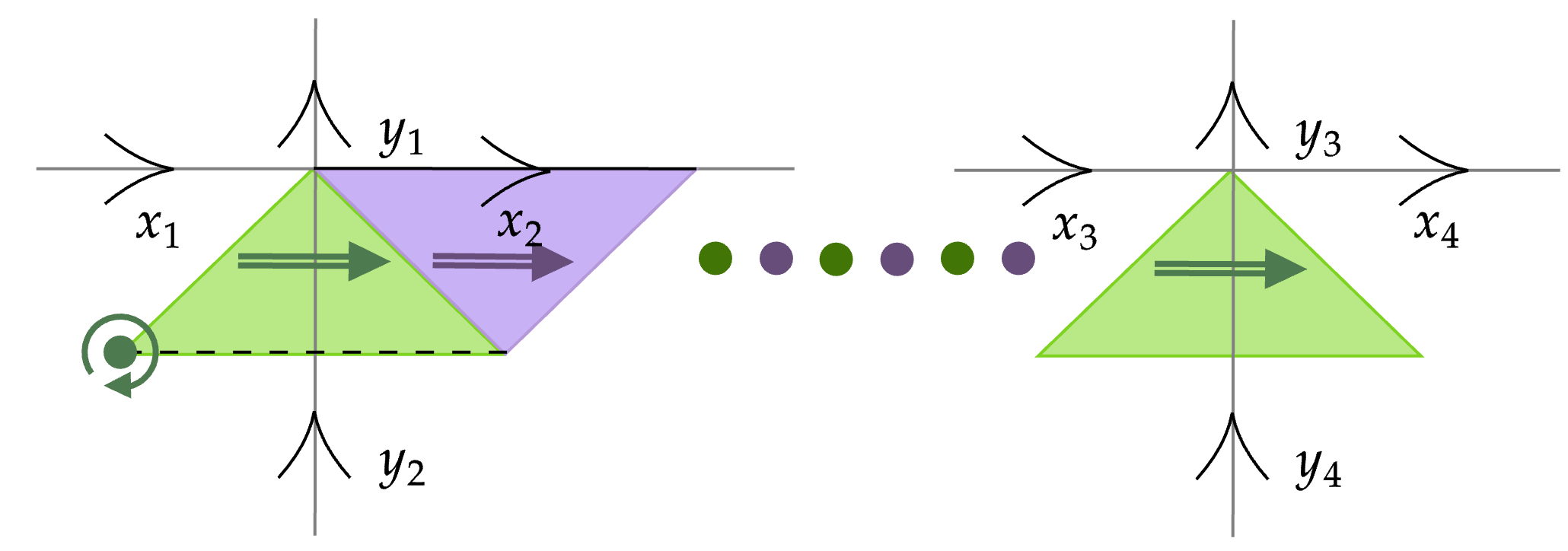}
    \caption{Ribbon operator with clockwise local orientation.}
    \label{fig:clockwise ribbon for vertex}
\end{figure}

\begin{prop}[Vertex operator at the start of a clockwise ribbon]
The commutation relation is
    \begin{equation}
        A_{\mathrm{fl}}^{g}F^{(z,v)}(t)=F^{(gz, gv\bar g)}(t)A_{\mathrm{fl}}^{g}.
    \end{equation}
\end{prop}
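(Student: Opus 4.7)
The plan is to mirror the structure of the plaquette-operator proofs above: reduce the claim to a local calculation on the triangle(s) incident to the starting vertex, and then verify it by explicit action on a basis state.

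First, I would apply the recursive decomposition $F^{(z,v)}(\rho) = \sum_{k\in G} F^{(k,v)}(\rho_1)\, F^{(\bar{k} z, \bar{k} v k)}(\rho_2)$ to split off the initial triangle $\rho_1$ (incident to the starting vertex) from the remainder $\rho_2$. Since $A^g_{\mathrm{fl}}$ has support only on edges meeting the starting vertex, it commutes through $F(\rho_2)$, so the full identity reduces to the single-triangle base case combined with a reindexing $k \mapsto gk$ in the sum. The algebraic identity $g\,(\bar{k} v k)\,\bar{g} = \overline{(gk)}\,(gv\bar{g})\,(gk)$ should convert the decomposition of the LHS exactly into the recursive decomposition of $F^{(gz,\, gv\bar{g})}(\rho)$ on the RHS.

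Next, for the base case, I would apply both orderings of the operators to a local basis state labeling the edges around the starting vertex (say $\ket{x_1, y_1, \dots}$), using the triangle-operator rules from Figure 21 for the appropriate alignment and clockwise local orientation, together with the convention that $A^g$ left-multiplies outgoing edges by $g$ and right-multiplies incoming edges by $\bar{g}$. The projection $\delta_{z, x_1 \cdots}$ inside $F^{(z,v)}$, once $A^g$ has modified the starting direct edge, should rearrange to $\delta_{gz, x_1 \cdots}$, matching the projection coming from $F^{(gz, gv\bar{g})}$ on the RHS. Meanwhile, the left-multiplication of the first dual edge by a conjugate of $v$ inside $F^{(z,v)}$ should pick up exactly the $g$-conjugation required for the resulting factor to become the corresponding conjugate of $gv\bar{g}$.

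The main obstacle will be keeping the bookkeeping clean at the starting vertex, since $A^g$ generically touches both a direct edge and a dual edge of $\rho_1$ at the same time, mixing the $z$- and $v$-dependent factors inside the ribbon operator. I would handle this by treating the first-triangle cases (direct vs.\ dual, aligned vs.\ opposite) separately, expecting each to collapse to the same identity relating the $g$-conjugation performed by $A^g$ to the substitution $(z,v)\mapsto(gz, gv\bar{g})$. As a sanity check before doing the explicit lattice calculation, the statement is consistent with the abstract flavor-basis action $A^g_{\mathrm{fl}}\ket{z, w} = \ket{gz, w}$: since $w=\bar z v z$ is left invariant, the new local flux must be $v_{\mathrm{new}} = (gz)\, w\, \overline{(gz)} = g\,v\,\bar{g}$, which is precisely the $v$ appearing in the ribbon operator on the RHS.
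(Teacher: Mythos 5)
Your overall strategy matches the paper's: both arguments localize the commutation relation to a computation around the starting vertex and then check it by explicitly acting on a basis state labelling the edges there. The paper simply declares WLOG a minimal two-triangle ribbon (leaning on the remark at the start of the appendix that elongating a ribbon does not affect the commutation at the opposite end), computes $A^g F^{(z,v)}(t)$ and $F^{(\zeta,\omega)}(t)A^\gamma$ on a labelled state, and reads off $\gamma=g$, $\zeta=gz$, $\omega=gv\bar g$. You make the reduction more explicit via the recursive decomposition and a reindexing $k\mapsto gk$, which is a cleaner way to say the same thing; that extra rigor is a small net gain.

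One slip worth fixing: the algebraic identity you write, $g(\bar k v k)\bar g = \overline{(gk)}\,(gv\bar g)\,(gk)$, is not correct (the right side simplifies to $\bar k v k$, so your equation asserts that $g$ commutes with $\bar k v k$). The identity your reindexing argument actually needs is the simpler
\begin{equation*}
\overline{(gk)}\,(gv\bar g)\,(gk)=\bar k v k \qquad\text{and}\qquad \overline{(gk)}\,(gz)=\bar k z ,
\end{equation*}
which hold trivially and give exactly the match between the $\rho_2$ factors of $A^g F^{(z,v)}(\rho)$ (after commuting $A^g$ through $\rho_1$ and then through $F(\rho_2)$) and the recursive expansion of $F^{(gz,\,gv\bar g)}(\rho)A^g$. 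Once that is corrected, your plan goes through, and the base-case check on a single starting triangle is the same explicit lattice computation the paper carries out.
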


\begin{proof}
    WLOG, consider a ribbon (with clockwise local orientation) composed of two triangles, as shown in Fig. \ref{fig:clockwise ribbon for vertex}. 

For the left hand side,
\begin{equation*}
    A^g F^{(z,v)}(t) 
    \left(\begin{gathered}
        \begin{tikzpicture}[x=0.75pt,y=0.75pt,yscale=-1,xscale=1]
        
        \draw   (120.92,165.16) .. controls (123.93,167.78) and (126.95,169.36) .. (129.96,169.88) .. controls (126.95,170.4) and (123.93,171.98) .. (120.92,174.6) ;
        \draw   (100,170) -- (200,170)(150,120) -- (150,220) ;
        \draw   (173.72,165.16) .. controls (176.73,167.78) and (179.75,169.36) .. (182.76,169.88) .. controls (179.75,170.4) and (176.73,171.98) .. (173.72,174.6) ;
        \draw   (145.32,147.2) .. controls (147.94,144.19) and (149.52,141.17) .. (150.04,138.16) .. controls (150.56,141.17) and (152.14,144.19) .. (154.76,147.2) ;
        \draw   (145.32,198.91) .. controls (147.95,195.9) and (149.52,192.88) .. (150.04,189.87) .. controls (150.57,192.88) and (152.14,195.9) .. (154.76,198.91) ;
        
        \draw (118.8,153.4) node [anchor=north west][inner sep=0.75pt]    {$x_{1}$};
        \draw (171.6,153.4) node [anchor=north west][inner sep=0.75pt]    {$x_{2}$};
        \draw (159.47,186.07) node [anchor=north west][inner sep=0.75pt]    {$y_{2}$};
        \draw (158.13,126.27) node [anchor=north west][inner sep=0.75pt]    {$y_{1}$};
        
        \end{tikzpicture}
    \end{gathered}\right)
    = \delta_{z,x_2} A^g
    \left(\begin{gathered}
        \begin{tikzpicture}[x=0.75pt,y=0.75pt,yscale=-1,xscale=1]
        
        \draw   (120.92,165.16) .. controls (123.93,167.78) and (126.95,169.36) .. (129.96,169.88) .. controls (126.95,170.4) and (123.93,171.98) .. (120.92,174.6) ;
        \draw   (100,170) -- (200,170)(150,120) -- (150,220) ;
        \draw   (173.72,165.16) .. controls (176.73,167.78) and (179.75,169.36) .. (182.76,169.88) .. controls (179.75,170.4) and (176.73,171.98) .. (173.72,174.6) ;
        \draw   (145.32,147.2) .. controls (147.94,144.19) and (149.52,141.17) .. (150.04,138.16) .. controls (150.56,141.17) and (152.14,144.19) .. (154.76,147.2) ;
        \draw   (145.32,198.91) .. controls (147.95,195.9) and (149.52,192.88) .. (150.04,189.87) .. controls (150.57,192.88) and (152.14,195.9) .. (154.76,198.91) ;
        
        \draw (118.8,153.4) node [anchor=north west][inner sep=0.75pt]    {$x_{1}$};
        \draw (171.6,153.4) node [anchor=north west][inner sep=0.75pt]    {$x_{2}$};
        \draw (159.47,186.07) node [anchor=north west][inner sep=0.75pt]    {$y_{2}v$};
        \draw (158.13,126.27) node [anchor=north west][inner sep=0.75pt]    {$y_{1}$};
        
        \end{tikzpicture}
    \end{gathered}\right)
    =\delta_{z,x_2}
    \left(\begin{gathered}
        \begin{tikzpicture}[x=0.75pt,y=0.75pt,yscale=-1,xscale=1]
        
        \draw   (120.92,165.16) .. controls (123.93,167.78) and (126.95,169.36) .. (129.96,169.88) .. controls (126.95,170.4) and (123.93,171.98) .. (120.92,174.6) ;
        \draw   (100,170) -- (200,170)(150,120) -- (150,220) ;
        \draw   (173.72,165.16) .. controls (176.73,167.78) and (179.75,169.36) .. (182.76,169.88) .. controls (179.75,170.4) and (176.73,171.98) .. (173.72,174.6) ;
        \draw   (145.32,147.2) .. controls (147.94,144.19) and (149.52,141.17) .. (150.04,138.16) .. controls (150.56,141.17) and (152.14,144.19) .. (154.76,147.2) ;
        \draw   (145.32,198.91) .. controls (147.95,195.9) and (149.52,192.88) .. (150.04,189.87) .. controls (150.57,192.88) and (152.14,195.9) .. (154.76,198.91) ;
        
        \draw (118.8,153.4) node [anchor=north west][inner sep=0.75pt]    {$x_{1}\overline{g}$};
        \draw (171.6,153.4) node [anchor=north west][inner sep=0.75pt]    {$gx_{2}$};
        \draw (159.47,186.07) node [anchor=north west][inner sep=0.75pt]    {$y_{2}v\overline{g}$};
        \draw (158.13,126.27) node [anchor=north west][inner sep=0.75pt]    {$gy_{1}$};
        
        \end{tikzpicture}
    \end{gathered}\right)
\end{equation*}

For the right hand side, we assume a different set of variables ($\zeta$ for charge, $\omega$ for local flux, and $\gamma$ for the flux variable of the plaquette operator) and determine their values.
\begin{equation*}
    F^{(\zeta,\omega)}(t) A^\gamma
    \left(\begin{gathered}
        \begin{tikzpicture}[x=0.75pt,y=0.75pt,yscale=-1,xscale=1]
        
        \draw   (120.92,165.16) .. controls (123.93,167.78) and (126.95,169.36) .. (129.96,169.88) .. controls (126.95,170.4) and (123.93,171.98) .. (120.92,174.6) ;
        \draw   (100,170) -- (200,170)(150,120) -- (150,220) ;
        \draw   (173.72,165.16) .. controls (176.73,167.78) and (179.75,169.36) .. (182.76,169.88) .. controls (179.75,170.4) and (176.73,171.98) .. (173.72,174.6) ;
        \draw   (145.32,147.2) .. controls (147.94,144.19) and (149.52,141.17) .. (150.04,138.16) .. controls (150.56,141.17) and (152.14,144.19) .. (154.76,147.2) ;
        \draw   (145.32,198.91) .. controls (147.95,195.9) and (149.52,192.88) .. (150.04,189.87) .. controls (150.57,192.88) and (152.14,195.9) .. (154.76,198.91) ;
        
        \draw (118.8,153.4) node [anchor=north west][inner sep=0.75pt]    {$x_{1}$};
        \draw (171.6,153.4) node [anchor=north west][inner sep=0.75pt]    {$x_{2}$};
        \draw (159.47,186.07) node [anchor=north west][inner sep=0.75pt]    {$y_{2}$};
        \draw (158.13,126.27) node [anchor=north west][inner sep=0.75pt]    {$y_{1}$};
        
        \end{tikzpicture}
    \end{gathered}\right)
    = \delta_{\zeta,x_2} F^{(\zeta,\omega)}(t)
    \left(\begin{gathered}
        \begin{tikzpicture}[x=0.75pt,y=0.75pt,yscale=-1,xscale=1]
        
        \draw   (120.92,165.16) .. controls (123.93,167.78) and (126.95,169.36) .. (129.96,169.88) .. controls (126.95,170.4) and (123.93,171.98) .. (120.92,174.6) ;
        \draw   (100,170) -- (200,170)(150,120) -- (150,220) ;
        \draw   (173.72,165.16) .. controls (176.73,167.78) and (179.75,169.36) .. (182.76,169.88) .. controls (179.75,170.4) and (176.73,171.98) .. (173.72,174.6) ;
        \draw   (145.32,147.2) .. controls (147.94,144.19) and (149.52,141.17) .. (150.04,138.16) .. controls (150.56,141.17) and (152.14,144.19) .. (154.76,147.2) ;
        \draw   (145.32,198.91) .. controls (147.95,195.9) and (149.52,192.88) .. (150.04,189.87) .. controls (150.57,192.88) and (152.14,195.9) .. (154.76,198.91) ;
        
        \draw (118.8,153.4) node [anchor=north west][inner sep=0.75pt]    {$x_{1}\overline{\gamma}$};
        \draw (171.6,153.4) node [anchor=north west][inner sep=0.75pt]    {$\gamma x_{2}$};
        \draw (159.47,186.07) node [anchor=north west][inner sep=0.75pt]    {$y_{2}\overline{\gamma}$};
        \draw (158.13,126.27) node [anchor=north west][inner sep=0.75pt]    {$\gamma y_{1}$};
        
        \end{tikzpicture}
    \end{gathered}\right)
    =\delta_{\zeta,\gamma x_2}
    \left(\begin{gathered}
        \begin{tikzpicture}[x=0.75pt,y=0.75pt,yscale=-1,xscale=1]
        
        \draw   (120.92,165.16) .. controls (123.93,167.78) and (126.95,169.36) .. (129.96,169.88) .. controls (126.95,170.4) and (123.93,171.98) .. (120.92,174.6) ;
        \draw   (100,170) -- (200,170)(150,120) -- (150,220) ;
        \draw   (173.72,165.16) .. controls (176.73,167.78) and (179.75,169.36) .. (182.76,169.88) .. controls (179.75,170.4) and (176.73,171.98) .. (173.72,174.6) ;
        \draw   (145.32,147.2) .. controls (147.94,144.19) and (149.52,141.17) .. (150.04,138.16) .. controls (150.56,141.17) and (152.14,144.19) .. (154.76,147.2) ;
        \draw   (145.32,198.91) .. controls (147.95,195.9) and (149.52,192.88) .. (150.04,189.87) .. controls (150.57,192.88) and (152.14,195.9) .. (154.76,198.91) ;
        
        \draw (118.8,153.4) node [anchor=north west][inner sep=0.75pt]    {$x_{1}\overline{\gamma}$};
        \draw (171.6,153.4) node [anchor=north west][inner sep=0.75pt]    {$\gamma x_{2}$};
        \draw (159.47,186.07) node [anchor=north west][inner sep=0.75pt]    {$y_{2}\overline{\gamma} \omega$};
        \draw (158.13,126.27) node [anchor=north west][inner sep=0.75pt]    {$\gamma y_{1}$};
        
        \end{tikzpicture}
    \end{gathered}\right)
\end{equation*}

By comparing like terms, we see that $\gamma = g,\zeta=gz,\omega=gv\overline{g}$. 

\end{proof}

\begin{prop}[Vertex operator at the end of a clockwise ribbon]
The commutation relation is
    \begin{equation}
        A_{\mathrm{col}}^{g}F^{(z,v)}(t)=F^{(z\bar g,v)}(t)A_{\mathrm{col}}^{g}.
    \end{equation}
\end{prop}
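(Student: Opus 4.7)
I would prove this by direct computation, mirroring the strategy used for the preceding propositions on plaquette/vertex commutation at the endpoints. Consider a ribbon $t$ of arbitrary length with clockwise local orientation ending at a site $(s_{\mathrm{end}}, p_{\mathrm{end}})$, and denote by $z' = \prod_i x_i$ the product of the direct edges appearing along $t$ (so that $F^{(z,v)}(t)$ enforces the projector $\delta_{z,z'}$). The color vertex operator $A^g_{\mathrm{col}}$ acts on all four edges incident to $s_{\mathrm{end}}$: left-multiplying outgoing edges by $g$ and right-multiplying incoming edges by $\bar g$. I would then evaluate both sides of the claimed identity on a generic labeled configuration around $s_{\mathrm{end}}$.

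For the left-hand side $A^g_{\mathrm{col}} F^{(z,v)}(t)$, I would first apply $F^{(z,v)}(t)$, which produces $\delta_{z,z'}$ together with the usual $\bar x\bar v x$ modifications on the dual edges of $t$; then $A^g_{\mathrm{col}}$ modifies the four edges at $s_{\mathrm{end}}$. For the right-hand side $F^{(z\bar g,v)}(t) A^g_{\mathrm{col}}$, I would apply $A^g_{\mathrm{col}}$ first, observe that it rescales the last direct edge of $t$ incident to $s_{\mathrm{end}}$ by $\bar g$ (so that the new product of direct edges along $t$ becomes $z'\bar g$), and then apply $F^{(z\bar g,v)}(t)$, which yields $\delta_{z\bar g, z'\bar g} = \delta_{z, z'}$ and the same dual-edge modifications. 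Comparing like terms should then immediately show the two sides agree.

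The reason the argument works cleanly is twofold. First, in the recursive definition, each dual triangle's action $\bar x\bar v x$ uses only the product of direct edges \emph{preceding} that dual triangle along $t$; since $A^g_{\mathrm{col}}$ modifies only the very last direct edge at $s_{\mathrm{end}}$, none of the dual-triangle actions in $t$ are affected. Second, the local flux $v$ is attached to the start of the ribbon and is untouched by operators at the far end. Thus the only nontrivial bookkeeping is matching the $\bar g$ produced by $A^g_{\mathrm{col}}$ on the terminal direct edge with the shift $z \to z\bar g$ in the Kronecker delta.

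The main potential obstacle, as in the clockwise plaquette-operator proof above, is handling the case distinction based on the orientation and alignment of the last direct triangle in $t$ relative to $s_{\mathrm{end}}$, and verifying that any dual edges incident to $s_{\mathrm{end}}$ but outside $t$ are modified identically on both sides. I would handle this by choosing a representative configuration (analogous to the one in Fig.~\ref{fig:clockwise ribbon for vertex}), drawing the four edges at $s_{\mathrm{end}}$ explicitly, and confirming via a short diagrammatic computation that the two orderings yield identical labelings. The argument for the opposite alignment follows from the same calculation up to inverting the group element, and since no conjugation of $v$ appears (unlike for the flavor vertex operator), the local-orientation distinction that complicated the plaquette commutators does not arise here.
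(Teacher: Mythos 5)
Your overall plan (direct computation, term-by-term comparison of both orderings, handling of the alignment/orientation case distinction for the terminal triangle) matches the paper's approach, and your observation that $A^g_{\mathrm{col}}$ shifts the product of direct edges to $z'\bar g$ so that $\delta_{z\bar g,\,z'\bar g}=\delta_{z,z'}$ is correct. However, there is a genuine gap in your justification of why the dual-edge modifications agree. You assert that ``each dual triangle's action uses only the product of direct edges preceding that dual triangle along $t$; since $A^g_{\mathrm{col}}$ modifies only the very last direct edge at $s_{\mathrm{end}}$, none of the dual-triangle actions in $t$ are affected.'' This is false for the terminal dual triangle. Near a site $(s_{\mathrm{end}},p_{\mathrm{end}})$ the ribbon ends with a dual triangle whose dual edge $y$ is incident to $s_{\mathrm{end}}$, and its modification is right-multiplication by $\overline{z'}vz'$ where $z'$ is the \emph{full} product of direct edges (including the terminal one). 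Since $A^g_{\mathrm{col}}$ changes that product to $z'\bar g$, the modification applied by $F^{(z\bar g,v)}(t)$ on the $A^g$-shifted configuration is $\overline{z'\bar g}\,v\,(z'\bar g) = g\,\overline{z'}vz'\,\bar g$, i.e.\ the original modification conjugated by $g$ --- not ``the same.''

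This distinction is not pedantic; it is what makes the identity close. $A^g_{\mathrm{col}}$ also right-multiplies that same dual edge $y$ (incident to $s_{\mathrm{end}}$) by $\bar g$, so on the left $y$ is first sent to $y\,\overline{z'}vz'$ by $F$ and then to $y\,\overline{z'}vz'\,\bar g$ by $A^g$, while on the right $A^g$ first gives $y\bar g$ and then $F^{(z\bar g,v)}(t)$ gives $y\bar g\cdot(g\,\overline{z'}vz'\,\bar g)=y\,\overline{z'}vz'\,\bar g$. If your claim held and the modifications were truly unchanged, the right side would produce $y\bar g\,\overline{z'}vz'$, which differs from the left by a commutator. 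So the argument as stated, taken literally, would \emph{fail} to establish the proposition. To repair it, you should track two interacting effects at $s_{\mathrm{end}}$: (i) the shift $z'\to z'\bar g$ conjugates the final dual-triangle coefficient by $g$, and (ii) $A^g_{\mathrm{col}}$ acts on the final dual edge itself, and these two effects cancel against the opposite ordering on the left-hand side. The paper's proof does exactly this by writing out the four edges at $s_{\mathrm{end}}$ and comparing both orderings; once you draw the endpoint configuration explicitly (as you proposed to do), the cancellation becomes visible, but it is a substantive step, not a vacuous one.
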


\begin{proof}

Let $z'$ be the product of direct edges along the ribbon.

For the left hand side,
\begin{equation*}
    A^g F^{(z,v)}(t) 
    \left(\begin{gathered}
        \begin{tikzpicture}[x=0.75pt,y=0.75pt,yscale=-1,xscale=1]
        
        \draw   (120.92,165.16) .. controls (123.93,167.78) and (126.95,169.36) .. (129.96,169.88) .. controls (126.95,170.4) and (123.93,171.98) .. (120.92,174.6) ;
        \draw   (100,170) -- (200,170)(150,120) -- (150,220) ;
        \draw   (173.72,165.16) .. controls (176.73,167.78) and (179.75,169.36) .. (182.76,169.88) .. controls (179.75,170.4) and (176.73,171.98) .. (173.72,174.6) ;
        \draw   (145.32,147.2) .. controls (147.94,144.19) and (149.52,141.17) .. (150.04,138.16) .. controls (150.56,141.17) and (152.14,144.19) .. (154.76,147.2) ;
        \draw   (145.32,198.91) .. controls (147.95,195.9) and (149.52,192.88) .. (150.04,189.87) .. controls (150.57,192.88) and (152.14,195.9) .. (154.76,198.91) ;
        
        \draw (118.8,153.4) node [anchor=north west][inner sep=0.75pt]    {$x_{3}$};
        \draw (171.6,153.4) node [anchor=north west][inner sep=0.75pt]    {$x_{4}$};
        \draw (159.47,186.07) node [anchor=north west][inner sep=0.75pt]    {$y_{4}$};
        \draw (158.13,126.27) node [anchor=north west][inner sep=0.75pt]    {$y_{3}$};
        
        \end{tikzpicture}
    \end{gathered}\right)
    = \delta_{z,z'} A^g
    \left(\begin{gathered}
        \begin{tikzpicture}[x=0.75pt,y=0.75pt,yscale=-1,xscale=1]
        
        \draw   (120.92,165.16) .. controls (123.93,167.78) and (126.95,169.36) .. (129.96,169.88) .. controls (126.95,170.4) and (123.93,171.98) .. (120.92,174.6) ;
        \draw   (100,170) -- (200,170)(150,120) -- (150,220) ;
        \draw   (173.72,165.16) .. controls (176.73,167.78) and (179.75,169.36) .. (182.76,169.88) .. controls (179.75,170.4) and (176.73,171.98) .. (173.72,174.6) ;
        \draw   (145.32,147.2) .. controls (147.94,144.19) and (149.52,141.17) .. (150.04,138.16) .. controls (150.56,141.17) and (152.14,144.19) .. (154.76,147.2) ;
        \draw   (145.32,198.91) .. controls (147.95,195.9) and (149.52,192.88) .. (150.04,189.87) .. controls (150.57,192.88) and (152.14,195.9) .. (154.76,198.91) ;
        
        \draw (118.8,153.4) node [anchor=north west][inner sep=0.75pt]    {$x_{3}$};
        \draw (171.6,153.4) node [anchor=north west][inner sep=0.75pt]    {$x_{4}$};
        \draw (159.47,186.07) node [anchor=north west][inner sep=0.75pt]    {$y_{4}\bar z'v z$};
        \draw (158.13,126.27) node [anchor=north west][inner sep=0.75pt]    {$y_{3}$};
        
        \end{tikzpicture}
    \end{gathered}\right)
    =\delta_{z,z'}
    \left(\begin{gathered}
        \begin{tikzpicture}[x=0.75pt,y=0.75pt,yscale=-1,xscale=1]
        
        \draw   (120.92,165.16) .. controls (123.93,167.78) and (126.95,169.36) .. (129.96,169.88) .. controls (126.95,170.4) and (123.93,171.98) .. (120.92,174.6) ;
        \draw   (100,170) -- (200,170)(150,120) -- (150,220) ;
        \draw   (173.72,165.16) .. controls (176.73,167.78) and (179.75,169.36) .. (182.76,169.88) .. controls (179.75,170.4) and (176.73,171.98) .. (173.72,174.6) ;
        \draw   (145.32,147.2) .. controls (147.94,144.19) and (149.52,141.17) .. (150.04,138.16) .. controls (150.56,141.17) and (152.14,144.19) .. (154.76,147.2) ;
        \draw   (145.32,198.91) .. controls (147.95,195.9) and (149.52,192.88) .. (150.04,189.87) .. controls (150.57,192.88) and (152.14,195.9) .. (154.76,198.91) ;
        
        \draw (118.8,153.4) node [anchor=north west][inner sep=0.75pt]    {$x_{3}\overline{g}$};
        \draw (171.6,153.4) node [anchor=north west][inner sep=0.75pt]    {$gx_{4}$};
        \draw (159.47,186.07) node [anchor=north west][inner sep=0.75pt]    {$y_{4} \bar z' v z' \overline{g}$};
        \draw (158.13,126.27) node [anchor=north west][inner sep=0.75pt]    {$gy_{3}$};
        
        \end{tikzpicture}
    \end{gathered}\right)
\end{equation*}

For the right hand side, we assume a different set of variables ($\zeta$ for charge, $\omega$ for local flux, and $\gamma$ for the flux variable of the plaquette operator) and determine their values.
\begin{equation*}
    F^{(\zeta,\omega)}(t) A^\gamma
    \left(\begin{gathered}
        \begin{tikzpicture}[x=0.75pt,y=0.75pt,yscale=-1,xscale=1]
        
        \draw   (120.92,165.16) .. controls (123.93,167.78) and (126.95,169.36) .. (129.96,169.88) .. controls (126.95,170.4) and (123.93,171.98) .. (120.92,174.6) ;
        \draw   (100,170) -- (200,170)(150,120) -- (150,220) ;
        \draw   (173.72,165.16) .. controls (176.73,167.78) and (179.75,169.36) .. (182.76,169.88) .. controls (179.75,170.4) and (176.73,171.98) .. (173.72,174.6) ;
        \draw   (145.32,147.2) .. controls (147.94,144.19) and (149.52,141.17) .. (150.04,138.16) .. controls (150.56,141.17) and (152.14,144.19) .. (154.76,147.2) ;
        \draw   (145.32,198.91) .. controls (147.95,195.9) and (149.52,192.88) .. (150.04,189.87) .. controls (150.57,192.88) and (152.14,195.9) .. (154.76,198.91) ;
        
        \draw (118.8,153.4) node [anchor=north west][inner sep=0.75pt]    {$x_{3}$};
        \draw (171.6,153.4) node [anchor=north west][inner sep=0.75pt]    {$x_{4}$};
        \draw (159.47,186.07) node [anchor=north west][inner sep=0.75pt]    {$y_{4}$};
        \draw (158.13,126.27) node [anchor=north west][inner sep=0.75pt]    {$y_{3}$};
        
        \end{tikzpicture}
    \end{gathered}\right)
    = \delta_{\zeta,z'} F^{(\zeta,\omega)}(t)
    \left(\begin{gathered}
        \begin{tikzpicture}[x=0.75pt,y=0.75pt,yscale=-1,xscale=1]
        
        \draw   (120.92,165.16) .. controls (123.93,167.78) and (126.95,169.36) .. (129.96,169.88) .. controls (126.95,170.4) and (123.93,171.98) .. (120.92,174.6) ;
        \draw   (100,170) -- (200,170)(150,120) -- (150,220) ;
        \draw   (173.72,165.16) .. controls (176.73,167.78) and (179.75,169.36) .. (182.76,169.88) .. controls (179.75,170.4) and (176.73,171.98) .. (173.72,174.6) ;
        \draw   (145.32,147.2) .. controls (147.94,144.19) and (149.52,141.17) .. (150.04,138.16) .. controls (150.56,141.17) and (152.14,144.19) .. (154.76,147.2) ;
        \draw   (145.32,198.91) .. controls (147.95,195.9) and (149.52,192.88) .. (150.04,189.87) .. controls (150.57,192.88) and (152.14,195.9) .. (154.76,198.91) ;
        
        \draw (118.8,153.4) node [anchor=north west][inner sep=0.75pt]    {$x_{3}\overline{\gamma}$};
        \draw (171.6,153.4) node [anchor=north west][inner sep=0.75pt]    {$\gamma x_{4}$};
        \draw (159.47,186.07) node [anchor=north west][inner sep=0.75pt]    {$y_{4}\overline{\gamma}$};
        \draw (158.13,126.27) node [anchor=north west][inner sep=0.75pt]    {$\gamma y_{3}$};
        
        \end{tikzpicture}
    \end{gathered}\right)
    =\delta_{\zeta,\gamma z'}
    \left(\begin{gathered}
        \begin{tikzpicture}[x=0.75pt,y=0.75pt,yscale=-1,xscale=1]
        
        \draw   (120.92,165.16) .. controls (123.93,167.78) and (126.95,169.36) .. (129.96,169.88) .. controls (126.95,170.4) and (123.93,171.98) .. (120.92,174.6) ;
        \draw   (100,170) -- (200,170)(150,120) -- (150,220) ;
        \draw   (173.72,165.16) .. controls (176.73,167.78) and (179.75,169.36) .. (182.76,169.88) .. controls (179.75,170.4) and (176.73,171.98) .. (173.72,174.6) ;
        \draw   (145.32,147.2) .. controls (147.94,144.19) and (149.52,141.17) .. (150.04,138.16) .. controls (150.56,141.17) and (152.14,144.19) .. (154.76,147.2) ;
        \draw   (145.32,198.91) .. controls (147.95,195.9) and (149.52,192.88) .. (150.04,189.87) .. controls (150.57,192.88) and (152.14,195.9) .. (154.76,198.91) ;
        
        \draw (118.8,153.4) node [anchor=north west][inner sep=0.75pt]    {$x_{3}\overline{\gamma}$};
        \draw (171.6,153.4) node [anchor=north west][inner sep=0.75pt]    {$\gamma x_{4}$};
        \draw (159.47,186.07) node [anchor=north west][inner sep=0.75pt]    {$y_{4} \bar z' \omega z'  \bar\gamma$};
        \draw (158.13,126.27) node [anchor=north west][inner sep=0.75pt]    {$\gamma y_{3}$};
        
        \end{tikzpicture}
    \end{gathered}\right)
\end{equation*}

By comparing like terms, we see that $\gamma = g,\zeta=z\var g,\omega=v$. 

\end{proof}

\textbf{Vertex operator with counterclockwise local orientation}

\begin{figure}[h!]
    \centering
    \includegraphics[width=0.7\textwidth]{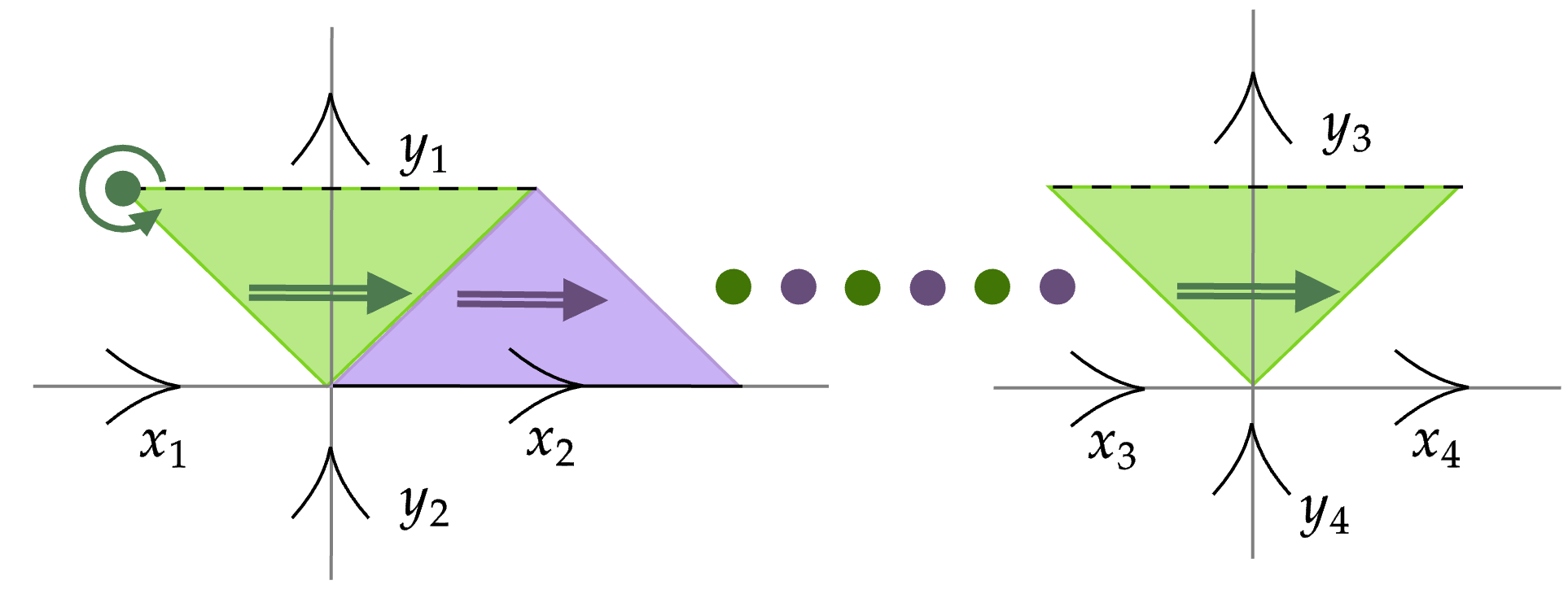}
    \caption{Ribbon operator with counterclockwise local orientation.}
    \label{fig:counterclockwise ribbon for vertex}
\end{figure}

\begin{prop}[Vertex operator at the start of a counterclockwise ribbon]
The commutation relation is
    \begin{equation}
        A_{\mathrm{fl}}^{g}F^{(z,v)}(t)=F^{(gz, gv\bar g)}(t)A_{\mathrm{fl}}^{g}.
    \end{equation}
\end{prop}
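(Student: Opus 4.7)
The plan is to follow the same template used in the clockwise case (and in the previous proofs of this appendix): reduce to a minimal ribbon consisting of a single direct triangle followed by a single dual triangle (as in Fig.~\ref{fig:counterclockwise ribbon for vertex}), compute both orderings explicitly, and read off the parameters by matching like terms. Since the vertex operator $A^g$ only touches the four edges incident to the starting vertex, and its commutation with the bulk of the ribbon follows from the appendix result that $A^g$ commutes with $F^{(z,v)}$ in the middle of a ribbon, it is enough to check the minimal two-triangle case.

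First I would apply $F^{(z,v)}(t)$ to the minimal configuration and then act with $A_{\mathrm{fl}}^{g}$ at the starting vertex. Using the counterclockwise triangle rules from Fig.~\ref{fig:triangle_operator}, the direct triangle projects the relevant direct edge (say $x_2$) to $z$, while the aligned counterclockwise dual triangle right-multiplies $y_2$ by $\bar v$ (this is the key difference from the clockwise case, where the dual action was left multiplication by $\bar z v z$). Applying $A_{\mathrm{fl}}^{g}$ afterwards then left-multiplies the two edges going out of the starting vertex by $g$ and right-multiplies the two incoming edges by $\bar g$, giving a concrete configuration to compare against.

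Next I would compute the other ordering: apply $A^{\gamma}$ first (with $\gamma$ to be determined) and then apply $F^{(\zeta,\omega)}(t)$ for unknowns $\zeta,\omega$. The direct-triangle projection now sees the conjugated edge and forces a relation between $\zeta,\gamma$ and $z$; the dual-triangle action contributes a factor built from $\omega$ that must match the $v$-factor produced on the left-hand side after conjugation by $g$. Matching the $A^g$-factor fixes $\gamma=g$; matching the projection constraint fixes $\zeta=gz$; and matching the residual group element sitting on the dual edge fixes $\omega=gv\bar g$, which is exactly the claim.

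The routine obstacle, as in the other commutation proofs, is simply keeping track of orientations and the right-versus-left multiplication conventions for a \emph{counterclockwise} aligned dual triangle; the identity $\mu\sigma=\sigma\bar\mu$-style manipulations are not needed, but one must be careful that the $\bar v$ appearing from the dual triangle gets conjugated by $g$ in the correct order so that it combines with the vertex action to produce $gv\bar g$ rather than $\bar g v g$ or $gv\bar g$ on the wrong side. Once the minimal case checks out, extending to an arbitrary ribbon is immediate because all further triangles lie away from the starting vertex and commute with $A^g_{\mathrm{fl}}$ by the middle-of-ribbon commutation result in Appendix~\ref{ap:middle-ribbon}.
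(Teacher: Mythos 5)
Your proof sketch follows the same approach as the paper: reduce to a minimal two-triangle ribbon at the starting vertex, compute both orderings $A^g F^{(z,v)}$ and $F^{(\zeta,\omega)} A^\gamma$ explicitly, and solve for $\gamma, \zeta, \omega$ by matching the resulting edge configurations. Your final answer ($\gamma=g$, $\zeta=gz$, $\omega=gv\bar g$) is correct. However, the specific triangle actions you describe do not match the paper's counterclockwise setup: in the paper's computation the ribbon left-multiplies the edge $y_1$ by $v$ (yielding $y_1 \mapsto v y_1$) and projects $x_2$ to $z$, whereas you describe right-multiplication of $y_2$ by $\bar v$ (and you also place the direct triangle before the dual one, which would produce an unwanted conjugation). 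Likewise, your parenthetical that the clockwise dual action is ``left multiplication by $\bar z v z$'' conflates the start of the ribbon with the bulk or end; at the start the accumulated $z$-string is trivial, so the clockwise action on the dual edge is simply $y_2 \mapsto y_2 v$, unconjugated. Because the parameter-matching algebra is structurally insensitive to these conventions (conjugating by $g$ on the appropriate side gives $\omega = gv\bar g$ regardless of whether the dual edge carries $v y_1$ or $y_2\bar v$), your conclusion survives, but a careful writeup would need to take the correct triangle actions from Fig.~\ref{fig:triangle_operator} and correctly identify which dual edge the counterclockwise ribbon touches.
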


\begin{proof}
    WLOG, consider a following ribbon (with counterclockwise local orientation) composed of two triangles, as shown in Fig. \ref{fig:counterclockwise ribbon for vertex}.

For the left hand side,
\begin{equation*}
    A^g F^{(z,v)}(t) 
    \left(\begin{gathered}
        \begin{tikzpicture}[x=0.75pt,y=0.75pt,yscale=-1,xscale=1]
        
        \draw   (120.92,165.16) .. controls (123.93,167.78) and (126.95,169.36) .. (129.96,169.88) .. controls (126.95,170.4) and (123.93,171.98) .. (120.92,174.6) ;
        \draw   (100,170) -- (200,170)(150,120) -- (150,220) ;
        \draw   (173.72,165.16) .. controls (176.73,167.78) and (179.75,169.36) .. (182.76,169.88) .. controls (179.75,170.4) and (176.73,171.98) .. (173.72,174.6) ;
        \draw   (145.32,147.2) .. controls (147.94,144.19) and (149.52,141.17) .. (150.04,138.16) .. controls (150.56,141.17) and (152.14,144.19) .. (154.76,147.2) ;
        \draw   (145.32,198.91) .. controls (147.95,195.9) and (149.52,192.88) .. (150.04,189.87) .. controls (150.57,192.88) and (152.14,195.9) .. (154.76,198.91) ;
        
        \draw (118.8,153.4) node [anchor=north west][inner sep=0.75pt]    {$x_{1}$};
        \draw (171.6,153.4) node [anchor=north west][inner sep=0.75pt]    {$x_{2}$};
        \draw (159.47,186.07) node [anchor=north west][inner sep=0.75pt]    {$y_{2}$};
        \draw (158.13,126.27) node [anchor=north west][inner sep=0.75pt]    {$y_{1}$};
        
        \end{tikzpicture}
    \end{gathered}\right)
    = \delta_{z,x_2} A^g
    \left(\begin{gathered}
        \begin{tikzpicture}[x=0.75pt,y=0.75pt,yscale=-1,xscale=1]
        
        \draw   (120.92,165.16) .. controls (123.93,167.78) and (126.95,169.36) .. (129.96,169.88) .. controls (126.95,170.4) and (123.93,171.98) .. (120.92,174.6) ;
        \draw   (100,170) -- (200,170)(150,120) -- (150,220) ;
        \draw   (173.72,165.16) .. controls (176.73,167.78) and (179.75,169.36) .. (182.76,169.88) .. controls (179.75,170.4) and (176.73,171.98) .. (173.72,174.6) ;
        \draw   (145.32,147.2) .. controls (147.94,144.19) and (149.52,141.17) .. (150.04,138.16) .. controls (150.56,141.17) and (152.14,144.19) .. (154.76,147.2) ;
        \draw   (145.32,198.91) .. controls (147.95,195.9) and (149.52,192.88) .. (150.04,189.87) .. controls (150.57,192.88) and (152.14,195.9) .. (154.76,198.91) ;
        
        \draw (118.8,153.4) node [anchor=north west][inner sep=0.75pt]    {$x_{1}$};
        \draw (171.6,153.4) node [anchor=north west][inner sep=0.75pt]    {$x_{2}$};
        \draw (159.47,186.07) node [anchor=north west][inner sep=0.75pt]    {$y_{2}$};
        \draw (158.13,126.27) node [anchor=north west][inner sep=0.75pt]    {$vy_{1}$};
        
        \end{tikzpicture}
    \end{gathered}\right)
    =\delta_{z,x_2}
    \left(\begin{gathered}
        \begin{tikzpicture}[x=0.75pt,y=0.75pt,yscale=-1,xscale=1]
        
        \draw   (120.92,165.16) .. controls (123.93,167.78) and (126.95,169.36) .. (129.96,169.88) .. controls (126.95,170.4) and (123.93,171.98) .. (120.92,174.6) ;
        \draw   (100,170) -- (200,170)(150,120) -- (150,220) ;
        \draw   (173.72,165.16) .. controls (176.73,167.78) and (179.75,169.36) .. (182.76,169.88) .. controls (179.75,170.4) and (176.73,171.98) .. (173.72,174.6) ;
        \draw   (145.32,147.2) .. controls (147.94,144.19) and (149.52,141.17) .. (150.04,138.16) .. controls (150.56,141.17) and (152.14,144.19) .. (154.76,147.2) ;
        \draw   (145.32,198.91) .. controls (147.95,195.9) and (149.52,192.88) .. (150.04,189.87) .. controls (150.57,192.88) and (152.14,195.9) .. (154.76,198.91) ;
        
        \draw (118.8,153.4) node [anchor=north west][inner sep=0.75pt]    {$x_{1}\overline{g}$};
        \draw (171.6,153.4) node [anchor=north west][inner sep=0.75pt]    {$gx_{2}$};
        \draw (159.47,186.07) node [anchor=north west][inner sep=0.75pt]    {$y_{2}\overline{g}$};
        \draw (158.13,126.27) node [anchor=north west][inner sep=0.75pt]    {$gvy_{1}$};
        
        \end{tikzpicture}
    \end{gathered}\right)
\end{equation*}

For the right hand side, we assume a different set of variables ($\zeta$ for charge, $\omega$ for local flux, and $\gamma$ for the flux variable of the plaquette operator) and determine their values.
\begin{equation*}
    F^{(\zeta,\omega)}(t) A^\gamma
    \left(\begin{gathered}
        \begin{tikzpicture}[x=0.75pt,y=0.75pt,yscale=-1,xscale=1]
        
        \draw   (120.92,165.16) .. controls (123.93,167.78) and (126.95,169.36) .. (129.96,169.88) .. controls (126.95,170.4) and (123.93,171.98) .. (120.92,174.6) ;
        \draw   (100,170) -- (200,170)(150,120) -- (150,220) ;
        \draw   (173.72,165.16) .. controls (176.73,167.78) and (179.75,169.36) .. (182.76,169.88) .. controls (179.75,170.4) and (176.73,171.98) .. (173.72,174.6) ;
        \draw   (145.32,147.2) .. controls (147.94,144.19) and (149.52,141.17) .. (150.04,138.16) .. controls (150.56,141.17) and (152.14,144.19) .. (154.76,147.2) ;
        \draw   (145.32,198.91) .. controls (147.95,195.9) and (149.52,192.88) .. (150.04,189.87) .. controls (150.57,192.88) and (152.14,195.9) .. (154.76,198.91) ;
        
        \draw (118.8,153.4) node [anchor=north west][inner sep=0.75pt]    {$x_{1}$};
        \draw (171.6,153.4) node [anchor=north west][inner sep=0.75pt]    {$x_{2}$};
        \draw (159.47,186.07) node [anchor=north west][inner sep=0.75pt]    {$y_{2}$};
        \draw (158.13,126.27) node [anchor=north west][inner sep=0.75pt]    {$y_{1}$};
        
        \end{tikzpicture}
    \end{gathered}\right)
    = \delta_{\zeta,x_2} F^{(\zeta,\omega)}(t)
    \left(\begin{gathered}
        \begin{tikzpicture}[x=0.75pt,y=0.75pt,yscale=-1,xscale=1]
        
        \draw   (120.92,165.16) .. controls (123.93,167.78) and (126.95,169.36) .. (129.96,169.88) .. controls (126.95,170.4) and (123.93,171.98) .. (120.92,174.6) ;
        \draw   (100,170) -- (200,170)(150,120) -- (150,220) ;
        \draw   (173.72,165.16) .. controls (176.73,167.78) and (179.75,169.36) .. (182.76,169.88) .. controls (179.75,170.4) and (176.73,171.98) .. (173.72,174.6) ;
        \draw   (145.32,147.2) .. controls (147.94,144.19) and (149.52,141.17) .. (150.04,138.16) .. controls (150.56,141.17) and (152.14,144.19) .. (154.76,147.2) ;
        \draw   (145.32,198.91) .. controls (147.95,195.9) and (149.52,192.88) .. (150.04,189.87) .. controls (150.57,192.88) and (152.14,195.9) .. (154.76,198.91) ;
        
        \draw (118.8,153.4) node [anchor=north west][inner sep=0.75pt]    {$x_{1}\overline{\gamma}$};
        \draw (171.6,153.4) node [anchor=north west][inner sep=0.75pt]    {$\gamma x_{2}$};
        \draw (159.47,186.07) node [anchor=north west][inner sep=0.75pt]    {$y_{2}\overline{\gamma}$};
        \draw (158.13,126.27) node [anchor=north west][inner sep=0.75pt]    {$\gamma y_{1}$};
        
        \end{tikzpicture}
    \end{gathered}\right)
    =\delta_{\zeta,\gamma x_2}
    \left(\begin{gathered}
        \begin{tikzpicture}[x=0.75pt,y=0.75pt,yscale=-1,xscale=1]
        
        \draw   (120.92,165.16) .. controls (123.93,167.78) and (126.95,169.36) .. (129.96,169.88) .. controls (126.95,170.4) and (123.93,171.98) .. (120.92,174.6) ;
        \draw   (100,170) -- (200,170)(150,120) -- (150,220) ;
        \draw   (173.72,165.16) .. controls (176.73,167.78) and (179.75,169.36) .. (182.76,169.88) .. controls (179.75,170.4) and (176.73,171.98) .. (173.72,174.6) ;
        \draw   (145.32,147.2) .. controls (147.94,144.19) and (149.52,141.17) .. (150.04,138.16) .. controls (150.56,141.17) and (152.14,144.19) .. (154.76,147.2) ;
        \draw   (145.32,198.91) .. controls (147.95,195.9) and (149.52,192.88) .. (150.04,189.87) .. controls (150.57,192.88) and (152.14,195.9) .. (154.76,198.91) ;
        
        \draw (118.8,153.4) node [anchor=north west][inner sep=0.75pt]    {$x_{1}\overline{\gamma}$};
        \draw (171.6,153.4) node [anchor=north west][inner sep=0.75pt]    {$\gamma x_{2}$};
        \draw (159.47,186.07) node [anchor=north west][inner sep=0.75pt]    {$y_{2}\overline{\gamma}$};
        \draw (158.13,126.27) node [anchor=north west][inner sep=0.75pt]    {$\omega\gamma y_{1}$};
        
        \end{tikzpicture}
    \end{gathered}\right)
\end{equation*}

By comparing like terms, we see that $\gamma = g,\zeta=gz,\omega=gv\overline{g}$. 

\end{proof}

\begin{prop}[Vertex operator at the end of a counterclockwise ribbon]
The commutation relation is
    \begin{equation}
        A_{\mathrm{col}}^{g}F^{(z,v)}(t)=F^{(z\bar g, v)}(t)A_{\mathrm{col}}^{g}.
    \end{equation}
\end{prop}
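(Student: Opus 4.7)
The plan is to mirror the four preceding end-point commutation proofs: take a minimal two-triangle ribbon whose final triangle is the dual triangle touching the vertex on which $A^g_{\mathrm{col}}$ acts, compute both orderings on an arbitrary direct-edge configuration, and match the resulting data as unknowns $(\zeta,\omega,\gamma)$. The crucial observation is that for a counterclockwise ribbon the very last direct triangle carries a direct edge that contributes to the $z$-string (the product of all direct edges along the ribbon). When $A^g$ acts on that vertex it right-multiplies this outgoing edge by $\bar g$ (or, equivalently depending on orientation, left-multiplies an incoming edge by $g$), so the Kronecker delta enforcing $\delta_{z,\prod\,\text{direct edges}}$ on the LHS picks out a shifted value on the RHS.

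Concretely, I will draw the same configuration as in Fig.~\ref{fig:counterclockwise ribbon for vertex} but with the final vertex of the ribbon coinciding with the vertex on which $A^g$ acts. Let $z'$ denote the product of direct edges along the ribbon (so that $F^{(z,v)}(t)$ produces $\delta_{z,z'}$ times an action on the final dual edge). For the LHS $A^g_{\mathrm{col}} F^{(z,v)}(t)$ I compute: first $F^{(z,v)}$ projects the direct-edge product to $z$ and dresses the terminal dual edge with $\bar{z}vz$ (respectively the appropriate conjugate dictated by counterclockwise orientation as derived in Appendix~\ref{ap:end-ribbon}), and then $A^g$ conjugates all four edges incident to the end vertex, including the final direct edge that sits inside the Kronecker delta. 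For the RHS $F^{(\zeta,\omega)}(t) A^\gamma$ I apply $A^\gamma$ first, which shifts the terminal edge, and then the ribbon operator with undetermined $\zeta,\omega,\gamma$.

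Comparing the two expressions term by term, matching (i) the shifted/unshifted direct edge incident to the final vertex, (ii) the dressing of the final dual edge, and (iii) the action on neighboring edges, I expect to read off $\gamma=g$, $\omega=v$, and $\zeta=z\bar g$ — precisely the claim. The intermediate triangles and the other endpoint are untouched by $A^g_{\mathrm{col}}$, so their contribution commutes trivially and factors out of the comparison; this is why the two-triangle reduction is without loss of generality, exactly as argued for the three preceding propositions.

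The only mildly subtle step is keeping track of which edge at the terminal vertex is the one threaded by the $z$-string. For the counterclockwise orientation the last direct edge of the ribbon emerges from the end vertex along a specific direction, and I must confirm that the action of $A^g$ on that edge is a right multiplication by $\bar g$ (so that $z'\mapsto z'\bar g$ in the Kronecker delta, forcing $\zeta = z\bar g$), rather than a left multiplication. This is a matter of carefully reading off the edge orientation convention from Fig.~\ref{fig:counterclockwise ribbon for vertex}; once that is fixed, the rest of the computation is the same mechanical matching used in Appendix~\ref{ap:end-ribbon} and presents no additional obstacle.
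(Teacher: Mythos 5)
Your plan is correct and matches the paper's proof almost verbatim: reduce to a minimal ribbon whose terminal triangle touches the vertex in question, compute $A^g_{\mathrm{col}} F^{(z,v)}(t)$ and $F^{(\zeta,\omega)}(t)A^\gamma_{\mathrm{col}}$ on a generic edge configuration, and read off $\gamma=g$, $\omega=v$, $\zeta=z\bar g$ by comparing. Your emphasis on the crucial point — that $A^g$ modifies the direct edge at the terminal vertex that contributes to the $z$-string, converting the Kronecker delta $\delta_{z,z'}$ on the LHS into $\delta_{\zeta,\,z'\bar g}$ (or the appropriate conjugate) on the RHS — is exactly the content of the calculation in Appendix~\ref{ap:end-ribbon}, and it is consistent with the defining action $A^g_{\mathrm{cl}}\ket{z,w}=\ket{z\bar g, gw\bar g}$ of Eq.~\ref{eq:color-operators}, which is the sanity check that tells you a right multiplication by $\bar g$ must occur.
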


\begin{proof}
    WLOG, consider a following ribbon (with counterclockwise local orientation) composed of two triangles, as shown in Fig. \ref{fig:counterclockwise ribbon for vertex}.

For the left hand side,
\begin{equation*}
    A^g F^{(z,v)}(t) 
    \left(\begin{gathered}
        \begin{tikzpicture}[x=0.75pt,y=0.75pt,yscale=-1,xscale=1]
        
        \draw   (120.92,165.16) .. controls (123.93,167.78) and (126.95,169.36) .. (129.96,169.88) .. controls (126.95,170.4) and (123.93,171.98) .. (120.92,174.6) ;
        \draw   (100,170) -- (200,170)(150,120) -- (150,220) ;
        \draw   (173.72,165.16) .. controls (176.73,167.78) and (179.75,169.36) .. (182.76,169.88) .. controls (179.75,170.4) and (176.73,171.98) .. (173.72,174.6) ;
        \draw   (145.32,147.2) .. controls (147.94,144.19) and (149.52,141.17) .. (150.04,138.16) .. controls (150.56,141.17) and (152.14,144.19) .. (154.76,147.2) ;
        \draw   (145.32,198.91) .. controls (147.95,195.9) and (149.52,192.88) .. (150.04,189.87) .. controls (150.57,192.88) and (152.14,195.9) .. (154.76,198.91) ;
        
        \draw (118.8,153.4) node [anchor=north west][inner sep=0.75pt]    {$x_{3}$};
        \draw (171.6,153.4) node [anchor=north west][inner sep=0.75pt]    {$x_{4}$};
        \draw (159.47,186.07) node [anchor=north west][inner sep=0.75pt]    {$y_{4}$};
        \draw (158.13,126.27) node [anchor=north west][inner sep=0.75pt]    {$y_{3}$};
        
        \end{tikzpicture}
    \end{gathered}\right)
    = \delta_{z,z'} A^g
    \left(\begin{gathered}
        \begin{tikzpicture}[x=0.75pt,y=0.75pt,yscale=-1,xscale=1]
        
        \draw   (120.92,165.16) .. controls (123.93,167.78) and (126.95,169.36) .. (129.96,169.88) .. controls (126.95,170.4) and (123.93,171.98) .. (120.92,174.6) ;
        \draw   (100,170) -- (200,170)(150,120) -- (150,220) ;
        \draw   (173.72,165.16) .. controls (176.73,167.78) and (179.75,169.36) .. (182.76,169.88) .. controls (179.75,170.4) and (176.73,171.98) .. (173.72,174.6) ;
        \draw   (145.32,147.2) .. controls (147.94,144.19) and (149.52,141.17) .. (150.04,138.16) .. controls (150.56,141.17) and (152.14,144.19) .. (154.76,147.2) ;
        \draw   (145.32,198.91) .. controls (147.95,195.9) and (149.52,192.88) .. (150.04,189.87) .. controls (150.57,192.88) and (152.14,195.9) .. (154.76,198.91) ;
        
        \draw (118.8,153.4) node [anchor=north west][inner sep=0.75pt]    {$x_{3}$};
        \draw (171.6,153.4) node [anchor=north west][inner sep=0.75pt]    {$x_{4}$};
        \draw (159.47,186.07) node [anchor=north west][inner sep=0.75pt]    {$y_{4}$};
        \draw (158.13,126.27) node [anchor=north west][inner sep=0.75pt]    {$\bar z' v z' y_{3}$};
        
        \end{tikzpicture}
    \end{gathered}\right)
    =\delta_{z,z'}
    \left(\begin{gathered}
        \begin{tikzpicture}[x=0.75pt,y=0.75pt,yscale=-1,xscale=1]
        
        \draw   (120.92,165.16) .. controls (123.93,167.78) and (126.95,169.36) .. (129.96,169.88) .. controls (126.95,170.4) and (123.93,171.98) .. (120.92,174.6) ;
        \draw   (100,170) -- (200,170)(150,120) -- (150,220) ;
        \draw   (173.72,165.16) .. controls (176.73,167.78) and (179.75,169.36) .. (182.76,169.88) .. controls (179.75,170.4) and (176.73,171.98) .. (173.72,174.6) ;
        \draw   (145.32,147.2) .. controls (147.94,144.19) and (149.52,141.17) .. (150.04,138.16) .. controls (150.56,141.17) and (152.14,144.19) .. (154.76,147.2) ;
        \draw   (145.32,198.91) .. controls (147.95,195.9) and (149.52,192.88) .. (150.04,189.87) .. controls (150.57,192.88) and (152.14,195.9) .. (154.76,198.91) ;
        
        \draw (118.8,153.4) node [anchor=north west][inner sep=0.75pt]    {$x_{3}\overline{g}$};
        \draw (171.6,153.4) node [anchor=north west][inner sep=0.75pt]    {$gx_{4}$};
        \draw (159.47,186.07) node [anchor=north west][inner sep=0.75pt]    {$y_{4}\bar g$};
        \draw (158.13,126.27) node [anchor=north west][inner sep=0.75pt]    {$g\bar z' v z' y_{3}$};
        
        \end{tikzpicture}
    \end{gathered}\right)
\end{equation*}

For the right hand side, we assume a different set of variables ($\zeta$ for charge, $\omega$ for local flux, and $\gamma$ for the flux variable of the plaquette operator) and determine their values.
\begin{equation*}
    F^{(\zeta,\omega)}(t) A^\gamma
    \left(\begin{gathered}
        \begin{tikzpicture}[x=0.75pt,y=0.75pt,yscale=-1,xscale=1]
        
        \draw   (120.92,165.16) .. controls (123.93,167.78) and (126.95,169.36) .. (129.96,169.88) .. controls (126.95,170.4) and (123.93,171.98) .. (120.92,174.6) ;
        \draw   (100,170) -- (200,170)(150,120) -- (150,220) ;
        \draw   (173.72,165.16) .. controls (176.73,167.78) and (179.75,169.36) .. (182.76,169.88) .. controls (179.75,170.4) and (176.73,171.98) .. (173.72,174.6) ;
        \draw   (145.32,147.2) .. controls (147.94,144.19) and (149.52,141.17) .. (150.04,138.16) .. controls (150.56,141.17) and (152.14,144.19) .. (154.76,147.2) ;
        \draw   (145.32,198.91) .. controls (147.95,195.9) and (149.52,192.88) .. (150.04,189.87) .. controls (150.57,192.88) and (152.14,195.9) .. (154.76,198.91) ;
        
        \draw (118.8,153.4) node [anchor=north west][inner sep=0.75pt]    {$x_{3}$};
        \draw (171.6,153.4) node [anchor=north west][inner sep=0.75pt]    {$x_{4}$};
        \draw (159.47,186.07) node [anchor=north west][inner sep=0.75pt]    {$y_{4}$};
        \draw (158.13,126.27) node [anchor=north west][inner sep=0.75pt]    {$y_{3}$};
        
        \end{tikzpicture}
    \end{gathered}\right)
    = \delta_{\zeta,z'} F^{(\zeta,\omega)}(t)
    \left(\begin{gathered}
        \begin{tikzpicture}[x=0.75pt,y=0.75pt,yscale=-1,xscale=1]
        
        \draw   (120.92,165.16) .. controls (123.93,167.78) and (126.95,169.36) .. (129.96,169.88) .. controls (126.95,170.4) and (123.93,171.98) .. (120.92,174.6) ;
        \draw   (100,170) -- (200,170)(150,120) -- (150,220) ;
        \draw   (173.72,165.16) .. controls (176.73,167.78) and (179.75,169.36) .. (182.76,169.88) .. controls (179.75,170.4) and (176.73,171.98) .. (173.72,174.6) ;
        \draw   (145.32,147.2) .. controls (147.94,144.19) and (149.52,141.17) .. (150.04,138.16) .. controls (150.56,141.17) and (152.14,144.19) .. (154.76,147.2) ;
        \draw   (145.32,198.91) .. controls (147.95,195.9) and (149.52,192.88) .. (150.04,189.87) .. controls (150.57,192.88) and (152.14,195.9) .. (154.76,198.91) ;
        
        \draw (118.8,153.4) node [anchor=north west][inner sep=0.75pt]    {$x_{3}\overline{\gamma}$};
        \draw (171.6,153.4) node [anchor=north west][inner sep=0.75pt]    {$\gamma x_{4}$};
        \draw (159.47,186.07) node [anchor=north west][inner sep=0.75pt]    {$y_{4}\overline{\gamma}$};
        \draw (158.13,126.27) node [anchor=north west][inner sep=0.75pt]    {$\gamma y_{3}$};
        
        \end{tikzpicture}
    \end{gathered}\right)
    =\delta_{\zeta,\gamma z'}
    \left(\begin{gathered}
        \begin{tikzpicture}[x=0.75pt,y=0.75pt,yscale=-1,xscale=1]
        
        \draw   (120.92,165.16) .. controls (123.93,167.78) and (126.95,169.36) .. (129.96,169.88) .. controls (126.95,170.4) and (123.93,171.98) .. (120.92,174.6) ;
        \draw   (100,170) -- (200,170)(150,120) -- (150,220) ;
        \draw   (173.72,165.16) .. controls (176.73,167.78) and (179.75,169.36) .. (182.76,169.88) .. controls (179.75,170.4) and (176.73,171.98) .. (173.72,174.6) ;
        \draw   (145.32,147.2) .. controls (147.94,144.19) and (149.52,141.17) .. (150.04,138.16) .. controls (150.56,141.17) and (152.14,144.19) .. (154.76,147.2) ;
        \draw   (145.32,198.91) .. controls (147.95,195.9) and (149.52,192.88) .. (150.04,189.87) .. controls (150.57,192.88) and (152.14,195.9) .. (154.76,198.91) ;
        
        \draw (118.8,153.4) node [anchor=north west][inner sep=0.75pt]    {$x_{3}\overline{\gamma}$};
        \draw (171.6,153.4) node [anchor=north west][inner sep=0.75pt]    {$\gamma x_{4}$};
        \draw (159.47,186.07) node [anchor=north west][inner sep=0.75pt]    {$y_{4}\overline{\gamma}$};
        \draw (158.13,126.27) node [anchor=north west][inner sep=0.75pt]    {$g \bar z' \omega z' y_{3}$};
        
        \end{tikzpicture}
    \end{gathered}\right)
\end{equation*}

By comparing like terms, we see that $\gamma = g,\zeta=z\bar g,\omega=v$. 

\end{proof}

\section{Gate implementation details}

\subsection{Error Correction for the Sign-flip gate} \label{ap:repeat}

\begin{center}
    \begin{tikzpicture}
    \node[scale=1.25]{ 
        \begin{quantikz}
            \lstick{$\ket{\psi}$} & \ctrl{1} & \qw & \rstick{$\sigma_{(i+2)}^z \ket{\psi}$} \qw \\
            \lstick{$\ket{\xi}$} & \gate{U_+} & \meter{$i=0,1,2$} \arrow[r] & \rstick{$\ket{i}$}
        \end{quantikz}
    };
    \end{tikzpicture}
\end{center}

The exact gate implemented by our sign-flip circuit (pictured above) is dependent on the outcome of the measurement. We don't have to resort to post-selection, however--- we can take advantage of the fact the sign-flip gates are closed under multiplication. We will construct a ``repeat until success procedure'', where (as the name suggests) we repeat the gate until we have a sequence of measurement outcomes that correct back to the gate we wanted.

What are the conditions for success? First, let's consider what happens when we apply two sign-flip gates in sequence. Suppose we implement $\sigma^z_{(1)}$ first and $\sigma^z_{(2)}$ second. We will have flipped both $c_1$ and $c_2$, which means the output state is equivalent up to a global phase to one with just $c_0$ flipped:
\begin{equation}
    \sigma^z_{(2)}\sigma^z_{(1)} = \sigma^z_{(1)}\sigma^z_{(2)} = - \sigma^z_{(0)}
\end{equation}
Additionally, we know that repeating the same sign-flip gate twice in a row will be the identity. More generally, we have:
\begin{equation}
    \sigma^z_{(i)}\sigma^z_{(j)} = \delta_{ij} - \abs{\epsilon_{ijk}}\sigma^z_{(k)}
\end{equation}
Another useful fact is that all the different sign-flip gates commute with one another, so if two of the same measurement result show up anywhere in our measurement record, they ``cancel'' to give us the identity. Now we can see how the correction process will go; we want to repeat our procedure for implementing the sign-flip gates until we have an odd number of the result we want, and an even number of the other two. This ensures that we have cancelled out any wrong sign flips and end up with the correct output. So we can classify a given measurement record as corrected or uncorrected based on the parity of the number of zeroes, ones, and twos. Note that any successful ``corrected'' string will be of odd length, since it will have even amounts of two kinds of letters and an odd amount of the remaining type (where letter here means 0, 1, or 2). In Fig. \ref{fig:sign-tree}, we show the tree of gates implemented up to three repetitions; highlighted are the measurement records that give $\sigma_{(0)}$ overall.

\begin{figure}[h!]
\centering
\includegraphics[width=0.85\textwidth]{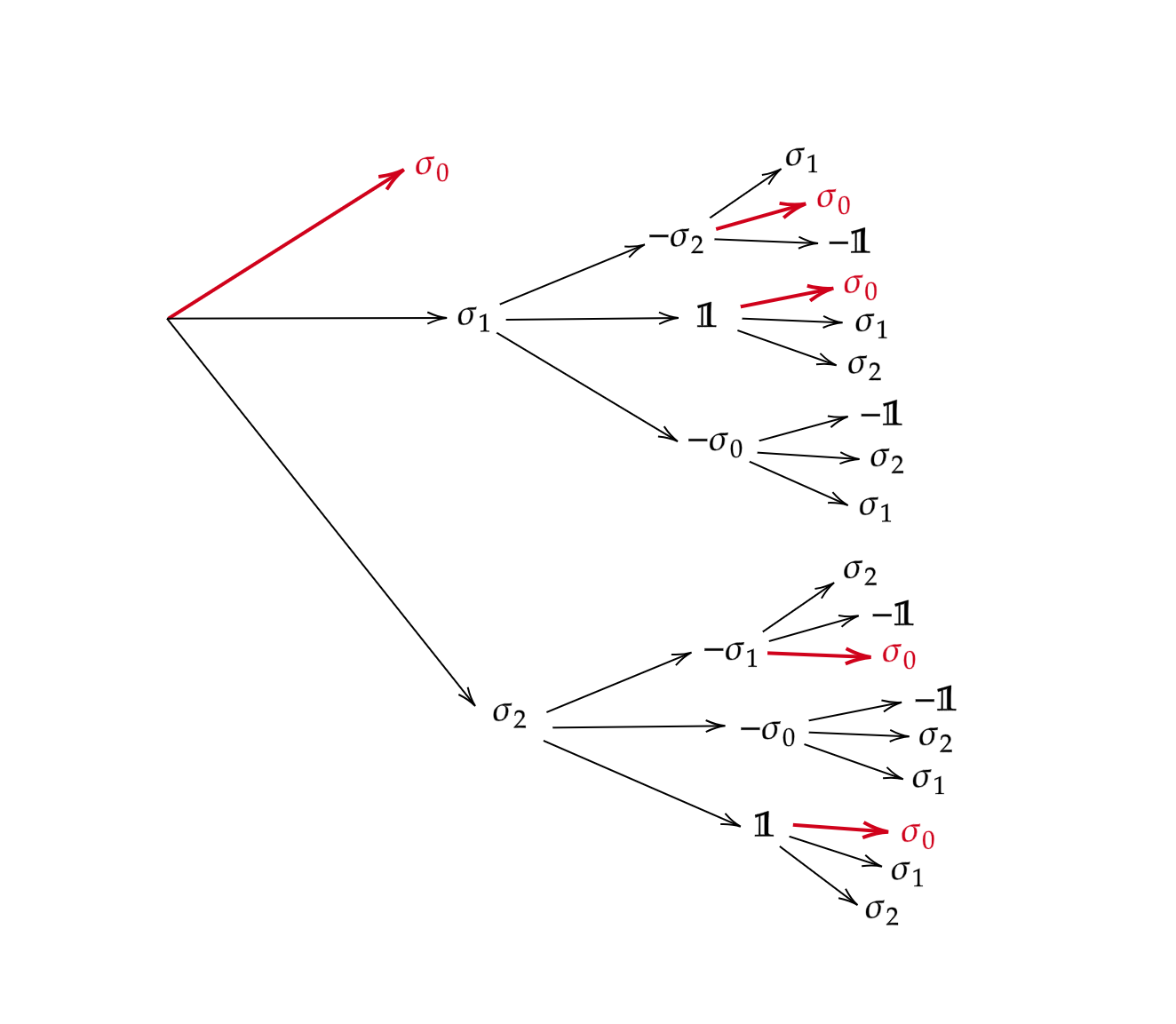}
\caption{Tree of possible outcomes for $N\leq3$ repetitions of the sign-flip gate. Going from top to bottom, each trio of branches indicates a measurement result of $0, 1, \text{or } 2$ for that round. Highlighted in red are branches that lead to an overall $\sigma_{(0)}$ gate; we can see that $\frac{2}{9}$ of the $N=3$ branches lead to $\sigma_{(0)}$.}
\label{fig:sign-tree}
\end{figure}

We can calculate exactly how many branches of the tree in Fig. \ref{fig:sign-tree} terminate in the gate we want; this will allow us to calculate how many times we need to repeat to have a good chance of success. Suppose we have a certain uncorrected sequence of measurement outcomes after an odd number $L$ repetitions, and we are trying to implement the $\sigma^z_{(0)}$ gate. After the next two measurement outcomes, what are the chances we have turned this uncorrected string into a corrected one? Consider the two kinds of uncorrected strings:

\begin{enumerate}
    \item The string has an odd number of all three outcomes: if the sequence of two measurements appended is $\{1,2\}$ or $\{2,1\}$ we have corrected the string. All others will leave it uncorrected.
    \item The string has an even number of zeros and ones (twos), and an odd number of twos (ones): if the appended sequence is $\{0,1(2)\}$ or $\{1(2),0\}$, then we have corrected the string. All others leave it uncorrected.
\end{enumerate}

There are $9$ possible sequences of two measurements, so in either case we have a $2/9$ probability of turning the uncorrected string into a corrected one after two repetitions, or alternately, a $7/9$ probability of not correcting the string. This will be the case no matter the length $L$ of the initial string; the only exception is the first time we implement the gate we have a $1/3$ chance of success. So the chance of success $p$ after $N$ steps is
\begin{equation}
    p = 1 - \prod_i^N P(i)
\end{equation}
where $P(i)$ is the probability of not correcting the string after step $i$, and takes the following form:
\begin{equation}
    P(i) =
    \begin{cases}
        \frac{2}{3} & \text{if } i = 1 \\
        1 & \text{if } i \text{ even} \\
        \frac{7}{9} & \text{if } i \text{ odd and } i > 1
    \end{cases}
\end{equation}
Plugging in and compressing the product, we find our chance of success scales exponentially with the number of steps:
\begin{equation}
    p = 1 - \frac{2}{3}\left(\frac{7}{9}\right)^{\frac{N-1}{2}}
\end{equation}
where the exponent being $(N-1)/2$ makes sure we are only counting the odd steps after the first one. This formula tells us that for a $99\%$ chance of success, we need to repeat at least 35 times, and for a $99.9\%$ chance of success, we should repeat at least 53 times. A more modest number of gates, say 10 repetitions, gives us a $78.5\%$ success rate.

\subsection{Showing CCZ is a non-Clifford gate} \label{ap:lambda3}

Why is $CCZ$ not Clifford? Clifford gates map Pauli strings to Pauli strings, i.e. conjugating a Pauli operator by a Clifford gate should always return a Pauli. We can see with a simple example this is not true of $CCZ$. Consider the Pauli string $X \otimes \mathbb{I}\otimes \mathbb{I}$ conjugated by $CCZ$. The total gate $CCZ(X \otimes \mathbb{I}\otimes \mathbb{I})CCZ$ gives a $(-1)$ prefactor in two cases:
\begin{enumerate}
    \item $x=y=z=1$
        \begin{equation}
        \begin{aligned}
            CCZ \cdot X_1 \cdot CCZ \ket{111} &= -CCZ \cdot X_1\ket{111}\\
            &= -CCZ\ket{011}\\
            &= -\ket{011} \\
            &= -X_1\ket{111}
        \end{aligned}
        \end{equation}
    \item $x=0$ and $y=z=1$
        \begin{equation}
        \begin{aligned}
            CCZ \cdot X_1 \cdot CCZ \ket{011} &= CCZ \cdot X_1\ket{011}\\
            &= CCZ\ket{111}\\
            &= -\ket{111} \\
            &= -X_1\ket{011}
        \end{aligned}
        \end{equation}
\end{enumerate}
In all other cases, the action is just $X_1$. Because this negative sign only comes when both $y=z=1$, we see that the effect of conjugating $X_1 = X \otimes I \otimes I$ by $CCZ$ is to introduce a $CZ$ gate acting on the 2nd and 3rd qubit:
\begin{equation}
    CCZ(X \otimes \mathbb{I}\otimes \mathbb{I})CCZ = X \otimes CZ
\end{equation}
Since $CZ$ is not a Pauli gate, $CCZ$ cannot be a Clifford gate.

\subsection{Preparing and measuring the Y eigenstate}\label{subsec:prepare Y eigenstate}
\label{ap:y-preparation}

In order to implement the $S$ gate, we need to prepare $\ket{+_Y}$, the eigenstate of Pauli $Y$ with +1 eigenvalue. Alternately, we can implement $S^\dagger$ using $\ket{-_Y}$. The protocol we will outline here cannot absolutely distinguish between $\ket{\pm_Y}$, but we can tell if two states are the same eigenstate or not. This allows us to set a convention for $S$ and $S^\dagger$.

The preparation of the state $\ket{{\pm}_Y}=\frac{1+i}{2}\ket0 \mp \frac{1-i}{2}\ket1$ is adapted from \cite{mochon_anyons_2003}. We first create a Bell pair: this can be done by creating a $\ket{+}$ and entangling it with a $\ket{0}$ state via the $U_+$ gate. Once we have this Bell pair, we discard one half to obtain a maximally mixed state which can be expressed as  
\begin{equation}
    \frac{1}{2} I = \frac{1}{2}\ketbra{+_Y} +  \frac{1}{2}\ketbra{-_Y}
\end{equation}
where we use the $Y$ eigenbasis as our resolution to the identity. We have a classical mixture of $Y$ eigenstates; our goal will be to prepare a larger classically-correlated mixture of $Y$ eigenstates. Namely, the state
\begin{align}
    \frac{1}{2}\ket{+_Y}^{\otimes N}\bra{+_Y}^{\otimes N} +  \frac{1}{2}\ket{-_Y}^{\otimes N}\bra{-_Y}^{\otimes N}
\end{align}
so that we have a pool of ``$\ket{+_Y}$'' and ``$\ket{-_Y}$'' to use as resources. This starting mixed state will serve as the reference state for creating this classically correlated mixture, using the following circuit:
\begin{center}
    \begin{tikzpicture}
    \node[scale=1.25]{ 
        \begin{quantikz}
            \lstick{$\ket{+}$} & \ctrl{1} & \ctrl{1} & \rstick{$\ket{\pm_Y}$}\qw\\
            \lstick{$\ket{\pm_Y}$} & \gate{X} & \gate{Z} & \rstick{$\ket{\pm_Y}$}\qw
        \end{quantikz}
    };
    \end{tikzpicture}
\end{center}
This circuit copies the mixed state onto the input $\ket{+}$, creating two \emph{classically correlated} (not entangled!) mixed states. If we could measure in the $Y$ eigenbasis, measuring these two states would always yield the same outcome. We now have the means to create a pile of ``$\ket{+_Y}$'' states, if we set our convention such that the initial reference mixed state is $\ket{+_Y}$. 

To create a $\ket{-_Y}$ state, we begin with a fresh Bell pair, uncorrelated with our $\ket{+_Y}$ pile. Using only one half of this new Bell pair (which, as before, we can resolve into a mixture of $Y$ eigenstates), we compare it to a $\ket{+_Y}$ using the following circuit:
\begin{center}
    \begin{tikzpicture}
    \node[scale=1.25]{ 
        \begin{quantikz}
            \lstick{$\ket{\pm_Y}$} & \ctrl{1} & \ctrl{1} & \meter{$\ket{\pm}$} \\
            \lstick{$\ket{+_Y}$} & \gate{Z} & \gate {X} & \rstick{$\ket{+_Y}$}\qw
        \end{quantikz}
    };
    \end{tikzpicture}
\end{center}
We measure one half of the new Bell pair in the $X$ basis: if the result is $\ket-$, we have projected the other half of the fresh Bell pair into $\ket{-_Y}$. If the result is $\ket+$, we have created another $\ket{+_Y}$. We can proceed this way to create as many $\ket{-_Y}$ as we need. 

\bibliography{BiBTeX_File.bib}

\end{appendix}

\end{document}